%% file: clean.tex
\pdfoutput=1

\documentclass[11pt]{scrartcl}
\usepackage[utf8]{inputenc}
\usepackage[T1]{fontenc}
\usepackage[a4paper, total={16.5cm, 24cm}]{geometry}
\usepackage{microtype} 
\usepackage{authblk}
\title{The End Justifies the Mean: A Linear Ranking Rule for Proportional Sequential Decisions}
\date{\vspace{-1.5cm}}

\author[1]{Carmel Baharav}
\author[2]{Niclas Boehmer}
\author[1]{Bailey Flanigan}
\author[2]{\\Maximilian T. Wittmann}
\affil[1]{MIT, USA}
\affil[2]{Hasso Plattner Institute, University of Potsdam, Germany}

\usepackage{xcolor}
\usepackage{graphicx}
\usepackage{tikz}
\usetikzlibrary{quotes,angles,arrows.meta}
\usepackage{pgfplots}
\pgfplotsset{compat=1.17}
\usepgfplotslibrary{groupplots}

\usepackage{amsmath, amssymb, amsthm, mathtools, nicefrac, siunitx}
\usepackage{amsfonts}

\usepackage{thmtools}
\usepackage{thm-restate}
\newtheorem{theorem}{Theorem}[section]

\newtheorem{example}[theorem]{Example}

\newtheorem{proposition}[theorem]{Proposition}
\newtheorem{lemma}[theorem]{Lemma}
\newtheorem{corollary}[theorem]{Corollary}
\theoremstyle{definition}
\newtheorem{definition}[theorem]{Definition}
\newtheorem{remark}[theorem]{Remark}

\usepackage[pagebackref]{hyperref}
\hypersetup{
    pdfencoding=auto,
    psdextra,
    colorlinks=true,
    citecolor=green!50!black,
    linkcolor=red!60!black,
}

\usepackage[nameinlink]{cleveref}

\usepackage{natbib}
\usepackage{algorithm}
\usepackage[noend]{algorithmic}

\usepackage[inline]{enumitem}
\usepackage{booktabs}
\usepackage{subcaption}
\usepackage{multirow}

\usepackage{verbatim}

\usepackage[textsize=tiny]{todonotes}

\newcommand{\prooflink}[1]{\marginline{\vspace{0.5cm}\footnotesize \hyperlink{restated#1}{\hypertarget{original#1}{[Proof]}}}}
\newcommand{\restatehere}[1]{%
    \marginline{\vspace{0.6cm}\footnotesize \hyperlink{original#1}{\hypertarget{restated#1}{[Main]}}}%
    \csname #1\endcsname*%
}

\usepackage{colortbl}     
\usepackage{xcolor}       

\usepackage{gensymb}

\DeclareCaptionLabelFormat{fullfig}{Figure \thefigure#2}

\newcommand{\cX}{\mathcal{X}}
\newcommand{\cD}{\mathcal{D}}
\newcommand{\R}{\mathbb{R}}
\newcommand{\iid}{\overset{\mathrm{iid}}{\sim}}
\newcommand{\dang}{d_{\measuredangle}}
\newcommand{\thang}{\theta_{\mathrm{ang}}}
\newcommand{\thar}{\theta_{\mathrm{arith}}}

\newcommand{\prof}{(\boldsymbol{\theta}, \boldsymbol{\alpha})}
\newcommand{\profset}[1]{P_{#1}}
\newcommand{\D}{\mathcal{D}}
\newcommand{\btheta}{\boldsymbol{\theta}}
\newcommand{\balpha}{\boldsymbol{\alpha}}
\newcommand{\thetai}{\theta^{(i)}}

\newcommand{\thetaone}{\theta^{(1)}}
\newcommand{\thetatwo}{\theta^{(2)}}
\newcommand{\alphai}{\alpha^{(i)}}

\newcommand{\vphi}{\varphi}
\newcommand{\agi}{\textsc{IP}_i}
\newcommand{\ip}{\textsc{IP}}

\newcommand{\vps}{\vphantom{\sum_{j = 1}^n}}

\usepackage{xspace}
\newcommand{\pind}{\textsc{long}\text{-}\textsc{IP}\xspace}
\newcommand{\pcol}{\textsc{batch-IP}\xspace}
\newcommand{\symD}{\D_\circ}
\newcommand{\aKT}{a_{KT}}

\newcommand{\Fs}{F_{\text{sin}}}
\newcommand{\Fd}{F_{\text{diff}}}

\DeclareMathOperator*{\argmax}{arg\,max}
\DeclareMathOperator*{\argmin}{arg\,min}

\newcommand{\E}{\mathbb{E}}

\let\sign\relax
\DeclareMathOperator{\sign}{sign}

\usepackage[tt=false, type1=true]{libertine}
\usepackage[varqu]{zi4}
\usepackage[libertine]{newtxmath}

\begin{document}

\maketitle

\bigskip
{\footnotesize\tableofcontents}

\newpage
\begin{abstract}
    \begin{center}
        \textbf{\textsf{Abstract}} \smallskip
    \end{center}
AI alignment and participatory design motivate a new democratic design problem: how to collectively choose a \textit{decision rule to use repeatedly}. We study this problem for \textit{linear ranking rules}, which repeatedly rank items $x_j$ within batches \(X=(x_1,\dots,x_m)\in(\mathbb{R}^d)^m\), where each item's ranking is dictated by its score \(\langle \theta^*,x_j\rangle\) according to a fixed scoring vector $\theta^*$. Given voters' preferred scoring vectors \(\theta^{(1)},\dots,\theta^{(n)}\) and their population fractions \(\alpha^{(1)},\dots,\alpha^{(n)}\), we ask how to choose a collective vector \(\theta^*\) satisfying \textit{individual proportionality (IP)}: every voter type $i$ should agree with the resulting rankings to an $\alpha^{(i)}$-proportional degree, either on average over time (\textit{long-run IP}) or even within each batch (\textit{per-batch IP}).

The default rule, the arithmetic mean of the $\theta^{(i)}$, has been shown to be severely majoritarian; more generally, it is not clear that \textit{any} fixed linear rule can balance many voters' disparate opinions. Our main result is that, surprisingly, there \textit{is} a simple rule that does satisfy long-run IP: the \textit{angular mean}, the spherical analog of the arithmetic mean. We then show that exact per-batch IP is impossible for fixed linear rules, but that the gap between per-batch and long-run IP shrinks quickly with batch size. Experiments on three real-world preference datasets show that all rules perform similarly when voters' preferences are homogeneous, while the angular mean substantially improves proportionality in high-disagreement regimes.
\end{abstract}

\section{Introduction} \label{sec:intro}

In AI alignment and participatory design, there is growing interest in the democratic design of \textit{decision rules}: mappings that are repeatedly used to map inputs $X \in \mathcal{X}$ to outputs $Y \in \mathcal{Y}$ over the course of repeated decision-making. Examples include collectively choosing objective functions \citep{shirali2024participatory}, allocation rules for settings such as kidney exchange \citep{freedman2020adapting} and food donation \citep{lee2019webuildai}, and reward or scoring functions used to guide the alignment of language models \citep{DBLP:conf/nips/Ge0MPSV024}. 

Collectively deciding decision rules requires two steps: \textit{eliciting} individuals' preferences, and \textit{aggregating} them into a single decision rule. The literature uses a variety of approaches to this process: classical LLM alignment pipelines pool individual evaluators' raw elicitation data and then fit one collective model \citep{DBLP:conf/nips/Ge0MPSV024, ouyang2022training}; virtual democracy learns a model for every individual agent, has the models vote, and then aggregates these votes with a voting rule \citep{DBLP:conf/icml/KahngLNPP19}; still other work learns an individual model for each agent and then aggregates these models directly into a final model. In order, these approaches utilize increasingly richer information about individuals' beliefs.

We study the aggregation step in the third approach. We focus on the aggregation step because it is comparatively under-studied: there is extensive existing work on eliciting \textit{individuals'} decision rules \citep{freedman2020adapting,lee2019webuildai,DBLP:conf/aies/BoerstlerKCBCHS24,DBLP:conf/aaai/KeswaniCNCHBS26,DBLP:conf/aies/KeswaniCHBS24,mohsin2022learning}, but even in the few papers that consider how to directly aggregate them, the aggregation method is not the main object of analysis, and they default to taking the arithmetic mean of the parameters \citep{noothigattu2018voting, shirali2024participatory}. We focus on the third approach because it is a best-case benchmark: if high-quality aggregation is impossible even with known individual rules, then lower-information pipelines cannot be expected to recover it; if it \textit{is} possible, this approach will identify the structure those pipelines should preserve.

We specifically consider aggregating over \textit{linear ranking} decision rules, related to the paradigm of linear social choice \citep{DBLP:conf/nips/Ge0MPSV024}. The input to linear ranking rules is a batch of $m$ items $X = (x_1,\dots,x_m)$, where each item $x\in X$ is described by a feature vector $x \in \mathbb{R}^d$. A rule, parameterized by a score vector \(\theta\in\mathbb{R}^d\), assigns each item the score \(s_\theta(x)=\langle \theta,x\rangle\) and returns the ranking \(\succ_\theta^X\) induced by these scores. Linear ranking rules are a natural first case: they are interpretable, technically tractable, and capture the score-then-rank structure underlying many prioritization systems for, e.g., resource allocation, recipient selection, and preference-based ranking of language model outputs.

For linear ranking rules, the aggregation task is then: given $n$ voter types, their respective fractional shares of the population $\alpha^{(1)},\dots,\alpha^{(n)} \in [0,1]$, and their respective individual scoring vectors $\theta^{(1)},\dots, \theta^{(n)} \in \R^d$, we want to choose a scoring vector $\theta^*$. This is a non-standard voting problem because the object being chosen is \textit{reusable}: a decision rule is not evaluated only by whether it reflects the electorate at the moment of aggregation, but by how it treats voters across the sequence of decisions on which it is applied. The central worry is therefore cumulative: a fixed rule may repeatedly agree with some viewpoints and repeatedly override others, producing representational inequality over time.
In fact, \citet{feffer2023moral} document this very problem for the literature's default aggregation approach, the arithmetic mean $\thar = \sum_i \alpha^{(i)} \theta^{(i)}$: with just two voter types and two alternatives, the 
arithmetic mean can be inherently majoritarian, in some cases producing rankings that almost never agree with the preferred ranking of the minority group.

This observation motivates the design of \textit{individually proportional} aggregation mechanisms: mechanisms that produce a $\theta^*$ such that if voter type $i$ constitutes an
$\alpha^{(i)}$ fraction of the electorate, the rankings produced by $s_{\theta^*}$ agree with $i$'s preferred ranking on at least an $\alpha^{(i)}$ fraction of item pairs. From the outset, however, it is not at all obvious that \textit{any} fixed scoring rule $\theta^*$ should be able to satisfy such a requirement: because a fixed rule must prioritize the same features in the same way across all decision instances, it could easily fail to balance so many competing viewpoints. This leads to our research question below; surprisingly, we find that its answer is \textit{yes}.

\begin{center}
\begin{minipage}{0.95\linewidth}
\textbf{Research Question:} \textit{Does there exist an aggregation mechanism that, given any $(\theta^{(i)},\alpha^{(i)})_{i \in [n]}$, produces a $\theta^*$ that ranks items proportionally to all voter types $i$ over the long-term?}
\end{minipage}
\end{center}

\subsection{Approach, Contributions, and Related Work}

In \Cref{sec:model}, we set up the formal problem and define \textit{long-run} individual proportionality (IP). This notion is measured \textit{in expectation over batches of items $X$}, where $X = (x_1,\dots,x_m)$ is repeatedly drawn from some fixed absolutely continuous distribution $\mathcal{D}^m$. For a given $X$, voter type $i$'s level of agreement with the induced ranking $\succ_{\theta^*}^X$ 
is measured by the Kendall-Tau (KT) similarity (i.e., the number of pairwise agreements) between $\succ_{\theta^{*}}^X$ 
and $i$'s own ranking $\succ_{\theta^{(i)}}^X$. Voter $i$'s proportional agreement entitlement, on average over $X \sim \mathcal{D}^m$, is $\alphai{m \choose 2}$, so achieving long-run IP means that the following quantity is at least 1, i.e., all voters receive at least their proportional agreement entitlement.
\[ \text{Long-Run IP}(\theta^*) = \min_{i \in [n]} \mathbb{E}_{X \in \mathcal{D}^m} \left[\text{ KT similarity of } \succ_{\theta^*}^X \text{ and } \succ_{\thetai}^X\right] \, / \, \alphai\tbinom{m}{2}.\]
We then formally re-establish \cite{feffer2023moral}'s computationally-derived finding that the arithmetic mean $\thar$ severely violates Long-Run IP. This motivates our first contribution, a new aggregation mechanism.

\paragraph{Contribution 1: The \textit{angular mean} satisfies long-run proportionality.} Our main result is a proof that for generic $(\thetai,\alphai)_{i \in [n]}$, there is a surprisingly interpretable linear ranking rule that is guaranteed to satisfy long-run IP (\Cref{thm:thang_prop}): the \textit{angular mean} $\thang$. It is the angular analog of the arithmetic mean $\thar$, as illustrated by their analogous defining optimization problems (see \Cref{app:model} for proof that the arithmetic mean can indeed be formulated this way):
    \[\thar := \text{argmin}_{\theta} \ \ \frac{1}{n}\sum_{i \in [n]} \alphai \|\theta - \thetai\|^2 \qquad \thang:= \text{argmin}_{\theta} \ \ \frac{1}{n}\sum_{i \in [n]} \alphai \angle(\theta,\thetai)^2\]
We prove $\thang$ satisfies long-run individual proportionality specifically for \textit{spherically symmetric} distributions $\mathcal{D}$, the distribution class also analyzed by \citep{feffer2023moral}; a restriction we interpret in \Cref{sec:model}.

The main challenge of this result is that the angular mean does not have a known closed-form representation nor is its optimization problem necessarily convex, so reasoning about it is non-trivial. The proof relies on a new key theorem (\Cref{thm:angular_bound}), which establishes a fundamental property of the angular mean in $d$ dimensions. This property may be of independent interest, as it generalizes a similar bound in 2 dimensions from the math literature \citep{hotz2015intrinsic}. To our knowledge, this generalization to higher dimensions --- and even its possibility --- was an open question.

As a bonus, we examine when $\thang$'s proportionality is \textit{approximable} by other scoring vectors. This leads to a general robustness result: as $\theta$ approaches $\thang$ in angular distance, its long-run proportionality accordingly converges to that of $\thang$ (\Cref{rem:robustness},\Cref{lem:agreeapprox}). 

\paragraph{Contribution 2: The possibility of \textit{per-batch} proportionality.} We have so far taken as granted that, given $\theta^{(1)},\dots,\theta^{(n)}$, we need to choose a \textit{fixed} and \textit{linear} ranking rule to be applied consistently across inputs $X$. These two restrictions are well-motivated, serving the interpretability of the decision rule by making it explicit which weights will be applied in every case. However, one may worry that this restriction has major costs to achieving proportionality \textit{per-batch}: i.e., instead of making sure that no group is shortchanged on average in the long run, we might want to ensure that within a given batch, we should not expect \textit{any} group to agree too little with the resulting ranking. Technically speaking, we are simply strengthening the definition above by re-ordering the minimum and the expectation in the above definition of Long-Run IP to define $\text{Per-Batch IP}(\theta^*)$.

In contrast to Long-Run IP, we find that even on spherically symmetric distributions,  \textit{no} fixed linear ranking rule can achieve Per-Batch IP of at least 1 (\Cref{thm:fixed-cpop-UB}). However, we show that for any $\theta^*$ and \textit{any} distribution satisfying minimal conditions, Per-Batch IP rapidly approaches Long-Run IP with increasing batch size (\Cref{thm:fixed-cpop-LB}). Hence, our commitment to a consistent and interpretable rule results in surprisingly little degradation of our ability to achieve per-batch proportionality.

\paragraph{Contribution 3: Empirical Results.} We empirically compare five aggregation mechanisms on three real-world preference datasets \citep{awad2018moral,DBLP:conf/aaai/KeswaniCNCHBS26,lee2019webuildai}: the angular mean, arithmetic mean, geometric median, Borda count, and Proportional Sequential Borda (PSB). We find that on the original datasets, voters' scoring vectors are fairly homogeneous, and all rules easily satisfy individual proportionality. We formalize this finding for $\thar$: as voters' scoring vectors converge, the arithmetic mean converges to the angular mean, and thus enjoys near-optimal proportionality in such cases (\Cref{thm:arith_ang_approx}).

We then induce greater voter heterogeneity by subsampling and restricting these datasets. Here, the rules separate sharply: the angular mean substantially improves long-run IP over the arithmetic mean, geometric median, and often Borda. PSB performs best due to its ability to output rankings not inducible by any scoring vector and to adjust its behavior per-batch, outperforming the angular mean and demonstrating a loss associated with restricting to fixed, linear ranking rules. We also demonstrate robustness of the angular mean to non-spherical batch distributions and confirm empirically that each rule's long-run IP and per-batch IP converges quickly as the batch size grows.

\paragraph{Related Work.} In addition to the work cited above, the most closely-related research is on proportionality in repeated decision-making, including work introducing the PSB rule \citep{DBLP:conf/aaai/Lackner20,DBLP:conf/aaai/Kozachinskiy0S25,chandak2024proportional,DBLP:journals/access/BulteauHPRT21,DBLP:conf/aaai/Lackner023,DBLP:journals/corr/abs-2505-22513}. We draw our conceptual notion of proportionality from these papers (i.e., ``an $\alpha$ fraction of the population deserves an $\alpha$ fraction of agreement'') \citep{chandak2024proportional,DBLP:journals/corr/abs-2508-16177}, but they differ fundamentally in that they typically allow the use of non-fixed, non-linear voting rules. Also from this literature comes the objective \textit{Squared Kemeny,} which was shown to have strong proportionality in repeated voting \cite{chandak2024proportional}; interestingly, we show in \Cref{app:squared_kem_approx} that optimizing this objective converges to the same behavior as $\thang$ as the batch size $m$ grows, but it is potentially more majoritarian at small $m$ with respect to long-run IP. 
We give a broader related work overview in \Cref{app:related}.

\section{Model \& Preliminaries}\label{sec:model}

Let there be $d\in \mathbb{N}_{\geq 2}$ dimensions, and let $S^{d-1}$ be the $d$-dimensional unit sphere in $\mathbb{R}^d$. 
For a vector $z\in \mathbb{R}^d$ and $j\in [d]$, we let $z_j$ be the $j$-th entry of $z$. 
The \textit{angular distance} between two vectors $x,y \in \R^d$ is the (minimal) angle between them, $d_{\measuredangle}(x,y) := \cos^{-1}(\langle x, y \rangle/ (\|x\|_2\|y\|_2))$.

\paragraph{Items and Batches.} 
Let an \textit{item} $x \in \mathbb{R}^d$ be a covariate vector, and let $\mathcal{X}\subseteq \R^d$ be the set of feasible covariate vectors.  
A \emph{batch} of items consists of $m \in \mathbb{N}_{\geq 2}$ items, represented as a tuple $X=(x_1,\dots , x_m)\in \mathcal{X}^m$. 
We assume that batches $X$ are formed by sampling $m$ items i.i.d.~from an \emph{item distribution} $\cD\in \Delta(\cX)$ that is absolutely continuous, and has PDF $f_{\cD}\colon \R^d \rightarrow \R$; we  denote  the induced \textit{batch distribution} as $\cD^m$. Additionally, we assume that $f_{\cD}$ is strictly positive and continuous on some open set in $\R^d$.

For some results, we will additionally assume that $\mathcal{D}$ is spherically symmetric, which we will emphasize notationally as $\cD_{\circ}$. The PDF $f_{D_{\circ}}$ of a spherically symmetric distribution $\cD_{\circ}$ is rotationally invariant and hence can be written as a function solely of an item's Euclidean norm, so $f_{D_{\circ}}(x) = g(\|x\|)$ for some function $g\colon \R \rightarrow \R$, i.e., the PDF assigns the same probability to any two items with the same Euclidean norm, regardless of their rotation in space. This assumption is natural in many practical settings; for instance, items representing population percentiles of independent covariates are inherently spherically symmetric.

\paragraph{Scoring Vectors and Rankings.}
A \textit{scoring vector} $\theta\in S^{d-1}$ is a vector of linear weights to be placed on the covariates in $x$. Accordingly, the \textit{score} of $x$ according to $\theta$ is  $s_\theta(x) := \langle x,\theta \rangle$. These scores naturally induce a ranking over the elements of $\mathcal{X}$: Let $x,y \in \mathcal{X}$, and fix some arbitrary tie-breaking ranking over $\mathcal{X}$ denoted as $\succ_{\text{tiebreak}}$. Then, the ranking over $\mathcal{X}$ induced by scoring vector $\theta\in S^{d-1}$, denoted as $\succ_\theta$, is such that for all $x,y \in \mathcal{X}$ (where $x\succ y$ denotes that $x$ is preferred to $y$),
\[x \succ_\theta y \iff s_\theta(x) > s_\theta(y) \lor (s_\theta(x) = s_\theta(y) \land x \succ_{\text{tiebreak}} y).\]

For any $\theta\in S^{d-1}$ and $X\in \mathcal{X}^m$, $\succ_\theta^X$ denotes the  restriction of the ranking $\succ_\theta$ to the items in $X$. 

\paragraph{Voter Types and Aggregation Mechanisms.\ }
There are $n \geq 2$ distinct \textit{types} of voters $1, \dots, n$, where $i$ has an ideal scoring vector $\theta^{(i)} \in S^{d-1}$ and constitutes an $\alpha^{(i)} \in (0,1]$ fraction of the population (so $\sum_{i \in [n]} \alphai = 1$). We let $\boldsymbol{\theta}=(\theta^{(i)})_{i \in [n]}$ and $\boldsymbol{\alpha}=(\alpha^{(i)})_{i \in [n]}$, and we call $(\boldsymbol{\theta},\boldsymbol{\alpha})$ a \textit{profile}.
Let $P$ be the set of all possible profiles, and $\profset{n, d} \subseteq P$ be all those with $n$ voter types in $d$ dimensions.

An \textit{aggregation mechanism} is a mapping that takes in a profile $(\boldsymbol{\theta},\boldsymbol{\alpha})$  and outputs some $\theta \in S^{d-1}$. A generic aggregation mechanism is denoted as $\theta_f$, so $\theta_f(\boldsymbol{\theta},\boldsymbol{\alpha})$ is the scoring vector produced by some underlying function $f$.\footnote{We remark that requiring the output vector to be normalized (i.e., in $S^{d-1}$) is without loss of generality, since rescaling a scoring vector does not change the rankings it induces.}  
Given a profile $(\boldsymbol{\theta}, \boldsymbol{\alpha}) \in \profset{n,d}$, the two aggregation mechanisms we focus on --- the  (normalized) \emph{arithmetic mean} $\thar$ and \textit{angular mean} $\thang$ --- are respectively defined as
 \[
    \theta_\textrm{arith}(\boldsymbol{\theta}, \boldsymbol{\alpha}) = \argmin_{\theta \in S^{d-1}} \sum_{i=1}^n \alphai \|\theta-\theta^{(i)}\|_2^2 \qquad  \theta_\textrm{ang}(\boldsymbol{\theta}, \boldsymbol{\alpha}) = \argmin_{\theta \in S^{d-1}} \sum_{i=1}^n \alphai d_{\measuredangle}(\theta, \theta^{(i)})^2.
\]
We show that this formulation of $\thar$ is equivalent to the direct arithmetic mean formulation in \Cref{prop:arith_mean_def}. If there are multiple global minimizers for either function, we assume some fixed tie-breaking scheme is applied. 

\vspace{0.2em}
\begin{remark}[\textit{Computability of Angular Mean}]\label{rem:computability_thang}
   How to compute the angular mean is a well-studied problem \citep{buss2001spherical, afsari2013convergence, charlier2013necessary, cazals2021frechet}. Because it has no known closed-form representation, computation generally relies on gradient-based approximation schemes. The objective is not necessarily convex so convergence is not guaranteed \citep{buss2001spherical, afsari2013convergence}; however, in practice, this is often addressed by the fact that the arithmetic mean often serves as a good initial point for optimization (cf. \Cref{app:exp}), in which case there exist strong theoretical guarantees that gradient-based methods will converge \citep{buss2001spherical, afsari2013convergence}.
\end{remark}

\paragraph{\textit{Long-Run} Individual Proportionality.} \
In words, an aggregation mechanism $\theta_f$ is \textit{long-run} individually proportional if, over many random draws of $X \sim \mathcal{D}^m$, voter type $i$ ``agrees with'' at least an $\alphai$ fraction
of pairwise comparisons in the collective ranking $\succ_{\theta_f}^X$.
``Agreement'' is formalized as Kendall-Tau (KT) agreement. For two scoring vectors $\theta, \psi \in S^{d-1}$, their induced rankings on a batch $X \in \cX^m$ have \textit{KT-agreement}
\begin{center}
$a_{KT}(\theta, \psi, X) := |\{x,y\} \in \binom{X}{2} \colon (x \succ_{\theta} y \land x \succ_{\psi} y) \lor (y \succ_\theta x \land y \succ_\psi x)|$. 
\end{center}
An individual's proportional \textit{entitlement} of KT-agreement is then equal to $\alphai\binom{m}{2}$, i.e., an $\alphai$ fraction of all ${m \choose 2}$ pairwise comparisons in a ranking over $m$ items.

For an aggregation mechanism $\theta_f$, a profile $\prof \in P_{n,d}$, a batch $X \in \cX^m$, and voter type $i \in [n]$, \textit{$i$'s individual proportionality (IP) level} is
\[\agi(\theta_{f},X,\prof) := \frac{a_{KT}(\thetai, \theta_f\prof, X)}{\alphai \binom{m}{2}}.\]
Then, the \textit{Long-Run IP Level} (\textsc{Long-IP}) is the minimum \textit{expected} individual proportionality level received by any voter type. 
\begin{definition}[\textbf{Long-Run IP Level}]
    For any batch size $m$, item distribution $\cD$, and profile $\prof \in P_{n,d}$, an aggregation mechanism $\theta_f$ has 
    \begin{align*}
        \text{\textit{Long-Run IP Level}} \ \ \ \ \ \ \ &\pind(\theta_f, \cD^m, \prof) := \min_{i \in [n]} \E_{X \sim \D^m}\left[ \agi(\theta_f, X,\prof)\right]\\
        \text{and \textit{Worst-Case Long-Run IP Level}} \ \ \ \ \ \ \ & \pind^*(\theta_f, \cD^m) := \inf_{\prof \in P} \pind(\theta_f,\cD^m, \prof).
    \end{align*}
\end{definition}
Given that we aim to bound $\pind^*$, we now characterize what is possible: we show that the most proportional aggregation mechanism can achieve $\pind^*$ of no more than 1, and the least can achieve $\pind^*$ no less than 0. The lower bound is by nonnegativity of $\agi$; see \Cref{app:model} for proof of the upper bound. Accordingly, we say that $\theta_f$ \textit{satisfies} \textsc{Long-IP} iff \ $\pind^*(\theta_f, \cD^m) = 1.$ This means that in every profile, $\theta_f$ gives every voter type at least their proportional entitlement.
\prooflink{propindrange}
\begin{restatable}{proposition}{propindrange}\label{prop:propind_range}
    For any $\theta_f, m,$ and $\cD$, $\pind^*(\theta_f, \cD^m) \in [0,1]$.
\end{restatable}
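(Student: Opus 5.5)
The lower bound $\pind^*(\theta_f,\cD^m)\ge 0$ is immediate, since every $\agi$ is a ratio of a nonnegative count to a positive number. For the upper bound it suffices, for any fixed $\theta_f$, $m$ and $\cD$, to exhibit profiles on which the minimum over voter types of the expected IP level is at most $1+\varepsilon$ for every $\varepsilon>0$. The infimum over $P$ is then at most $1$.

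My first choice of profile is two antipodal voter types, $\thetaone=\theta$ and $\thetatwo=-\theta$ for an arbitrary $\theta\in S^{d-1}$, with weights $\alpha^{(1)}=\alpha^{(2)}=\tfrac12$. For any output $\theta^*=\theta_f\prof$ and almost every batch $X$, the scores under $-\theta$ are exactly the negated scores under $\theta$. Since $\cD$ is absolutely continuous, ties $\langle \theta, x_j - x_k\rangle=0$ occur with probability zero, so tie-breaking can be ignored almost surely. Hence $\succ_{-\theta}^X$ is exactly the reversal of $\succ_\theta^X$, and every pair $\{x,y\}$ agrees with $\theta^*$ under exactly one of the two voters. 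This gives
\[
\aKT(\thetaone,\theta^*,X)+\aKT(\thetatwo,\theta^*,X)=\tbinom{m}{2}
\]
almost surely. Taking expectations, $\E[\textsc{IP}_1]+\E[\textsc{IP}_2]=\tbinom{m}{2}/\bigl(\tfrac12\tbinom m2\bigr)=2$. So the minimum of the two expected levels is at most $1$, whatever $\theta_f$ returns. This already gives $\pind(\theta_f,\cD^m,\prof)\le 1$ for this profile, hence $\pind^*\le 1$.

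More generally, I would note that the argument needs only $\sum_i \alphai\,\E[\agi]\le$ the number of types whose rankings disagree on each pair. The antipodal two-type construction makes this exact, which is why I would use it rather than general $n$.

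The only delicate point is the almost-sure absence of ties: the set of batches with $\langle\theta,x_j-x_k\rangle=0$ is a finite union of hyperplane-type sets of Lebesgue measure zero in $\R^{dm}$, so by absolute continuity of $\cD^m$ it has probability zero. Once this is checked, the reversal identity and the averaging step are routine.
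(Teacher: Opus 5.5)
Your proof is correct and follows essentially the same route as the paper: the paper also uses the Antipodal Construction with $\alpha^{(1)}=\alpha^{(2)}=\tfrac12$, discards ties via absolute continuity (\Cref{cor:no-ties}), and uses that $\theta_f$ agrees with exactly one of the two antipodal voters on every pair (\Cref{prop:antipodal-disagreement}) to conclude that the smaller expected IP level is at most $1$. Your ``more generally'' aside is imprecisely worded. The correct sufficient condition is $\sum_i \E[\aKT(\thetai,\theta^*,X)]\le\binom{m}{2}$, i.e., the $\balpha$-weighted average of the expected IP levels is at most $1$. Nothing in your proof depends on that aside.
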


\subsection{Preliminaries}
We now establish that the arithmetic mean has the worst possible worst-case Long-Run IP Level: 
\prooflink{arithmeanpind}
\begin{restatable}[Figure 2, \cite{feffer2023moral}]{theorem}{arithmeanpind}\label{thm:thar_bad}
For any $m$ and item distribution $\cD$, \ $\pind^*(\thar, \cD^m) = 0$.
\end{restatable}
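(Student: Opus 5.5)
Since $\pind^* \ge 0$ by \Cref{prop:propind_range}, it suffices to exhibit, for every $\epsilon>0$, a profile $\prof$ in which some voter type has expected IP level below $\epsilon$. The construction has two voter types: a majority with $\thetaone$ and weight $\alpha^{(1)}=1-\delta$, and a minority with $\thetatwo$ and weight $\alpha^{(2)}=\delta$. The vectors are chosen so that the arithmetic mean, which is the normalization of $(1-\delta)\thetaone+\delta\thetatwo$ by \Cref{prop:arith_mean_def}, lies very close to $\thetaone$ while $\thetatwo$ points nearly opposite to $\thetaone$.

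Concretely, work in the plane spanned by the first two coordinates (for $d>2$ pad with zeros). Set $\thetaone=e_1$ and $\thetatwo=(-\cos\eta,\sin\eta)$ for a small $\eta>0$, so that $\dang(\thetaone,\thetatwo)=\pi-\eta$. For any $\delta<1/2$ the sum $(1-\delta)\thetaone+\delta\thetatwo$ is nonzero, and its normalization $\thar$ satisfies $\dang(\thar,\thetaone)\to 0$ as $\eta\to0$ with $\delta$ fixed. Consequently $\dang(\thar,\thetatwo)\to\pi$.

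The next step bounds the expected agreement. By linearity of expectation over the $\binom m2$ pairs, $\E[a_{KT}(\thetatwo,\thar,X)] = \binom m2 \Pr_{x,y\sim\cD}[\text{$\thar,\thetatwo$ agree on } \{x,y\}]$. Since $\cD$ is absolutely continuous, ties occur with probability zero. So agreement on a pair is the event $\sign\langle\thar,x-y\rangle=\sign\langle\thetatwo,x-y\rangle$, and this requires $z=x-y$ to lie in the region $R(\thar,\thetatwo)=\{z: \langle\thar,z\rangle\langle\thetatwo,z\rangle>0\}$. When $\thar\to-\thetatwo$, the indicator of $R$ tends to $0$ pointwise off a null set (the hyperplane $\langle\thetatwo,z\rangle=0$). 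The law of $z$ is absolutely continuous, so dominated convergence gives $\Pr[z\in R]\to0$. Thus $\E[\agi]\le \Pr[z\in R]/\delta$, and for fixed $\delta$ (say $\delta=1/4$) this tends to $0$ as $\eta\to0$. Taking the infimum over profiles yields $\pind^*(\thar,\cD^m)=0$.

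The main obstacle is that $\cD$ is general, not spherically symmetric, so the agreement probability cannot be computed as an angle over $\pi$. The dominated-convergence step sidesteps this: it needs only absolute continuity of $x-y$, which follows from absolute continuity of $\cD$ via convolution. A secondary point is ensuring that the tie-breaking in the argmin defining $\thar$ is irrelevant; this holds because the minimizer is unique whenever the weighted sum is nonzero.
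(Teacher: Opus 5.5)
Your proof is correct, but it takes a detour that the paper avoids.

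\textbf{The paper's route.} The paper uses the exactly antipodal profile $\thetaone=(-1,0)$, $\thetatwo=(1,0)$ with $\alpha^{(1)}=0.3$ and $\alpha^{(2)}=0.7$. By \Cref{prop:arith_mean_def}, $\thar=\thetatwo$ exactly. On every tie-free batch the minority then disagrees with $\thar$ on all $\binom{m}{2}$ pairs, so its expected IP level is exactly $0$ and the infimum is attained. No limiting argument is needed, and nothing about the law of $x-y$ is used beyond the null-tie fact of \Cref{prop:no-ties}.

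\textbf{Your route.} Your construction is this same profile perturbed by $\eta$. At $\eta=0$ your region $R$ is empty and you recover the paper's proof verbatim. The cost of the perturbation is a dominated-convergence step, and you only reach the infimum as a limit rather than attaining it. What you gain is robustness: the majoritarian collapse of $\thar$ is not an artifact of an exactly antipodal, degenerate configuration, but persists when $\dang(\thetaone,\thetatwo)<\pi$.

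\textbf{A small fix in the convergence step.} Because $\thetatwo$ itself moves with $\eta$, the exceptional null set for the pointwise convergence should be the fixed limiting hyperplane $\{z:\langle e_1,z\rangle=0\}$, not the moving one $\{z:\langle\thetatwo,z\rangle=0\}$. Off that fixed hyperplane, $\langle\thar,z\rangle\langle\thetatwo,z\rangle\to-\langle e_1,z\rangle^2<0$, which is what you need. Also, full absolute continuity of $x-y$ is more than required. It suffices that $\Pr[\langle e_1,x-y\rangle=0]=0$, which is exactly the argument of \Cref{prop:no-ties}.
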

The proof (\Cref{app:model}) is a simple profile with $n=2$, where the majority ($\alpha^{(1)} > 1/2$) and minority ($\alpha^{(2)} < 1/2$) types have diametrically opposed scoring vectors, and thus the arithmetic mean overlaps perfectly with $\theta^{(1)}$. We name this profile The Antipodal Construction for later use (\Cref{fig:antipodal-prof}). 

\begin{figure}[htbp]
    \centering
    \begin{subfigure}[t]{0.49\textwidth}
        \centering
        \begin{tikzpicture}[scale=1.2]

        \draw[black] (0,0) circle (1);

        \fill (0,0) circle (0.02);

        \coordinate (theta1) at (180:1);
        \coordinate (theta2) at (0:1);

        \draw[->, thick] (0,0) -- (theta1);
        \draw[->, thick] (0,0) -- (theta2);

        \node[left] at (theta1) {$\theta^{(1)}$};
        \node[right] at (theta2) {$\theta^{(2)}$};

        \end{tikzpicture}
        \caption{The Antipodal Construction}
        \label{fig:antipodal-prof}
    \end{subfigure}
    \hfill 
    \begin{subfigure}[t]{0.49\textwidth}
        \centering
        \input{.//Figures/angle_agreement}
        \caption{Geometric interpretation of agreement}
        \label{fig:prob_agree}
    \end{subfigure}

\end{figure}

In addition to this critique of the arithmetic mean, \citet{feffer2023moral} also observe a strong relation between two scoring vectors' angular distance and their expected KT-agreement on batches drawn from spherically symmetric distributions when $m =2$:
\begin{proposition}[Proposition 1, \cite{feffer2023moral}]\label{prop:angle_eprop} For any spherically-symmetric item distribution $\D_{\circ}$ and any $\theta,\psi \in S^{d-1}$, we have
    $\E_{X \sim \D_{\circ}^2}[\aKT(\theta, \psi, X)] = \nicefrac{\pi - \dang(\theta, \psi)}{\pi}.$
\end{proposition}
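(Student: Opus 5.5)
With $m=2$ a batch is just an i.i.d. pair $(x_1,x_2)$. The pair is ranked the same by $\theta$ and $\psi$ exactly when $s_\theta(x_1)-s_\theta(x_2)$ and $s_\psi(x_1)-s_\psi(x_2)$ have the same sign. Writing $z := x_1 - x_2$, we therefore need
\[\E[\aKT(\theta,\psi,X)] = \Pr\bigl[\sign\langle \theta, z\rangle = \sign\langle \psi, z\rangle\bigr].\]
Ties happen only when $\langle\theta,z\rangle=0$ or $\langle\psi,z\rangle=0$. Since $\cD_\circ$ is absolutely continuous, $z$ is absolutely continuous as well, so these events lie on hyperplanes and have probability zero. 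The tie-breaking rule can therefore be ignored.

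\textbf{Step 1: the difference vector is spherically symmetric.} Let $R$ be any orthogonal matrix. By rotational invariance of $f_{\cD_\circ}$, the pair $(Rx_1,Rx_2)$ has the same joint law as $(x_1,x_2)$, because the coordinates are independent and each marginal is invariant. Hence $Rz \overset{d}{=} z$, and $z$ is spherically symmetric with $\Pr[z=0]=0$. It follows that $u := z/\|z\|$ is uniformly distributed on $S^{d-1}$. Since the event in question depends only on the direction $u$, we reduce to
\[\Pr_{u\sim \mathrm{Unif}(S^{d-1})}\bigl[\sign\langle\theta,u\rangle = \sign\langle\psi,u\rangle\bigr].\]

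\textbf{Step 2: reduce to the plane containing $\theta$ and $\psi$.} This is the classical random-hyperplane computation from Goemans--Williamson. If $\psi=\pm\theta$ the answer is $1$ or $0$, matching $(\pi-\dang)/\pi$, so assume $\theta,\psi$ are linearly independent and span a plane $V$. Disagreement means that the hyperplane $u^\perp$ separates $\theta$ and $\psi$, and this depends only on the projection $P_V u$. The projection is nonzero almost surely, and its direction is uniform on the unit circle of $V$: take $u=g/\|g\|$ with $g$ a standard Gaussian, so that $P_V g$ is a rotation-invariant Gaussian in $V$. In $V$, the line $(P_Vu)^\perp\cap V$ is a uniformly random line through the origin. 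It separates $\theta$ from $\psi$ exactly when its direction angle falls in an arc of length $\dang(\theta,\psi)$ out of $\pi$ possible line directions.

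Consequently the disagreement probability is $\dang(\theta,\psi)/\pi$, and the agreement probability is $(\pi-\dang(\theta,\psi))/\pi$, as claimed.

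The main point requiring care is Step 1, namely verifying that the difference of two i.i.d. spherically symmetric vectors is again spherically symmetric and has no atom at $0$. Once that is established, the rest is the standard planar angle argument.
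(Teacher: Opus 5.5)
Your proof is correct and follows the same route as the paper's sketch: reduce agreement to $\sign\langle\theta,z\rangle=\sign\langle\psi,z\rangle$ for the difference vector $z=x_1-x_2$, use spherical symmetry of $z$, and compute the fraction of the sphere covered by the agreement double-wedge. Your Goemans--Williamson projection onto $\mathrm{span}(\theta,\psi)$ gives a proof of the claim that this fraction equals $(\pi-\dang(\theta,\psi))/\pi$, which the paper states without proof, and your handling of ties and of the law of $x_1-x_2$ fills in details the sketch leaves implicit.
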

The intuition for the argument will be useful, so we summarize it here. Let $X = (x_1,x_2)$ and define the difference vector $\tilde{x} = x_1 - x_2$. Note that $\succ_{\theta}$ and $\succ_{\psi}$ agree on the ranking of $x_1$ and $x_2$ iff $\sign(\langle \theta, \tilde{x}\rangle) = \sign(\langle \psi, \tilde{x}\rangle)$. Geometrically (see \Cref{fig:prob_agree}), note that the set $H_>^\theta = \{y : \sign(\langle \theta, y \rangle) > 0\}$ forms a halfspace, and so does the similarly defined $H_>^\psi$. The intersection $H_>^\theta \cap H_>^\psi$ then forms a cone representing the set of $\tilde{x}$ for which $\theta$ and $\psi$ agree that $x_1 \succ x_2$. Via the symmetric argument for when $\theta,\psi$ agree that $x_2 \succ x_1$, it follows that the set of all $\tilde{x}$ on which $\theta$ and $\psi$ agree forms a double-wedge. The intersection of that double-wedge with any sphere centered at $0$ takes up a $\nicefrac{\pi - \dang(\theta,\psi)}{\pi}$ fraction of the sphere's total surface area. Finally, observe that the expected agreement between $\theta$ and $\psi$ is equal to the probability of $\tilde{x}$ landing in the double-wedge. By spherical symmetry, $\tilde{x}$ itself is drawn from a spherically symmetric distribution $\tilde{\D}_\circ$, and this probability is thus equal to the size of the double-wedge, concluding the proof. 

\section{The Angular Mean Satisfies Long-Run Individual Proportionality}\label{sec:ang_mean_prop}

The dramatic disproportionality of $\thar$ motivates the question of whether there exists \textit{any} fixed linear decision rule satisfying \textsc{Long-IP}. 
We now answer this in the affirmative for the angular mean $\thang$, assuming a spherically symmetric item distribution. 

\begin{restatable}[\textbf{Main Result}]{theorem}{thangprop}\label{thm:thang_prop}
For any batch size $m$ and spherically symmetric item distribution $D_\circ$, 
\[ \pind^*(\thang, \symD^m) = 1.\]
\end{restatable}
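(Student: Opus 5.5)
First, reduce to pairs. By linearity of expectation and since items are i.i.d., $\E_{X\sim\symD^m}[\aKT(\thetai,\theta^*,X)] = \binom{m}{2}\,\E_{X\sim\symD^2}[\aKT(\thetai,\theta^*,X)]$. Ties occur with probability zero by absolute continuity, so the tie-breaking rule plays no role. By \Cref{prop:angle_eprop}, the expected IP level of voter type $i$ is therefore
\[
\E_{X}\left[\agi(\thang,X,\prof)\right] \;=\; \frac{\pi - \dang(\thang,\thetai)}{\alphai\,\pi}.
\]
Together with \Cref{prop:propind_range}, which gives $\pind^*\le 1$, the theorem reduces to a purely geometric claim. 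For every profile and every $i$,
\[
\dang(\thang,\thetai)\;\le\;(1-\alphai)\,\pi.
\]
Notably, this statement is independent of $m$ and of the radial law $g$.

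The main work, and the main obstacle, is proving this geometric bound for an angular mean (Fréchet mean on $S^{d-1}$) with no closed form. My first attempt is a variational argument. Fix $i$, write $\theta^*=\thang$, and let $\delta=\dang(\theta^*,\thetai)$. Suppose for contradiction that $\delta>(1-\alphai)\pi$. Consider moving $\theta^*$ along the geodesic toward $\thetai$, or, if $\delta=\pi$, along any geodesic out of $\theta^*$. Because $\theta^*$ is a global minimizer, the objective cannot decrease to first order, in the sense of one-sided directional derivatives.

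Along such a move, the term for $i$ changes at rate $-2\alphai\delta$. For each $j\neq i$, $\dang(\cdot,\theta^{(j)})$ is $1$-Lipschitz, and $\dang(\theta^*,\theta^{(j)})\le\pi$, so the term for $j$ increases at rate at most $2\alpha^{(j)}\pi$. First-order optimality then gives only $\alphai\delta\le(1-\alphai)\pi$, i.e. $\delta\le(1-\alphai)\pi/\alphai$. This is too weak.

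To sharpen it, I would combine the first-order condition with a global comparison, namely evaluating the objective at the candidate point $\thetai$ itself. Optimality gives
\[
\alphai\delta^2+\sum_{j\ne i}\alpha^{(j)}d_j^2\;\le\;\sum_{j\ne i}\alpha^{(j)}\dang(\thetai,\theta^{(j)})^2 .
\]
The triangle inequality gives $\dang(\thetai,\theta^{(j)})\le\min(\pi,\delta+d_j)$, where $d_j=\dang(\theta^*,\theta^{(j)})$. The worst case should be when all other voters collapse to a single point antipodal to $\thetai$. There the minimizer on the geodesic is at angle exactly $(1-\alphai)\pi$ from $\thetai$: minimizing $\alphai t^2+(1-\alphai)(\pi-t)^2$ gives $t=(1-\alphai)\pi$. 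This indicates the bound is tight, and suggests reducing the general case to this one.

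Concretely, I would try to show that replacing the others' points by their "projections" onto the great circle through $\thetai$ and $\theta^*$, placed at distance $d_j$ from $\theta^*$ on the side away from $\thetai$, only worsens the situation. This reduces the problem to a one-dimensional (circular) problem. Alternatively, I would use the stationarity condition $\sum_j\alpha^{(j)}d_j\,u_j=0$, where $u_j$ is the unit tangent at $\theta^*$ pointing toward $\theta^{(j)}$. Taking its inner product with $u_i$ gives $\alphai\delta\le\sum_{j\ne i}\alpha^{(j)}d_j\cos\beta_j$, where $\beta_j$ is the angle at $\theta^*$ between $u_j$ and $-u_i$.

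Spherical trigonometry (the law of cosines) then gives $\dang(\thetai,\theta^{(j)})$ in terms of $\delta$, $d_j$ and $\beta_j$. The remaining step is to combine this with the global-comparison inequality and show that $\delta>(1-\alphai)\pi$ is incompatible with both holding. Handling this interplay in $d$ dimensions, where points off the great circle can be closer to $\thetai$ than the triangle bound suggests, is the delicate step. It generalizes the circle bound of \citet{hotz2015intrinsic}, so I expect it to require the most care. The cases $\delta=\pi$ and non-differentiability at antipodal points also need a separate argument via one-sided derivatives.
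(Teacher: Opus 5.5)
Your reduction is correct and matches the paper's. \Cref{lem:ktangle} gives $\E[\agi(\thang,X,\prof)]=(\pi-\dang(\thang,\thetai))/(\alphai\pi)$, \Cref{prop:propind_range} gives the upper bound, and everything rests on $\dang(\thang,\thetai)\le(1-\alphai)\pi$ (\Cref{thm:angular_bound}). That bound is the whole content of the theorem, and your proposal does not prove it. Moreover, the ingredients you plan to combine provably cannot prove it. Stationarity at $\theta^*$, the comparison $F(\theta^*)\le F(\thetai)$ (where $F(\theta)=\sum_j\alpha^{(j)}\dang(\theta,\theta^{(j)})^2$), the triangle inequality and the spherical law of cosines are all satisfied by configurations that violate the bound. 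Take $d=2$, with $\thetai$ at angle $0$ and a second voter at angle $\pi-\varepsilon$, each of weight $1/2$. Let $p$ be the midpoint of the \emph{long} arc between them.
Then:
\begin{enumerate}
\item $p$ is at distance $(\pi+\varepsilon)/2>\pi/2=(1-\alphai)\pi$ from both voters.
\item The two tangent directions at $p$ are opposite, with equal weights and lengths, so stationarity holds. In fact, moving $p$ by arc length $t$ gives $F(p)+t^2$, so $p$ is a strict local minimum.
\item $F(p)=(\pi+\varepsilon)^2/4<(\pi-\varepsilon)^2/2=F(\thetai)$ whenever $\varepsilon<(3-2\sqrt{2})\pi$.
\end{enumerate}
So your final step, ``show $\delta>(1-\alphai)\pi$ is incompatible with both holding,'' cannot be carried out: the two conditions are compatible. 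A proof must use global minimality against a better-chosen competitor. Your alternative of projecting the other voters onto the great circle does not supply one. The claim that projecting ``only worsens'' the situation is neither made precise nor justified, and even on $S^1$ the competitor argument is still needed.

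The missing idea is the paper's choice of competitor. Parametrize the great circle $G_i$ through $\theta^*$ and $\thetai$ at unit speed, with $x(0)=\theta^*$ and $x(l_i)=\thetai$. Move \emph{away} from $\thetai$ by arc length $2\pi\alphai$, i.e.\ to $x(\varphi)$ with $\varphi=-2\pi\alphai$. This wraps around the circle and approaches $\thetai$ from the other side, so $\dang(x(\varphi),\thetai)^2\le(l_i-\varphi-2\pi)^2$.

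For $j\ne i$, use the hinge comparison with the flat plane (Toponogov, \Cref{cosbound}), applied at $\theta^*$ for $|\varphi|<\pi$ and at $-\theta^*$ beyond the antipode (\Cref{lem:fake_toponogov}). It gives $\dang(x(\varphi),\theta^{(j)})^2\le l_j^2+\varphi^2-2l_j\varphi\cos\gamma_j$.

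Next, apply the first-order condition along $G_i$, $\sum_j\alpha^{(j)}l_j\cos\gamma_j=0$. It is well defined because a minimizer is never antipodal to a voter (\Cref{lem:circle_antipodal}). This condition cancels every linear term, leaving $F(x(\varphi))-F(\theta^*)\le\varphi^2+4\pi\alphai(\pi+\varphi-l_i)$. At $\varphi=-2\pi\alphai$ this equals $4\pi\alphai((1-\alphai)\pi-l_i)$, which is negative when $l_i>(1-\alphai)\pi$.

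In the example above, this competitor is exactly the short-arc midpoint. That is what rules out $p$, and it is precisely what comparison with $\thetai$ plus stationarity cannot detect.
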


The first step of the proof is to extend \Cref{prop:angle_eprop} to the general $m$-item case, which follows by the intuition above along with linearity of expectation (full proof in \Cref{app:proportionality}):
\prooflink{agreementangle}
\begin{restatable}{lemma}{agreementangle}\label{lem:ktangle}
    For any batch size $m$, spherically symmetric item distribution $D_\circ$, and $\theta, \psi \in S^{d-1}$, 
    \begin{center}
        $\E_{X \sim \D_\circ^m}\big[\aKT(\theta, \psi, X)\big] = \frac{\pi - \dang(\theta, \psi)}{\pi} \binom{m}{2}.$
    \end{center}
\end{restatable}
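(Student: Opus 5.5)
The plan is to reduce the $m$-item statement to the two-item case of \Cref{prop:angle_eprop} using linearity of expectation over pairs. First, write the KT-agreement as a sum of pair indicators:
\[
\aKT(\theta,\psi,X)=\sum_{1\le j<k\le m} \mathbf{1}\bigl[\,x_j,x_k \text{ are ordered the same way by } \succ_\theta \text{ and } \succ_\psi\,\bigr].
\]
Taking expectations over $X\sim\D_\circ^m$, each term depends only on the pair $(x_j,x_k)$. Because the items are i.i.d., the marginal distribution of this pair is exactly $\D_\circ^2$. So every summand equals $\E_{X'\sim\D_\circ^2}[\aKT(\theta,\psi,X')]$.

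Next, invoke \Cref{prop:angle_eprop} to identify each summand as $\frac{\pi-\dang(\theta,\psi)}{\pi}$. Summing over the $\binom{m}{2}$ pairs then gives the claim.

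The only point needing care is the tie-breaking rule $\succ_{\text{tiebreak}}$, since \Cref{prop:angle_eprop} concerns the signs of $\langle\theta,\tilde x\rangle$ and $\langle\psi,\tilde x\rangle$ with $\tilde x=x_j-x_k$. I would argue as follows:
\begin{itemize}
\item By absolute continuity, $\tilde x$ has an absolutely continuous distribution.
\item The events $\langle\theta,\tilde x\rangle=0$ and $\langle\psi,\tilde x\rangle=0$ are hyperplanes, so they have probability zero.
\item Hence, almost surely, agreement on the pair is equivalent to $\sign\langle\theta,\tilde x\rangle=\sign\langle\psi,\tilde x\rangle$, and tie-breaking is irrelevant.
\end{itemize}

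If one wanted a self-contained argument rather than citing \Cref{prop:angle_eprop}, the step that takes real work is showing that $\tilde x$ is spherically symmetric. To see this, note that for any orthogonal $Q$ the pair $(Qx_j,Qx_k)$ has the same law as $(x_j,x_k)$, so $Q\tilde x\overset{d}{=}\tilde x$. The double-wedge $\{y:\sign\langle\theta,y\rangle=\sign\langle\psi,y\rangle\}$ meets each centered sphere in a fraction $\frac{\pi-\dang(\theta,\psi)}{\pi}$ of its area. One can verify this by projecting onto $\mathrm{span}(\theta,\psi)$: the projected direction of a uniform point on the sphere is uniform on the circle, and the disagreement region is two arcs, each of angle $\dang(\theta,\psi)$. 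Conditioning on $\|\tilde x\|$ then finishes the argument. Since the paper already provides \Cref{prop:angle_eprop}, I expect the proof to be a short linearity-of-expectation argument, with this symmetry step as the only substantive point.
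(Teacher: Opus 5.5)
Your proposal is correct and follows the paper's proof. Like the paper, you write $\aKT$ as a sum of pair indicators, use linearity of expectation and the i.i.d.\ structure to reduce each term to the two-item case, and then apply \Cref{prop:angle_eprop} to get $\frac{\pi-\dang(\theta,\psi)}{\pi}$ for each of the $\binom{m}{2}$ pairs. Your extra remarks are sound: tie-breaking is irrelevant because hyperplanes have probability zero, and $\tilde x = x_j - x_k$ is itself spherically symmetric. The paper treats both points only implicitly, via \Cref{cor:no-ties} and the sketch following \Cref{prop:angle_eprop}.
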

Plugging in $\thetai$ and $\thang$ for $\theta$ and $\psi$ and writing our desired bound on $\pind^*$ in terms of $\E_{X \sim \D_\circ^m}\big[\aKT(\theta, \psi, X)\big]$, 
we deduce that it suffices to show that $\dang(\theta^{(i)}, \thang) \leq (1 - \alphai) \pi$ for all $i \in [n]$. We now show that this fundamental property of the angular mean holds, i.e.,  the angular mean is guaranteed to simultaneously lie within a $(1-\alphai)\pi$ angle from any $\theta^{(i)}$. To our knowledge, this was unknown, and generalizes the known version on $S^1$ \citep{hotz2015intrinsic}. The full proof is located in \Cref{app:proportionality}.

\prooflink{angbound}
\begin{restatable}[\textbf{Fundamental Property of Angular Mean}]{theorem}{angbound}\label{thm:angular_bound}
    For any profile $(\btheta, \balpha) \in P_{n,d}$,
    \[\dang(\thang\prof, \thetai) \leq (1 - \alphai)\pi \ \ \text{ for all } i \in [n].\]
\end{restatable}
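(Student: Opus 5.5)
Let $\mu := \thang\prof$ be a global minimizer of $F(\theta) := \sum_{j} \alpha^{(j)} \dang(\theta,\theta^{(j)})^2$ on $S^{d-1}$, and suppose for contradiction that $\dang(\mu,\theta^{(i)}) > (1-\alphai)\pi$ for some $i$. The plan is to exhibit a competitor $\theta'$ with $F(\theta') < F(\mu)$, using only two ingredients: the triangle inequality for $\dang$ on the sphere, and the bound $\dang \le \pi$. The natural competitor moves $\mu$ towards $\theta^{(i)}$ along the great circle through both. Since $\dang(\mu,\theta^{(i)}) < \pi$ whenever $\alphai>0$ is nontrivial (if $\dang=\pi$ we take any great circle through the antipodal pair), this geodesic is well defined.

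\textbf{Step 1 (comparison to the vertex).} Note first that $\mu$ must beat the candidate $\theta^{(i)}$ itself. For every $j$, the triangle inequality gives $\dang(\theta^{(i)},\theta^{(j)}) \le \dang(\mu,\theta^{(i)}) + \dang(\mu,\theta^{(j)})$. This alone does not yield a contradiction, so we use a more refined competitor. Let $\delta := \dang(\mu,\theta^{(i)})$ and let $\theta_t$ be the point at distance $t$ from $\mu$ along the geodesic to $\theta^{(i)}$, for $t\in[0,\delta]$.

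\textbf{Step 2 (first-order variation).} For $j\ne i$ we have $\dang(\theta_t,\theta^{(j)}) \le \dang(\mu,\theta^{(j)}) + t$, so
\[
\alpha^{(j)}\dang(\theta_t,\theta^{(j)})^2 \le \alpha^{(j)}\bigl(\dang(\mu,\theta^{(j)})^2 + 2t\,\dang(\mu,\theta^{(j)}) + t^2\bigr) \le \alpha^{(j)}\bigl(\dang(\mu,\theta^{(j)})^2 + 2t\pi + t^2\bigr).
\]
For $i$ itself, $\dang(\theta_t,\theta^{(i)}) = \delta - t$ exactly, so its term changes by $-2t\delta + t^2$. Summing the contributions,
\[
F(\theta_t) - F(\mu) \le 2t\bigl((1-\alphai)\pi - \alphai\delta\bigr) + t^2.
\]
This is too weak: it compares $\alphai\delta$ with $(1-\alphai)\pi$ rather than $\delta$ with $(1-\alphai)\pi$. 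The crude bound $\dang(\mu,\theta^{(j)})\le\pi$ therefore has to be sharpened.

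\textbf{Step 3 (the main obstacle).} The fix is to use first-order optimality of $\mu$ together with the antipode. Where $F$ is differentiable at $\mu$ (i.e., $\mu \ne -\theta^{(j)}$ for all $j$), stationarity states that
\[
\sum_j \alpha^{(j)}\,\dang(\mu,\theta^{(j)})\, u_j = 0,
\]
where $u_j$ is the unit tangent at $\mu$ pointing to $\theta^{(j)}$. Taking the component along $u_i$ gives
\[
\alphai\delta = -\sum_{j\ne i}\alpha^{(j)}\,\dang(\mu,\theta^{(j)})\langle u_j,u_i\rangle \le \sum_{j\ne i}\alpha^{(j)}\,\dang(\mu,\theta^{(j)}),
\]
which still only bounds things by $(1-\alphai)\pi$. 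To get the sharp bound, we instead use that $\dang(\mu,\theta^{(j)}) + \dang(\theta^{(j)},-\mu) = \pi$. This gives the key angle inequality
\[
\dang(\mu,\theta^{(j)}) \le \pi - \dang(-\mu,\theta^{(j)}).
\]
The idea is to combine this with the projection $-\langle u_j,u_i\rangle$, via the spherical law of cosines in the triangle $(\mu,\theta^{(i)},\theta^{(j)})$. The goal is to show that
\[
-\dang(\mu,\theta^{(j)})\langle u_j,u_i\rangle \le \pi - \delta
\]
for each $j$. Summing with weights $\alpha^{(j)}$ over $j \ne i$ then gives $\alphai\delta \le (1-\alphai)(\pi-\delta)$, i.e., $\delta \le (1-\alphai)\pi$, which is the claim. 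The per-$j$ inequality is a statement about a single spherical triangle with one vertex in the hemisphere beyond $\mu$. I would attempt it by reducing to $S^2$, since three points span at most a 3-dimensional subspace, and then checking it via the extremal case in which $\theta^{(j)}$ lies on the great circle through $\mu$ and $\theta^{(i)}$. That case reduces to the known $S^1$ bound of \citet{hotz2015intrinsic}, and a monotonicity argument in the out-of-plane angle should handle the rest.

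Finally, the nondifferentiable case $\mu = -\theta^{(j)}$ would be handled separately. There the term for $j$ has a one-sided directional derivative of $-2\pi$ in every direction, which only helps the minimizer's inequality, so the same summation goes through.
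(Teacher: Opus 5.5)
\textbf{There is a genuine gap.} Step 3 hinges on the per-$j$ inequality $-\dang(\mu,\theta^{(j)})\langle u_j,u_i\rangle\le\pi-\delta$, and this inequality is false. It fails already in the collinear case you hoped to reduce to the known $S^1$ bound, and already at a genuine global minimizer.

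\emph{Counterexample at a global minimizer.} Take $d=2$ and identify $S^1$ with angles in $(-\pi,\pi]$. Put a voter $k$ at angle $0$ with weight $0.8$, and voters $i,j$ at angles $0.6\pi$ and $-0.6\pi$ with weight $0.1$ each. For $|\theta|\le0.4\pi$ one computes $F(\theta)=\theta^2+0.072\pi^2$. For $|\theta|>0.4\pi$ the $k$-term alone gives $F(\theta)>0.128\pi^2$. So $\mu$ is the point at angle $0$, with $\delta=0.6\pi$ and $u_j=-u_i$. Your left-hand side is then $0.6\pi>0.4\pi=\pi-\delta$. The theorem itself holds here ($0.6\pi\le0.9\pi$). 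The $S^1$ result is a statement about the minimizer as a whole, not a pointwise inequality, so no monotonicity argument in an out-of-plane angle can rescue the per-$j$ claim.

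\emph{Structural obstruction.} Step 3 uses nothing about $\mu$ except stationarity. If it worked, it would prove the bound at every critical point where $F$ is differentiable, and that is false. Take two voters of weight $1/2$ at angles $0$ and $0.2\pi$. On $(\pi,1.2\pi)$ we have $F(\theta)=\tfrac12(2\pi-\theta)^2+\tfrac12(\theta-0.2\pi)^2$, so angle $1.1\pi$ is a smooth strict local minimizer at distance $0.9\pi>\pi/2$ from both voters. Global minimality has to enter through a competitor at finite distance.

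\textbf{The missing idea (the paper's route).} Restrict to the great circle $G_i$ through $\mu$ and $\theta^{(i)}$, with unit-speed parametrization $x$, $x(0)=\mu$, $x(\delta)=\theta^{(i)}$. Write $l_j=\dang(\mu,\theta^{(j)})$ and $\cos\gamma_j=\langle u_j,u_i\rangle$, and move \emph{away} from $\theta^{(i)}$, the long way around.
\begin{itemize}
\item For $j\neq i$, Toponogov's comparison with the Euclidean law of cosines gives $\dang(x(\varphi),\theta^{(j)})^2\le l_j^2+\varphi^2-2l_j\varphi\cos\gamma_j$ for all $\varphi\in[-2\pi,0]$. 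For $\varphi<-\pi$ one uses the triangle with vertex $-\mu$.
\item For voter $i$, the wrap-around past $-\theta^{(i)}$ gives $\dang(x(\varphi),\theta^{(i)})^2\le(\delta-\varphi-2\pi)^2$.
\item The stationarity identity you derived, $\sum_j\alpha^{(j)}l_j\cos\gamma_j=0$, cancels every linear term. This leaves $F(x(\varphi))-F(\mu)\le\varphi^2+4\pi\alpha^{(i)}(\pi+\varphi-\delta)$, which is negative at $\varphi=-2\pi\alpha^{(i)}$ whenever $\delta>(1-\alpha^{(i)})\pi$.
\end{itemize}

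Your Step 2 direction cannot produce this, even with the sharp comparison. Moving towards $\theta^{(i)}$, the same bookkeeping yields only $F(\theta_t)-F(\mu)\le t^2$, because the whole gain comes from the wrap-around of the $i$-th distance.

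Finally, the nondifferentiable case is not one where the summation ``goes through''; it is vacuous. A term with one-sided derivative $-2\pi\alpha^{(j)}$ in two opposite directions cannot be balanced by a differentiable remainder, so no minimizer is antipodal to a voter.
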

\vspace*{-1em}\begin{proof}[Proof Sketch.] Let $F(\theta) := \sum_{i=1}^n \alphai \dang(\theta, \thetai)^2$ be the value of the angular mean objective for a generic $\theta$. Suppose for contradiction that there exists $i \in [n]$ such that $\dang(\thang, \thetai) > (1 - \alphai)\pi$. We will find another score vector $\theta'$ such that $F(\theta') < F(\thang)$, contradicting the optimality of $\thang$.

We illustrate the intuition with \Cref{fig:angular_bound}. To construct $\theta'$, let $G_i$ be the unique great circle passing through $\thetai$ and $\thang$ (left image). Then, let $\theta'$ be positioned on the great circle $G_i$, but swooped around to be within a $(1-\alpha_i)\pi$ distance of $\theta^{(i)}$ but on the other side. We want to show that
\[
  \textstyle  F(\theta') - F(\thang) = \sum_{j \in [n]} \alpha^{(j)} (\dang(\theta',\theta^{(j)})^2 - \dang(\thang,\theta^{(j)})^2 ) < 0.
\]
The $j = i$ term of this sum will be handled at the end. 
We first bound $\dang(\theta',\theta^{(j)})^2$ for each $j \neq i$, corresponding to bounding the angular length of the yellow arc (left image).
We do this by applying Toponogov's inequality and the Euclidean law of cosines, which together give a  spherical analog of the Pythagorean Theorem showing that the yellow arc is upper bounded by its Euclidean counterpart when the other two sides of the triangle are preserved (middle image). 
As will be important in the next step, one term of this bound tracks whether the move from $\thang$ toward $\theta'$ initially moves toward or away from $\theta^{(j)}$, as captured by $\cos(\gamma_j)$ (see diagram for $\gamma_j$).

With a bound on $\dang(\theta',\theta^{(j)})^2$ for each $j$, we then take the final weighted sum over voters to bound $F(\theta') - F(\thang)$ as above. Here we add one final key ingredient: the local optimality of $\thang$. Because $F$ has zero derivative at $\thang$ in the direction of the move from $\thang$ to $\theta'$, the weighted sum of the directional terms involving $\cos(\gamma_j)$, as carried through our bound above, sum to 0 (right image). The remaining terms then reduce to an expression involving only the move length $\dang(\thang,\theta')$, the original distance $\dang(\thang,\thetai)$, and $\alphai$, which is negative by the choice of $\theta'$ and the assumed violation $\dang(\thang,\thetai)>(1-\alphai)\pi$. We conclude
$F(\theta')< F(\thang)$, our contradiction.
\end{proof}

\begin{figure}[htbp]
    \vspace{-1em}
    \centering
    \resizebox{0.9\linewidth}{!}{
    \input{.//Figures/proportionality_proof}
    }
    \caption{Diagrams for proof sketch of \Cref{thm:angular_bound}.}
    \label{fig:angular_bound}
\end{figure}

\begin{remark}[\textbf{Robustness}] \label{rem:robustness}
    A natural question is whether a score vector \textit{close to} $\thang$ will exhibit similar proportionality. The answer is yes: if a score vector is within angular distance $\gamma$ of $\thang$, each $i$'s IP level can decrease (additively) by at most $\gamma/(\pi\alpha^{(i)})$ (\Cref{lem:agreeapprox}).
\end{remark}

\section{The Possibility of Per-Batch Individual Proportionality}\label{sec:cost-of-consistency}

So far, we have shown that $\thang$ ensures that voters will receive their proportional entitlement \textit{in expectation over batches}, and thus on average over sufficient rounds by the Law of Large Numbers.
However, one may also want a stronger guarantee: proportionality on \textit{every item batch}. The strongest notion would guarantee that for any $\prof$ with $n$ voter types and any $m \in \mathbb{N}_{\geq 2}$, $\theta_f$ is such that \[ \textstyle
\min_{X \in \cX^m} \min_{i \in [n]}  \agi(\theta_f, X, \prof) \geq 1.\]
One can immediately see that this too strong for simple divisibility reasons: e.g., consider the Antipodal Construction with $m=2$, where any deterministic rule must output the same ranking each time, thereby giving one group 0 agreement. 
A similar construction can be used to show that this issue can occur even without divisibility issues (i.e., even when all voters' entitlements \(\alpha_i\binom{m}{2}\) are integral): simply let $m$ be large but make the items in the batch collinear, and so again, one group in the Antipodal Construction must end up with 0 agreement (\Cref{ex:batch-prop-impossibility}).

In light of this impossibility, we pursue a weakened version of per-batch proportionality, which takes the \textit{expectation} over item batches instead of the \textit{minimum}:

\begin{definition}(\textbf{Per-Batch IP Level})
For any batch size $m$, item distribution $\cD$, and profile $\prof \in P_{n,d}$, an aggregation function $\theta_f$ has \textit{per-batch proportionality level} of
\[\textstyle \pcol(\theta_f, \cD^m, \prof) := \E_{X \sim \D^m}\left[ \min_{i \in [n]}\agi(\theta_f, X,\prof)\right].\]
\end{definition}
The expectation allows us to avoid over-indexing on the worst-case item batches, while moving the minimum over $i \in [n]$ inside the expectation ensures that no voter type is shortchanged too frequently within each batch (rather than just on average over batches).
The following proposition, proved in \Cref{app:subsec-prop-relationship}, formalizes this intuition that $\pcol$ is a strict strengthening of $\pind$.

\prooflink{proprelationship}
\begin{restatable}{proposition}{proprelationship}\label{prop:prop_relationship}
    For any $\theta_f, \cD^m, \prof$, \  $
    \pcol(\theta_f, \cD^m, \prof) \leq \pind(\theta_f, \cD^m, \prof)$.
\end{restatable}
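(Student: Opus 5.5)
The plan is to compare the two definitions pointwise: fix $\theta_f$, $\cD^m$, and the profile $\prof$, and view each $\agi(\theta_f, X, \prof)$ as a random variable in $X \sim \D^m$. The argument is then the standard fact that the expectation of a minimum is at most the minimum of the expectations.

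First I will check that all expectations are well defined. For every $X$ and $i$, the agreement $a_{KT}(\thetai, \theta_f\prof, X)$ is an integer in $[0, \binom{m}{2}]$. Since $\alphai > 0$, each $\agi$ therefore lies in $[0, 1/\alphai]$ and is bounded. Measurability in $X$ holds because $a_{KT}$ is a finite sum of indicators of sign conditions on linear functions of $X$; ties occur only on a null set, so the tie-breaking rule does not matter. Hence $\E_{X\sim\D^m}[\agi]$ exists, and so does $\E_{X \sim \D^m}[\min_{i}\agi]$, because a minimum of finitely many bounded measurable functions is again bounded and measurable.

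Next, fix an arbitrary $i_0 \in [n]$. For every batch $X$ we have $\min_{i\in[n]} \agi(\theta_f,X,\prof) \le \textsc{IP}_{i_0}(\theta_f,X,\prof)$. Monotonicity of expectation then gives $\pcol(\theta_f,\cD^m,\prof) \le \E_{X\sim\D^m}[\textsc{IP}_{i_0}(\theta_f,X,\prof)]$. Because $i_0$ was arbitrary, we can take the minimum over $i_0 \in [n]$ on the right-hand side, which is exactly $\pind(\theta_f,\cD^m,\prof)$. This proves the claim.

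There is no real obstacle here. The only step requiring any care is the integrability and measurability check, which is immediate from boundedness.
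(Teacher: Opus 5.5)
Your proposal is correct and matches the paper's proof: the pointwise bound $\min_{i}\agi \le \ip_j$ for every $j$, monotonicity of expectation, then minimizing over $j$. The measurability and boundedness check you add is a harmless extra that the paper leaves implicit.
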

 Also note that this result --- and all in this section unless otherwise stated --- holds for \textit{all} item distributions (meeting the regularity conditions in \Cref{sec:model}), not just spherically symmetric ones.

Unfortunately, unlike $\pind$, we show that there exists \textit{no} fixed linear ranking rule that can perfectly satisfy $\pcol$. This can be shown trivially via divisibility issues, but we prove a stronger impossibility requiring that in the constructed profile, all voter type entitlements are integral: 
\prooflink{thmcpopUB}
\begin{restatable}[\textbf{Impossibility of $\pcol$}]{theorem}{thmcpopUB}\label{thm:fixed-cpop-UB}
    For any batch size $m \geq 3$\footnote{Note that for $m=2$, there is no profile satisfying the integral proportionality entitlement condition.}, there exists a profile $\prof \in P_{n,d}$ with $\alphai \binom{m}{2} \in \mathbb{Z}$ $ \forall i \in [n]$ such that, for any $\theta_f$ and any $\cD$,  \ $\pcol(\theta_f, \cD^m,\prof) < 1.$
\end{restatable}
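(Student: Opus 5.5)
The plan is to use the Antipodal Construction with two voter types whose entitlements are integral but unequal. We then show two things: the weighted sum of the two types' IP levels is identically $1$, and with positive probability one type receives strictly less than its entitlement.

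\textbf{Construction.} Let $C := \binom{m}{2}$, so $C \geq 3$ because $m \geq 3$. Take $n=2$, any $d \geq 2$, and $\theta^{(2)} = -\theta^{(1)}$. Set $\alpha^{(1)} = 1/C$ and $\alpha^{(2)} = (C-1)/C$. Both entitlements, $\alpha^{(1)}C = 1$ and $\alpha^{(2)}C = C-1$, are integers, and both $\alpha^{(i)}$ lie in $(0,1)$.

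\textbf{Step 1: $\alpha^{(1)}\agi[1] + \alpha^{(2)}\agi[2] = 1$ almost surely.} Since $\cD$ is absolutely continuous, with probability $1$ no pair $x_j,x_k$ of the batch has $\langle \theta^{(1)}, x_j - x_k\rangle = 0$. On this event $\succ_{\theta^{(2)}}^X$ is exactly the reverse of $\succ_{\theta^{(1)}}^X$, with no tie-breaking involved. Fix any output $\theta^* = \theta_f\prof$ and write $a := \aKT(\theta^{(1)},\theta^*,X)$. Every pair in $X$ then agrees with exactly one of the two types, so $\aKT(\theta^{(2)},\theta^*,X) = C - a$. Hence
\[
\alpha^{(1)}\,\frac{a}{\alpha^{(1)}C} + \alpha^{(2)}\,\frac{C-a}{\alpha^{(2)}C} = 1 .
\]
A weighted average of the two IP levels with weights $\alpha^{(i)}$ is therefore $1$, so $\min_i \agi(\theta_f,X,\prof) \leq 1$ almost surely. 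Equality holds only if both levels equal $1$, which means $a = 1$ exactly.

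\textbf{Step 2: $a \neq 1$ with positive probability.} Choose $v$ with $\langle\theta^*,v\rangle > 0$ and $\langle\theta^{(1)},v\rangle \neq 0$. Such a $v$ exists because, for instance, a small perturbation of $\theta^*$ works.
\begin{itemize}
\item If $\langle\theta^{(1)},v\rangle>0$, take $m$ points in the open set $U$ where $f_\cD>0$, nearly collinear along $v$ and strictly increasing in that direction. Then $\theta^*$ and $\theta^{(1)}$ order every pair identically, so $a = C \neq 1$.
\item Otherwise $\langle\theta^{(1)},v\rangle<0$, the two vectors order every pair oppositely, and $a = 0 \neq 1$.
\end{itemize}
Both sign conditions are strict, so they persist when each point is perturbed within a sufficiently small ball. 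This gives a product of open balls in $U^m$, which has positive $\cD^m$-probability. On that set $\min_i \agi < 1$; in fact one of the two types gets IP level $0$.

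Combining the two steps, $\pcol(\theta_f,\cD^m,\prof) = \E[\min_i \agi] < 1$, for every $\theta_f$ and every $\cD$. The one delicate point is the measure-zero tie event in Step 1. Ties in $\theta^{(1)}$ are excluded almost surely. Ties in $\theta^*$ need no separate treatment: whichever way the tie-breaking ranking orders such a pair, it agrees with exactly one of $\theta^{(1)}$ and $\theta^{(2)}$, so the identity in Step 1 still holds.
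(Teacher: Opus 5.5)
Your proof is correct and follows essentially the paper's route: the same Antipodal Construction with weights $1/\binom{m}{2}$ and $(\binom{m}{2}-1)/\binom{m}{2}$, an almost-sure bound $\min_i \agi \le 1$, and a positive-probability set of near-collinear batches on which one voter type gets zero agreement. The differences are cosmetic. You get the almost-sure bound from the weighted-average identity rather than the paper's integer case split on $a_{KT}$, and your choice of $v$ (covering both sign cases) avoids the paper's bisector construction and its separate treatment of $\theta_f = \theta^{(2)}$.
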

We prove this result in \Cref{app:subsec-cpop-UB} via a similar intuition as used above for \Cref{ex:batch-prop-impossibility}: we show that on batches $X$ that are \textit{nearly} collinear (and thus occur with nonzero probability), at least one voter type receives $\agi(\theta_{f},X,\prof)  = 0$. We then upper-bound this type's individual proportionality level in all other batches, demonstrating the bound. 
\vspace{0.2em}
\begin{remark}
    This impossibility is universal only over fixed linear ranking rules: the Proportional Sequential Borda voting rule introduced by \citet{DBLP:journals/corr/abs-2508-16177} is guaranteed to achieve $\pcol=1$ on all profiles with integer entitlements.
\end{remark}

The bad news is that the above remark concretely establishes a proportionality cost of restricting to fixed linear ranking rules. The good news, which we show next, is that this cost is surprisingly small and diminishes quickly in the batch size $m$. Intuitively, the $\pcol$ improves in $m$ because as $m$ grows, the likelihood that all $x \in X$ are collinear decreases, causing $\pcol$ to approach $\pind$. The full proof with explicit constants is in \Cref{app:subsec-cpop-LB}. 
\prooflink{thmcpopLB}
\begin{restatable}[\textbf{Main Result}]{theorem}{thmcpopLB}\label{thm:fixed-cpop-LB}
    For all $\theta_f, m, \mathcal{D}$, and $\prof \in P_{n,d}$, 
    \[ \textstyle
    \pcol(\theta_f, \cD^m,\prof) \geq \pind(\theta_f, \cD^m, \prof) - \frac{1}{\alpha_{min}} \cdot O\left(\sqrt{\frac{\min\{\log(n),d\log(m/d)\}}{m}}\right).
    \]
\end{restatable}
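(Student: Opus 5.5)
Fix $\theta^* := \theta_f\prof$, and for each type $i$ write $Z_i(X) := \agi(\theta_f, X, \prof) = \aKT(\thetai,\theta^*,X)/(\alphai\binom{m}{2})$ and $\mu_i := \E_{X\sim\cD^m}[Z_i]$. By definition $\pind = \min_i \mu_i$. We bound $\pcol = \E[\min_i Z_i]$ from below using the pointwise inequality
\[
\textstyle \min_i Z_i \;\ge\; \min_i \mu_i - \max_i (\mu_i - Z_i)_+ .
\]
It therefore suffices to show $\E[\max_i (\mu_i - Z_i)_+] \le \frac{1}{\alpha_{\min}} O\bigl(\sqrt{\min\{\log n, d\log(m/d)\}/m}\bigr)$. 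Since $Z_i = \alphai^{-1} U_i$ with $U_i(X) := \aKT(\thetai,\theta^*,X)/\binom{m}{2}$ the agreement fraction, we have $\E\max_i(\mu_i-Z_i)_+ \le \alpha_{\min}^{-1}\,\E\sup_i |U_i - \E U_i|$. This reduces the problem to a uniform concentration bound for the $U$-statistics $U_i$.

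The key observation is that $U_i$ is a $U$-statistic of order $2$ in the i.i.d. items $x_1,\dots,x_m$, with kernel $h_i(x,y) = \mathbf{1}[\sign\langle\thetai,x-y\rangle = \sign\langle\theta^*,x-y\rangle]$, which is bounded in $[0,1]$. Ties occur with probability zero by absolute continuity. Hoeffding's decomposition (or McDiarmid's inequality) handles each $U_i$: changing one item alters at most $m-1$ of the $\binom{m}{2}$ pairs, so each coordinate has bounded difference $2/m$. McDiarmid then gives $\Pr[|U_i - \E U_i|>t]\le 2e^{-mt^2/2}$.

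The two rates come from two different union-bound arguments.

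\emph{The $\log n$ rate.} A union bound over $i\in[n]$ combined with integrating the sub-Gaussian tail gives $\E\max_i|U_i-\E U_i| = O(\sqrt{\log n/m})$.

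\emph{The $d\log(m/d)$ rate.} This branch is independent of $n$, so we take the supremum over all $\psi\in S^{d-1}$ of $|U_\psi - \E U_\psi|$, where $U_\psi := \aKT(\psi,\theta^*,X)/\binom{m}{2}$.
\begin{itemize}
\item We first symmetrize. Apply the McDiarmid concentration to the supremum itself (the bounded-difference constant is still $2/m$). Then bound its expectation with Hoeffding's trick: write the $U$-statistic as an average over $\lfloor m/2\rfloor$ disjoint pairs, averaged over permutations. This reduces the problem to an empirical process over $\lfloor m/2\rfloor$ i.i.d. difference vectors $\tilde x = x_{2k-1}-x_{2k}$.
\item The class $\{\tilde x \mapsto \mathbf{1}[\sign\langle\psi,\tilde x\rangle=\sign\langle\theta^*,\tilde x\rangle]\}$ is determined by the homogeneous halfspaces $\{\psi : \langle \psi,\tilde x\rangle>0\}$, so its VC dimension is at most $d$.
\item Sauer--Shelah then bounds the number of distinct labelings of $\lfloor m/2\rfloor$ points by $(em/d)^d$. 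A finite-class maximal inequality on these labelings yields $O(\sqrt{d\log(m/d)/m})$.
\end{itemize}
Taking the better of the two bounds gives the $\min$ in the statement.

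Combining the pieces: $\pcol \ge \pind - \alpha_{\min}^{-1}\,\E\sup_i|U_i-\E U_i|$, and the supremum bound gives the claim. The main obstacle is the VC step. The summands of a $U$-statistic are not independent, so the standard empirical-process machinery does not apply directly. I would handle this with Hoeffding's permutation representation, noting that the expected supremum of an average is at most the average of the expected suprema. Each of those is an i.i.d. process over $\lfloor m/2\rfloor$ samples with a VC-$d$ class, which loses only constants. For $m$ below $d$, the $d\log(m/d)$ branch is vacuous, and the $\log n$ branch or the trivial bound covers it.
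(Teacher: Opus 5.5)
Your proposal is correct and follows essentially the same route as the paper. It uses the same pointwise reduction $\pcol \ge \pind - \E[\max_i W_i]$ and a sub-Gaussian maximal inequality over the $n$ types for the $\log n$ branch. For the $d\log(m/d)$ branch it uses the same chain: Hoeffding's permutation representation, Jensen to pull the supremum inside the permutation average, symmetrization, VC dimension $d$ of homogeneous halfspaces, Sauer--Shelah, and Massart.

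The only real deviation is in the $\log n$ branch. You get sub-Gaussianity of each $U_i$ from McDiarmid's bounded differences ($c_k = 2/m$, variance proxy $1/m$), whereas the paper uses the permutation average plus Hoeffding's lemma (variance proxy $1/(4\lfloor m/2\rfloor)$); this changes only constants. Also, applying McDiarmid to the supremum in the second branch is superfluous for a bound in expectation.
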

\begin{proof}[Proof Sketch]
The fundamental goal in this proof is to show that $\agi(\theta_f, X,\prof)$ concentrates around its expectation over $X \sim \cD^m$ as $m$ grows, in which case by definition $\pcol$ converges to $\pind$. We do this by first rewriting $i$'s IP level as
\[\textstyle \agi(\theta_f, X,\prof) = \frac{1}{\alpha^{(i)}\binom{m}{2}}\sum_{1\leq j < \ell\leq m} \mathbb{I}\left(\text{sign}(\langle \thetai, x_j- x_{\ell}\rangle) = \text{sign}(\langle \theta_f\prof, x_j- x_{\ell}\rangle)\right),
    \]
the intuition being that agreement between $\thetai$ and $\thang$ on a pairwise comparison of items $x_j,x_\ell$ just comes down to whether the two scoring vectors have the same signed dot product with the item difference vector. This is now the sum of indicators, which is ripe for a concentration bound except that these indicators have a particular kind of dependence: while the $x_j$ are individually independent, their \textit{differences} are not. To handle this dependence, we reframe individual proportionality level as a U-statistic: we rewrite this sum as the average of $m!$ sums of indicators acting on $\lfloor m/2 \rfloor$ mutually exclusive pairs of items, which lets us show that the sum is sub-Gaussian with variance inversely proportional to $\lfloor m/2 \rfloor$. We use this setup to get both convergence rates in the minimum: The $\log(n)$ rate comes from taking the maximum over the $n$ sub-Gaussian random variables corresponding to $\agi$ for each $i \in [n]$, and the $d$-dependent bound is via the Rademacher complexity, the intuition being that in low dimensions, even an arbitrary number of voters must be ``close.''
\end{proof}
\vspace{-0.5em}
As a byproduct of this proof, we are also able to give tail bounds showing that each $i$'s IP level is close to its expectation with high probability (\Cref{cor:concentration-bound}). We can additionally apply this bound to show that $\thang$ converges to the optimal $\pcol$ as $m$ grows, noting that by \Cref{prop:prop_relationship}, $\pcol$ of 1 is the best we can hope for. 

\begin{corollary}
    For spherically symmetric $\cD_\circ$,  \Cref{thm:thang_prop} gives that $\pind(\thang, \cD^m_\circ, \prof) = 1$; \Cref{thm:fixed-cpop-LB} then immediately implies that
    \[\pcol(\thang, \cD^m_\circ,\prof) \xrightarrow{m \to \infty} 1 \quad \text{\upshape{  (with a known convergence rate for finite $m$)}}.\]
\end{corollary}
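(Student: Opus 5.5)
The plan is a sandwich argument that combines the three results just established. Fix a spherically symmetric $\cD_\circ$ and a profile $\prof \in P_{n,d}$, and, as the corollary specifies, take from \Cref{thm:thang_prop} (via \Cref{lem:ktangle} and \Cref{thm:angular_bound}) that $\pind(\thang, \cD_\circ^m, \prof) = 1$ for every batch size $m$. The proof then consists of an upper bound and a lower bound on $\pcol(\thang, \cD_\circ^m, \prof)$ that both tend to $1$.

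\emph{Upper bound.} Apply \Cref{prop:prop_relationship} with $\theta_f = \thang$ and $\cD = \cD_\circ$. This gives, for every $m$,
\[
\pcol(\thang, \cD_\circ^m, \prof) \le \pind(\thang, \cD_\circ^m, \prof) = 1 .
\]

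\emph{Lower bound.} Apply \Cref{thm:fixed-cpop-LB} with $\theta_f = \thang$ and $\cD = \cD_\circ$. A spherically symmetric absolutely continuous distribution meets the regularity assumptions of \Cref{sec:model}, so the theorem applies. It yields
\[
\pcol(\thang, \cD_\circ^m, \prof) \ge 1 - \frac{1}{\alpha_{\min}}\, O\!\left(\sqrt{\frac{\min\{\log n,\, d\log(m/d)\}}{m}}\right).
\]
For a fixed profile, $n$, $d$ and $\alpha_{\min}$ do not depend on $m$. The $\log n$ branch of the minimum therefore gives an error term of order $\sqrt{\log n / m}/\alpha_{\min}$, which tends to $0$ as $m \to \infty$. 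The $d\log(m/d)$ branch also vanishes, though more slowly.

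\emph{Conclusion and rate.} Squeezing the two bounds gives
\[
\bigl|\pcol(\thang, \cD_\circ^m, \prof) - 1\bigr| \le \frac{C}{\alpha_{\min}}\sqrt{\min\{\log n, d\log(m/d)\}/m}
\]
for all sufficiently large $m$, where $C$ is the explicit constant from the full proof in \Cref{app:subsec-cpop-LB}. This is exactly the convergence with a known finite-$m$ rate that the corollary claims.

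\emph{Main obstacle.} The only delicate point is the identity $\pind = 1$ on which the upper bound rests. \Cref{thm:thang_prop} literally controls the worst case over profiles, while \Cref{thm:angular_bound} gives $\pind(\thang,\cD_\circ^m,\prof) \ge 1$ for each individual profile. I would therefore use the equality exactly as the corollary states it, namely as the input from \Cref{thm:thang_prop}. I would also remark that the lower-bound half of the argument never uses this equality: it shows $\pcol \ge \pind - o(1) \ge 1 - o(1)$ for every profile unconditionally. Once $\pind = 1$ is granted, both sides of the squeeze are immediate, so no new estimate is needed beyond the cited results.
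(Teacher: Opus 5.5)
Your route is the paper's own: the corollary is justified there by the same combination of \Cref{prop:prop_relationship} (ceiling) and \Cref{thm:fixed-cpop-LB} (floor), and your lower-bound paragraph is correct. However, the point you flagged is a genuine defect. Accepting the equality ``as the corollary states it'' does not repair it, and the paper's one-line argument makes the same slip. For a fixed profile, \Cref{lem:ktangle} gives $\pind(\thang,\cD_\circ^m,\prof)=L:=\min_i \frac{\pi-\dang(\thang,\thetai)}{\pi\alphai}$, independent of $m$. \Cref{thm:angular_bound} only yields $L\ge 1$, and often $L>1$. For example, take two voters with weights $(\tfrac{1}{2},\tfrac{1}{2})$ at angle $\delta<\pi$: $\thang$ is the arc midpoint and $L=2-\delta/\pi$. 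Your squeeze then forces $\pcol(\thang,\cD_\circ^m,\prof)\to 2-\delta/\pi$, so $\pcol\not\to 1$ and your bound $\bigl|\pcol-1\bigr|\le\ldots$ is false for such profiles.

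Reading the corollary as a worst-case statement over profiles does not help either, even though that is the only sense in which \Cref{thm:thang_prop} gives ``$=1$''. The $1/\alpha_{\min}$ in the error term is unavoidable. In the Antipodal Construction with weights $(\epsilon,1-\epsilon)$, $\thang$ sits at angle $\epsilon\pi$ from $\theta^{(2)}$, so $\E[\aKT(\theta^{(1)},\thang,X)]=\epsilon\binom{m}{2}$. Voter~1's IP level vanishes whenever this count is $0$, and voter~2's is at most $1/(1-\epsilon)$. By Markov's inequality, $\pcol\le\epsilon\binom{m}{2}/(1-\epsilon)$, which tends to $0$ as $\epsilon\to 0$ for every fixed $m$.

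What your two bounds actually prove, for each fixed profile, is
\[
L-\frac{1}{\alpha_{\min}}\,O\Bigl(\sqrt{\min\{\log n,\,d\log(m/d)\}/m}\Bigr)\;\le\;\pcol(\thang,\cD_\circ^m,\prof)\;\le\;L .
\]
Hence $\pcol(\thang,\cD_\circ^m,\prof)\to L\ge 1$, i.e.\ $\liminf_{m}\pcol\ge 1$ with the stated rate. The limit is exactly $1$ precisely for profiles in which some voter attains the bound of \Cref{thm:angular_bound}, such as the Antipodal Construction. State and prove that version instead of assuming $\pind=1$; your remark that the lower half never uses the equality is the right instinct.
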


\begin{table}[t]
\centering
\caption{Heterogeneity of voter types' scoring vectors and pairwise angular distances between aggregation rules. All values are in degrees.}
\label{tab:heterogeneity-main-t}
\begin{tabular}{l|cc|cc|cc}
 & \multicolumn{2}{c|}{\textsc{MMachine}} & \multicolumn{2}{c|}{\textsc{FoodRescue}} & \multicolumn{2}{c}{\textsc{Kidney}} \\
 & Orig. & 2D & Orig. & 2D & Orig. & 2D \\[0.25em]
\hline\hline
\multicolumn{7}{l}{\textbf{Heterogeneity of $\theta^{(i)}$}} \\
max pairwise angle between $\theta^{(i)}$ & 53.5 & 180.0 & 109.6 & 178.8 & 147.3 & 180.0 \\
avg of pairwise angular distances between $\theta^{(i)}$ & 20.7 & 80.4 & 61.7 & 85.7 & 57.1 & 77.5 \\
stdev of pairwise angular distances between $\theta^{(i)}$ & 8.3 & 52.7 & 17.9 & 56.7 & 20.6 & 52.5 \\
\hline
\multicolumn{7}{l}{\textbf{Rule coincidence}} \\
$d_\measuredangle(\theta_{\mathrm{ang}}, \theta_{\mathrm{arith}})$ & 0.03 & 9.70 & 1.10 & 21.86 & 1.18 & 2.85 \\
$d_\measuredangle(\theta_{\mathrm{ang}}, \theta_{\mathrm{med}})$ & 0.93 & 15.57 & 4.23 & 32.93 & 5.69 & 5.62 \\
$d_\measuredangle(\theta_{\mathrm{arith}}, \theta_{\mathrm{med}})$ & 0.91 & 5.87 & 3.16 & 11.07 & 4.54 & 2.76 \\
\end{tabular}
\end{table}

\section{Empirical Results}
\label{sec:empirical}

Finally, we compare aggregation mechanisms on
real and semi-synthetic data. Wherever an experiment requires
an expectation $\mathbb{E}_{X \sim \D^m}[\,\cdot\,]$, we estimate it
by Monte Carlo: we draw $2000$ batches
 independently and average. Unless otherwise stated, we use batch size $m = 10$ and sample each batch's items i.i.d.~uniformly from $S^{d-1}$, so $\D_\circ^m$ is spherically
symmetric. We test robustness to non-spherically symmetric batch distributions in \Cref{app:acg};  $\thang$ remains proportional throughout, highlighting its appeal beyond the regime covered in \Cref{sec:ang_mean_prop}.

\paragraph{Rules.} We consider three fixed linear ranking rules: $\thar, \thang$, and the \emph{geometric median}
$\theta_{\mathrm{med}}:= \argmin_{\theta\in S^{d-1}} \sum_i \alpha^{(i)}\,\lVert\theta- \theta^{(i)}\rVert_2$, where the median is a reasonable candidate for its insensitivity to outliers.
We also consider two rules that are not linear or fixed, and thus may circumvent the per-batch impossibility
of Theorem~\ref{thm:fixed-cpop-UB}. 
The first is \emph{Borda}, the
canonical aggregator of predicted rankings in virtual democracy
\citep{DBLP:conf/icml/KahngLNPP19,lee2019webuildai}: given a
batch $X$, each voter type $i$ awards $\alpha^{(i)}(m-j)$ points to the
item ranked at position $j$ of $\succ^{X}_{\theta^{(i)}}$;  Borda then returns a ranking over items by total score. 
The
second is \emph{Proportional Sequential Borda} (PSB), introduced by
\citet{DBLP:journals/corr/abs-2508-16177}, which iteratively selects
the Borda winner and then reduces each voter type's weight in proportion to its contribution to the winner's score. PSB satisfies a strong batch-level guarantee: on any $X$, every $i$ agrees with the computed ranking on at least $\lfloor \alpha^{(i)} \binom{m}{2}\rfloor$ pairwise comparisons.

\begin{table*}[t]
\centering
\footnotesize
\setlength{\tabcolsep}{5pt}
\setlength{\extrarowheight}{2pt}
\caption{Proportionality levels in original data. $m = 10$. Cell shading is proportional to $\log_{10}(\text{value})$.}
\label{tab:overview-full-t}
\begin{tabular}{c l>{\centering\arraybackslash}p{0.65cm}>{\centering\arraybackslash}p{0.65cm}>{\centering\arraybackslash}p{0.65cm}>{\centering\arraybackslash}p{0.65cm}>{\centering\arraybackslash}p{0.65cm}@{\hspace{14pt}}c l>{\centering\arraybackslash}p{0.65cm}>{\centering\arraybackslash}p{0.65cm}>{\centering\arraybackslash}p{0.65cm}>{\centering\arraybackslash}p{0.65cm}>{\centering\arraybackslash}p{0.65cm}}
\toprule\\[-1.75em]
& dataset & arith. & ang. & med. & Borda & PSB & & dataset & arith. & ang. & med. & Borda & PSB \\
\midrule
\multirow{3}{*}{\rotatebox[origin=c]{90}{\textsc{long-IP}}}
& \textsc{FoodRescue} & \cellcolor{black!22}11  & \cellcolor{black!22}11  & \cellcolor{black!22}11  & \cellcolor{black!22}11  & \cellcolor{black!22}11  &
\multirow{3}{*}{\rotatebox[origin=c]{90}{\textsc{batch-IP}}}
& \textsc{FoodRescue} & \cellcolor{black!21}9.7 & \cellcolor{black!21}9.8 & \cellcolor{black!21}9.6 & \cellcolor{black!21}9.9 & \cellcolor{black!21}10  \\
& \textsc{Kidney}     & \cellcolor{black!33}134 & \cellcolor{black!33}135 & \cellcolor{black!33}132 & \cellcolor{black!33}133 & \cellcolor{black!33}135 &
& \textsc{Kidney}     & \cellcolor{black!32}112 & \cellcolor{black!32}112 & \cellcolor{black!32}110 & \cellcolor{black!32}112 & \cellcolor{black!32}115 \\
& \textsc{MMachine}   & \cellcolor{black!32}111 & \cellcolor{black!32}111 & \cellcolor{black!32}111 & \cellcolor{black!32}111 & \cellcolor{black!32}111 &
& \textsc{MMachine}   & \cellcolor{black!32}98  & \cellcolor{black!32}98  & \cellcolor{black!32}98  & \cellcolor{black!32}98  & \cellcolor{black!32}98  \\
\bottomrule
\end{tabular}
\end{table*}

\begin{table*}[t]
\centering
\footnotesize
\setlength{\tabcolsep}{5pt}
\setlength{\extrarowheight}{2pt}
\caption{Proportionality levels in 2D data. $m = 10$. Cell shading is proportional to $\log_{10}(\text{value})$.}
\label{tab:overview-minimal-body}
\resizebox{\linewidth}{!}{\begin{tabular}{c l>{\centering\arraybackslash}p{0.65cm}>{\centering\arraybackslash}p{0.65cm}>{\centering\arraybackslash}p{0.65cm}>{\centering\arraybackslash}p{0.65cm}>{\centering\arraybackslash}p{0.65cm}@{\hspace{14pt}}c l>{\centering\arraybackslash}p{0.65cm}>{\centering\arraybackslash}p{0.65cm}>{\centering\arraybackslash}p{0.65cm}>{\centering\arraybackslash}p{0.65cm}>{\centering\arraybackslash}p{0.65cm}}
\toprule\\[-1.75em]
& dataset & arith. & ang. & med. & Borda & PSB & & dataset & arith. & ang. & med. & Borda & PSB \\
\midrule
\multirow{3}{*}{\rotatebox[origin=c]{90}{\textsc{long-IP}}}
& \textsc{FoodRescue-2D} & \cellcolor{black!14}1.9 & \cellcolor{black!15}2.8 & \cellcolor{black!9}0.7  & \cellcolor{black!13}1.7 & \cellcolor{black!19}5.7 &
\multirow{3}{*}{\rotatebox[origin=c]{90}{\textsc{batch-IP}}}
& \textsc{FoodRescue-2D} & \cellcolor{black!14}1.9 & \cellcolor{black!15}2.6 & \cellcolor{black!9}0.7  & \cellcolor{black!13}1.7 & \cellcolor{black!18}5.4 \\
& \textsc{Kidney-2D}     & \cellcolor{black!6}0.3  & \cellcolor{black!16}3.3 & \cellcolor{black!18}4.5 & \cellcolor{black!22}11  & \cellcolor{black!31}76  &
& \textsc{Kidney-2D}     & 0.1                    & \cellcolor{black!14}2.1 & \cellcolor{black!15}2.6 & \cellcolor{black!19}6.6 & \cellcolor{black!30}66  \\
& \textsc{MMachine-2D}   & \cellcolor{black!6}0.4  & \cellcolor{black!14}1.9 & \cellcolor{black!10}0.9 & \cellcolor{black!18}4.5 & \cellcolor{black!26}30  &
& \textsc{MMachine-2D}   & \cellcolor{black!4}0.2 & \cellcolor{black!10}1.0 & \cellcolor{black!7}0.5  & \cellcolor{black!15}2.6 & \cellcolor{black!26}26  \\
\bottomrule
\end{tabular}}
\end{table*}

\paragraph{Warm-up: Two-Voter Case.} In \Cref{app:2voters}, we run synthetic experiments in the $n = 2$ case. As expected, we find that as $\dang(\theta^{(1)},\theta^{(2)})$ approaches 180$^\circ$, $\thar$ (along with $\theta_{\mathrm{med}}$ and \textit{Borda}) become extremely majoritarian, while $\thang$ and PSB maintain expected $\agi$ of at least 1 (\Cref{fig:exp2-synth}).

\paragraph{Real-World Data.} We study three main datasets (with two extra in \Cref{app:kid}), each including a learned $\theta^{(i)} \in S^{d-1}$
per voter (see \Cref{app:dd} for details). They are \textit{Moral Machine} (\textsc{MMachine})~\citep{noothigattu2018voting,awad2018moral} ($n = 137$, $d = 24$); \textit{Kidney Exchange} (\textsc{Kidney})~\citep{DBLP:conf/aaai/KeswaniCNCHBS26} ($n = 404$, $d = 8$); and 412 Food Rescue (\textsc{FoodRescue})~\citep{lee2019webuildai} ($n = 19$, $d = 7$). We use uniform voter weights
$\alpha^{(i)} = 1/n$, and show that this choice is robust to randomly varied weights in \Cref{app:weights}. 

\textit{Results.} In this data, we find that voters' score vectors are all highly similar (see \Cref{tab:heterogeneity-main-t}): in degrees, the \textit{mean, standard deviation, and max} of the pairwise angles between any two $\theta^{(i)}$ are $21,8.3, 54$ in \textsc{MMachine}; $62, 17.9, 108$ in \textsc{FoodRescue}; and $57, 21, 147$ in \textsc{Kidney}. Empirically, this translates to all three linear ranking rules $\thang,\thar,\theta_{\mathrm{med}}$ being highly coincident, falling within 1, 5, and 6 degrees of one another (pairwise) across the respective datasets. We formally explain this finding in \Cref{thm:arith_ang_approx}, proving that coincident $\thetai$'s lead to coincident $\thar$ and $\thang$. Per our robustness results in \Cref{rem:robustness}, this means that these three rules --- and in fact, empirically all five --- have almost identical proportionality behavior on the raw datasets (see \Cref{tab:overview-full-t}).

\paragraph{2D Data.} To investigate what happens under greater heterogeneity --- the regime where the rule choice becomes nontrivial --- we construct \textit{2D} versions of the raw datasets
(denoted \textsc{MMachine-2D}, etc.) by retaining only the pair of
features that maximize the resulting variance of the pairwise angular
distances between voters. 
This produces voter populations that span a wide arc of $S^1$ (see \Cref{fig:exp9-subset-viz}). We report the resulting (larger) angular distances among the $\theta^{(i)}$'s and $\thang,\thar,\theta_{\mathrm{med}}$ in \Cref{tab:heterogeneity-main-t}.

\textit{Results. }\Cref{tab:overview-minimal-body} reports \pind (left) and \pcol (right) across rules and datasets. We see that $\thar$ and $\theta_{\mathrm{med}}$ both violate voters' proportional entitlements substantially; on \textsc{Kidney-2D} the
\pind of the arithmetic mean drops as low as $0.3$, and expectedly, the \pcol is at $0.1$ even lower. In contrast, $\thang$ satisfies both $\pcol$ and $\pind$ (i.e., attains levels of at least 1) on all three datasets. As expected, the more powerful non-fixed, non-linear rules perform better, with PSB being 3-10x better than Borda. Looking across the two tables, one may also notice that $\pcol$ and $\pind$ are highly similar, as consistent with \Cref{thm:fixed-cpop-LB}. \Cref{fig:convergence} shows just how quickly $\pcol$ approaches $\pind$, empirically: for most rules, convergence happens at extremely small $m$. As expected, these quantities converge the slowest for the non-fixed, non-linear rules, as these rules are making batch-level adjustments. 

\paragraph{Heterogeneity-Filtered 2D Data.} For a final stress test, we subsample voters within each 2D dataset to filter for further heterogeneity: for various $n$, we draw $n$ voters uniformly at random and accept the sample only if the standard deviation of pairwise angular distances in the subsample is $\geq65^\circ$. This creates instances with few, highly separated voter \textit{clusters}  --- the hardest regime for proportionality.

\textit{Results.} \Cref{fig:subsample} shows $\pind$ on these filtered subsamples as $n$ grows (per-batch IP
behaves similarly; see \Cref{fig:subsample-pcol}). Across datasets, the rules separate
into two distinct groups: $\thang$
and PSB scale roughly linearly with $n$, while $\thar$, $\theta_{\mathrm{med}}$, and Borda hover near and oftentimes below $1$. Note that these distinct groups do not materialize as clearly in \Cref{tab:overview-minimal-body} and \Cref{fig:convergence}, where there is less voter heterogeneity. This reflects that when proportionality is more demanding, $\thang$'s dedication to proportionality becomes far more important than Borda's ability to be non-fixed / non-linear. 

\definecolor{exp8color0}{rgb}{0.1216,0.4667,0.7059}
\definecolor{exp8color1}{rgb}{1.0000,0.4980,0.0549}
\definecolor{exp8color2}{rgb}{0.1725,0.6275,0.1725}
\definecolor{exp8color3}{rgb}{0.8392,0.1529,0.1569}
\definecolor{exp8color4}{rgb}{0.5804,0.4039,0.7412}

\begin{figure*}[t]
  \centering

  \begin{tikzpicture}
    \node[font=\footnotesize, inner sep=0pt] {%
      \tikz[baseline=-0.5ex]\draw[exp8color0, line width=1pt] (0,0) -- (0.35,0);~arithmetic mean\quad
      \tikz[baseline=-0.5ex]\draw[exp8color1, line width=1pt] (0,0) -- (0.35,0);~angular mean\quad
      \tikz[baseline=-0.5ex]\draw[exp8color2, line width=1pt] (0,0) -- (0.35,0);~geometric median\quad
      \tikz[baseline=-0.5ex]\draw[exp8color3, line width=1pt] (0,0) -- (0.35,0);~Borda\quad
      \tikz[baseline=-0.5ex]\draw[exp8color4, line width=1pt] (0,0) -- (0.35,0);~PSB
    };
  \end{tikzpicture}

  \vspace{4pt}
    \input{.//Figures_final/double-right}
\caption{$\pind$ (solid) and $\pcol$ (dashed) on 2D data.}
\label{fig:convergence}
    \label{fig:exp10}
\end{figure*}

\begin{figure*}[t]
  \centering

  \begin{tikzpicture}
    \node[font=\footnotesize, inner sep=0pt] {%
      \tikz[baseline=-0.5ex]\draw[exp8color0, line width=1pt] (0,0) -- (0.35,0);~arithmetic mean\quad
      \tikz[baseline=-0.5ex]\draw[exp8color1, line width=1pt] (0,0) -- (0.35,0);~angular mean\quad
      \tikz[baseline=-0.5ex]\draw[exp8color2, line width=1pt] (0,0) -- (0.35,0);~geometric median\quad
      \tikz[baseline=-0.5ex]\draw[exp8color3, line width=1pt] (0,0) -- (0.35,0);~Borda\quad
      \tikz[baseline=-0.5ex]\draw[exp8color4, line width=1pt] (0,0) -- (0.35,0);~PSB
    };
  \end{tikzpicture}

  \vspace{4pt}

    \centering
    \input{.//Figures_final/double-left}
    \caption{$\pind$ on filtered 2D data. Shading shows inter-quartile range across $100$ subsamples.}
\label{fig:subsample}
\end{figure*}

\section{Discussion}\label{sec:discussion}
Zooming out, there are many ways to collectively decide a fixed linear ranking rule: one could pool raw elicitation data to fit a collective model, as in alignment; learn noisy $\hat{\theta}^{(i)}$ that are correct up to rankings, and then simulate a vote, as in virtual democracy; directly aggregate noisy $\hat{\theta}^{(i)}$ with robust methods; or collect enough individual data that more precise $\theta^{(i)}$ can be learned and aggregated, as we study here. Given our finding that there exists a linear ranking rule $\thang$ that achieves $\pind$ and quickly converges in $m$ to achieving $\pcol$, a natural next question is how well linear ranking rules learned through each of these less information-rich regimes can approximate these notions of proportionality. Many applications may also merit extending these ideas to nonlinear fixed rules and downstream decision tasks other than ranking.

\appendix

\newpage
\section{General-Use Appendix Results}
In this portion of the Appendix, we put examples and small technical results that we reference throughout the remainder of the Appendix.

\subsection{Tied Scores}

\begin{proposition}(Score ties occur with probability 0)\label{prop:no-ties}
For any item distribution $\cD$, finite $d, m \in \mathbb{N}$, and finite set of scoring vectors $V$ \[
\Pr_{X \sim \cD^m}\left(\exists i, j \in [m], i \neq j \text{ and } \theta \in V \colon s_{\theta}(x_i) = s_{\theta}(x_j)\right) = 0
\]

\begin{proof}
We can union bound the expression and get the following: \begin{align*}
    \Pr_{X \sim \cD^m}\left(\exists i, j \in [m], i \neq j \text{ and } \theta \in V \colon s_{\theta}(x_i) = s_{\theta}(x_j)\right) &\leq \sum_{1 \leq i < j \leq m} \sum_{\theta \in V} \Pr_{x_i,x_j \iid \cD}(s_{\theta}(x_i) = s_{\theta}(x_j))
\end{align*}
    As the total number of terms, $\binom{m}{2} \cdot |V|$, is finite by our specification of finite $m, |V|$, it suffices to show that for any $\theta \in V$, $\Pr_{x,y \iid \cD}(s_{\theta}(x) = s_{\theta}(y)) = 0$. Equivalently, $\Pr_{x_i,x_j \iid \cD}(\langle \theta, x_i \rangle = \langle \theta, x_j \rangle ) = 0$. Fix some $\theta \in V$. By conditioning on a specific realization of $x_j$, we can rewrite our expression as \[
    \Pr_{x_i, x_j \iid \cD}(\langle \theta, x_i \rangle = \langle \theta, x_j \rangle ) = \int_{\R^d} \Pr_{x_i \sim \cD}(\langle \theta, x_i\rangle = \langle \theta, c \rangle \vert x_j = c) f_{\cD}(c) dc
    \]
    Now observe that for $\langle \theta, x_i \rangle$ to be equal to $\langle \theta, c \rangle$ for some fixed $c \in \R^d$, it must be that $x_i$ is in a specific hyperplane $H_{\theta, c} = \{z \in \R^d \colon \langle \theta, z \rangle = \langle \theta, c \rangle\}$. $H_{\theta, c}$ an affine subspace of dimension $d-1$, and hence has $d$-dimensional Lebesgue measure 0: $\lambda^d(H_{\theta, c}) = 0$. The item distribution $\cD$ is absolutely continuous with respect to the Lebesgue measure, which tells us that sets with Lebesgue measure 0 have probability measure 0. Putting it all together \[
    \Pr_{x_i \sim \cD}(\langle \theta, x_i\rangle = \langle \theta, c \rangle \vert y = c) = \Pr_{x_i \sim \cD}(x_i \in H_{\theta, c}) = 0
    \]
    Plugging back into our integral, we get that \[
    \int_{\R^d} \Pr_{x_i \sim \cD}(\langle \theta, x_i\rangle = \langle \theta, c \rangle \vert x_j = c) f_{\cD}(c) dc = \int_{\R^d} 0 \cdot f_{\cD}(c) dc = 0 
    \]
    As $\Pr_{x_i, x_j \iid \cD}(s_{\theta}(x_i) = s_{\theta}(x_j)) = 0$ holds for all $\theta \in V$, we have that the expression derived from the union bound sums to 0. Hence, for a finite set of vectors, and for a finite item batch size\footnote{The proof can be extended to countably infinite item batch sizes if desired.} and dimension, ties between object scores occur with probability 0. 
\end{proof}

\begin{corollary} (Expectation conditioned on no ties)\label{cor:no-ties}
Given a profile $\prof \in P_{n,d}$ (with finite $n,d \in \mathbb{N}$), aggregation mechanism $\theta_f$, item distribution $\cD$, and finite item batch size $m \in \mathbb{N}$, we say that for an item batch $X \in \cX^m$, $\mathcal{E}_{\text{no-ties}}$ is the event where none of the scoring rules are tied on any items in $X = (x_1, \dots, x_m)$: \[
\mathcal{E}_{\text{no-ties}} = \bigcap_{\theta \in (\boldsymbol{\theta} \cup \{\theta_f\})} \bigcap_{1 \leq i < j \leq m}\{s_{\theta}(x_i) \neq s_{\theta}(x_j)\}
\]
Then, for any function $h$ that can be written as a function of these variables, the expectation over random draws of $X \sim \cD^m$ is equivalent to the expectation conditioned on $\mathcal{E}_{\text{no-ties}}$: \[
\mathbb{E}_{X \sim \cD^m} \left[h(X, \prof, \theta_f)\right] = \mathbb{E}_{X \sim \cD^m} \left[h(X, \prof, \theta_f) \vert \mathcal{E}_{\text{no-ties}} \right]
\]
\end{corollary}

\begin{proof}
    Observe that by \Cref{prop:no-ties}, we have that $\Pr(\overline{\mathcal{E}_{\text{no-ties}}}) = 0$. Then, by law of total expectation: 
    \begin{align*}
        &\mathbb{E}_{X \sim \cD^m} \left[h(X, \prof, \theta_f)\right]\\
        &=\mathbb{E}_{X \sim \cD^m} \left[h(X, \prof, \theta_f) \vert \mathcal{E}_{\text{no-ties}} \right] \Pr(\mathcal{E}_{\text{no-ties}}) + \mathbb{E}_{X \sim \cD^m} \left[h(X, \prof, \theta_f) \vert \overline{\mathcal{E}_{\text{no-ties}}} \right] \underbrace{\Pr(\overline{\mathcal{E}_{\text{no-ties}}})}_0\\
        &= \mathbb{E}_{X \sim \cD^m} \left[h(X, \prof, \theta_f) \vert \mathcal{E}_{\text{no-ties}} \right] \Pr(\mathcal{E}_{\text{no-ties}})
    \end{align*}
\end{proof}

\end{proposition}

\subsection{The Antipodal Construction}
This profile, called the Antipodal Construction, is introduced in \Cref{fig:antipodal-prof}. This profile can be realized in any number of dimensions for $d \geq 2$, where $\thetaone = (-1, \protect\underbrace{0, \dots, 0}_{d-1})$ and $\thetatwo = (1, \protect\underbrace{0, \dots, 0}_{d-1})$. By default we will assume $d=2$ in its instantiations, and we leave the $\boldsymbol{\alpha}$ up to specification. We use it in numerous proofs, so we place it here again for reference and show a simple fact about the profile. 

\begin{figure}[ht!]
\centering
\begin{tikzpicture}[scale=1]

\draw[gray] (0,0) circle (1);

\fill (0,0) circle (0.02);

\coordinate (theta1) at (180:1);
\coordinate (theta2) at (0:1);

\draw[->] (0,0) -- (theta1);
\draw[->] (0,0) -- (theta2);

\node[left] at (theta1) {$\theta^{(1)} \scriptscriptstyle(\alpha_1)$};
\node[right] at (theta2) {$\theta^{(2)} \scriptscriptstyle(\alpha_2)$};

\end{tikzpicture}
\end{figure}

\begin{proposition}\label{prop:antipodal-disagreement}
    For $\thetaone$ and $\thetatwo$ in the Antipodal Construction, and any additional scoring vector $\theta^* \in S^{d-1}$, for any $X \in \cX^m$ such that there are \textit{no} scoring vector ties, \[
    a_{KT}(\thetatwo, \theta^*, X) = \binom{m}{2} - a_{KT}(\thetaone, \theta^*, X)
    \]
\end{proposition}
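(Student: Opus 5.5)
The plan is a pairwise argument. Since $\thetatwo = -\thetaone$, for any two distinct items $x,y \in X$ we have $s_{\thetatwo}(x) - s_{\thetatwo}(y) = -(s_{\thetaone}(x) - s_{\thetaone}(y))$. By the no-ties assumption, this difference is nonzero for every pair. So the tie-breaking ranking $\succ_{\text{tiebreak}}$ never matters for $\thetaone$, $\thetatwo$, or $\theta^*$ on $X$, and each induced ranking is determined by the sign of the score difference. Hence for every pair $\{x,y\} \in \binom{X}{2}$, exactly one of $x \succ_{\thetaone} y$ or $x \succ_{\thetatwo} y$ holds. In other words, $\succ_{\thetatwo}^X$ is the reversal of $\succ_{\thetaone}^X$.

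Next, fix a pair $\{x,y\}$. The ranking $\succ_{\theta^*}$ orders this pair in exactly one way, say $x \succ_{\theta^*} y$. By the previous step, exactly one of $\thetaone, \thetatwo$ also ranks $x$ above $y$. So the pair contributes $1$ to exactly one of $a_{KT}(\thetaone,\theta^*,X)$ and $a_{KT}(\thetatwo,\theta^*,X)$, and $0$ to the other. In indicator form,
\[
\mathbb{I}[\{x,y\} \text{ counted for } \thetaone] + \mathbb{I}[\{x,y\} \text{ counted for } \thetatwo] = 1 .
\]

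Finally, sum this identity over all $\binom{m}{2}$ pairs. This gives $a_{KT}(\thetaone,\theta^*,X) + a_{KT}(\thetatwo,\theta^*,X) = \binom{m}{2}$, and rearranging yields the claim.

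There is no real obstacle. The only point needing care is to use the no-ties hypothesis explicitly, so that tie-breaking (which is not antisymmetric under negation of $\theta$) never enters. I will also treat items of $X$ as distinct; distinctness is implied by the absence of ties, since equal items would have equal scores.
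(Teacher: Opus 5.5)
Your proposal is correct and takes the same route as the paper. The paper likewise uses $\theta^{(2)} = -\theta^{(1)}$ and the no-ties hypothesis to conclude that the two rankings disagree on every pair, so $\theta^*$ agrees with exactly one of them on each of the $\binom{m}{2}$ pairs. Your indicator-sum step and your remark that tie-breaking never enters spell out details the paper leaves implicit.
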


\begin{proof}
    Observe that $\thetaone = -\thetatwo$, so for any $\{x, y\} \in \binom{X}{2}$, $s_{\thetaone}(x) > s_{\thetaone}(y) \iff s_{\thetatwo}(x) < s_{\thetatwo}(y)$. With no tied scores, this means that $\succ_{\thetaone}^X$ and $\succ_{\thetatwo}^X$ disagree on all $\binom{m}{2}$ pairwise comparisons. Therefore, for any pair $\{x, y\} \in \binom{X}{2}$, $\theta^*$ must agree with exactly one of $\thetaone$ or $\thetatwo$ on the ranking.
\end{proof}
\section{Extended Related Work} \label{app:related}

We discuss related work on incorporating public input into the design of decision rules. Existing work differs mainly along three dimensions. First, the elicited input may take the form of a scoring vector for each voter, or of ordinal preference information such as pairwise comparisons or rankings. Second, the aggregation procedure may output a single collective scoring vector that is applied repeatedly, or it may aggregate preferences separately for each decision instance, thereby allowing decisions that are not necessarily induced by the same rule across rounds. Third, the downstream decision task may be to rank or to select candidates.

\paragraph{Virtual democracy and participatory design.}
The most closely related literature is \emph{virtual democracy}, a
paradigm that uses models of each voter's preferences to make
collective decisions on their behalf at decision time
\cite{DBLP:conf/icml/KahngLNPP19}.
Closest to our work,
\citet{noothigattu2018voting} and \citet{feffer2023moral}
study the problem of aggregating individual scoring vectors into a
collective one.
\citet{freedman2020adapting} take an alternative approach: they learn
a collective scoring vector directly by pooling pairwise comparisons
across voters before fitting a single linear model.
\citet{DBLP:conf/icml/KahngLNPP19} and \citet{lee2019webuildai} pursue
a third paradigm in which, at every decision instance, each voter's
preferences over the current alternatives are inferred from a learned
model and aggregated via a single-winner voting rule, producing decisions
that need not be consistent across instances.

\paragraph{Aggregating scoring vectors in social choice.}
Aggregating voters' scoring vectors into a single collective vector
has been studied extensively from a different perspective in divisible
participatory budgeting, also known as portioning, where each voter
proposes a budget split over projects
\citep{DBLP:journals/ai/ElkindGLST26,DBLP:journals/jet/FreemanPPV21,DBLP:journals/corr/abs-2405-20303,DBLP:journals/ai/CaragiannisCP24,DBLP:conf/sigecom/0001GSS24,FreemanS24}.
In this literature, the arithmetic mean is the canonical proportional
aggregator, and the central concern is its lack of strategyproofness.
In contrast to our work, however, the downstream task is a one-shot
cardinal task rather than a repeated ordinal one; under the
proportionality notion we study, the arithmetic mean no longer retains
the same guarantees.

\paragraph{Proportionality in repeated decision making.}
A separate line of work studies individual and group-level proportional
fairness in repeated decision making
\citep{DBLP:conf/aaai/Lackner20,DBLP:conf/aaai/Kozachinskiy0S25,chandak2024proportional,DBLP:journals/access/BulteauHPRT21,DBLP:conf/aaai/Lackner023,DBLP:journals/corr/abs-2505-22513}.
These works typically allow the use of a different decision rule at
each decision step, and often retain memory across rounds to balance
representation over time. Moreover, the downstream task is usually to
select one or several candidates rather than to output a ranking induced
by a fixed scoring vector.

\paragraph{Proportional rank aggregation.}
Individual and group-level proportionality in rank aggregation, i.e.,
the task of aggregating a given set of rankings into a single collective
one, has recently been studied by \citet{lederer2024squared} and
\citet{DBLP:journals/corr/abs-2508-16177}. In particular,
\citet{DBLP:journals/corr/abs-2508-16177} introduces the Proportional
Sequential Borda (PSB) rule, which guarantees that any
$\alpha$-fraction of the electorate agrees with roughly an
$\alpha$-fraction of the pairwise comparisons in the collective
ranking. PSB can be applied to our setting per batch, but the
resulting decisions are neither linear nor consistent across batches.
We benchmark against PSB empirically in \Cref{sec:empirical}.

\paragraph{RLHF and alignment from a social-choice perspective.}
In reinforcement learning from human feedback, the reward model plays
the role of a typically nonlinear scoring rule, and standard practice
pools annotator data before fitting
\citep{christiano2017deep,ouyang2022training,rafailov2023direct}.
A growing line of work studies alignment as a collective
decision-making problem
\citep{DBLP:conf/nips/Ge0MPSV024,hollender2026enforcing,DBLP:conf/iclr/SiththaranjanLH24,DBLP:conf/icml/ProcacciaS025,DBLP:journals/corr/abs-2501-19266,sorensen2024position,conitzer2024social}.
Closest to us, \citet{DBLP:conf/nips/Ge0MPSV024} give an axiomatic
analysis of rules that aggregate individual rankings into a collective
weight vector under a linear functional form for the reward model;
their axioms, however, target majoritarian and efficiency properties
rather than proportionality.
\citet{shirali2025direct} argue that the arithmetic mean of per-type
reward functions is a principled aggregator when the downstream task
uses cardinal scores. Our angular mean offers a viable alternative in
the ordinal regime, and can also be applied to combine the linear heads
of multi-objective reward models that capture diverse user preferences
or objectives
\citep{DBLP:conf/acl/WangLXYDQZZ24,DBLP:conf/emnlp/00030X0024}.

\section{Appendix for \Cref{sec:model}} \label{app:model}

\begin{proposition}[Arithmetic Mean Definition Equivalence] \label{prop:arith_mean_def}
    In the event that $\|\sum_{i=1}^n \alphai \thetai\|_2 > 0$, we have that \[
    \thar = \argmin_{\theta \in S^{d-1}} \sum_{i \in [n]} \alphai \|\theta - \thetai \|_2^2 = \frac{\sum_{i \in [n]} \alphai \thetai}{\|\sum_{i \in [n]} \alphai \thetai\|_2}
    \]
\end{proposition}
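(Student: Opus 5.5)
We prove \Cref{prop:arith_mean_def} by expanding the squared norm and using that every candidate $\theta$ and every $\thetai$ lies on the unit sphere. Write $v := \sum_{i\in[n]} \alphai \thetai$ and assume, as in the statement, that $\|v\|_2 > 0$.

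\textbf{Step 1: reduce to maximizing a linear functional.} For any $\theta \in S^{d-1}$,
\[
\|\theta - \thetai\|_2^2 = \|\theta\|_2^2 - 2\langle \theta, \thetai\rangle + \|\thetai\|_2^2 = 2 - 2\langle \theta, \thetai\rangle .
\]
Using $\sum_i \alphai = 1$, this gives
\[
\sum_{i\in[n]} \alphai \|\theta - \thetai\|_2^2 = 2 - 2\langle \theta, v\rangle .
\]
So minimizing the objective over $S^{d-1}$ is exactly the same as maximizing $\langle \theta, v\rangle$ over $S^{d-1}$.

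\textbf{Step 2: solve the linear maximization.} By Cauchy--Schwarz, $\langle \theta, v\rangle \le \|\theta\|_2\|v\|_2 = \|v\|_2$ for every unit vector $\theta$. Equality holds if and only if $\theta$ is a nonnegative multiple of $v$. Since $v \neq 0$ and $\|\theta\|_2 = 1$, this forces $\theta = v/\|v\|_2$.

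Hence the argmin is the unique point $v/\|v\|_2$, and no tie-breaking is needed. This is the normalized direct arithmetic mean, which proves the claimed equality.

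There is no real obstacle. The only point needing care is the uniqueness of the equality case in Cauchy--Schwarz. This is also where the hypothesis $\|v\|_2 > 0$ enters: if $v = 0$, every $\theta \in S^{d-1}$ is a minimizer and the right-hand side is undefined.
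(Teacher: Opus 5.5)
Your proof is correct and matches the paper's argument: you expand the squared norm using unit norms and $\sum_i \alphai = 1$, reduce to maximizing $\langle \theta, v\rangle$ over $S^{d-1}$, and identify the unique maximizer $v/\|v\|_2$. The only cosmetic difference is that you invoke the equality case of Cauchy--Schwarz, whereas the paper writes $\langle \theta, v\rangle = \|v\|_2\cos\phi$ and sets $\phi = 0$.
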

\begin{proof}
Call $\argmin_{\theta \in S^{d-1}} \sum_{i \in [n]} \alphai \|\theta - \thetai \|_2^2$ the minimizing form and $\frac{\sum_{i \in [n]} \alphai \thetai}{\|\sum_{i \in [n]} \alphai \thetai\|_2}$ the closed form. Equality when the minimizing form minimizes over all $\theta \in \R^d$ and when the closed form is unnormalized is commonly known, but this variation is less common. We start by expanding the expression we're minimizing: \[
\sum_{i \in [n]} \alphai \|\theta - \thetai \|_2^2 = \sum_{i\in[n]} \alphai (\underbrace{\|\theta\|_2^2}_1 + \underbrace{\|\thetai\|_2^2}_1 - 2 \langle \theta , \thetai\rangle)  = 2\left(\sum_{i \in [n]} \alphai\right) - 2\langle \theta, \sum_{i \in [n]} \alphai \thetai \rangle 
\]
Then, as $\sum_{i \in [n]} \alphai = 1$, we have that the minimizing form simplifies to $\argmin_{\theta \in S^{d-1}} 2- 2\langle \theta, \sum_{i \in [n]} \alphai \thetai \rangle$. This is then equivalent to $\argmax_{\theta \in S^{d-1}} \langle \theta, \sum_{i \in [n]} \alphai \thetai \rangle$. Let $V = \sum_{i \in [n]} \alphai \thetai$ and $\phi$ be the angle between $\theta$ and $V$. We can rewrite \[
 \langle \theta, V \rangle = \|\theta\|_2\|V\|_2\cos(\phi)
\]
Observe that because we require $\theta \in S^{d-1}$, we have that $\|\theta\|_2 = 1$. So $\langle \theta, V \rangle = \|V\|_2 \cos(\phi)$. Under our assumption that $\|V\|_2 > 0$, this expression is uniquely maximized when $\cos(\phi) = 1$, which implies that $\phi = 0$. Hence it must be the case that $\thar$ is in the same direction as $V$. In combination with the knowledge that $\thar$ is a unit vector, this gives us everything we need to write it down in closed form: \[
\thar = \frac{V}{\|V\|_2} = \frac{\sum_{i \in [n]} \alphai \thetai}{\|\sum_{i \in [n]} \alphai \thetai\|_2}
\]
\end{proof}

\restatehere{propindrange}
\begin{proof}
The lower bound follows because for all $i \in [n]$, $i$'s individual proportionality level is always nonnegative. 

As $\pind^*(\theta_f, \cD)$ minimizes over all profiles, we can upper bound it by constructing a specific profile, $\prof$, and evaluating $\pind(\theta_f, \cD^m \prof)$. Let $\prof$ be the Antipodal Construction shown in \Cref{fig:antipodal-prof} with $\alpha^{(1)} = \alpha^{(2)} = 1/2$.  
\begin{align*}
    &\pind(\theta_f, \cD^m, \prof) \\
    &= \min \left(\mathbb{E}_{X \sim \cD^m} \left[\ip_1(\theta_f, X, \prof)\right], \mathbb{E}_{X \sim \cD^m} \left[\ip_2(\theta_f,X,\prof)\right]\right)\\
    &= \min \left(\mathbb{E}_{X \sim \cD^m} \left[\frac{a_{KT}(\thetaone, \theta_f\prof, X)}{\alpha^{(1)} \binom{m}{2}}\right], \mathbb{E}_{X \sim \cD^m} \left[\frac{a_{KT}(\thetatwo, \theta_f\prof, X)}{\alpha^{(2)} \binom{m}{2}}\right]\right)
    \intertext{Now we can make two simplifications. The first is that $\alpha^{(1)} = \alpha^{(2)}$, so we can just take the denominator out of the expectations and outside of the min.}
    &= \frac{1}{\alpha^{(1)} \binom{m}{2}} \min\left(\mathbb{E}_{X \sim \cD^m} \left[a_{KT}(\thetaone, \theta_f\prof, X)\right], \mathbb{E}_{X \sim \cD^m} \left[a_{KT}(\thetatwo, \theta_f\prof, X)\right]\right)
    \intertext{The other simplification is that via $\Cref{cor:no-ties}$ we can assume no scoring vector ties between $\theta_f, \thetaone, \thetatwo$ on $X$ in both of our expectations, and by \Cref{prop:antipodal-disagreement}, this means that $a_{KT}(\thetatwo, \theta_f\prof, X) = \binom{m}{2} - a_{KT}(\thetaone, \theta_f\prof, X)$. Then we can rewrite the expression as:}
    &= \frac{1}{\alpha^{(1)} \binom{m}{2}} \min\left(\mathbb{E}_{X \sim \cD^m} \left[a_{KT}(\thetaone, \theta_f\prof, X)\right], \mathbb{E}_{X \sim \cD^m} \left[\binom{m}{2} - a_{KT}(\thetaone, \theta_f\prof, X)\right]\right)\\
    &= \frac{1}{\alpha^{(1)} \binom{m}{2}} \min\left(\mathbb{E}_{X \sim \cD^m} \left[a_{KT}(\thetaone, \theta_f\prof, X)\right], \binom{m}{2} -\mathbb{E}_{X \sim \cD^m} \left[a_{KT}(\thetaone, \theta_f\prof, X)\right]\right)
\end{align*}
Observe that this minimum is now just a function of $\mathbb{E}_{X \sim \cD^m} \left[a_{KT}(\thetaone, \theta_f\prof, X)\right]$, and is maximized when the two expressions are equal. $\mathbb{E}_{X \sim \cD^m} \left[a_{KT}(\thetaone, \theta_f\prof, X)\right] = \binom{m}{2} - \mathbb{E}_{X \sim \cD^m} \left[a_{KT}(\thetaone, \theta_f\prof, X)\right] \implies \mathbb{E}_{X \sim \cD^m} \left[a_{KT}(\thetaone, \theta_f\prof, X)\right] = \frac{1}{2} \cdot \binom{m}{2}$. Plugging this all in along with our value of $\alpha^{(1)}$ gives that \[
\pind(\theta_f, \cD^m, \prof) \leq \frac{1}{\alpha^{(1)}\binom{m}{2}} \cdot \frac{\binom{m}{2}}{2} = 1
\]
And hence we can conclude that \[
\pind^*(\theta_f, \cD^m) \leq 1
\]

\end{proof}

\restatehere{arithmeanpind}

\begin{proof}
    As with the proof of $\Cref{prop:propind_range}$, we can upper bound $\pind^*(\theta_f, \cD)$ by evaluating $\pind(\theta_f, \cD^m, \prof)$ on a given profile $\prof$. Once again, let $\prof$ be the Antipodal Construction in \Cref{fig:antipodal-prof}, this time with $\alpha^{(1)} = 0.3$ and $\alpha^{(2)} = 0.7$. Then we have that: \[
    \thar\prof = \frac{\alpha^{(1)} \thetaone + \alpha^{(2)} \thetatwo}{\|\alpha^{(1)} \thetaone + \alpha^{(2)} \thetatwo\|} = \frac{(-0.3, 0) + (0.7, 0)}{\|(-0.3, 0) + (0.7, 0)\|} = (1,0) = \thetatwo
    \]
    Hence, once we normalize, we have that $\thar\prof = \thetatwo$. As discussed in \Cref{prop:propind_range}, we have that for $X \sim \cD^m$, with no tied scores, $\succ_{\thetaone}^X$ and $\succ_{\thetatwo}^X$ will disagree on all pairwise comparisons. As $\thar\prof = \thetatwo$, we know that $\succ_{\thetaone}^X$ and $\succ_{\thar\prof}^X$ will disagree on all pairwise comparisons. Hence, by \Cref{cor:no-ties} we can assume no ties, and we have that \[
    \mathbb{E}_{X \sim \cD^m}[a_{KT}(\thetaone, \thar\prof, X)] = 0
    \]
    Therefore, \begin{align*}
        \pind^*(\thar, \cD^m) &\leq \pind(\thar, \cD^m, \prof)\\
        &= \min_{i \in \{1,2\}} \mathbb{E}_{X \sim \cD^m}[a_{KT}(\thetai, \thar\prof, X)]\\
        &= 0
    \end{align*}
    
\end{proof}

\section{Appendix for \Cref{sec:ang_mean_prop}} \label{app:proportionality}

\subsection{Proof of \Cref{thm:thang_prop}}

\thangprop*

\begin{proof}
    We can apply the definition of $\pind^*$, as well as \Cref{lem:ktangle} and \Cref{thm:angular_bound} to obtain:
    \begin{align*}
    \pind^*(\thang, \D_\circ^m) &= \inf_{\prof \in P} \pind(\thang, \D^m, \prof)\\
                                &= \inf_{\prof \in P} \min_{i \in [n]} \E_{X \sim \D^m} [\agi(\thang, X, \prof)] \\
                                &= \inf_{\prof \in P} \min_{i \in [n]} \E_{X \sim \D^m} \left[\frac{a_{KT}(\thetai, \thang\prof, X)}{\alphai \binom{m}{2}}\right] \\
                                &= \inf_{\prof \in P} \min_{i \in [n]} \E_{X \sim \D^m} \left[\frac{(\pi - \dang(\thetai, \thang)) \binom{m}{2}}{\pi \alphai  \binom{m}{2} }\right] \\
                                &\geq \inf_{\prof \in P} \min_{i \in [n]} \E_{X \sim \D^m} \left[\frac{\pi - (1 - \alphai)\pi \binom{m}{2}}{\pi \alphai  \binom{m}{2} }\right] \\
                                &= \inf_{\prof \in P} \min_{i \in [n]} \E_{X \sim \D^m} \left[\frac{\pi \alphai \binom{m}{2}}{\pi \alphai  \binom{m}{2} }\right] = 1.
    \end{align*}
    Due to \Cref{prop:propind_range}, we also get that $\pind^*(\thang, \D_\circ^m) \leq 1$, so $\pind^*(\thang, \D_\circ^m) = 1$.
\end{proof}

\subsection{Proof of \Cref{lem:ktangle}.}

\restatehere{agreementangle}

\begin{proof}

Recall the relation between angles and expected agreement from \Cref{prop:angle_eprop}.
\[
\E_{x \sim \D, y \sim \D}\big[\mathbf{1}(x \succ_\theta y \land x \succ_\psi y)\big] = \E_{X \sim \D_{\circ}^2}[\aKT(\theta, \psi, X)] = \frac{\pi - \dang(\theta, \psi)}{\pi}.
\]
For $X = (x_1, \dots, x_m)$ we then apply linearity of expectation to $\E_{X \sim \D^m}[\aKT(\theta, \psi, X)]$ and obtain
\begin{align*}
    \E_{X \sim \D^m}\big[\aKT(\theta, \psi, X)\big] & = \E_{X \sim \D_{\circ}^m}\left[\Bigl|\{x_i,x_j\} \in \tbinom{X}{2}: (x_i \succ_\theta x_j \land x_i \succ_\psi x_j) \lor  (x_j \succ_\theta x_i \land x_j \succ_\psi x_j)\Bigr|\right] \\
    & = \E_{X \sim \D_{\circ}^m}\left[\sum_{\{x_i, x_j\} \in \binom{X}{2}} \mathbf{1}\big((x_i \succ_\theta x_j \land x_i \succ_\psi x_j) \lor  (x_j \succ_\theta x_i \land x_j \succ_\psi x_j)\big)\right] \\
    & = \sum_{\{i,j\} \in \binom{[m]}{2}} \E_{X \sim \D_{\circ}^m}\left[\mathbf{1}\big((x_i \succ_\theta x_j \land x_i \succ_\psi x_j) \lor  (x_j \succ_\theta x_i \land x_j \succ_\psi x_j)\big)\right] \\
    & = \sum_{\{i,j\} \in \binom{[m]}{2}} \E_{x_i \sim \D_{\circ}, x_j \sim \D_{\circ}}\left[\mathbf{1}\big((x_i \succ_\theta x_j \land x_i \succ_\psi x_j) \lor  (x_j \succ_\theta x_i \land x_j \succ_\psi x_j)\big)\right] \\
    & = \sum_{\{i,j\} \in \binom{[m]}{2}} \frac{\pi - \dang(\theta, \psi)}{\pi} \\
    &= \frac{\pi - \dang(\theta, \psi)}{\pi} \binom{m}{2}.
\end{align*}
\end{proof}

\subsection{Proof of \Cref{thm:angular_bound}}

In this section we provide the proof details for \Cref{thm:angular_bound}.

\restatehere{angbound}

The primary idea behind the proof is to show our main auxiliary Lemma,  \Cref{lem:circle_angular_bound}. Essentially, it states that the distance bound from \Cref{thm:angular_bound} remains valid for $\thetai$, even if we may only select a minimizer $\theta^*$ of $F(\theta) = \sum_{i = 1}^n \alpha^{(i)} \dang(\theta, \thetai)^2$ from some great circle $G_i$ that contains $\thetai$.

\begin{restatable}{lemma}{circleangbound}
    \label{lem:circle_angular_bound}
    For a given profile $(\btheta, \balpha)$, let $F$ be the objective function of the angular mean,
    \[
        F(\theta) = \sum_{i \in [n]} \alpha^{(i)} \dang(\theta, \thetai)^2.
    \]
    Let $G_i \subseteq S^{d}$ be any great circle containing $\thetai$, and write $F_{|G_i} : G_i \to \mathbb{R}$ for the restriction of $F$ to $G_i$.

    Every minimizer $\theta^* \in G_i$ of $F_{|G_i}$ fulfills $\dang(\thetai, \theta^*) \leq (1 - \alpha^{(i)})\pi$.
\end{restatable}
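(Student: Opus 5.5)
Throughout I parametrize $G_i$ by arclength $t\in(-\pi,\pi]$ with $\thetai$ at $t=0$, and write $\alpha:=\alpha^{(i)}$. The proof is by contradiction. Suppose a minimizer $\theta^*$ of $F_{|G_i}$ has $D:=\dang(\thetai,\theta^*)>(1-\alpha)\pi$. Let $e$ be the unit tangent to $G_i$ at $\theta^*$ pointing \emph{away} from $\thetai$. For $s>0$ let $\theta'(s)$ be the point reached by moving from $\theta^*$ along $G_i$ in direction $e$ by arclength $s$. I will show $F(\theta'(s))<F(\theta^*)$ for a suitable $s$, which contradicts minimality.

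\textbf{Step 1 (per-voter bound via Toponogov).} Fix $j\neq i$. Let $r_j:=\dang(\theta^*,\theta^{(j)})\in[0,\pi]$, and let $\gamma_j$ be the angle at $\theta^*$ between $e$ and the initial tangent of a minimizing geodesic to $\theta^{(j)}$. Since $S^{d-1}$ has curvature $\ge 0$, the hinge version of Toponogov's comparison together with the Euclidean law of cosines gives
\[\dang(\theta'(s),\theta^{(j)})^2\le r_j^2+s^2-2r_js\cos\gamma_j .\]
Equivalently, this says the spherical law of cosines $\cos c=\cos a\cos b+\sin a\sin b\cos\gamma$ forces $c^2\le a^2+b^2-2ab\cos\gamma$ for $a,b\in[0,\pi]$. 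For voter $i$ there is no need for an estimate, since $\thetai$ lies on $G_i$. Moving away from it by $s\ge\pi-D$ wraps around, so the new distance is exactly $2\pi-D-s$ (for $s\le 2\pi-D$).

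\textbf{Step 2 (first-order condition).} Here I use only minimality along $G_i$. The directional derivative of $\dang(\cdot,\theta^{(j)})^2$ at $\theta^*$ in a direction $v$ is $-2r_j\cos\gamma_j(v)$.
\begin{itemize}
\item For $\theta^{(j)}$ not antipodal to $\theta^*$, this is smooth, and $\cos\gamma_j(-e)=-\cos\gamma_j(e)$.
\item For antipodal $\theta^{(j)}$, both one-sided derivatives equal $-2\pi$, a concave cusp.
\end{itemize}
Minimality requires both one-sided derivatives of $F_{|G_i}$ to be $\ge 0$. Adding the two resulting inequalities shows that no $\theta^{(j)}$ can be antipodal to $\theta^*$, so $F_{|G_i}$ is differentiable at $\theta^*$ with zero derivative. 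Voter $i$ contributes $\cos\gamma_i=-1$ in direction $e$, so this yields
\[\sum_{j\ne i}\alpha^{(j)}r_j\cos\gamma_j=\alpha D.\]

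\textbf{Step 3 (summing).} Summing Step 1 with weights $\alpha^{(j)}$, using $\sum_{j\neq i}\alpha^{(j)}=1-\alpha$, and inserting Step 2 gives
\[F(\theta'(s))-F(\theta^*)\le(1-\alpha)s^2-2s\alpha D+\alpha\big((2\pi-D-s)^2-D^2\big)=(1-\alpha)s^2+\alpha(2\pi-s)^2-4\pi\alpha D.\]
This is a convex quadratic in $s$, minimized at $s=2\pi\alpha$ with value
\[4\pi\alpha\big((1-\alpha)\pi-D\big)<0,\]
which is the desired contradiction. The constraint $s\ge\pi-D$ holds automatically at this $s$, because $\pi-D<\alpha\pi$.

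\textbf{Main obstacle.} I expect the difficulty to be the range of validity of the comparison in Step 1 when $s$ is large. If $\alpha>1/2$, then $s=2\pi\alpha>\pi$, and the hinge then has a side longer than $\pi$, so Toponogov does not apply directly.

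I would first try to handle this by choosing a different $s$. Any $s\in[\pi-D,\pi]$ at which the quadratic is negative suffices. Since $D>(1-\alpha)\pi$, the quadratic is already negative at $s=\pi-D$ and on an interval beyond it; I would verify that this interval meets $[\pi-D,\pi]$.

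Failing that, I would re-derive the bound for $s>\pi$ by moving in direction $-e$ by $2\pi-s$, which reaches the same point. The inequality in Step 1 would then need to be checked directly from the spherical law of cosines on the full range.

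A secondary check is that the elementary inequality behind Step 1 is indeed valid for all $a,b\in[0,\pi]$, including the degenerate cases $r_j\in\{0,\pi\}$.
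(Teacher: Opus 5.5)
Your overall argument matches the paper's: the contradiction setup, the first-order cancellation, the step $s=2\pi\alpha^{(i)}$ and the final value $4\pi\alpha((1-\alpha)\pi-D)$ are all the same. For $\alpha^{(i)}\le 1/2$ your proof is complete, and your antipodal-exclusion argument is correct.

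The gap is the one you flag. For $\alpha^{(i)}>1/2$ you use the Step 1 bound at $s>\pi$ without proof, and neither proposed repair works.
\begin{itemize}
\item Repair (a), choosing some $s\le\pi$, fails. Write $\delta:=D-(1-\alpha)\pi$; then your bound is $Q(s)=(s-2\pi\alpha)^2-4\pi\alpha\delta$. So $Q(\pi-D)=(\pi-D)^2>0$; that point is antipodal to $\thetai$, not an improvement. $Q<0$ only for $|s-2\pi\alpha|<2\sqrt{\pi\alpha\delta}$, and for $\alpha>1/2$ and small $\delta$ this interval lies strictly to the right of $\pi$.
\item Repair (b), the hinge at $\theta^*$ with side $t=2\pi-s$ in direction $-e$, gives $r_j^2+t^2+2r_jt\cos\gamma_j$. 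This exceeds your target by $4\pi(\pi-s+r_j\cos\gamma_j)$. After summing with the first-order condition it yields only $F(\theta'(s))-F(\theta^*)\le t^2$, which gives no contradiction.
\end{itemize}

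The missing idea, which is Case~2 of the paper's \Cref{lem:fake_toponogov}, is to put the hinge at the antipode $-\theta^*$. For $s\in(\pi,2\pi]$:
\begin{itemize}
\item $\theta'(s)$ lies at distance $s-\pi$ from $-\theta^*$ along $G_i$;
\item $\theta^{(j)}$ lies at distance $\pi-r_j$ from $-\theta^*$, since every great circle through $\theta^*$ passes through $-\theta^*$;
\item in your convention, the included angle at $-\theta^*$ is $\pi-\gamma_j$.
\end{itemize}
Both sides are at most $\pi$, so the comparison applies and gives $\dang(\theta'(s),\theta^{(j)})^2\le(\pi-r_j)^2+(s-\pi)^2+2(\pi-r_j)(s-\pi)\cos\gamma_j$. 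This equals your target $r_j^2+s^2-2r_js\cos\gamma_j$ plus $2\pi(\pi-r_j-s)(1-\cos\gamma_j)$, and that extra term is $\le 0$ because $s>\pi$. With the inequality established on the full range $s\in[0,2\pi]$, your Step 3 goes through unchanged.
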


Let us first formally establish why \Cref{thm:angular_bound} follows from \Cref{lem:circle_angular_bound}.

\begin{proof}[Proof of \Cref{thm:angular_bound}]
Let $\prof$ be a profile in $S^d$.  Consider any voter $i \in [n]$ and let $G_i$ be a great circle passing through both $\thetai$ and $\thang$. Because $\thang$ is a global minimizer of $F$, it also minimizes $F$ on $G_i$. Thus, by \Cref{lem:circle_angular_bound},
\[
\dang(\thang, \thetai) \leq (1 - \alpha^{(i)})\pi.
\]
\end{proof}

The full proof of \Cref{lem:circle_angular_bound} is more technical. The high-level idea is a proof by contradiction. We will show that, for a global minimizer $\theta^*$ of $F_{|G_i}$ and a vector $\thetai \in G_i$, that if $\dang(\theta^*, \thetai) > (1-\alpha^{(i)}) \pi$ then we can construct another vector $\theta'$ with a lower objective value with $\dang(\theta', \thetai) < \dang(\theta^*, \thetai)$. In the proof we will leverage a distance bound inspired by the Euclidean law of cosines and Topongov's Inequality (\Cref{lem:fake_toponogov}) and a first order local optimality condition on $\theta^*$ 
Employing these two tools allows us to greatly simplify the the expression $F(\theta') - F(\theta)$, which then makes it easy to verify that $\theta'$ does indeed achieve an improved objective value.

To use the first order optimality condition on the derivative of $F_{|G_i}$ at $\theta^*$, we must first establish that $F$ is differentiable at $\theta^*$ in the first place. Because each $d_i(\theta) = \dang(\theta, \thetai)^2$ is differentiable everywhere other than $-\thetai$, we just need to show that $\theta^*$ cannot be antipodal to any of the $\thetai$.

\begin{lemma}\label{lem:circle_antipodal}
    Let $G_i$ and $F_{|G_i}$ be as in \Cref{lem:circle_angular_bound}. No minimizer $\theta^* \in G_i$ of $F_{|G_i}$ is antipodal to any of the $\thetai$.
\end{lemma}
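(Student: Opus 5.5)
The plan is a first-order argument. I will show that if a minimizer $\theta^*$ of $F_{|G_i}$ were antipodal to some voter vector, then moving $\theta^*$ a little along $G_i$, in a suitably chosen direction, strictly decreases $F$. This contradicts minimality.

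\textbf{Setup.} Suppose for contradiction that $\theta^* = -\theta^{(k)}$ for some $k \in [n]$. Let $K := \{j \in [n] : \theta^{(j)} = -\theta^*\}$; the profile may contain repeated vectors, so $K$ can have more than one element. Set $A := \sum_{j \in K} \alpha^{(j)} > 0$. Fix a unit tangent vector $u$ to $G_i$ at $\theta^*$. For $t \in \R$, let
\[ \gamma(t) := \cos(t)\,\theta^* + \sin(t)\,u, \]
which is the unit-speed parametrization of $G_i$ with $\gamma(0)=\theta^*$. For small $t$, $\dang(\gamma(t), \theta^*) = |t|$.

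\textbf{Antipodal terms.} For $j \in K$, for any point $\theta$ on the sphere we have $\dang(\theta, \theta^{(j)}) = \pi - \dang(\theta, -\theta^{(j)}) = \pi - \dang(\theta, \theta^*)$. This is the standard identity, since $\langle \theta, -\theta^{(j)}\rangle = -\langle \theta, \theta^{(j)}\rangle$ and $\cos^{-1}(-x) = \pi - \cos^{-1}(x)$. Hence
\[ \dang(\gamma(t),\theta^{(j)})^2 = (\pi - |t|)^2 = \pi^2 - 2\pi |t| + t^2. \]
So the contribution of all antipodal terms to $F(\gamma(t)) - F(\theta^*)$ is $A(-2\pi|t| + t^2)$. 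This is a strict linear decrease in $|t|$, in \emph{both} directions along $G_i$.

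\textbf{Non-antipodal terms.} For $j \notin K$, the function $\theta \mapsto \dang(\theta,\theta^{(j)})^2$ is differentiable at $\theta^*$.
\begin{itemize}
\item If $\theta^{(j)} \ne \pm\theta^*$, then $\cos^{-1}$ is smooth on $(-1,1)$, so the function is smooth near $\theta^*$.
\item If $\theta^{(j)} = \theta^*$, it equals $t^2$ along $\gamma$, which is smooth.
\end{itemize}
Therefore $g(t) := \sum_{j \notin K} \alpha^{(j)} \dang(\gamma(t),\theta^{(j)})^2$ is differentiable at $t=0$ with some derivative $L := g'(0)$. Combining with the antipodal terms,
\[ F(\gamma(t)) - F(\theta^*) = -2\pi A |t| + L t + o(|t|). \]
Choose the sign of $t$ so that $Lt \le 0$, i.e.\ $t>0$ if $L \le 0$ and $t<0$ otherwise. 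Then the right-hand side is at most $-2\pi A|t| + o(|t|)$, which is $<0$ for $|t|$ small enough. This contradicts that $\theta^*$ minimizes $F$ on $G_i$.

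\textbf{Main obstacle.} The only delicate step is the differentiability of the non-antipodal terms, including the case where some $\theta^{(j)}$ equals $\theta^*$. The key observation is that the antipodal terms have a one-sided kink that decreases in both directions. A smooth term can only increase in one of the two directions to first order, so it cannot cancel this kink in both directions at once.
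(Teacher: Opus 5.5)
Your proof is correct and follows essentially the same route as the paper: split $F$ along $G_i$ into the antipodal (singular) terms, whose one-sided derivatives at $\theta^*$ both point strictly downhill, and the remaining terms, which are differentiable there, so $\theta^*$ cannot be a minimizer in either direction. Your explicit computation $\dang(\gamma(t),\theta^{(j)})^2 = (\pi-|t|)^2$, which gives the one-sided slope $-2\pi A$, also pins down the strict sign of the singular one-sided derivatives that the paper only asserts by symmetry.
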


\begin{proof}
    First observe that the map $d_i(\theta) = \dang(\theta, \thetai)$ is differentiable everywhere other than $-\thetai$, where only the directional derivatives exist. So suppose that $\theta^*$ is antipodal to some of the $\thetai$. Let $I = \{i \in [n]: \thetai = - \theta^*\}$ be the set of indices of these antipodal $\thetai$.

    Let $x : \mathbb{R}/2\pi\mathbb{Z} \to G_i$ be a unit-speed parametrization of $G_i$ such that $x(0) = \theta^*$. We will decompose $F$ into a singular and non-singular part as  $F(\theta) = F_{\text{sin}}(\theta) + F_{\text{diff}}(\theta)$, with
    \[
    \Fs = \sum_{i \in I} \alpha^{(i)} \dang(\theta, \thetai) \quad \text{ and } \quad  \Fd = \sum_{i \notin I} \alpha^{(i)} \dang(\theta, \thetai)
    \]
    
    The right-sided derivative of $F(x(\vphi))$ at $0$ is then given by
    \[\left.\frac{d}{d \vphi}F(x(\vphi))\right|_{\vphi = 0^+} = \left(\left.\frac{d}{d \vphi}\Fs(x(\vphi))\right|_{\vphi = 0^+} \right) + \Fd(x(\vphi))' = \Fs^+ + \Fd(x(\vphi))',\]
    while the left-sided derivative is given by
    \[\left.\frac{d}{d \vphi}F(x(\vphi))\right|_{\vphi = 0^-} = \left(\left.\frac{d}{d \vphi}\Fs(x(\vphi))\right|_{\vphi = 0^-} \right) + \Fd(x(\vphi))' = \Fs^- + \Fd(x(\vphi))'.\]
            Now, because $\thetai$ is antipodal to $\theta^* = x(0)$ for all $i \in I$, we get that $\Fs(\vphi)$ = $\Fs(-\vphi)$ and hence the left and right-sided derivative of $\Fs(x(\vphi))$ at $0$ are identical up to their sign, i.e. $\Fs^+ = - \Fs^-$. In particular $\Fs^+ < 0$ and $\Fs^- > 0$ , as  $\Fs$ only decreases when moving away from $x(0)$. $x(0)$ can only be a minimum when the right-side derivative non-negative and the left-sided derivative is non-positive, so
            \begin{align*} 
                0 \le F_{\text{sin}}^+ + F_{\text{diff}}(x(0))' &\implies -F_{\text{diff}}(x(0))' \le F_{\text{sin}}^+ \\
                0 \ge F_{\text{sin}}^- + F_{\text{diff}}(x(0))' &\implies F_{\text{diff}}(x(0))' \le -F_{\text{sin}}^-
            \end{align*}
            Combining these two inequalities then yields
            $$0 < F_{\text{sin}}^- \le F_{\text{diff}}(x(0))' \le -F_{\text{sin}}^- < 0,$$
            a contradiction. Therefore $\theta^*$ cannot be antipodal to $\theta^{(i)}$.

\end{proof}

Next, we will establish how we upper-bound spherical distances in the proof. Let us give a quick introduction to the aforementioned Toponogov's Inequality. Intuitively it states that spherical triangles are "tighter" than flat euclidean triangles, because the edges can curve towards each other.
More formally, we compare a spherical triangle and a euclidean triangle that are both constructed using the same two side lengths and one angle. Toponogov's Inequality then states that the last side of the spherical triangle is shorter than that of the euclidean triangle.

\begin{restatable}{theorem}{toponogov}[\cite{toponogov1964riemannian, meyer1989toponogov}]\label{thm:toponogov}
    Let $M$ be an $m$-dimensional Riemannian manifold with sectional curvature $K$ satisfying $K \geq \delta$. Let $pqr$ be a geodesic triangle, i.e. a triangle whose sides are geodesics, in $M$, such that the geodesic $pq$ is minimal and if $\delta > 0$, the length of the side $pr$ is less than $\pi / \sqrt{\delta}$. Let $p'q'r'$ be a geodesic  triangle in the model space $M_\delta$, i.e. the simply connected space of constant curvature $\delta$, such that the lengths of sides $p'q'$ and $p'r'$ are equal to that of $pq$ and $pr$ respectively and the angle at $p'$ is equal to that at $p$. Then
\[
d(q, r) \leq d(q', r').
\]
\end{restatable}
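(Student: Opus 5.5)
We would not re-derive the theorem in full generality, since the paper uses it only as a black box and it is classical. The plan has two parts. First we sketch the standard route to the general hinge version, following Meyer's exposition. Then we give a self-contained argument for the only case the paper needs: $M = S^{d-1}$ with curvature $K \equiv 1 \geq \delta = 0$ and model space $M_0 = \R^2$.

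\emph{General case.} The first step is an infinitesimal comparison. Let $r_q = d(q,\cdot)$ and let $\bar r$ be its model counterpart. Along any geodesic $c(t)$ issuing from $p$, the Rauch (equivalently Hessian) comparison theorem together with $K \geq \delta$ gives a differential inequality for $t \mapsto f_\delta(r_q(c(t)))$. Here $f_0(s)=s^2/2$, and in general $f_\delta$ is the appropriate modified distance function. The model function satisfies the corresponding equality. A Sturm-type argument then yields $d(q,r) \leq d(q',r')$ as long as $c$ stays away from the cut locus of $q$ and the triangle is thin. The second step is globalization, and this is the one we expect to be the main obstacle. We must remove the thinness and cut-locus restrictions. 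We would follow the Grove--Petersen/Meyer scheme: argue by contradiction, take a hinge at which the inequality fails and which is minimal in a suitable sense, and subdivide its side $pr$ into short pieces to which the local comparison applies. The local inequalities are then chained using Alexandrov's lemma on gluing model triangles. Minimality of $pq$ is used to control the angle at the break points. The hypothesis $\ell(pr) < \pi/\sqrt{\delta}$ is what guarantees that the model triangles exist and that the gluing lemma is monotone in the right direction.

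\emph{Case used in this paper.} Here everything reduces to trigonometry. Let $a = \ell(pq)$ and $b = \ell(pr)$ with $a \leq \pi$ (since $pq$ is minimal), $b < \pi$, and let $\gamma$ be the angle at $p$. The spherical law of cosines gives
\[
\cos d(q,r) = \cos a \cos b + \sin a \sin b \cos\gamma .
\]
The Euclidean hinge has third side $c' = \sqrt{a^2+b^2-2ab\cos\gamma}$. If $c' \geq \pi$ there is nothing to prove, since $d(q,r) \leq \pi$. Otherwise we must show $\cos c' \leq \cos d(q,r)$, because $\cos$ is decreasing on $[0,\pi]$.

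Fix $a,b$ and write $u=\cos\gamma \in [-1,1]$. Define $h(u) = \cos a\cos b + u\sin a\sin b - \cos\sqrt{a^2+b^2-2abu}$. We would check two things: $h(\pm1) \geq 0$, where the two sides reduce to $\cos(a\mp b)$ versus $\cos|a\mp b|$ (equality at $u=1$, and at $u=-1$ the inequality $\cos(a+b)\ge\cos(a+b)$ holds with the convention $d\le\pi$); and $h$ is concave in $u$ on the relevant range. For the concavity, note that $u\mapsto\cos\sqrt{s(u)}$ with $s$ affine is convex wherever $\sqrt{s}\leq\pi$. This holds because $g(s)=\cos\sqrt{s}$ has $g''(s) = \tfrac{1}{4}\big(\sin\sqrt s/s^{3/2} - \cos\sqrt s/s\big) \geq 0$, which amounts to $\tan x \leq x$ failing only where $\cos x<0$, a routine check. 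A concave function that is nonnegative at both endpoints is nonnegative throughout, which gives $h \geq 0$ and hence $d(q,r)\le d(q',r')$ in this case.
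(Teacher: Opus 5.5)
You take a different route from the paper, which gives no proof of this statement at all. It quotes the theorem from Toponogov and Meyer and uses it only as a black box in \Cref{cosbound}.

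\emph{General case.} Your general-case part is a roadmap of that same cited argument (Hessian/Rauch comparison along $pr$, then globalization by subdivision and Alexandrov's lemma). It therefore has exactly the status of the paper's citation and no more.

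\emph{Special case.} What is genuinely different is your direct treatment of the one instance the paper uses. On the unit sphere with $\delta=0$ and model $\mathbb{R}^2$, the spherical law of cosines reduces the claim to
\[
\cos\sqrt{a^2+b^2-2abu}\;\le\;\cos a\cos b+u\sin a\sin b\qquad\text{for } u\in[-1,1] \text{ with } a^2+b^2-2abu\le\pi^2,
\]
which you prove by concavity in $u=\cos\gamma$. This gives an elementary, self-contained justification, and it is the correct instantiation. The paper's proof of \Cref{cosbound} takes $\delta=1$ with model $M_1=S^n$, which only compares the triangle with itself, and then applies the Euclidean law of cosines. The comparison actually needed is yours, $\delta=0$, which is legitimate because $K\equiv 1\ge 0$. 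Your computation also never uses $a\le\pi$ or $b<\pi$, since the spherical law of cosines holds for geodesic arcs of any length. That matters: for $\delta=0$ the statement puts no bound on $\ell(pr)$, and case 2 of \Cref{lem:fake_toponogov} reaches $b=\pi$. The citation buys generality in $M$ and $\delta$, which the paper never uses.

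\emph{Endpoint repair.} $\cos\sqrt{s}$ is convex only for $\sqrt{s}\le x^*\approx 4.49$, the first positive root of $\tan x=x$. So when $a+b>x^*$, $h$ is not concave on all of $[-1,1]$, and $h(-1)=0$ is the wrong endpoint to check. Instead work on
\[
\{u\in[-1,1]: a^2+b^2-2abu\le\pi^2\}=[\max(-1,u_0),1],
\]
where $h$ is concave. If $u_0>-1$, then the Euclidean side equals $\pi$ at $u_0$, so $h(u_0)=1+\cos a\cos b+u_0\sin a\sin b\ge 0$ trivially. Together with $h(1)=0$, this closes the argument.

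\emph{Convexity criterion.} As written, your check is garbled. The correct criterion is $\sin x\ge x\cos x$ on $(0,\pi]$: this is $\tan x\ge x$ where $\cos x>0$, and it is trivially true where $\cos x\le 0$.

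\emph{Completeness.} The statement as printed omits completeness of $M$. The standard formulation assumes it, and your globalization uses it silently to obtain minimal geodesics from $q$ to points of $pr$. Without completeness the statement is false. Take $S^2$ minus a closed cap of radius $25^\circ$ centered on the equator at longitude $0^\circ$. Put $q,r$ on the equator at longitudes $\pm 30^\circ$ and $p$ at latitude $60^\circ$ on the meridian $0^\circ$, with $qr$ the long equatorial arc. Then $d(q,r)\approx 80^\circ$, while the Euclidean comparison gives $d(q',r')\approx 71^\circ$. This is harmless for the paper, which only uses the complete sphere.
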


We can combine this theorem with the euclidean law of cosines to bound the side lengths of spherical triangles.

\begin{restatable}{corollary}{cosbound}\label{cosbound}
    Consider three points $p, q, r \in S^n$ forming a geodesic spherical triangle. Then
    \[ \dang(q, r)^2 \leq \dang(p, q)^2 + \dang(p, r)^2 - 2 \dang(p, q) \dang(p, r) \cdot \cos(\gamma), \]
    where $\gamma$ is the angle between $\overline{pq}$, and $\overline{pr}$ at $p$.
\end{restatable}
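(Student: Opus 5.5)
We apply \Cref{thm:toponogov} with $M = S^n$, which has constant sectional curvature $K \equiv 1$, and take the lower curvature bound $\delta = 0$. Then the model space $M_0$ is flat Euclidean space, so no length restriction on the side $pr$ is imposed. Write $a := \dang(p,q)$ and $b := \dang(p,r)$. We choose the triangle $pqr$ with sides the minimal great-circle arcs $\overline{pq}$ and $\overline{pr}$, so both sides are minimal geodesics of lengths $a$ and $b$, and $\gamma$ is the angle between their initial tangent vectors at $p$.

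Next we build the comparison triangle $p'q'r'$ in $\R^2$. It has $|p'q'| = a$, $|p'r'| = b$, and angle $\gamma$ at $p'$. The Euclidean law of cosines gives
\[
|q'r'|^2 = a^2 + b^2 - 2ab\cos(\gamma).
\]

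The third side $qr$ of the spherical triangle is the minimal arc joining $q$ and $r$, so its length is exactly $\dang(q,r)$. Toponogov's inequality gives $d(q,r) \le |q'r'|$, where $d$ is the intrinsic distance on $S^n$; this distance coincides with $\dang$ on unit vectors. Both sides are nonnegative, so squaring preserves the inequality, which yields the claimed bound.

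The main obstacle is the degenerate configurations. If $q$ or $r$ equals $p$, then $\gamma$ is undefined, but the inequality holds trivially for any choice of $\gamma$: for $a = 0$ it reads $\dang(q,r)^2 \le b^2$, which is an equality. If $q$ or $r$ is antipodal to $p$, the minimal arc from $p$ is not unique. In that case we fix whichever arc defines $\gamma$ (in the application the arc lies on the great circle $G_i$). Since $\dang(p,q) = \pi$ is still the arc's length, the arc remains minimal, and Toponogov's theorem still applies. Finally, we should check that the cited form of Toponogov's theorem allows $\delta = 0 \le K$. It does, because the hypothesis is only $K \ge \delta$, and the extra condition on $pr$ is needed only when $\delta > 0$.
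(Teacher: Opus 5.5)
Your proof is correct and uses the same strategy as the paper: Toponogov's hinge comparison followed by the Euclidean law of cosines. Your choice $\delta = 0$ is in fact the right one. The paper's proof invokes Toponogov with $\delta = 1$, whose model space $M_1$ is the sphere itself, so the comparison triangle there is spherical (congruent to the original), and the Euclidean law of cosines cannot legitimately be applied to it. With $\delta = 0$ the comparison triangle genuinely lives in $\mathbb{R}^2$. Taking $\delta = 0$ also removes the length restriction on $pr$, which is what lets your handling of the degenerate and antipodal configurations go through cleanly.
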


\begin{proof}
    We use the law of cosines in conjunction with Toponogov's theorem. First, note that the sectional curvature $K$ of the unit sphere is constantly equal to $1$.
    Applying Toponogov's theorem with lower curvature bound $\delta = 1$, we obtain $d(q, r) \leq d(q', r')$, where $p'q'r'$ is the comparison triangle in the model space $M_1 = S^n$ with the same two side lengths and included angle $\gamma$ at $p'$. Applying the Euclidean law of cosines to this Euclidean triangle then yields
    \[
    d(q, r)^2 \leq d(q', r')^2 = d(p, q)^2 + d(p, r)^2 - 2d(p, q)d(p, r)\cos(\gamma).
    \]
\end{proof}

The specific bound we use in our main proof is related, but slightly stronger. The distance bound should also apply in case we do not consider strictly geodesic triangles. The precise formulation is as follows.

\begin{restatable}{lemma}{faketoponogov}\label{lem:fake_toponogov}
    Let $\theta^*, \theta^{(j)} \in S^d$ with spherical distance $l_j = \dang(\theta^*, \theta^{(j)})$. Let $x: \mathbb{R}/2\pi\mathbb{Z} \to S^d$ be a unit-speed geodesic with $x(0) = \theta^*$, and let $\gamma_j$ be the angle between the geodesic $\overline{\theta^* \theta^{(j)}}$ and the tangent $x'(0)$ in the positive direction of $x$.
    
    For any $\vphi \in [-2\pi, 0]$, $x_j(\vphi)^2 = \dang(x(\vphi), \theta^{(j)})^2$ satisfies \[ x_j(\vphi)^2 \leq l_j^2 + \vphi^2 - 2 l_j \vphi \cos(\gamma_j). \]
\end{restatable}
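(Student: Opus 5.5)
The plan is to reduce everything to a two-dimensional computation on a great $2$-sphere and then split by the length of the arc travelled. Write $t := -\vphi \in [0, 2\pi]$, so that the claimed inequality reads
\[ \dang(x(-t),\theta^{(j)})^2 \le l_j^2 + t^2 + 2 l_j t \cos(\gamma_j). \]
This is exactly the Euclidean law of cosines for sides $l_j, t$ with included angle $\pi-\gamma_j$: moving \emph{backwards} along $x$ makes angle $\pi - \gamma_j$ with $\overline{\theta^*\theta^{(j)}}$.

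\textbf{Degenerate cases.} If $l_j=0$, the claim is $\dang(x(-t),\theta^*)\le t$, which is trivial. If $l_j=\pi$, then $\dang(x(-t),\theta^{(j)}) = \pi - \dang(x(-t),\theta^*) \le |\pi - t|$. Since $\cos\gamma_j\ge -1$, the right-hand side is at least $(\pi-t)^2$, so the claim holds. Hence assume $0<l_j<\pi$.

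\textbf{Case $t\in[0,\pi]$.} The arc $x([-t,0])$ has length $t\le\pi$ and $\overline{\theta^*\theta^{(j)}}$ is minimal of length $l_j<\pi$. So $\theta^*, x(-t), \theta^{(j)}$ form a geodesic triangle with angle $\pi-\gamma_j$ at $\theta^*$, and \Cref{cosbound} gives the bound directly.

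\textbf{Case $t\in(\pi,2\pi]$.} Here I would use explicit coordinates. Put $v=x'(0)$ and write $\theta^{(j)} = \cos l_j\,\theta^* + \sin l_j(\cos\gamma_j\, v + \sin\gamma_j\, w)$ for a unit $w\perp\theta^*,v$. Since $x(-t)=\cos t\,\theta^* - \sin t\, v$, the distance $D(c) := \dang(x(-t),\theta^{(j)})$, with $c=\cos\gamma_j$, satisfies
\[ \cos D(c) = \cos l_j\cos t - \sin l_j \sin t \, c \]
for all $t$. Because $\sin t\le 0$ on $(\pi,2\pi]$, $\cos D$ is nondecreasing in $c$. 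Thus $D$ is nonincreasing in $c$, while the right-hand side $R(c)=l_j^2+t^2+2l_jtc$ is increasing in $c$. It therefore suffices to check the inequality at the worst endpoint $c=-1$, i.e. $\gamma_j=\pi$.

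In that configuration $\theta^{(j)}$ lies on the great circle of $x$, and $x(-t)$ sits at arc-parameter distance $t+l_j$ from $\theta^{(j)}$ going backwards. So $D(-1)=\dang$ corresponds to the angle $t+l_j \bmod 2\pi$ folded into $[0,\pi]$, i.e. $D(-1)=2\pi - t - l_j$ or $t+l_j-2\pi$ (whichever lies in $[0,\pi]$), and we must compare this with $R(-1)=(t-l_j)^2$. I expect this endpoint check to be the main obstacle. As a fallback, I would also use the forward bound: $x(-t)=x(2\pi-t)$ with $s=2\pi-t<\pi$, and \Cref{cosbound} gives $D^2\le l_j^2+s^2-2l_js c$. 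This dominates what we need whenever $c\ge (\pi-t)/l_j$. In the remaining range, $c<(\pi-t)/l_j\le 0$, I would combine the monotonicity in $c$ with the trivial bound $D\le\pi$ and with the forward bound evaluated at the threshold $c=(\pi-t)/l_j$, splitting by which bound is active. I would verify the resulting one-variable inequalities in $(l_j,t)$ directly.
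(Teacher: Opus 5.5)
Your case $t\in[0,\pi]$ is essentially the paper's Case 1. For $t\in(\pi,2\pi]$ your route is genuinely different and it works. However, you mis-evaluated the endpoint you call the main obstacle, and that check is in fact immediate. Because $\gamma_j$ is measured against the forward tangent $x'(0)$, the case $c=-1$ places $\theta^{(j)}=x(-l_j)$ in the \emph{backward} direction. So $x(-t)$ lies at parameter distance $t-l_j$ from $\theta^{(j)}$, not $t+l_j$. Your own formula confirms this: $\cos D(-1)=\cos l_j\cos t+\sin l_j\sin t=\cos(t-l_j)$, hence $D(-1)=\arccos(\cos(t-l_j))\le |t-l_j|=\sqrt{R(-1)}$. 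Your monotonicity observation is correct: $D$ is nonincreasing and $R$ is increasing in $c$ when $\sin t\le 0<\sin l_j$. Combined with the endpoint bound, this closes the case in one line, and the fallback through the forward bound at $x(2\pi-t)$ is unnecessary.

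The paper handles $\vphi\in[-2\pi,-\pi]$ by applying \Cref{cosbound} a second time, to the hinge at the antipode $-\theta^*$. That hinge has sides $\dang(-\theta^*,x(\vphi))=t-\pi$ and $\dang(-\theta^*,\theta^{(j)})=\pi-l_j$, with included angle $\gamma_j$. Expanding shows that the resulting Euclidean bound equals the target minus $2\pi(t+l_j-\pi)(1+\cos\gamma_j)\ge 0$.

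Your route buys elementarity and some insight. The second case needs only the exact spherical law of cosines: no comparison theorem, and no need to check the hinge angle and side-length hypotheses at $-\theta^*$. It also shows that the extremal configuration is the degenerate one, where $\theta^{(j)}$ lies on the great circle of $x$. You also treat $l_j\in\{0,\pi\}$ explicitly, where $\gamma_j$ is not well defined; the paper does not treat these separately. The paper's route, in exchange, keeps both cases within a single comparison tool and yields an explicit slack term.
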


\begin{proof}
    Since $\vphi \leq 0$, $\vphi$ can be seen to travel backward when compared to the forward facing tangent $x'(0)$. 
    We consider two cases for $\vphi$:
    \paragraph{Case 1: $\vphi \in(-\pi, 0]$.}
    In this case the geodesic $\overline{\theta^*x(\vphi)}$ runs directly opposite to $x'(0)$, so the angle between the geodesics $\overline{x(\vphi)\theta^*}$ and $\overline{\theta^*\theta^{(j)}}$ is equal to $\pi - \gamma_j$. The lengths of these geodesics are $\dang(x(\vphi), \theta^*) = |\vphi| = -\vphi$ and $\dang(\theta^*, \theta^{(j)}) = l_j$, so by applying \Cref{cosbound} to these two geodesics with the angle $\pi - \gamma_j$ we obtain

    \begin{align*}
        x_j(\vphi^2) &\leq l_j^2 + (-\vphi)^2 - 2 l_j (-\vphi) \cos(\pi - \gamma_j) \\
                     &= l_j^2 + \vphi^2 - 2 l_j (-\vphi) (-\cos(\gamma_j)) \\
                     &= l_j^2 + \vphi^2 - 2 l_j \vphi \cos(\gamma_j),
    \end{align*}
    which is exactly the desired bound for $\vphi \in [-\pi, 0)$.

    \paragraph{Case 2: $\vphi \in (-\pi, -2\pi]$.} Here, we employ \Cref{cosbound} for a different triangle. We replace the vertex $\theta^*$  by $-\theta^*$ when compared to the spherical triangle in case 1. So we will consider the two geodesics $\overline{-\theta^*x(\vphi)}$ and $\overline{-\theta^* \theta^{(j)}}$. Note that the angle between them is also $\gamma_j$, because it sits at the unique other intersection between the great circle parametrized by $x$ and the geodesic $\overline{\theta^*, \theta^{(j)}}$. 
    The lengths are then $\dang(-\theta^*, x(\vphi)) = -\vphi - \pi$ and $\dang(-\theta^*, \theta^{(j)}) = \pi - l_j$. Applying \Cref{cosbound}

    \begin{align*}
        x_j(\vphi^2) &\leq (\pi - l_j)^2 + (-\vphi - \pi)^2 - 2 (\pi - l_j) (- \vphi - \pi) \cos(\gamma_j) \\
                     & \leq (\pi - l_j)^2 + (-\vphi - \pi)^2 + 2 (\pi - l_j) ( \vphi + \pi) \cos(\gamma_j) \\
                     &= (\underbrace{\pi^2 - 2\pi l_j}_{(B)} + \underbrace{l_j^2}_{(A)}) + (\underbrace{\vphi^2}_{(A)} + \underbrace{2 \pi \vphi + \pi^2}_{(B)}) + 2 (\underbrace{\pi^2 - l_j \pi + \vphi \pi}_{(C)}  \underbrace{- l_j \vphi}_{(A)}) \cos(\gamma_j) \\
                     &= \underbrace{l_j^2 + \vphi^2 - 2 l_j \vphi \cos(\gamma_j)}_{(A)} + \underbrace{2\pi^2 - 2\pi l_j + 2\pi \vphi}_{(B)} + 2(\underbrace{pi^2 - l_j \pi + \vphi \pi)}_{(C)} \cos(\gamma_j) \\
                     &= (l_j^2 + \vphi^2 - 2 l_j \vphi \cos(\gamma_j)) + 2\pi (\pi - l_j + \vphi + \cos(\gamma_j)(\pi -l_j + \vphi)) \\
    \end{align*}
    The expression in the first parentheses is what we want to obtain as an upper bound, so the statement is proven, if we can show the remainder is at most $0$. We can rewrite this remainder via
    \[ 2\pi (\pi - l_j + \vphi + \cos(\gamma_j)(\pi -l_j + \vphi)) =  2\pi (\pi - l_j + \vphi)(\cos(\gamma_j) + 1).\]
        Note that $\cos(\gamma_j) + 1 \geq 0$, so we only need to certify that $\pi - l_j + \vphi \leq 0$. This is true, since $\vphi < -\pi$ and $l_j \geq 0$ by definition.
\end{proof}

Having collected all the ingredients we can now get into the full proof of \Cref{lem:circle_angular_bound}.

\begin{proof}[Proof of \Cref{lem:circle_angular_bound}]
    Let $\theta^*$ be a global minimizer of $F_{|G_i}$.
    Because $\thetai \in G_i$, we can define a unit-speed parametrization $x : \mathbb{R}/2\pi\mathbb{Z} \to G_i$ of $G_i$ such that $x(0) = \theta^*$ and there is a unique $l_i \in [0,\pi]$ such that $x(l_i) = \thetai$. Since $G_i$ is a geodesic, for $\vphi$ in a neighborhood of $0$ we have $\dang(\theta^*, x(\vphi)) = |\vphi|$, and in particular $\dang(\theta^*, \thetai) = l_i$.

To reason about the minima of $F(x(\vphi))$, we first want to analyze the derivative of $F(x(\vphi))$ at $0$. By \Cref{lem:circle_antipodal}, there is no point antipodal to $\theta^*$, hence the derivative of $F(x(\vphi))$ at $0$ is well defined.
    If we write $l_j \coloneq \dang(\theta^*, \theta^{(j)})$ and define $x_j(\vphi) = \dang(x(\vphi), \theta^{(j)})$, then the spherical law of cosines states that
    \[ \cos(x_j(\vphi)) = \cos(l_j)\cos(\vphi) + \sin(l_j)\sin(\vphi)\cos(\gamma_j), \]
    where $\gamma_j$ is the angle between the two geodesics $\overline{\theta^* \theta^{(j)}}$ and $\overline{\theta^* x(\vphi)}$. Taking the derivative on both sides gives
    \begin{align*}
        -\sin(x_j(\vphi)) \cdot x_j(\vphi)' & = - \cos(l_j) \sin(\vphi) + \sin(l_j)\cos(\vphi)\cos(\gamma_j) \\
        \implies x_j(\vphi)' & = \frac{\cos(l_j) \sin(\vphi) - \sin(l_j) \cos(\vphi) \cos(\gamma_j)}{\sin(x_j(\vphi))}.
    \end{align*}
    Since $l_j = x_j(0)$ by definition, plugging in $\vphi = 0$ yields $x_j'(0) = -\cos(\gamma_j)$. Furthermore, because $\theta^* = x(0)$ is a global optimum of $F$ along $G_i$, first-order optimality implies
    \[
    0 = \frac{1}{2}F'(x(0)) = \sum_{j = 1}^n \alpha^{(j)} x_j(0)x_j'(0) = -\sum_{j = 1}^n \alpha^{(j)} l_j \cos(\gamma_j),
    \]
    and hence
    \[
    0 = \sum_{j = 1}^n \alpha^{(j)} l_j \cos(\gamma_j) = \sum_{j \neq i} \alpha^{(j)} l_j \cos(\gamma_j) + \alpha^{(i)} l_i,
    \]
    where the last equality follows from $\gamma_i = 0$ so $\cos(\gamma_i) = 1$.

    In addition, if we apply \Cref{lem:fake_toponogov} to $\theta^*$, $x(\vphi)$, and $\theta^{(j)}$  then we obtain
    \[x_j(\vphi)^2 \leq l_j^2 + \vphi^2 - 2 l_j \vphi \cos(\gamma_j).\]
    For $x_i(\vphi)$ specifically, we know that both $\thetai$ and $x(\vphi)$ are on the circle $G_i$, thus the distance is given by $x_i(\vphi) = \dang(\thetai, x(\vphi)) = \min_{k \in \mathbb{Z}}|(l_i - \vphi - 2k\pi)|$. Thus the squared distance $x_i(\vphi)^2$ can be bounded by
    \[ x_i(\vphi)^2 = \min_{k \in \mathbb{Z}}(l_i - \vphi - 2k\pi)^2 \leq (l_i - \vphi - 2\pi)^2.\]

    With all these bounds established, we will now derive a contradiction. Comparing the objective value of $F(x(0))$ to that of another proposed solution $F(x(\vphi))$ results in
    \begin{align*}
         & F(x(\vphi)) - F(x(0)) \\
         & \quad = \sum_{j \neq i} \alpha^{(j)} x_j(\vphi)^2 + \alpha^{(i)} x_i(\vphi)^2 - \sum_{j = 1}^n \alpha^{(j)} l_j^2 \\
                             & \quad \leq \sum_{j \neq i} \alpha^{(j)} \left(l_j^2 + \vphi^2 -2 l_j \vphi \cos(\gamma_j) \right) + \alpha^{(i)} (l_i - \vphi - 2\pi)^2 - \sum_{j = 1}^n \alpha^{(j)} l_j^2 \\
                             & \quad = \sum_{j \neq i} \alpha^{(j)} \Bigl(\underbrace{\vps l_j^2}_{(A)} + \underbrace{\vps \vphi^2}_{(B)} -\underbrace{\vps 2 l_j \vphi \cos(\gamma_j)}_{(C)} \Bigr) + \alpha^{(i)} \Bigl(\underbrace{\vps l_i^2}_{(A)} + \underbrace{\vps \vphi^2}_{(B)} + \underbrace{\vps 4\pi^2}_{(D)} - \underbrace{\vps 2l_i \vphi}_{(C)} - \underbrace{\vps 4 \pi l_i + 4\pi \vphi}_{(D)}\Bigr) - \underbrace{\vps \sum_{j = 1}^n \alpha^{(j)} l_j^2}_{(A)} \\
                             & \quad = \underbrace{\vps \sum_{j = 1}^n \alpha^{(j)} \left(l_j^2 - l_j^2\right)}_{(A)} + \underbrace{\vps \vphi^2 \sum_{j = 1}^n \alpha^{(j)}}_{(B)} - 2 \vphi \underbrace{\vps \left(\sum_{j \neq i}^n \alpha^{(j)} l_j  \cos(\gamma_j) + \alpha^{(i)} l_i \right)}_{(C)} + \underbrace{\vps \alpha^{(i)} \left(4\pi^2 + 4\pi \vphi - 4\pi l_i\right)}_{(D)} \\
    \end{align*}
     Using $l_j^2 - l_j^2 = 0$, $\sum_{j = 1}^n \alpha^{(j)} = 1$, and the local optimality condition $\sum_{j \neq i} \alpha^{(j)} l_j \cos(\gamma_j) + \alpha^{(i)} l_i = 0$, this reduces to
     $F(x(\vphi)) - F(x(0)) = \vphi^2 + \alpha^{(i)} 4 \pi (\pi + \vphi - l_i).$
     
    By using the assumption that $l_i > (1 - \alpha^{(i)}) \pi$ and setting $\vphi = - 2\pi \alpha^{(i)}$, we then get
    \begin{align*}
        F(x(\vphi)) - F(x(0)) &= \vphi^2 + \alpha^{(i)} 4 \pi (\pi + \vphi - l_i) \\
                              &< 4 \pi^2 (\alpha^{(i)})^2 + \alpha^{(i)} 4 \pi (\pi - 2\pi \alpha^{(i)} - (1 - \alpha^{(i)}) \pi) \\            
                              &= 4 \pi^2 (\alpha^{(i)})^2 + 4\pi \alpha^{(i)}(- \alpha^{(i)} \pi) \\
                              & = 4\pi^2 (\alpha^{(i)})^2 - 4\pi^2 (\alpha^{(i)})^2 = 0.
    \end{align*}
    But this means that $F(x(\vphi)) < F(x(0)) = F(\theta^*)$, thus contradicting our initial assumption that $\theta^*$ minimizes $F$ on $G_i$.
    \end{proof}

\subsection*{Approximation}

\subsection{Squared Kemeny Objective}\label{app:squared_kem_approx}

\citet{chandak2024proportional} present the Squared Kemeny rule as a means to obtaining a proportional ranking. Their setting is different --- they receive voter rankings on every batch, and directly aggregate the rankings themselves. That is, given a profile $\prof$, and batch $X \in \cX^m$, their rule would output $\succ_{SqK}$ as follows\[
\succ_{\mathrm{sqK}}^X = \argmin_{\succ^X} \sum_{i=1}^n \alphai \cdot \text{KT-dist}(\succ^X, \succ_{\thetai}^X)^2
\]
where KT-dist$(\succ^X, \succ_{\thetai}^X) = \sum_{1 \leq j < k \leq m} \mathbb{I}((x_j \succ x_k \land x_k \succ_{\thetai}^X x_j) \lor (x_k \succ x_j \land x_j \succ_{\thetai}^X x_k))$ is the Kendall-Tau distance between $\succ^X$ and $\succ_{\thetai}^X$.

As we are looking to optimize for a similar notion of proportionality, what if we select our linear scoring rule based off of minimizing the expected Squared Kemeny objective?

If we optimize for the Squared Kemeny objective on a known item distribution $\cD$, we have 
\[
\theta_{\mathrm{sqK}}\prof \coloneqq \argmin_{\theta \in S^{d-1}} \mathbb{E}_{X \sim \cD^m}\left[\sum_{i =1}^n \alphai\left(\binom{m}{2}-a_{KT}(\thetai, \theta, X)\right)^2\right]
\]
Interestingly, we find that on spherically symmetric item distributions, $\theta_{\mathrm{sqK}}$ approaches $\thang$ for sufficiently large batch sizes. 

\begin{theorem}
    For spherically symmetric $\cD_{\circ}^m$, we have that \[
    \theta_{\mathrm{sqK}}\prof = \argmin_{\theta \in S^{d-1}} \left(\Theta(m^2) \sum_{i \in [n]} \dang(\thetai, \theta) + \Theta(m^3) c(\theta) + \Theta(m^4) \sum_{i \in [n]} \dang(\thetai, \theta)^2\right)
    \]
    where $c(\theta)$ is a value that does not depend on $m$ and is upper bounded by $\sum_{i \in [n]} \dang(\thetai, \theta)$. Importantly this means that as $m$ increases, $\theta_{\mathrm{sqK}}\prof \rightarrow \thang$.
\end{theorem}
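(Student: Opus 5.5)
The plan is to expand the expected squared Kendall--Tau distance as a sum over pairs of item pairs, and sort those terms by how much the two pairs overlap. Fix $\theta$ and write $D_i(X) := \binom{m}{2} - a_{KT}(\thetai,\theta,X) = \sum_{\{j,k\}} Z^{(i)}_{jk}$, where $Z^{(i)}_{jk}$ indicates that $\thetai$ and $\theta$ disagree on $\{x_j,x_k\}$. By \Cref{cor:no-ties} we may ignore ties. Then
\[
\E[D_i^2] = \sum_{P,Q \in \binom{[m]}{2}} \E\big[Z^{(i)}_P Z^{(i)}_Q\big].
\]
The ordered pairs $(P,Q)$ fall into three classes.
\begin{itemize}
\item[(a)] $P=Q$. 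There are $\binom{m}{2}$ such pairs, and by \Cref{prop:angle_eprop} each contributes $p_i(\theta) := \dang(\thetai,\theta)/\pi$.
\item[(b)] $|P\cap Q|=1$. There are $\binom{m}{2}\cdot 2(m-2)$ such pairs. Each contributes $q_i(\theta) := \Pr[Z_{12}=1, Z_{13}=1]$, which is the same for every such pair by exchangeability of the i.i.d.\ items.
\item[(c)] $P\cap Q=\emptyset$. There are $\binom{m}{2}\bigl(\binom{m}{2}-1-2(m-2)\bigr)$ such pairs. Here $Z_P$ and $Z_Q$ depend on disjoint sets of independent items, so each contributes $p_i(\theta)^2$.
\end{itemize}

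Summing with weights $\alphai$ and collecting powers of $m$, the objective becomes
\[
A(m)\sum_i \alphai p_i + B(m)\sum_i \alphai q_i + C(m)\sum_i\alphai p_i^2 ,
\]
where $A=\Theta(m^2)$, $B=\Theta(m^3)$ and $C=\Theta(m^4)$ are explicit polynomials. I will write the statement with the weights $\alphai$ included, which is what the derivation produces. The lower-order parts of $C(m)$ (its $m^3$ and $m^2$ terms) multiply $p_i^2 \le p_i$. They can either be kept inside the $\Theta(m^4)$ coefficient or absorbed into the other two terms, and the stated form allows either choice. Next set $c(\theta) := \pi\sum_i\alphai q_i(\theta)$. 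This does not depend on $m$, since it is a probability over only three items. Because $q_i \le \Pr[Z_{12}=1] = p_i$, it satisfies $c(\theta)\le \sum_i \alphai\dang(\thetai,\theta)$. Finally, rescaling by the constant factor $1/\pi$ or $1/\pi^2$ in each term turns $p_i$ into $\dang$, which gives the displayed form.

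For the convergence claim, divide the objective by $C(m)$. This does not change the argmin. The normalized objective is
\[
\frac{1}{\pi^2}\sum_i\alphai\dang(\thetai,\theta)^2 + O(1/m),
\]
and the $O(1/m)$ error is uniform in $\theta$, because $p_i$ and $q_i$ lie in $[0,1]$. Since $S^{d-1}$ is compact and the limit objective is continuous, a standard argmin-continuity argument applies. Any limit point of $\theta_{\mathrm{sqK}}$ minimizes $F$, and hence lies in the set of angular means. If the minimizer is unique, in particular under the paper's fixed tie-breaking applied to a unique $\thang$, then $\theta_{\mathrm{sqK}}\to\thang$. The same argument also gives convergence of objective values, $F(\theta_{\mathrm{sqK}})\to F(\thang)$, which is what one actually needs for proportionality via \Cref{lem:agreeapprox}.

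The main obstacle is the class-(b) term. Here $q_i$ has no clean closed form, because the difference vectors $x_1-x_2$ and $x_1-x_3$ are correlated. The plan sidesteps this by never computing $q_i$. The argument only needs three facts about it: it is independent of $m$, it is bounded by $p_i$, and it is absorbed in the $1/m$ error after normalization. I would also double-check the overlap count $2(m-2)$ per pair, so that the $m^3$ coefficient is correct.
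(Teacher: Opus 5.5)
Your proposal is correct and follows essentially the same route as the paper: you split the expanded square by overlap in the same three ways, and your counts $\binom{m}{2}\cdot 2(m-2)$ and $\binom{m}{2}\binom{m-2}{2}$ agree with the paper's $6\binom{m}{3}$ and $6\binom{m}{4}$; you also use \Cref{prop:angle_eprop} and independence of disjoint pairs in the same way, and build the same $m$-independent $c(\theta)$ from the shared-index term. Your limit argument is a more careful version of the paper's one-line ``the $m^4$ term dominates'': you divide by $C(m)$ to get a uniform $O(1/m)$ error, use compactness, and note that convergence to $\thang$ itself needs a unique minimizer; your observation $q_i \le p_i$ also supplies the justification for the bound on $c(\theta)$ that the paper only asserts.
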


\begin{proof}

    We simply rewrite the objective that $\theta_{\mathrm{sqK}}$ minimizes. All expectations are with respect to $X \sim \cD_{\circ}^m$. First for any $i \in [n]$, and $j, k\in [m]$ such that $j < k$, let $D_{jk}$ be the event that $\thetai$ and $\theta$ disagree on the ranking of $x_j$ and $x_k$, i.e. $D_{jk} = \mathbb{I}\{(x_j \succ_{\thetai}^X x_k \land x_k \succ_{\theta}^X x_j) \lor (x_k \succ_{\thetai}^X x_j \land x_j \succ_{\theta}^X x_x)\}$. Then observe that for any $X \in \cX^m$, \[
    \binom{m}{2} - a_{KT}(\thetai, \theta, X) = \sum_{1 \leq j < k \leq m} D_{jk}
    \]
    Substituting this into our objective, we have that $\mathbb{E}\left[\sum_{i =1}^n \alphai\left(\binom{m}{2}-a_{KT}(\thetai, \theta, X)\right)^2\right]$:\begin{align*}
        &= \mathbb{E}\left[\sum_{i =1}^n \alphai\left(\sum_{1 \leq j < k \leq m} D_{jk}\right)^2\right] =  \sum_{i=1}^n \alphai\mathbb{E}\left[\alphai\left(\sum_{1 \leq j < k \leq m} D_{jk}\right)^2\right]\\
        \intertext{When we expand this squared term, there are 3 different types of terms: those in which there is an indicator variable squared, those in which the two indicator variables being multiplied together share exactly one index, and those in which the indicator variables refer to disjoint sets of items. Informally we can write this in the following way:}
        &= \sum_{i=1}^n \alphai\mathbb{E}\left[\sum_{j,k} D_{jk}^2 + \sum_{j,k, \ell} D_{jk}D_{k\ell} + \sum_{j,k,\ell, p} D_{jk}D_{\ell p}\right]\\
        \intertext{How many of each term is there? There are $\binom{m}{2}$ of the first type, and then $\binom{m}{3} \cdot 3 \cdot 2$ of the second type (picking three distinct indices, picking which is repeated, and picking which set of two corresponds to the first indicator), and $\binom{m}{4} \cdot \binom{4}{2}$ of the last type (picking four distinct indices, and picking which correspond to the first indicator). As the $x_i$ are drawn iid, the indicators do not depend on the specific choice of indices (e.g. $D_{jk}$ is the same for any choice of $j \neq k$). So by linearity of expectation, for some choice of distinct $j, k, \ell, p$:}
        &= \sum_{i=1}^n \alphai \left(\binom{m}{2}\mathbb{E}[D_{jk}^2] + 6 \binom{m}{3}\mathbb{E}[D_{jk}D_{k\ell}] + 6 \binom{m}{4}\mathbb{E}[D_{jk}D_{\ell p}]\right)\\
        \intertext{Observe that as $D_{jk}$ is an indicator variable, $\mathbb{E}[D_{jk}^2] = \mathbb{E}[D_{jk}]$. This is just $1-\{$ the probability that $\thetai$ and $\theta$ agree on a pairwise ranking$\}$, so  $1 - \frac{\pi - \dang(\thetai, \theta)}{\pi}$ by \Cref{prop:angle_eprop}. Hence $\mathbb{E}[D_{jk}] = \frac{\dang(\thetai, \theta)}{\pi}$. Additionally, $D_{jk}$ is independent of $A_{\ell p}$ when $\{j,k\}$ and $\{\ell, p\}$ are disjoint, so $\mathbb{E}[A_{jk}A_{\ell p}] = \mathbb{E}[D_{jk}]\mathbb{E}[D_{\ell p}] = \frac{\dang(\thetai, \theta)^2}{\pi^2}$. So we have that}
        &= \left(\sum_{i=1}^n \alphai \binom{m}{2}\frac{\dang(\thetai, \theta)}{\pi}\right) + \left(\sum_{i=1}^n \alphai 6 \binom{m}{3}\mathbb{E}[D_{jk}D_{k\ell}]\right) + \left(\sum_{i=1}^n \alphai 6 \binom{m}{4}\frac{\dang(\thetai, \theta)^2}{\pi^2}\right)
        \intertext{Let $c(\theta) = \sum_{i=1}^n \alphai \mathbb{E}[D_{jk}D_{k\ell}]$ and substitute in to get the final expression:}
        &= \binom{m}{2}\frac{1}{\pi}\left(\sum_{i=1}^n \alphai \dang(\thetai, \theta)\right) + 6 \binom{m}{3}c(\theta) + \frac{6}{\pi^2} \binom{m}{4}\left(\sum_{i=1}^n \alphai \dang(\thetai, \theta)^2\right)
    \end{align*}
    
    Note that for $m=2$, we have that $\theta_{\mathrm{sqK}}\prof = \argmin_{\theta \in S^{d-1}} \sum_{i=1}^n \alphai \dang(\thetai, \theta)$. Let $\prof$ be the Antipodal Construction instantiated with $\alpha^{(1)} > 0.5$ and $\alpha^{(2)}  = 1-\alpha^{(1)} < 0.5$. Minimizing the weighted sum of \textit{unsquared} angular distances recovers exactly the majority vector (in this case $\thetaone$). Hence, $\theta_{\mathrm{sqK}}\prof$ gives voter type 2 individual proportionality level of 0. 

    However, for large $m$, the final term dominates, and minimizing this objective is equivalent to finding $\thang$. We know then that in the regime of large $m$, $\theta_{\mathrm{sqK}}\prof \approx  \thang\prof$. By \Cref{lem:agreeapprox}, this tells us that $\theta_{\mathrm{sqK}}\prof$ inherits the good proportionality guarantees of $\thang$ for spherically symmetrical item distributions.

    Interestingly, this points to a potential ``phase shift'' where minimizing the expected Squared Kemeny for small batch sizes yields an overly majoritarian scoring vector, but doing so for large batch sizes yields a vector that gives strong proportionality guarantees.
    
    \end{proof}

\subsection{Robustness} \label{app:robustness}
\begin{restatable}{proposition}{agreeapprox}\label{lem:agreeapprox} For any $m \in \mathbb{N}$, any spherically symmetric batch distribution $D_\circ^m$, any profile $(\btheta, \balpha)$, $\theta, \psi \in S^d$ such that $\dang(\theta, \psi) \leq \gamma$, 
\[\E_{X \sim \D_\circ^m}\left[\agi(\theta,  X,\prof)\right] \geq \E_{X \sim D_\circ^m}[\agi(\psi,X,\prof)] - \gamma/(\pi \alphai) \qquad \forall i \in [n].\]
\end{restatable}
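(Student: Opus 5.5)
The plan is to reduce the statement to the closed-form expression for expected agreement from \Cref{lem:ktangle} and then apply the triangle inequality for the angular distance on the sphere. Here the symbol $\agi(\theta, X, \prof)$ is understood as the IP level of voter type $i$ when the collective vector is the fixed vector $\theta$, i.e.\ $a_{KT}(\thetai,\theta,X)/(\alphai\binom{m}{2})$.

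First, by linearity of expectation and \Cref{lem:ktangle}, for every fixed $\theta \in S^{d-1}$ we have
\[
\E_{X\sim\D_\circ^m}\left[\agi(\theta,X,\prof)\right] = \frac{1}{\alphai\binom{m}{2}}\cdot\frac{\pi-\dang(\thetai,\theta)}{\pi}\binom{m}{2} = \frac{\pi-\dang(\thetai,\theta)}{\pi\alphai}.
\]
The same formula holds for $\psi$.

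Taking the difference of the two expressions, it suffices to show
\[
\frac{\dang(\thetai,\psi)-\dang(\thetai,\theta)}{\pi\alphai}\ \ge\ -\frac{\gamma}{\pi\alphai},
\]
which is equivalent to $\dang(\thetai,\theta)\le \dang(\thetai,\psi)+\gamma$.

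This last inequality follows from the triangle inequality for $\dang$ together with the hypothesis $\dang(\theta,\psi)\le\gamma$. We use that $\dang$ is the geodesic (great-circle) metric on the unit sphere, and hence a genuine metric. If one wants a self-contained argument, this can also be checked via the spherical law of cosines, or by noting that $\dang$ is the arc length of the shortest path.

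The only mild obstacle is bookkeeping. First, \Cref{lem:ktangle} is stated with a tie-breaking convention, but ties occur with probability zero by \Cref{cor:no-ties}, so the expectation formula applies unchanged. Second, the vectors $\theta$ and $\psi$ are fixed here and do not depend on $X$, so \Cref{lem:ktangle} applies directly. No other step is needed.
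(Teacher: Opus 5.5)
Your proposal is correct and matches the paper's proof: both rewrite each expected IP level via \Cref{lem:ktangle} as $(\pi - \dang(\thetai,\cdot))/(\pi\alphai)$ and then use the triangle inequality $\dang(\thetai,\theta) \le \dang(\thetai,\psi) + \gamma$. Your remarks about ties having probability zero and about $\theta,\psi$ being fixed are harmless extra bookkeeping.
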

\begin{proof}
    By using the identity established in \Cref{lem:ktangle} and the reverse triangle inequality we get
    \begin{align*}
         \E_{X \sim \D^m}\left[\agi(\theta, X, \prof)\right] &= \E_{X \sim \D^m} \left[ \frac{\aKT(\thetai, \theta, X)}{\alpha^{(i)}\binom{m}{2}}\right] \\
         &=  \frac{1}{\alpha^{(i)} \binom{m}{2}}\E_{X \sim \D^m}\left[ \aKT(\thetai, \theta, X)\right] \\
         &=  \frac{1}{\alpha^{(i)} \binom{m}{2}}\E_{X \sim \D^m}\left[ \frac{\pi - \dang(\thetai, \theta)}{\pi} \cdot \binom{m}{2}\right] \\
         &\geq \frac{1}{\alpha^{(i)} \binom{m}{2}}\E_{X \sim \D^m}\left[\frac{\pi - \dang(\thetai, \psi) - \gamma}{\pi} \cdot \binom{m}{2}\right] \\
         &= \frac{1}{\alpha^{(i)} \binom{m}{2}}\E_{X \sim \D^m}\left[\frac{\pi - \dang(\thetai, \psi) }{\pi} \cdot \binom{m}{2}\right] - \frac{\gamma}{\pi \alpha^{(i)}} \\
         &=  \E_{X \sim \D^m} [\agi(\psi, X, \prof)] - \frac{\gamma}{\pi \alpha^{(i)}}.
     \end{align*}
\end{proof}
\section{Appendix for \Cref{sec:cost-of-consistency}} \label{app:cost-of-consistency}

\subsection{Failure of Stronger Per-Batch Proportionality}

\begin{example}[\textbf{Impossibility of IP Level $\geq$ 1 on some $\mathbf{X}$}]\label{ex:batch-prop-impossibility}
    Fix any aggregation mechanism $\theta_f$, and let $\prof$ be the Antipodal Construction  (\Cref{fig:antipodal-prof}) instantiated with any weights. Let $X \in \cX^m$ be a batch where all items are co-linear on some vector $v$ that is not orthogonal to any of $\thetaone,\thetatwo,\theta_f\prof$. Then, a ranking over $X$ by a linear scoring rule is entirely determined by the sign of its inner product with $v$. As $\thetaone = -\thetatwo$, they have oppositely signed inner products with $v$. Without loss of generality, assume that $\text{sign}(\langle \theta_f\prof, v \rangle) = -\text{sign}(\langle \thetaone, v \rangle)$, then $\agi(\theta_f, X, \prof) = 0$.
\end{example}

\subsection{Proof of \Cref{prop:prop_relationship}}\label{app:subsec-prop-relationship}

\restatehere{proprelationship}
\begin{proof}
Let $\prof \in P_{n,d}$, and observe that for any $X \in \cX^m$ 
\begin{align*}
    \min_{i \in [n]} \agi(\theta_f, X, \prof) &\leq \ip_j(\theta_f, X, \prof) \quad \forall j \in [n]
    \intertext{Because this is true for all $X \in \cX^m$, this inequality is preserved when we take the expectation on both sides over $X \sim \cD^m$:}
    \mathbb{E}_{X \sim \cD^m}\left[\min_{i \in [n]} \agi(\theta_f, X, \prof)\right] &\leq \mathbb{E}_{X \sim \cD^m}\left[\ip_j(\theta_f, X, \prof)\right] \quad \forall j \in [n]
    \intertext{As this inequality is true for all choices of $j \in [n]$ on the RHS, it is also true for the minimizing choice of $j$. Then we have that}
    \mathbb{E}_{X \sim \cD^m}\left[\min_{i \in [n]} \agi(\theta_f, X, \prof)\right] &\leq \min_{j \in [n]}\mathbb{E}_{X \sim \cD^m}\left[\ip_j(\theta_f, X, \prof)\right]
    \intertext{Hence, }
    \pcol(\theta_f, \cD^m, \prof) &\leq \pind(\theta_f, \cD^m, \prof)
\end{align*}
\end{proof}

\subsection{Proof of \Cref{thm:fixed-cpop-UB}}\label{app:subsec-cpop-UB}
\restatehere{thmcpopUB}
\begin{proof}
    
    As with the other upper bound proofs, we let $\prof$ be the Antipodal Construction ($\Cref{fig:antipodal-prof}$) and let $\alpha^{(1)} = \frac{1}{\binom{m}{2}}, \alpha^{(2)} = \frac{\binom{m}{2}-1}{\binom{m}{2}}$.  Denote $\theta_f\prof$ as $\theta_f$ for the remainder of this proof. For notational concision, we define $Z(X)$ to be the minimum of voter 1 and 2's individual proportionality level \begin{align*}
        Z(X) &\coloneqq \min \left(\frac{a_{KT}(\thetaone, \theta_f, X)}{ \alpha^{(1)} \binom{m}{2}}, \frac{a_{KT}(\thetatwo, \theta_f, X)}{ \alpha^{(2)} \binom{m}{2}}\right)\\
        &= \min\left(a_{KT}(\thetaone, \theta_f, X), \frac{a_{KT}(\thetatwo, \theta_f, X)}{\binom{m}{2}-1}\right)
    \end{align*}
    Then, \[
    \pcol(\theta_f, \cD^m, \prof) = \mathbb{E}_{X \sim \cD^m} [Z(X)]
    \]
   Recall from $\Cref{cor:no-ties}$ that we can evaluate this expectation assuming that there are no scoring ties for any $\theta \in \boldsymbol{\theta}$ or $\theta_f$ over the item batch sampled.

    \textbf{Deterministic upper bound.}
    First, we will show that for any $X \in \cX^m$, with no scoring ties, we have that $Z(X) \leq 1$. Recalling via \Cref{prop:antipodal-disagreement} that $a_{KT}(\thetatwo, \theta_f, X) = \binom{m}{2} - a_{KT}(\thetaone, \theta_f, X)$ \[
    \min \left(a_{KT}(\thetaone, \theta_f, X), \frac{\binom{m}{2} - a_{KT}(\thetaone, \theta_f, X)}{\binom{m}{2}-1}\right) \leq 1
    \]
    Observe that $a_{KT}(\thetaone, \theta_f, X)$ is a nonnegative integer. If $a_{KT}(\thetaone, \theta_f, X) = 0$, then the minimum of the two terms is 0. If $a_{KT}(\thetaone, \theta_f, X) \geq 1$, then $\frac{\binom{m}{2} - a_{KT}(\thetaone, \theta_f, X)}{\binom{m}{2}-1} \leq 1$, so the minimum of the two terms is at most 1.

    Now that we know that $Z(X) \leq 1$ \textit{deterministically} for all $X$ with no ties, we will show that with strictly positive probability, $Z(X) = 0$.

    \textbf{Showing positive probability on a near-collinear batch.}
    First observe that if $\theta_f = \thetatwo$, then $a_{KT}(\thetaone, \theta_f, X) = 0$ for all $X$ with no ties, and we immediately have that $Z(X) = 0$ with positive probability (in fact, with probability 1).  
    
    Assume then that $\theta_f \neq \thetatwo$. This means that $\theta_f$ and $\thetaone$ are not antipodal -- and as they are both unit vectors, their sum is nonzero. So, if we let $v$ be the unit vector bisecting them, we have that \[
    v = \frac{\thetaone + \theta_f}{\|\thetaone + \theta_f\|}
    \]
    and it follows that \[
    \langle \thetaone, v \rangle = \frac{1 + \langle \thetaone, \theta_f\rangle}{\| \thetaone + \theta_f \|} = \langle \theta_f, v \rangle.
    \]
    We remark that here and throughout this proof, whenever there is a norm being used (also to define a ball), it is the 2-norm. Let $\gamma \coloneqq  \langle \thetaone, v \rangle = \langle \theta_f, v \rangle$ and observe that $\gamma > 0$ because $\langle \thetaone, \theta_f \rangle > -1$. 

    The high-level idea is diagrammed in \Cref{fig:cprop-UB-diagram}, and is as follows: if we sample $X = (x_1, \dots, x_m)$ such that they are all along $v$ (or some translation/scaling thereof), then $\theta_f$ and $\thetaone$ will agree on every pairwise comparison in $X$. Consequently, $\thetatwo$ will have KT-agreement of 0 with $\theta_f$ on $X$, and $Z(X) = 0$. However, this outcome has probability measure 0. Therefore, we have to allow for some wiggle room around all of the points lying exactly on $v$ so that such a batch is sampled with positive probability. We construct this wiggle room in such a way that for any items in it, their pairwise difference vectors are still within the region of agreement between $\thetaone$ and $\theta_f$.

    \begin{figure}[htbp]
        \centering
        \resizebox{0.6\linewidth}{!}{
            \input{.//Figures/batch_prop_UB_diagram}
        }
        \caption{Proof visualization for \Cref{thm:fixed-cpop-UB}.}
        \label{fig:cprop-UB-diagram}
    \end{figure}

    By our assumption on $\cD$, there exists an open set $U \subseteq \mathbb{R}^d$ in which $f_{\cD}$ is strictly positive and continuous. Fix some point $y \in U$ and a radius $R > 0$ such that $\overline{B}_{R}(y) \subset U$. Let $\eta = \min_{x \in \overline{B}_{R}(y)} f_{\cD}(x) > 0$ be the minimum probability density in this ball. Now we define an ideal series of points for the $m$ items that are perfectly aligned along $v$. For $k \in \{1, \dots, m\}$, define $x_k^*$: \[
    x_k^* = y + \left(\frac{m-2k}{2m}\right)R \cdot v
    \]
     Now let $\varepsilon = \frac{R\gamma}{3m}$. We will show two claims: (1) for any batch $X=(x_1, \dots, x_m)$ such that $x_i \in B_{\varepsilon}(x^*_k)$ for all $k \in \{1, \dots, m\}$, $a_{KT}(\thetatwo, \theta_f, X) = 0$ and consequently $Z(X) = 0$ and(2) the probability of drawing such a batch is strictly positive.

     To show claim (1), fix some $X=(x_1, \dots, x_m)$ such that $x_i \in B_{\varepsilon}(x^*_k)$ for all $k \in \{1, \dots, m\}$ and consider $i, j \in [m]$ with $i < j$. We have that $x_i = x_i^* + \delta_i$ and $x_j = x_j^* + \delta_j$ for some $\delta_i, \delta_j \in \mathbb{R}^d$ with $\|\delta_i\|, \|\delta_j\| \leq \varepsilon$. Then: \begin{align*}
        \langle \thetaone, x_i - x_j \rangle &= \langle \thetaone, x_i^* - x_j^* \rangle + \langle \thetaone, \delta_i - \delta_j\rangle\\
        \intertext{By observing that for any $i < j$, $x_i^* - x_j^* = \frac{(j-i)R}{m} \cdot v$ and applying Cauchy-Schwarz to the second term, we can rewrite as follows. }
        &\geq \frac{(j-i)R}{m}\langle \thetaone, v\rangle - \underbrace{\|\thetaone \|}_1 \underbrace{\|\delta_i - \delta_j\|}_{\leq 2\varepsilon}\\
        &\geq \frac{(j-i) R \gamma}{m} - 2\varepsilon\\
        &= \frac{(j-i) R \gamma}{m} - \frac{2R\gamma}{3m} \geq \frac{R\gamma}{3m} > 0
    \end{align*}
    By the exact same computation, we get that $\langle \theta_f, x_i - x_j \rangle > 0$ --- all we needed is that $\|\thetaone\| = 1$ and $\langle \thetaone, v \rangle = \gamma$, and both hold when we substitute in $\theta_f$ instead. Hence, we have that for any $i < j$, both $x_i \succ_{\thetaone}^X x_j$ and $x_i \succ_{\theta_f}^X x_j$. Therefore, $a_{KT}(\thetaone, \theta_f, X) = \binom{m}{2}$, so $a_{KT}(\thetatwo, \theta_f, X) = 0$ and $Z(X) = 0$.
     
    To show claim (2), first we'll argue that for all $k \in [m]$, $B_{\varepsilon}(x^*_k) \subset \overline{B}_R(y)$. Fix some $k \in [m]$, and some $x_k \in B_{\varepsilon}(x_k^*)$. By the triangle inequality, we have that \[
    \|x_k - y\| \leq \|x_k - x_k^*\| + \|x_k^* - y\| \leq \frac{R\gamma}{3m} + \left\|\left(\frac{m-2k}{2m}\right)R \cdot v\right\| \leq \frac{R}{3m} + R\left(\frac{|m-2k|}{2m}\right)
    \] 
    where we used the fact that $\|v\| =1$ and $\gamma \leq 1$. Now observe that $|m-2k| \leq m$ (achieved when $k = m$), and so we have that $\|x_k - y\| \leq \frac{R}{3m} + \frac{R}{2} < R$ as $m \geq 3$. Therefore, $x_k \in \overline{B}_R(y)$. As this was true for arbitrary $k$ and arbitrary $x_k \in B_{\varepsilon}(x_k^*)$, we can say that for all $k \in \{1, \dots, m\}$, $B_{\varepsilon}(x^*_k) \subset \overline{B}_R(y)$. Now recall that $f_{\cD}(x) \geq \eta > 0$ for all $x \in \overline{B}_R(y)$ and hence all $x \in B_{\varepsilon}(x^*_k)$ for some $k \in [m]$. Then we have that \[
    \Pr_{X \sim \cD^m}\left(x_k \in B_{\varepsilon}(x^*_k) \quad \forall k \in [m]\right) = \prod_{k \in [m]} \Pr_{x_k \sim \cD} \left(x_k \in B_{\varepsilon}(x^*_k)\right) \geq \eta^m (\text{Vol}(B_{\varepsilon}))^m > 0
    \]

    \textbf{Putting it all together.} By conditioning on whether or not we witness a ``clustered'' item batch, we can derive the desired upper bound on the expectation. Let $C$ be the event that $x_k \in B_{\varepsilon}(x^*_k) \quad \forall k \in [m]$. Then, \begin{align*}
        \mathbb{E}_{X \sim \cD^m} [Z(X)]
        &= \mathbb{E}_{X \sim \cD^m} [Z(X) \vert C] \Pr(C) + \mathbb{E}_{X \sim \cD^m} [Z(X) \vert \overline{C}] \Pr(\overline{C})\\
        &= 0 \cdot \Pr(C) + 1 \cdot (1-\underbrace{\Pr(C)}_{> 0}) < 1
    \end{align*}
    
\end{proof}

\subsection{Proof of \Cref{thm:fixed-cpop-LB}}\label{app:subsec-cpop-LB}

\restatehere{thmcpopLB}

First, define a ``deviation quantity'' for voter type $i$:\[
W_i \coloneqq \mathbb{E}_{X \sim \cD^m}\left[\agi(\theta_f, \prof, X)\right] - \agi(\theta_f, \prof, X).
\]
The $W_i$ captures how much the normalized individual agreement for voter type $i$ falls short of the expected normalized individual agreement. Observe that we can then rewrite $\min_{i \in [n]} \agi(\theta_f, \prof, \cD)$ in terms of expected normalized individual agreement minus this deviation term as follows:
\begin{align*}
    \min_{i \in [n]} \agi(\theta_f, \prof, X)
    &= \min_{i \in [n]} \left(\mathbb{E}_{X \sim \cD^m}[\agi(\theta_f, \prof, X)] - W_i\right)\\
    &\geq \min_{i \in [n]} \mathbb{E}_{X \sim \cD^m}[\agi(\theta_f, \prof, X)] - \max_{i \in [n]}W_i
    \intertext{Taking the expectation of both sides, applying linearity of expectation, and observing that $\mathbb{E}[\min_{i \in [n]} \mathbb{E}[\agi(\theta_f, \prof, X)]] = \min_{i \in [n]} \mathbb{E}[\agi(\theta_f, \prof, X)]$ as the inner expectation is a constant yields:}
    \pcol(\theta_f, \prof, \cD^m) &\geq \pind(\theta_f, \prof, \cD^m) - \mathbb{E}[\max_{i \in [n]}W_i]
\end{align*}

Hence, the remaining task is upper bounding $\mathbb{E}_{X \sim \cD^m}[\max_{i \in [n]} W_i]$. We show two different upper bounds, one that has a dependence on $n$ and the other that has a dependence on $d$, which results in the minimum term in the final expression. 

Both upper bounds rely on reformulating normalized individual agreement as a U-statistic (with an addition $\frac{1}{\alpha_i}$ factor). Recall that \[
\agi(\theta_f, \prof, X) = \frac{a_{KT}(\thetai, \theta_f, X, \prof)}{\alpha_i\binom{m}{2}}
\] 
As $\theta_f$ and $\prof$ are fixed, we omit the function inputs in the rest of this proof for concision, and just write $\agi$ to refer to the quantity above. 
Define a kernel function $h_{\thetai} \colon \R^d \rightarrow \{0,1\}$ that takes in a difference vector (between two items) and indicates whether $\thetai$ and $\theta_f\prof$ have the same sign of dot product with this difference vector \[
h_{\thetai}(z) \coloneqq \mathbb{I}\{\text{sign}(\langle \thetai, z\rangle) = \text{sign}(\langle \theta_f\prof, z\rangle)\}
\]
Recalling that agreement between scoring vectors on a pairwise comparison between items just comes down to whether the two scoring vectors have the same signed dot product with the item difference vector, we can rewrite the $\agi$ as \[
    \agi = \frac{1}{\alpha_i\binom{m}{2}}\sum_{1 \leq j < \ell \leq m} h_{\thetai}(x_j- x_{\ell})
    \]
    For all permutations $\sigma \in \Pi_m$, define \[
    V_{i, \sigma} = \frac{1}{\lfloor m/2 \rfloor}\left(h_{\thetai}(x_{\sigma(1)} - x_{\sigma(2)}) + \dots + h_{\thetai}(x_{\sigma(2\lfloor m/2 \rfloor-1)}- x_{\sigma(2\lfloor m/2 \rfloor)})\right)
    \]
    Then we can rewrite $\agi$ again, this time as a function of the $V_{i, \sigma}$. 
    \[
    \agi = \frac{1}{\alpha_i \cdot m!} \sum_{\sigma \in \Pi_m} V_{i, \sigma}
    \]
    The key observation is that every $V_{i, \sigma}$ is the sum of $\lfloor m/2 \rfloor$ \textit{independent} variables, because each term is the application of the kernel function to a disjoint pair of items. Therefore, we can more easily apply standard results to show that the all of the $V_{i, \sigma}$ concentrate around their expectation quickly, and translate that into bounds on the deviation of $\agi$ from its expectation ($W_i$).

Now the two upper bound methods diverge. For the first bound, resulting in a dependence on $n$, we use the reformulation of $\agi$ to show that the $W_i$ are all sub-Gaussian in \Cref{lem:W_sub_Gauss}. Then, we apply a known bound on the maximum of sub-Gaussian random variables in \Cref{lem:max_fluc}. 

The second method, found in \Cref{lem:dev_ub_d}, bounds the maximum $W_i$ by taking the supremum over the deviation quantity for \textit{any} scoring vector in $S^{d-1}$ and bounding this using Rademacher complexity. This bound supports the intuition that even if there are many different voter types, the diversity among scoring vectors is limited by the dimension. Scoring vectors that are close together in space have positively correlated deviation quantities --- on any given batch, they are likely experience similar deviations from their expected individual proportionality.

\begin{lemma}\label{lem:W_sub_Gauss}
    For all $i \in [n]$, $W_i = \mathbb{E}[\agi] - \agi$ is sub-Gaussian with parameter $\sigma^2 = \frac{1}{4\lfloor m/2 \rfloor\alpha_i^2}$.
\end{lemma}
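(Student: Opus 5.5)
The plan is Hoeffding's method for U-statistics: control the moment generating function of $W_i$ through the averaging representation $\agi = \frac{1}{\alpha_i\, m!}\sum_{\sigma\in\Pi_m} V_{i,\sigma}$ and Jensen's inequality. First I will record that the kernel $h_{\thetai}$ is identical across pairs in distribution, since the $x_j$ are i.i.d. Hence every $V_{i,\sigma}$ has the same mean $\mu_i := \E[h_{\thetai}(x_1-x_2)]$. This gives $\E[\agi] = \mu_i/\alpha_i$ and
\[
W_i = \frac{1}{m!}\sum_{\sigma\in\Pi_m} \frac{\mu_i - V_{i,\sigma}}{\alpha_i}.
\]
(Ties have probability zero by \Cref{cor:no-ties}, so $h_{\thetai}$ is a.s. a well-defined $\{0,1\}$ variable.)

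Next, for any $\lambda\in\R$, I will apply convexity of $t\mapsto e^{\lambda t}$ to the uniform average over permutations:
\[
\E\left[e^{\lambda W_i}\right] \le \frac{1}{m!}\sum_{\sigma\in\Pi_m}\E\left[\exp\left(\tfrac{\lambda}{\alpha_i}(\mu_i - V_{i,\sigma})\right)\right].
\]
For a fixed $\sigma$, $V_{i,\sigma}$ is an average of $k:=\lfloor m/2\rfloor$ terms $h_{\thetai}(x_{\sigma(2r-1)}-x_{\sigma(2r)})$. These terms are mutually independent, because they involve disjoint pairs of i.i.d. items, and each takes values in $[0,1]$.

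Hoeffding's lemma then bounds each factor by $\exp\bigl(\lambda^2/(8\alpha_i^2 k^2)\bigr)$, since each summand is $(\lambda/(\alpha_i k))$ times a centered $[0,1]$-valued variable. Multiplying the $k$ independent factors gives
\[
\E\left[\exp\left(\tfrac{\lambda}{\alpha_i}(\mu_i - V_{i,\sigma})\right)\right] \le \exp\left(\frac{\lambda^2}{8\alpha_i^2 k}\right) = \exp\left(\frac{\lambda^2\sigma^2}{2}\right)
\]
with $\sigma^2 = 1/(4k\alpha_i^2)$, exactly the claimed parameter. This bound is uniform in $\sigma$, so averaging over permutations preserves it, and $W_i$ is sub-Gaussian with variance proxy $\frac{1}{4\lfloor m/2\rfloor\alpha_i^2}$.

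The main thing to be careful about is verifying the representation of $\agi$ as the permutation average, since the Jensen step relies on it. Each unordered pair $\{j,\ell\}$ appears as one of the $k$ matched pairs in exactly $k\cdot 2\cdot (m-2)!$ permutations. Therefore
\[
\sum_\sigma V_{i,\sigma} = \frac{2k(m-2)!}{k}\sum_{j<\ell} h_{\thetai}(x_j-x_\ell) = \frac{m!}{\binom m2}\sum_{j<\ell}h_{\thetai}(x_j-x_\ell).
\]
This matches the paper's display. The kernel must also be symmetric in the order of the pair, which holds because negating $z$ flips both signs. Everything else is the standard Hoeffding argument.
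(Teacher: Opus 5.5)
Your proposal is correct and follows the paper's proof essentially step for step: Jensen's inequality over the uniform permutation average $\frac{1}{m!}\sum_{\sigma} V_{i,\sigma}$, independence of the $\lfloor m/2\rfloor$ disjoint-pair kernel terms, and Hoeffding's lemma on each factor, which yields $\exp\bigl(\lambda^2/(8\alpha_i^2\lfloor m/2\rfloor)\bigr)$. Your explicit count of $2k(m-2)!$ permutations per pair, together with the symmetry $h_{\thetai}(-z)=h_{\thetai}(z)$, is a useful check of the U-statistic representation, which the paper asserts without verification.
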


\begin{proof}
    In order to show sub-Gaussianity, we want to show that $\mathbb{E}[e^{\lambda(W_i - \mathbb{E}[W_i])}] \leq e^{\sigma^2\lambda^2/2}$ for all $\lambda \in \R$. Observe that $\mathbb{E}[W_i] = 0$.
    To bound the MGF,
    \begin{align*}
        \mathbb{E}[e^{\lambda(W_i - \mathbb{E}[W_i])}] &= \mathbb{E}[e^{\lambda(\mathbb{E}[\agi] - \agi)}]\\
        &= \mathbb{E}\left[\exp\left(\frac{\lambda}{\alpha_i}\sum_{\sigma \in \Pi_m} \frac{1}{ m!} ( \alpha_i \mathbb{E}[\agi] - V_{i, \sigma})\right)\right]
        \intertext{By Jensen's inequality:}
        &\leq \mathbb{E}\left[\sum_{\sigma \in \Pi_m} \frac{1}{m!} \exp\left(\frac{\lambda}{\alpha_i}(\alpha_i \mathbb{E}[\agi] - V_{i, \sigma})\right)\right]
        \intertext{Then by linearity of expectation and the observation that all $V_{i, \sigma}$ have the same distribution (so the statement holds for any permutation $\sigma$):}
        &= \mathbb{E}\left[\exp\left(\frac{\lambda}{\alpha_i}(\alpha_i \mathbb{E}[\agi] - V_{i, \sigma})\right)\right]\\
        &= \mathbb{E}\left[\exp\left(\frac{-\lambda}{\alpha_i}(V_{i, \sigma} - \alpha_i \mathbb{E}[\agi])\right)\right]
        \intertext{Note that $\alpha_i  \mathbb{E}_{x_1, \dots, x_m}[\agi] = \mathbb{E}_{x_1, x_2}[h_{\thetai}(x_1, x_2)] = \mathbb{E}_{x_1, \dots, x_m}[V_{i, \sigma}]$. So the expression $V_{i, \sigma} - \alpha_i \mathbb{E}[\agi]$ can be rewritten as $V_{i, \sigma} - \mathbb{E}[V_{i, \sigma}]$. Now we can see that the task has come down to bounding the MGF of $V_{i, \sigma}$. Recalling the definition of $V_{i, \sigma}$ and plugging in yields:}
         &= \mathbb{E}\left[\exp\left(\frac{-\lambda}{\alpha_i}\left(\sum_{j=1}^k \frac{h_{\thetai}(x_{\sigma(2j-1)}- x_{\sigma(2j)})}{k} - \frac{\mathbb{E}[V_{i, \sigma}]}{k}\right)\right)\right]\\
        &= \mathbb{E}\left[\prod_{j=1}^k\exp\left(\frac{-\lambda}{\alpha_i}\left(\frac{h_{\thetai}(x_{\sigma(2j-1)}- x_{\sigma(2j)})}{k} - \frac{\mathbb{E}[V_{i, \sigma}]}{k}\right)\right)\right]
        \intertext{Now we can utilize the independence of $h_{\thetai}(x_{\sigma(2j-1)}, x_{\sigma(2j)})$ for different values of $j$. Recall that these are independent because each of the $\lfloor m/2\rfloor$ terms are functions of disjoint subsets of the randomly drawn $x_1, \dots, x_m$.}
        &= \prod_{j=1}^k \mathbb{E}\left[\exp\left(\frac{-\lambda}{\alpha_i}\left(\frac{h_{\thetai}(x_{\sigma(2j-1)}- x_{\sigma(2j)})}{k} - \frac{\mathbb{E}[V_{i, \sigma}]}{k}\right)\right)\right]
        \intertext{Moving the factor of $1/k$ outside and recalling that $\mathbb{E}[V_{i, \sigma}] = \mathbb{E}\left[h_{\thetai}(x_{\sigma(2j-1)}, x_{\sigma(2j)})\right]$ gives us:}
        &= \prod_{j=1}^k \mathbb{E}\left[\exp\left(\frac{-\lambda}{\alpha_i k}\left(h_{\thetai}(x_{\sigma(2j-1)}- x_{\sigma(2j)}) - \mathbb{E}[V_{i, \sigma}]\right)\right)\right]
        \intertext{Then, we can apply Hoeffding's lemma to each of these $k$ terms, which states that for a random variable $X$ that is bounded in $[a,b]$ w.p. 1, we have that $\mathbb{E}[\exp(c(X-\mathbb{E}[X]))] \leq e^{c^2(b-a)^2/8}$ for any $c \in \R$. Observe that $h_{\thetai}(x_{\sigma(2j-1)}- x_{\sigma(2j)}) \in [0,1]$, hence:}
        &\leq \prod_{j=1}^k e^{\lambda^2/(8\alpha_i^2k^2)}= e^{\lambda^2/(8\alpha_i^2k)}
    \end{align*}
Putting it altogether and plugging in for $k$, we have that:
    \begin{align*}
        \mathbb{E}[e^{\lambda(W_i - \mathbb{E}[W_i])}] &\leq e^{\frac{\lambda^2}{8\alpha_i^2\lfloor m/2 \rfloor}}
    \end{align*}
    So, $W_i$ is sub-Gaussian with parameter $\sigma^2 = \frac{1}{4\alpha_i^2\lfloor m/2 \rfloor}$.
\end{proof}

\begin{lemma}\label{lem:max_fluc}
    We can bound the expected max fluctuation in individual proportionality as follows: \[
    \mathbb{E}_X[\max_{i \in [n]} W_i] = \mathbb{E}_X\left[\max_{i \in [n]}\left(\mathbb{E}[\agi] - \agi\right)\right] \leq \frac{1}{\alpha_{min}} \sqrt{\frac{\log (n)}{2\lfloor m/2 \rfloor}}
    \]
\end{lemma}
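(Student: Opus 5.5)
The plan is to apply the standard maximal inequality for sub-Gaussian random variables, taking sub-Gaussianity of each $W_i$ from \Cref{lem:W_sub_Gauss}. Write $k = \lfloor m/2 \rfloor$. By \Cref{lem:W_sub_Gauss}, each $W_i$ is mean-zero and satisfies $\mathbb{E}[e^{\lambda W_i}] \leq \exp\!\big(\lambda^2/(8\alpha_i^2 k)\big)$ for all $\lambda \in \R$. Since $\alpha_i \geq \alpha_{min}$, this gives the uniform bound $\mathbb{E}[e^{\lambda W_i}] \leq \exp\!\big(\lambda^2/(8\alpha_{min}^2 k)\big)$ for every $i$ and every $\lambda > 0$. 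Note that no independence among the $W_i$ is needed: they are strongly dependent, since they are all functions of the same batch $X$.

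Fix $\lambda > 0$ and use Jensen's inequality on the convex map $t \mapsto e^{\lambda t}$, then bound the maximum of nonnegative terms by their sum:
\[
e^{\lambda \mathbb{E}[\max_i W_i]} \leq \mathbb{E}\big[e^{\lambda \max_i W_i}\big] = \mathbb{E}\big[\max_i e^{\lambda W_i}\big] \leq \sum_{i=1}^n \mathbb{E}\big[e^{\lambda W_i}\big] \leq n \exp\!\Big(\frac{\lambda^2}{8\alpha_{min}^2 k}\Big).
\]
Taking logarithms and dividing by $\lambda$ gives
\[
\mathbb{E}[\max_i W_i] \leq \frac{\log n}{\lambda} + \frac{\lambda}{8\alpha_{min}^2 k}.
\]

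Optimize over $\lambda$ by setting $\lambda = \alpha_{min}\sqrt{8 k \log n}$, which balances the two terms. Each term then equals $\frac{1}{\alpha_{min}}\sqrt{\log n/(8k)}$, so their sum is
\[
\frac{2}{\alpha_{min}}\sqrt{\frac{\log n}{8k}} = \frac{1}{\alpha_{min}}\sqrt{\frac{\log n}{2k}},
\]
which is exactly the claimed bound.

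All steps are routine; the only care needed is with constants and the edge case. The constant works out because the MGF exponent from the proof of \Cref{lem:W_sub_Gauss} is $\lambda^2/(8\alpha_i^2 k)$, i.e.\ $\sigma^2\lambda^2/2$ with $\sigma^2 = 1/(4\alpha_i^2 k)$. For $n = 1$ the choice $\lambda = 0$ is degenerate; there $\mathbb{E}[\max_i W_i] = \mathbb{E}[W_1] = 0$ directly, so the bound holds trivially (and $n \geq 2$ anyway).
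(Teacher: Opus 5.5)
Your proof is correct and is essentially the paper's. The paper cites the standard maximal inequality for (not necessarily independent) sub-Gaussian variables (Rigollet, Thm.~1.14), $\mathbb{E}[\max_i W_i] \le \sqrt{2\sigma^2 \log n}$ with the worst-case proxy $\sigma^2 = 1/(4\lfloor m/2\rfloor \alpha_{min}^2)$, whereas you derive that inequality inline via Jensen, the sum bound, and optimizing $\lambda$; the constants agree.
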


\begin{proof}
    In combination with our result showing that the individual $W_i$ are sub-Gaussian in \Cref{lem:W_sub_Gauss}, we can directly apply a known bound on the maximum of  (not necessarily independent) sub-Gaussian RVs (e.g. \citet{rigollet2023high} Thm 1.14). We can upper bound the variance parameter of all of these individual variables by $\max_{i \in [n]} \sigma^2_i \leq \max_{i \in [n]} \frac{1}{4\lfloor m/2 \rfloor\alpha_i^2} = \frac{1}{4\lfloor m/2 \rfloor\alpha_{min}^2}$. So we have that \[
    \mathbb{E}_X[\max_{i \in [n]} W_i] \leq \sqrt{2\sigma^2 \log(n)} \leq \frac{1}{\alpha_{min}} \sqrt{\frac{\log (n)}{2\lfloor m/2 \rfloor}}
    \]
\end{proof}

\begin{lemma}\label{lem:dev_ub_d}
We can bound the expected maximum deviation in individual proportionality in terms of $d$ as follows: \[
\mathbb{E}_{X \sim \cD^m}\left[\max_{i \in [n]} W_i\right] \leq \frac{1}{\alpha_{min}}\sqrt{\frac{d \cdot 8\log(e\lfloor m / 2 \rfloor/d)}{\lfloor m/2 \rfloor}}
\]
\end{lemma}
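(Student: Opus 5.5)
We bound $\max_{i\in[n]} W_i$ by a supremum over all of $S^{d-1}$. This discards the dependence on $n$ and keeps only the complexity of the class of kernels $h_\psi$.

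\textbf{Step 1 (reduce to a supremum).} For any $\psi\in S^{d-1}$, define $h_\psi(z)=\mathbb{I}\{\sign\langle\psi,z\rangle=\sign\langle\theta_f\prof,z\rangle\}$. Let $U_\psi=\binom{m}{2}^{-1}\sum_{j<\ell}h_\psi(x_j-x_\ell)$. Then $W_i=\frac{1}{\alpha_i}(\E[U_{\thetai}]-U_{\thetai})$, so
\[
\max_i W_i\le \frac{1}{\alpha_{min}}\sup_{\psi\in S^{d-1}}\bigl(\E[U_\psi]-U_\psi\bigr).
\]
By the permutation representation already used for \Cref{lem:W_sub_Gauss}, $U_\psi=\frac{1}{m!}\sum_\sigma V_{\psi,\sigma}$, where each $V_{\psi,\sigma}$ averages $k=\lfloor m/2\rfloor$ i.i.d.\ terms $h_\psi(z_1),\dots,h_\psi(z_k)$ with $z_j=x_{\sigma(2j-1)}-x_{\sigma(2j)}$. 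Since the supremum of an average is at most the average of the suprema, and all $V_{\cdot,\sigma}$ have the same law, we get
\[
\E\sup_\psi(\E U_\psi-U_\psi)\le \E\sup_\psi\bigl(\E V_{\psi,\mathrm{id}}-V_{\psi,\mathrm{id}}\bigr).
\]
This is a standard empirical-process deviation over $k$ i.i.d.\ samples $z_j$ from the difference distribution.

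\textbf{Step 2 (symmetrization and Rademacher complexity).} By symmetrization, the last expectation is at most $2\,\E\,\hat{\mathfrak R}_k(\mathcal H)$ with $\mathcal H=\{h_\psi:\psi\in S^{d-1}\}$. Each $h_\psi(z)=\mathbb{I}\{\sign\langle\psi,z\rangle=\sign\langle\theta_f,z\rangle\}$ is a Boolean function of two homogeneous halfspaces, with $\theta_f$ fixed. Consequently, the restriction of $\mathcal H$ to $z_1,\dots,z_k$ has at most as many patterns as homogeneous halfspaces $\{\sign\langle\psi,\cdot\rangle\}$ do, whose VC dimension is $d$. Ties have probability zero by \Cref{prop:no-ties}, so they can be ignored. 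Sauer's lemma then gives at most $(ek/d)^d$ patterns for $k\ge d$. Massart's finite-class lemma yields $\hat{\mathfrak R}_k\le\sqrt{2d\log(ek/d)/k}$, so the bound is $2\sqrt{2d\log(ek/d)/k}=\sqrt{8d\log(ek/d)/k}$. Multiplying by $1/\alpha_{min}$ gives the claim. When $k<d$, the right-hand side exceeds the trivial bound $W_i\le 1/\alpha_i$ (since $\log$ stays positive enough), or one can simply cap it there.

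\textbf{Main obstacle.} The delicate step is Step 1: justifying that the U-statistic supremum is controlled by the supremum for a single block of independent pairs. This is Hoeffding's decomposition trick applied inside a supremum, and it needs the convexity argument sup-of-average $\le$ average-of-sups together with the exchangeability of the $x$'s. A smaller point is the VC/shatter count for $h_\psi$. The fixed $\theta_f$ only relabels points, namely $h_\psi(z)=\mathbb{I}\{\sign\langle\psi,z\rangle=s_z\}$ with $s_z$ determined by $z$, so the number of distinct patterns equals the number of sign patterns of $\psi$ on the $z_j$. That count is bounded by Sauer's lemma with VC dimension $d$.
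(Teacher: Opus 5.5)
Your argument is the same as the paper's. You bound $\max_i W_i$ by $\frac{1}{\alpha_{min}}$ times a supremum over all of $S^{d-1}$, move the supremum inside the permutation average, symmetrize to $2\mathfrak R_{\lfloor m/2\rfloor}(\mathcal H)$, and finish with VC dimension $d$ plus Sauer and Massart; for $k=\lfloor m/2\rfloor\ge d$ this is correct and gives exactly the stated constant.

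Your closing remark on $k<d$ is false and should be removed.
\begin{itemize}
\item For $k<d/e$ the logarithm is negative, so the right-hand side is undefined (e.g.\ $d=m=3$, $k=1$).
\item As $k\downarrow d/e$, the quantity $\sqrt{8d\log(ek/d)/k}$ tends to $0$, far below the trivial bound $1/\alpha_{min}$.
\item This cannot be repaired, because the inequality itself fails in that range for large $d$. When $m<d$, generic items admit every ranking as a linear one. So the open cone of $\psi$ that reverse $\theta_f$ on the whole batch contains vectors at an angle of roughly $\arccos\sqrt{(d-m+1)/d}$ from $\theta_f$. Take many equal-weight voters spread over the sphere and a spherically symmetric $\cD$. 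Then some voter has $U_{\theta^{(i)}}=0$ while $\E U_{\theta^{(i)}}$ stays bounded away from $0$, so $\max_i W_i$ is of order $1/\alpha_{min}$.
\end{itemize}
``Capping'' the bound therefore changes the statement rather than proving it. The correct fix is to assume $\lfloor m/2\rfloor\ge d$ explicitly. The paper's use of Sauer's lemma in the form $(ek/d)^d$ silently requires this as well.

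A minor point: \Cref{prop:no-ties} only removes ties for the fixed vector $\theta_f$, not for every $\psi$ in the supremum. The clean formulation is $h_\psi(z)=\mathbb{I}\{\langle\psi,s_z z\rangle>0\}$ with $s_z=\sign\langle\theta_f,z\rangle$. This is a class of open homogeneous halfspaces, whose VC dimension is $d$.
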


\begin{proof}
    Observe that we can upper bound the maximum deviation experienced by any voter type, by considering the maximum deviation that would be experienced for \textit{any} scoring vector on the sphere, if it were associated with the minimum weight $\alpha_{min} = \min_{i \in [n]} \alpha_i$. \begin{align*}
         \mathbb{E}\left[\max_{i \in [n]} W_i\right]
         &= \mathbb{E}\left[\max_{i \in [n]} \left(\mathbb{E}\left[\frac{a_{KT}(\thetai, \theta_f, X, \prof)}{\alpha_i \binom{m}{2}}\right] - \frac{a_{KT}(\thetai, \theta_f, X, \prof)}{\alpha_i \binom{m}{2}}\right)\right]\\
         &\leq \mathbb{E}\left[\sup_{\theta \in S^{d-1}} \left(\mathbb{E}\left[\frac{a_{KT}(\theta, \theta_f, X, \prof)}{\alpha_{min} \binom{m}{2}}\right] - \frac{a_{KT}(\theta, \theta_f, X, \prof)}{\alpha_{min} \binom{m}{2}}\right)\right]
    \end{align*}
    We can use our method of rewriting individual proportionality for an arbitrary vector $\theta \in S^{d-1}$: \[
    \frac{a_{KT}(\thetai, \theta_f, X, \prof)}{\binom{m}{2}} = \frac{1}{ m!} \sum_{\sigma \in \Pi_m} \frac{1}{\lfloor m /2 \rfloor}\sum_{j=1}^{\lfloor m/2 \rfloor} h_{\theta}(x_{\sigma(2j-1)}- x_{\sigma(2j)})
    \]
    recalling that $h_{\theta}(x-y)$ is the kernel function, and is an indicator function for whether $\theta$ and $\theta_f\prof$ agree on the ranking of $x$ and $y$. Observe that all of the $x_i$ are iid, so for any permutation $\sigma$, all of the difference vectors $x_{\sigma(2j-1)}- x_{\sigma(2j)}$ are also identically distributed. Then, by linearity of expectation $\mathbb{E}\left[\frac{1}{\lfloor m /2 \rfloor}\sum_{j=1}^{\lfloor m/2 \rfloor} h_{\theta}(x_{\sigma(2j-1)}- x_{\sigma(2j)})\right] = \mathbb{E}[h_{\theta}(x_1-x_2)]$. We can continue in our bound on $\mathbb{E}\left[\max_{i \in [n]} W_i\right]$ follows: \begin{align*}
        &\leq \frac{1}{\alpha_{min}}\mathbb{E}\left[\sup_{\theta \in S^{d-1}} \left(\mathbb{E}\left[\frac{a_{KT}(\theta, \theta_f, X, \prof)}{\binom{m}{2}}\right] - \frac{a_{KT}(\theta, \theta_f, X, \prof)}{ \binom{m}{2}}\right)\right]\\
        &= \frac{1}{\alpha_{min}}\mathbb{E}\left[\sup_{\theta \in S^{d-1}} \frac{1}{ m!} \sum_{\sigma \in \Pi_m}\left(\mathbb{E}[h_{\theta}(x_1-x_2)] -  \frac{1}{\lfloor m /2 \rfloor}\sum_{j=1}^{\lfloor m/2 \rfloor} h_{\theta}(x_{\sigma(2j-1)}- x_{\sigma(2j)})\right)\right]
        \intertext{The permutation average can be thought of as the expectation over a permutation drawn uniformly at random. Then, by the convexity of supremum and Jensen's inequality, we can upper bound this expression by pulling the supremum inside the permutation average:}
        &\leq \frac{1}{\alpha_{min}}\mathbb{E}\left[\frac{1}{ m!} \sum_{\sigma \in \Pi_m} \sup_{\theta \in S^{d-1}} \left(\mathbb{E}[h_{\theta}(x_1-x_2)] -  \frac{1}{\lfloor m /2 \rfloor}\sum_{j=1}^{\lfloor m/2 \rfloor} h_{\theta}(x_{\sigma(2j-1)}- x_{\sigma(2j)})\right)\right]
        \intertext{Note that this supremum is the same for all permutations, as all of them share the same distribution. So, we can fix any permutation $\sigma$, and write our expression in terms of it:}
        &= \frac{1}{\alpha_{min}}\mathbb{E}\left[\sup_{\theta \in S^{d-1}} \left(\mathbb{E}[h_{\theta}(x_1-x_2)] -  \frac{1}{\lfloor m /2 \rfloor}\sum_{j=1}^{\lfloor m/2 \rfloor} h_{\theta}(x_{\sigma(2j-1)}- x_{\sigma(2j)})\right)\right]
        \intertext{If we let $\mathcal{H} = \{h_{\theta} \colon \theta \in S^{d-1}\} = \{z \mapsto \mathbb{I}(\text{sign}(\langle z, \theta \rangle) = \text{sign}(\langle z, \theta_f\prof \rangle)\colon \theta \in S^{d-1}\}$ be our function class, then it is a known result that we can upper bound this expected supremum by twice the Rademacher complexity of $\mathcal{H}$ (e.g. \citep{wainwright2019high} Prop 4.11).}
        &\leq \frac{2}{\alpha_{min}} \mathcal{R}_{\lfloor m/2 \rfloor}(\mathcal{H})
    \end{align*}
    In order to bound $\mathcal{R}_{\lfloor m/2 \rfloor}(\mathcal{H})$, we first bound the VC-dimension of $\mathcal{H}$ and then use a known connection between the two. 
    
    Consider the class of linear halfspaces that pass through the origin, $\mathcal{F} = \{ z \mapsto \text{sign}(\langle \theta, z \rangle) \mid \theta \in S^{d-1} \}$. Because our function class $\mathcal{H}$ is formed by checking equality against a fixed boolean label pre-assigned by $\theta_f\prof$, $\mathcal{H}$ cannot shatter any set of points that $\mathcal{F}$ cannot shatter. While the VC dimension of general affine halfspaces in $\mathbb{R}^d$ is $d+1$ (\citep{mohri2018foundations}, Example 3.12 and Thm 3.13), our class $\mathcal{F}$ has the additional restriction that the halfspaces must pass through the origin, making it geometrically isomorphic to the class of general affine halfspaces in $\mathbb{R}^{d-1}$. Therefore, $\text{VCdim}(\mathcal{F}) = d$. Thus, $\text{VCdim}(\mathcal{H}) \le \text{VCdim}(\mathcal{F}) = d$.

    Finally, combining Massart's lemma and Sauer's lemma (and their implications in \citep{mohri2018foundations} Corollaries 3.8 and 3.18), we get that \[
    \frac{2}{\alpha_{min}}\mathcal{R}_{\lfloor m/2 \rfloor}(\mathcal{H}) \leq \frac{2}{\alpha_{min}}\sqrt{\frac{2\log((e\lfloor m/2 \rfloor/d)^d)}{\lfloor m/2 \rfloor}} = \frac{1}{\alpha_{min}}\sqrt{\frac{d \cdot 8\log(e\lfloor m / 2 \rfloor/d)}{\lfloor m/2 \rfloor}}
    \]
\end{proof}

Returning to the proof of \Cref{thm:fixed-cpop-LB}, we can now plug in our upper bounds on $\mathbb{E}[\max_{i \in [n]} W_i]$ from \Cref{lem:W_sub_Gauss} and \Cref{lem:dev_ub_d}, and recover the desired bound: \[
\pcol(\theta_f, \prof, \cD^m) \geq \pind(\theta_f, \prof, \cD^m) - \frac{1}{\alpha_{min}}\sqrt{\frac{\min(\log(n), d \cdot 16\log(e\lfloor m / 2 \rfloor/d))}{2\lfloor m/2 \rfloor}}
\]

\subsection{Tail Bound on $i$'s Individual Proportionality Level}
\begin{corollary} \label{cor:concentration-bound}
    For any $\theta_f, \cD^m, \prof, i \in \prof$, and any $t > 0$ we have that \[
\Pr_{X \sim \cD^m}\left(\agi(\theta_f, X, \prof) < \mathbb{E}[\agi(\theta_f, X, \prof)] - t\right) \leq e^{-2\lfloor m/2 \rfloor\alpha_i^2t^2}
\]
\end{corollary}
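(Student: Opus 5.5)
This follows directly from \Cref{lem:W_sub_Gauss} via the standard Chernoff argument. The event $\agi < \mathbb{E}[\agi] - t$ is exactly the event $W_i > t$, where $W_i = \mathbb{E}[\agi] - \agi$ is the deviation quantity defined above. So it suffices to bound the upper tail of $W_i$.

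In the proof of \Cref{lem:W_sub_Gauss} we showed, for every $\lambda \in \mathbb{R}$, that $\mathbb{E}[W_i] = 0$ and
\[
\mathbb{E}[e^{\lambda W_i}] \leq \exp\!\left(\frac{\lambda^2}{8\alpha_i^2 \lfloor m/2 \rfloor}\right).
\]
This bound came from the permutation-averaging (U-statistic) representation, Jensen's inequality, independence across disjoint pairs, and Hoeffding's lemma.

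We now apply Markov's inequality to $e^{\lambda W_i}$ for $\lambda > 0$:
\[
\Pr(W_i > t) \leq e^{-\lambda t}\, \mathbb{E}[e^{\lambda W_i}] \leq \exp\!\left(-\lambda t + \frac{\lambda^2}{8\alpha_i^2 k}\right),
\]
where $k = \lfloor m/2 \rfloor$. The exponent is a convex quadratic in $\lambda$. It is minimized at $\lambda^* = 4\alpha_i^2 k t > 0$, which is admissible since $t > 0$. Substituting $\lambda^*$ gives exponent $-4\alpha_i^2 k t^2 + 2\alpha_i^2 k t^2 = -2\alpha_i^2 k t^2$. This is exactly the claimed bound $e^{-2\lfloor m/2 \rfloor \alpha_i^2 t^2}$.

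The only point needing care is bookkeeping of the sub-Gaussian constant. \Cref{lem:W_sub_Gauss} states the parameter as $\sigma^2 = 1/(4\lfloor m/2\rfloor\alpha_i^2)$, i.e.\ the MGF bound $e^{\sigma^2\lambda^2/2}$. The generic sub-Gaussian tail $e^{-t^2/(2\sigma^2)}$ then yields precisely $e^{-2\lfloor m/2\rfloor\alpha_i^2 t^2}$, so the two routes agree.

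Two technical points need no extra work. Ties have probability zero by \Cref{cor:no-ties}, so the kernel representation of $\agi$ holds almost surely. The argument also uses no spherical symmetry, so the corollary holds for every $\cD$ satisfying the standing assumptions.
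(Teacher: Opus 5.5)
Your proof is correct and follows the paper's argument. The paper likewise rewrites the event as $W_i > t$ and applies the standard sub-Gaussian tail bound $e^{-t^2/(2\sigma^2)}$ with $\sigma^2 = 1/(4\lfloor m/2\rfloor\alpha_i^2)$ from \Cref{lem:W_sub_Gauss}; your explicit Chernoff optimization at $\lambda^* = 4\alpha_i^2\lfloor m/2\rfloor t$ is simply the derivation of that generic bound, and it confirms the coefficient $2$ in the exponent (the paper's final displayed line writes it as $1$, which is a typo).
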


\begin{proof}
    We can reformulate this as a  tail bound on $W_i$: \begin{align*}
        &\Pr_{X \sim \cD^m}\left(\agi(\theta_f, X, \prof) < \mathbb{E}[\agi(\theta_f, X, \prof)] - t\right)\\
        &= \Pr_{X \sim \cD^m}\left(\mathbb{E}[\agi(\theta_f, X, \prof)] - \agi(\theta_f, X, \prof) > t\right)\\
        &= \Pr_{X \sim \cD^m}(W_i > t)\\
        \intertext{We know from \Cref{lem:W_sub_Gauss} that $W_i$ is sub-Gaussian with variance parameter $\sigma^2 = \frac{1}{4\lfloor m/2 \rfloor \alpha_i^2}$. This immediately gives us the following tail bound (e.g. \citep{wainwright2019high} Equation 2.9):}
        &\leq e^{\frac{-t^2}{2\sigma^2}}\\
        &= e^{-1\lfloor m/2 \rfloor \alpha_i^2t^2}
    \end{align*}
\end{proof}

\section{Appendix for \Cref{sec:empirical}}\label{app:exp}

Note that, on a 12-core machine, running all experiments takes approximately one hour. No special computing resources are needed.

\subsection{Approximation of $\thar$ to $\thang$ Under High Agreement} \label{app:approx-thar}
\begin{restatable}{proposition}{arithangapprox}\label{thm:arith_ang_approx}
    Fix any profile $(\btheta, \balpha)$ and $R \in [0, \pi/2)$  for which there exists $\theta \in S^{d-1}$ such that for all $i \in [n]$, $\dang(\theta, \theta^{(i)}) \leq R$. Then, $\dang(\thang, \thar) \leq \arctan \left( \frac{2R - \sin(2R)}{\cos(R)} \right).$
\end{restatable}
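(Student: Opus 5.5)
The plan is to compare the two aggregation rules through the first-order optimality condition of the angular mean, expressed in the tangent space at $\thang$. By \Cref{prop:arith_mean_def}, $\thar$ is the direction of $V := \sum_i \alphai \thetai$. I will decompose $V$ orthogonally relative to $\thang$:
\[
V = \langle V, \thang\rangle\, \thang + T, \qquad T := \sum_i \alphai\bigl(\thetai - \cos(l_i)\,\thang\bigr), \qquad l_i := \dang(\thang,\thetai).
\]
Then $\tan \dang(\thang,\thar) = \|T\| / \langle V,\thang\rangle$, provided the denominator is positive. So the whole task is to bound $\|T\|$ from above and $\langle V,\thang\rangle$ from below.

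\textbf{Step 1: localization.} I first show that $\thang$ lies in the cap $\{\psi : \dang(\psi,\theta)\le R\}$. For $R<\pi/2$ this cap is geodesically convex, and it is standard that the Fr\'echet mean of points in such a ball lies in the ball (cf.\ \citet{afsari2013convergence}). If needed, I can argue this directly: projecting any point outside the cap onto its boundary strictly decreases every $\dang(\cdot,\thetai)$. This gives $l_i \le 2R < \pi$ for every $i$. It also gives that no $\thetai$ is antipodal to $\thang$, so $F$ is differentiable at $\thang$.

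\textbf{Step 2: stationarity and bounding $\|T\|$.} As in the proof of \Cref{lem:circle_angular_bound}, the gradient of $F$ at $\thang$ is a multiple of $\sum_i \alphai \operatorname{Log}_{\thang}(\thetai)$, and this sum vanishes. Here $\operatorname{Log}_{\thang}(\thetai) = \frac{l_i}{\sin l_i}\bigl(\thetai - \cos(l_i)\thang\bigr)$. Hence
\[
\sum_i \alphai \tfrac{l_i}{\sin l_i}\bigl(\thetai-\cos(l_i)\thang\bigr) = 0.
\]
Subtracting this from $T$ gives
\[
T = \sum_i \alphai\Bigl(1-\tfrac{l_i}{\sin l_i}\Bigr)\bigl(\thetai-\cos(l_i)\thang\bigr).
\]
Each vector $\thetai-\cos(l_i)\thang$ has norm $\sin l_i$. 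The triangle inequality then yields $\|T\| \le \sum_i \alphai\,(l_i - \sin l_i)$. The map $t\mapsto t-\sin t$ is increasing, and $l_i\le 2R$ by Step 1, so $\|T\|\le 2R-\sin(2R)$. This is exactly the numerator in the claimed bound.

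\textbf{Step 3: the denominator, which I expect to be the main obstacle.} The naive bound $\langle V,\thang\rangle = \sum_i\alphai\cos l_i \ge \cos(2R)$ has the wrong form and can even be negative. Instead I will use the center $\theta$ of the cap. First, $\langle V,\theta\rangle \ge \cos R$, so $\|V\|\ge\cos R$. Second, both $\thang$ and $\thar$ lie in the cap. For $\thang$ this is Step 1; for $\thar$ it holds because $V$ is a positive combination of vectors in the convex cone over the cap. The angle between $\thang$ and $\thar$ is therefore acute.

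Write $\beta := \dang(\thang,\thar)$. Then $\sin\beta = \|T\|/\|V\| \le (2R-\sin 2R)/\cos R$. This is the same expression as in the statement, but it bounds $\sin\beta$ rather than $\tan\beta$, and in general $\arcsin x \ge \arctan x$. Closing this gap is the crux. My first attempt is to lower-bound $\langle V,\thang\rangle$ by $\cos R$ directly. One route is to show $\langle \thetai,\thang\rangle$ is on average at least $\langle\thetai,\theta\rangle$, using that $\thang$ is at least as central as $\theta$ in the appropriate weighted sense. Another route is to replace $\theta$ by the cap center that is optimal relative to $\thang$. With $\langle V,\thang\rangle \ge \cos R$, Step 2 gives $\tan\beta \le (2R-\sin 2R)/\cos R$, hence the claim. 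If this fails, I will refine Step 2 to bound $\|T\|\le(2R-\sin 2R)\cos\beta$, which is equivalent to the target bound via $\sin\beta=\|T\|/\|V\|$ together with $\|V\|\ge\cos R$.
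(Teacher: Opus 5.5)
Your Steps 1 and 2 coincide with the paper's proof. The paper likewise puts $\thang$ in the spherical convex hull to get $l_i\le 2R$. It then bounds the tangential component by $\sum_i\alphai(l_i-\sin l_i)$ using the same stationarity identity.

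The gap is Step 3, which is the only genuinely nontrivial part. You need $\langle V,\thang\rangle=\sum_i\alphai\cos l_i\ge\cos R$, and you do not prove it. What you do prove, $\sin\beta\le(2R-\sin 2R)/\cos R$, is strictly weaker, and it becomes vacuous once the right-hand side exceeds $1$. Your claim that $\beta$ is acute does not follow from both points lying in the cap: two points of the cap can be up to $2R$ apart, which exceeds $\pi/2$ when $R>\pi/4$.

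Your first suggested route is false as stated with the weights $\alphai$. If $\thar$ happens to be an admissible center $\theta$, then $\langle V,\theta\rangle=\|V\|>\langle V,\thang\rangle$ whenever $\thang\neq\thar$. For example, take two orthogonal voters with weights $0.7,0.3$, $\theta=\thar$, and $R\approx 66.8^\circ$. The fallback $\|T\|\le(2R-\sin 2R)\cos\beta$ comes with no mechanism.

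The missing idea is that the comparison with $\theta$ does hold for the right weights, namely those already in your stationarity identity. With $w_i:=l_i/\sin l_i$ (and $w_i:=1$ if $l_i=0$), Step 2 says
\[
\sum_i\alphai w_i\,\thetai=c\,\thang,\qquad c:=\sum_i\alphai w_i\cos l_i .
\]
Pairing with $\theta$ gives $c\,\langle\thang,\theta\rangle=\sum_i\alphai w_i\langle\thetai,\theta\rangle\ge\cos R\sum_i\alphai w_i>0$. Since $0<\langle\thang,\theta\rangle\le 1$ by Step 1, this forces $c>0$ and yields $\sum_i\alphai w_i(\cos l_i-\cos R)\ge 0$.

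Now $l\mapsto l/\sin l$ is increasing on $[0,\pi)$, while $l\mapsto\cos l-\cos R$ is decreasing. Chebyshev's sum inequality therefore gives $\sum_i\alphai w_i(\cos l_i-\cos R)\le\bigl(\sum_i\alphai w_i\bigr)\bigl(\sum_i\alphai(\cos l_i-\cos R)\bigr)$. Dividing by $\sum_i\alphai w_i\ge 1$ gives $\sum_i\alphai\cos l_i\ge\cos R>0$.

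This settles both acuteness and the denominator. Combined with your Step 2, it gives $\tan\beta\le(2R-\sin 2R)/\cos R$. This is exactly how the paper closes the argument.
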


\begin{proof}
    Without loss of generality, suppose $\thang = e_1$ where $e_1 = (1, 0, \dots, 0) \in S^{d-1}$. 
    If we write $d_i \coloneq \dang(\thang, \thetai)$ we get that $\cos(d_i) = \langle \thang,  \thetai\rangle = \thetai_1$ for every $i \in \mathbb{N}$.
    Next we set $w_i \coloneq \thetai - e_1 \cos(d_i)$, and $v_i := w_i / ||w_i||$. Then $w_i = v_i \sin(d_i)$ only consists of the last $d-1$ components of $\thetai$, and $v_i$ is $w_i$ normalized. We use this to rewrite
    \[ \thetai = \begin{pmatrix}
        \cos(d_i) \\
        v_i \sin(d_i)
    \end{pmatrix}.\]

    The unnormalized arithmetic mean can then be written as
    \[\thar^* = \sum_{i = 1}^n \alpha^{(i)}\thetai = \sum_{i = 1}^n   \alpha^{(i)}\begin{pmatrix} \cos(d_i) \\ v_i \sin(d_i) \end{pmatrix} = \begin{pmatrix} \sum_{i = 1}^n \alpha^{(i)} \cos(d_i) \\ \sum_{i = 1}^n \alpha^{(i)} v_i \sin(d_i) \end{pmatrix}. \]
    Because $\thar$ is now just the normalized vector of $\thar^*$, the distance $\dang(\thar, \thang)$ is the angle between $\thang$ and $\thar^*$.

    \begin{figure}
        \centering
        \input{Figures/arith_ang_helper}
        \caption{\emph{Left:} $\thetai$ decomposes into $(\cos(d_i), v_i \sin(d_i))$, \emph{Right:} Through considering the right angled triangle formed through the origin, $\thar$ and the projection of $\thar$ to $\thang$, we can rewrite $\dang(\thang, \thar) = \arctan(\|B\| / A)$.}
        \label{fig:arith_ang_helper}
    \end{figure}

    We write $A = \sum_{i = 1}^n \alpha^{(i)} \cos(d_i)$, and $B = \sum_{i = 1}^n \alpha^{(i)} v_i \sin(d_i)$. From \Cref{fig:arith_ang_helper}, we can then see that we can therefore determine the angle $\dang(\thang, \thar)$ as $\arctan\left( \frac{||B||}{A} \right)$. Because $\arctan$ is monotonic, we will separately bound $A$ from below and $||B||$ from above.

\paragraph{Bounding $||B||$.} We will prove that $||B|| \leq (2R) - \sin(2R)$. To see this observe that
\begin{multline*} ||B|| = \left|\left| \sum_{i = 1}^n \alpha^{(i)} v_i \sin(d_i) \right|\right| = \left|\left| \sum_{i = 1}^n \alpha^{(i)} v_i (\sin(d_i) - d_i) \right|\right| \\ \leq \sum_{i = 1}^n \alpha^{(i)} ||v_i|| (d_i - \sin(d_i)) \leq \sum_{i = 1}^n \alpha^{(i)} (2R - \sin(2R)) = 2R - \sin(2R).
\end{multline*}
Where we use the local optimality condition in the second equality, as well as the fact that the function $f(x) = x - \sin(x)$ and $d_i = \dang(\thang, \thetai) \leq 2R$, since $\thang$ must be in the spherically convex hull of the $\thetai$ \citep{buss2001spherical}.
    
\paragraph{Bounding $A$.}
We will show that $A = \sum_{i = 1}^n \alpha^{(i)} \cos(d_i) \geq \cos(R)$. If we define $r:[0, \pi] \to \R, x \mapsto \cos(x) - \cos(R)$ and $r_i \coloneq \cos(d_i) \cos(R)$, then this is equivalent to proving
\begin{equation}\label{eqn:denominator}
    \sum_{i = 1}^n \alpha^{(i)} r_i > 0, 
\end{equation}
Now observe that $r$ is a decreasing function as $\cos(x)$ is decreasing on $[0, \pi]$. Due to Chebyshev's Sum Inequality,  it suffices to prove $\sum_{i = 1}^n \alpha^{(i)} w(d_i) r_i = \sum_{i = 1}^n \alpha^{(i)} w(d_i) r(d_i) \geq 0$ for some increasing function $w:[0, \pi] \to \R$ to show \Cref{eqn:denominator}. We will prove that this holds for the function $w(x) = x / \sin(x)$ (defining $w(0) = 1$) which is increasing on $[0, \pi]$. We write $w_i \coloneq w(d_i)$ as a shorthand.

It holds that
\begin{align*}
    \sum_{i = 1}^n \alpha^{(i)} w_i \thetai & = \left( \sum_{i = 1}^n \alpha^{(i)} w_i \cos(d_i)) \right) \thang + \sum_{i = 1}^n \alpha^{(i)} \sin(d_i) w_i v_i \\ 
                                         &= \left( \sum_{i = 1}^n \alpha^{(i)} w_i \cos(d_i)) \right) \thang + \sum_{i = 1}^n \alpha^{(i)} d_i v_i = \thang \sum_{i = 1}^n \alpha^{(i)} w_i \cos(d_i).
\end{align*}
Taking the scalar product with $u$ on both sides and using that $\dang(\thetai, u) < R$, so $\langle \thetai, u \rangle > \cos(R)$, as well as $\langle \thang, u \rangle \leq 1$ we get
\[  \sum_{i = 1}^n \alpha^{(i)} w_i \cos(R)  < \sum_{i = 1}^n \alpha^{(i)} w_i \langle \thetai, u \rangle = \langle \thang, u \rangle \sum_{i = 1}^n \alpha^{(i)} w_i \cos(d_i) \leq \sum_{i =1}^n \alpha^{(i)} w_i \cos(d_i), \]
and henceforth
\[ \sum_{i =1}^n \alpha^{(i)} w_i < \alpha^{(i)} w_i \cos(d_i) \iff 0 < \sum_{i =1}^n \alpha^{(i)} w_i (\cos(d_i) - \cos(R)) = \sum_{i = 1}^n \alpha^{(i)} w_i r_i .\]
\end{proof}

\subsection{Warm-up: Expected Proportionality in the Two-Voter Case} \label{app:2voters}
We begin in the simple setting that originally exposed the
disproportionality of the arithmetic mean: $n=2$ voter types in $d=2$
dimensions, with a majority group and a minority group
\citep{feffer2023moral}. We let voter type $1$ be the majority and
sweep its weight over
$\alpha_1 \in \{0.51, 0.60, 0.70, 0.80, 0.90, 0.95\}$; we let
$\phi := d_{\measuredangle}(\theta^{(1)}, \theta^{(2)}) \in [0\degree, 180\degree]$
denote the angle between the two scoring vectors. Together $(\alpha_1,
\phi)$ parametrize every instance of this 2-voter, 2-dimensional
problem up to rotation.
\Cref{fig:exp2-synth} shows the expected individual proportionaltiy level $\mathbb{E}_X[\textsf{IP}_i]$ for the majority (solid) and minority
(dashed) voter type under each of the five rules. We observe that for
$\phi \leq 90\degree$, every rule serves both voter types above their
entitlement and the choice of rule has little impact; an observation in line with
our approximation results from  \Cref{thm:arith_ang_approx}. As $\phi \to
180\degree$, the picture changes sharply: the arithmetic mean, Borda,
and the geometric median all move towards the majority voter type's
preferred ranking, and the minority's individual proportionality level
falls toward $0$. In contrast, as guaranteed, the angular mean and PSB provide a individual proportionality level of at least $1$ for both voter types 
across the entire range of $\phi$.

\begin{figure}
  \centering
  \resizebox{\textwidth}{!}{\input{.//Figures_final/exp2_synth_angle_pgf}}
  \caption{Each voter type's expected individual proportionality level $\mathbb{E}_X[\textsf{IP}_i]$ as a function of the angle $\phi$ between $\theta^{(1)}$ and $\theta^{(2)}$, on the two-voter profile for different weight splits.}
  \label{fig:exp2-synth}
\end{figure}

Note that \Cref{fig:exp2-synth} reports voter type's individual proportionality
level $\textsf{IP}_i = a_{KT} / \alpha^{(i)} \binom{m}{2}$, which
counts every pairwise comparison the rule agrees with voter type $i$
on.  This includes pairs on which the two voter types already agree
with each other, a free ``win'' for any reasonable aggregation method that
inflates individual proportionality levels.  We now define a disagreement-restricted variant
$\widetilde{\textsf{IP}}_i$ that ignores these free wins and credits
the rule only for its agreement on \emph{contested} pairs.
For a batch $X = (x_1, \dots, x_m) \in \mathcal{X}^m$, a
pair of items $\{x_j, x_k\} \subseteq X$ is \emph{contested} iff the
two voter types rank it differently:
$
  (x_j \succ_{\theta^{(1)}}^X x_k) \;\Leftrightarrow\;
  (x_k \succ_{\theta^{(2)}}^X x_j).
$
Let $s_i(\theta_f, X)$
be the number of contested pairs on which $\theta_f$'s ranking sides
with voter type $i$.
Voter type $i$'s
disagreement-restricted individual proportionality level $\widetilde{\textsf{IP}}_i$ is the
ratio
\[
  \widetilde{\textsf{IP}}_i(\theta_f, X) \;=\;
  \frac{s_i(\theta_f, X)}{\alpha^{(i)} \cdot \bigl|\bigl\{\{x_j, x_k\} \subseteq X : \{x_j, x_k\}
  \text{ is contested}\bigr\}\bigr|}.
\]
The denominator is voter type $i$'s proportional share of the contested
pairs in this batch.  The numerator is the number of
contested pairs the rule actually awards to voter type $i$.
The expected value of $\widetilde{\textsf{IP}}_i$ is plotted in \Cref{fig:exp2-synth-disagreement}.

\begin{figure}[t]
  \centering
  \resizebox{\textwidth}{!}{\input{Figures_final/exp2_synth_disagreement}}
  \caption{Experiment on two voter types.  Setup as \Cref{fig:exp2-synth}, but with an adapted definition of the individual proportionality level: each voter type demands  an $\alpha$ fraction of the pairs on which the two voter types disagree.  }
  \label{fig:exp2-synth-disagreement}
\end{figure}

\subsection{Additional Information on Datasets}\label{app:dataset_info}

\begin{figure*}[h]
  \centering
  \begin{tikzpicture}
    \node[font=\footnotesize, inner sep=0pt] {%
      \tikz[baseline=-0.5ex]\draw[exp8color0, line width=1pt] (0,0) -- (0.35,0);~arithmetic mean\quad
      \tikz[baseline=-0.5ex]\draw[exp8color1, line width=1pt] (0,0) -- (0.35,0);~angular mean\quad
      \tikz[baseline=-0.5ex]\draw[exp8color2, line width=1pt] (0,0) -- (0.35,0);~geometric median\quad
      \tikz[baseline=-0.5ex]\draw[exp8color3, line width=1pt] (0,0) -- (0.35,0);~Borda\quad
      \tikz[baseline=-0.5ex]\draw[exp8color4, line width=1pt] (0,0) -- (0.35,0);~PSB
    };
  \end{tikzpicture}

  \vspace{4pt}

  \resizebox{0.8\textwidth}{!}{\input{Figures_final/exp8_multi_subset_propcol}}
  \caption{$\pcol$ on heterogeneity-filtered voter
    subsamples (std of pairwise angular distances $\geq 65^\circ$),
    as a function of subsample size $n$, on the 2D variants of the
    three main-body datasets.  Counterpart of \Cref{fig:subsample}.
    Shaded bands show the inter-quartile range across $100$
    subsamples per $n$.}
  \label{fig:subsample-pcol}
\end{figure*}

\subsubsection{Dataset Description}\label{app:dd} 

Across all five datasets, each voter's scoring vector is learned from binary pairwise comparisons using the same procedure. Given a voter~$i$ who answered~$T_i$ comparisons between alternatives~$a^{(t)}$ and~$b^{(t)}$, we fit a per-voter logistic regression
$
P\bigl(\text{voter } i \text{ chooses } a^{(t)}\bigr) = \sigma\!\bigl(\beta_i^\top (x_{a^{(t)}} - x_{b^{(t)}})\bigr),
$
where $x_a \in \mathbb{R}^d$ is the feature vector of alternative~$a$, and~$\sigma$ is the logistic function. This corresponds to a Thurstone--Mosteller random utility model with linear utilities $u_i(a) = \beta_i^\top x_a$. Feature differences are standardized per voter (zero mean, unit variance) before fitting, and we use~$\ell_2$ regularization with default strength~$C=1$ (scikit-learn's \texttt{LogisticRegression}). The learned coefficient vector~$\beta_i \in \mathbb{R}^d$ is the voter's scoring vector.

The five datasets differ in the alternatives
that are being compared, and the feature set used.  

\paragraph{\textsc{MMachine} (Moral Machine)~\citep{awad2018moral,noothigattu2018voting}.}
$n = 137$ voters, $d = 24$ features.  Users compared autonomous-vehicle
moral dilemmas in which a self-driving car must choose which of two
groups of pedestrians (or passengers) to prioritize when an accident is
unavoidable.  Each alternative is encoded as a vector of demographic
and contextual attributes of the corresponding group. Following \citet{DBLP:conf/aies/KimKAADTR18}, we fit one scoring vector per country by pooling all responses from that country; we only include countries for which at least $1000$ responses have been recorded.  
The $24$ features are: $4$ scenario-level attributes (\textsf{Intervention},
\textsf{Barrier}, \textsf{CrossingSignal}, \textsf{NumberOfCharacters})
and $20$ pedestrian-group composition attributes
(\textsf{Man}, \textsf{Woman}, \textsf{Pregnant}, \textsf{Stroller},
\textsf{OldMan}, \textsf{OldWoman}, \textsf{Boy}, \textsf{Girl},
\textsf{Homeless}, \textsf{LargeWoman}, \textsf{LargeMan}, \textsf{Criminal},
\textsf{MaleExecutive}, \textsf{FemaleExecutive},
\textsf{FemaleAthlete}, \textsf{MaleAthlete},
\textsf{FemaleDoctor}, \textsf{MaleDoctor}, \textsf{Dog}, \textsf{Cat}).
The 2D variant \textsc{MMachine-2D} uses the feature pair
\textsf{LargeWoman} and \textsf{MaleExecutive}.

\paragraph{\textsc{Kidney} (Kidney)~\citep{DBLP:conf/aaai/KeswaniCNCHBS26}.}
$n = 404$ voters, $d = 8$ features.  Voters compared hypothetical
kidney-allocation candidates described by clinical and behavioral
attributes, choosing which of two patients should receive an available
organ.
The $8$ features are: \textsf{dep} (number of dependents),
\textsf{life} (life expectancy with the transplant), \textsf{years\_waiting}
(years on the waiting list), \textsf{obesity} (BMI status), \textsf{alco}
(prior alcoholism), \textsf{work\_hours} (weekly work hours), \textsf{crim}
(prior criminal record), and \textsf{reject\_chance} (estimated
probability of organ rejection).  The 2D variant \textsc{Kidney-2D}
uses the feature pair \textsf{alco} and \textsf{crim}.

\paragraph{\textsc{FoodRescue} (412 Food Rescue)~\citep{lee2019webuildai}.}
$n = 19$ voters, $d = 7$ features.  Voters compared candidate recipient
organizations for food donations, deciding which of two organizations
should be prioritized for an incoming rescue.
The $7$ features are: \textsf{size} (organization size),
\textsf{access} (alternative-food access score),
\textsf{income} (median income of the served population),
\textsf{poverty} (poverty rate of the served population),
\textsf{last\_donation} (time since last received donation),
\textsf{total} (total donations received historically), and
\textsf{dist} (delivery distance).  The 2D variant
\textsc{FoodRescue-2D} uses the feature pair \textsf{last\_donation}
and \textsf{total}.

\paragraph{\textsc{KidneyStudy1} \cite{DBLP:conf/aies/BoerstlerKCBCHS24}.}
$n = 17$ voters, $d = 4$ features.  An earlier pilot study on
kidney-allocation preferences with a smaller voter population and a
narrower feature set than the main \textsc{Kidney} dataset.  The
$4$ features are: \textsf{alco}, \textsf{dep}, \textsf{life}, and
\textsf{crim} (definitions as in \textsc{Kidney}).  The 2D variant
\textsc{KidneyStudy1-2D} uses the feature pair \textsf{dep} and
\textsf{life}.

\begin{figure}[t]
  \centering
  \resizebox{0.95\textwidth}{!}{\input{Figures_final/exp9_circle_viz}}
  \caption{Voter and mechanism positions on $S^1$ for the 2D variants
    of all five datasets.  Black dots: voters.  Radial lines:
    arithmetic mean (blue), angular mean (orange), geometric median
    (green).  Shades outside the unit circle: histograms of
    the effective direction $\theta_{\mathrm{eff}}(X)$
    recovered from the rankings produced by Borda (red) and PSB
    (purple) over $R = 2000$ batches per dataset.}
  \label{fig:exp9-subset-viz}
\end{figure}

\paragraph{\textsc{KidneyStudy2} \cite{DBLP:conf/aies/BoerstlerKCBCHS24}.}
$n = 43$ voters, $d = 5$ features.  A follow-up kidney-allocation
preference study with a partially overlapping but distinct feature
set.  The $5$ features
are: \textsf{elderlyDep} (number of elderly dependents),
\textsf{lifeYearsGained} (expected life-years gained from the
transplant), \textsf{obesity}, \textsf{weeklyWorkhours}, and
\textsf{yearsWaiting}.  The 2D variant \textsc{KidneyStudy2-2D}
uses the feature pair \textsf{obesity} and \textsf{weeklyWorkhours}.

\paragraph{Dataset Properties}

\begin{table}[t]
\centering
\caption{Heterogeneity of voter types' scoring vectors and pairwise angular distances between aggregation rules, on the two kidney studies. All values are in degrees.}
\label{tab:heterogeneity-kidneystudies}
\resizebox{0.7\linewidth}{!}{\begin{tabular}{l|cc|cc}
 & \multicolumn{2}{c|}{\textsc{KidneyStudy1}} & \multicolumn{2}{c}{\textsc{KidneyStudy2}} \\
 & Orig. & 2D & Orig. & 2D \\[0.25em]
\hline\hline
\multicolumn{5}{l}{\textbf{Heterogeneity of $\theta^{(i)}$}} \\
max pairwise angle between $\theta^{(i)}$ & 85.26 & 176.91 & 88.00 & 179.86 \\
avg of pairwise angular distances between $\theta^{(i)}$ & 48.72 & 53.29 & 39.87 & 49.58 \\
stdev of pairwise angular distances between $\theta^{(i)}$ & 20.07 & 46.13 & 17.73 & 40.51 \\
\hline
\multicolumn{5}{l}{\textbf{Rule coincidence}} \\
$d_\measuredangle(\theta_{\mathrm{ang}}, \theta_{\mathrm{arith}})$ & 0.33 & 3.23 & 0.66 & 2.30 \\
$d_\measuredangle(\theta_{\mathrm{ang}}, \theta_{\mathrm{med}})$ & 3.38 & 0.55 & 6.41 & 4.73 \\
$d_\measuredangle(\theta_{\mathrm{arith}}, \theta_{\mathrm{med}})$ & 3.06 & 2.68 & 5.77 & 2.43 \\
\end{tabular}}
\end{table}

\Cref{fig:exp9-subset-viz} visualizes the voter and mechanism
positions on $S^1$ for the 2D variants of all five datasets.
Visualizing Borda and PSB on $S^1$ requires care.  Both rules produce
an arbitrary ranking per batch, not a single direction, and the ranking they
produce on a given batch need not be induced by any scoring
vector.  To place them on the same circle as the fixed linear rules, we
recover for each batch $X$ an \emph{effective direction}
$\theta_{\mathrm{eff}}(X) \in S^1$, defined as
\[
  \theta_{\mathrm{eff}}(X)
  \;\in\;
  \argmin_{\theta \in S^1}\;
  d_{KT}\!\bigl(\succ^{X}_\theta,\; \succ^X_f\bigr),
\]
where $\succ^X_f$ is the ranking produced by the rule
$f \in \{\text{Borda}, \text{PSB}\}$ on $X$ and
$d_{KT}$ is Kendall-tau distance. 

\subsubsection{Results on Kidney Study 1 and 2}\label{app:kid}

We replicate the main-body results
(\Cref{tab:overview-minimal-body}, \Cref{fig:subsample}, \Cref{fig:convergence})
on the 2D variants of \textsc{KidneyStudy1} and \textsc{KidneyStudy2}.
\Cref{tab:overview-minimal-kid} is the analogue of
\Cref{tab:overview-minimal-body}, \Cref{fig:convergence-appendix}
the analogue of \Cref{fig:convergence}, and \Cref{fig:subsample-appendix} the
analogue of \Cref{fig:subsample}.

\input{Figures_final/overview_table_minimal_appendix}

\definecolor{exp8color0}{rgb}{0.1216,0.4667,0.7059}
\definecolor{exp8color1}{rgb}{1.0000,0.4980,0.0549}
\definecolor{exp8color2}{rgb}{0.1725,0.6275,0.1725}
\definecolor{exp8color3}{rgb}{0.8392,0.1529,0.1569}
\definecolor{exp8color4}{rgb}{0.5804,0.4039,0.7412}

\begin{figure*}[h]
  \centering
  \begin{tikzpicture}
    \node[font=\footnotesize, inner sep=0pt] {%
      \tikz[baseline=-0.5ex]\draw[exp8color0, line width=1pt] (0,0) -- (0.35,0);~arithmetic mean\quad
      \tikz[baseline=-0.5ex]\draw[exp8color1, line width=1pt] (0,0) -- (0.35,0);~angular mean\quad
      \tikz[baseline=-0.5ex]\draw[exp8color2, line width=1pt] (0,0) -- (0.35,0);~geometric median\quad
      \tikz[baseline=-0.5ex]\draw[exp8color3, line width=1pt] (0,0) -- (0.35,0);~Borda\quad
      \tikz[baseline=-0.5ex]\draw[exp8color4, line width=1pt] (0,0) -- (0.35,0);~PSB
    };
  \end{tikzpicture}

  \vspace{4pt}

  \begin{minipage}[f]{0.49\textwidth}
    \centering
    \resizebox{\linewidth}{!}{\input{Figures_final/exp10_subset_appendix}}
    \caption{$\pind$ (solid) and $\pcol$ (dashed) on the full electorates, as a function of batch size $m$. Counterpart of \Cref{fig:convergence} for the 2D variants
      of the two kidney study datasets.}
    \label{fig:convergence-appendix}
  \end{minipage}\hfill
  \begin{minipage}[t]{0.49\textwidth}
    \centering
    \resizebox{\linewidth}{!}{\input{Figures_final/exp8_multi_subset_appendix}}
   \caption{Counterpart of \Cref{fig:subsample} for the two kidney
  study datasets: $\pind$ on heterogeneity-filtered voter subsamples
  (std of pairwise angular distances $\geq 65^\circ$), as a function
  of subsample size $n$, with shaded bands showing the inter-quartile
  range across 100 subsamples per $n$.  The maximum $n$ is smaller
  than in \Cref{fig:subsample} because the kidney-study electorates
  are less diverse: at large $n$, no subsample of voters spans a wide
  enough arc on $S^1$ to satisfy the filter.}
    \label{fig:subsample-appendix}
  \end{minipage}
\end{figure*}

\begin{figure*}[h]
  \centering
  \begin{tikzpicture}
    \node[font=\footnotesize, inner sep=0pt] {%
      \tikz[baseline=-0.5ex]\draw[exp8color0, line width=1pt] (0,0) -- (0.35,0);~arithmetic mean\quad
      \tikz[baseline=-0.5ex]\draw[exp8color1, line width=1pt] (0,0) -- (0.35,0);~angular mean\quad
      \tikz[baseline=-0.5ex]\draw[exp8color2, line width=1pt] (0,0) -- (0.35,0);~geometric median\quad
      \tikz[baseline=-0.5ex]\draw[exp8color3, line width=1pt] (0,0) -- (0.35,0);~Borda\quad
      \tikz[baseline=-0.5ex]\draw[exp8color4, line width=1pt] (0,0) -- (0.35,0);~PSB
    };
  \end{tikzpicture}

  \vspace{4pt}

  \resizebox{\textwidth}{!}{\input{Figures_final/exp10_full_all}}
  \caption{$\pind$ (solid) and $\pcol$ (dashed) on the full feature
    sets of all five datasets, as a function of batch size $m$.  As
    in \Cref{fig:convergence}, the per-batch gap shrinks rapidly as
    $m$ grows.  The five rules are nearly indistinguishable: the
    full-feature electorates are not heterogeneous enough to separate
    them, consistent with the homogeneity statistics in
    \Cref{tab:heterogeneity-main-t}.}
  \label{fig:convergence-full}
\end{figure*}

\subsubsection{Results on Full Datasets} \label{app:original-datasets}

\Cref{fig:convergence-full} confirms this in the same convergence-in-$m$ plot
used in \Cref{fig:convergence}.  We omit a figure parallel to \Cref{fig:subsample} concerning heterogeneity-filtered
subsamples: at the original feature dimensions, voters'
scoring vectors are too clustered for the $65^\circ$ pairwise-distance
filter  to admit any subsample, so the
filtered analysis cannot be run on the original feature sets.

\subsection{Ablations}\label{app:ablations}
Our empirical results presented in the main body use uniform voter
weights $\alpha^{(i)} = 1/n$ and a uniform distribution on $\mathbb{S}^{d-1}$.  We now proceed to relax each in turn.

\subsubsection{Non-Uniform Weights}\label{app:weights}

In \Cref{sec:empirical} we used uniform voter weights.  Here we draw
weights from a $\mathrm{Dirichlet}(1)$ distribution---that is,
uniformly at random from the simplex of weight vectors---once per
dataset.  
\Cref{tab:overview-minimal-randw} reports $\pind$ and $\pcol$ for
each rule on each of the five datasets and their 2D variants under
random weights, in the same format as \Cref{tab:overview-minimal-body}
in the main body.

\input{Figures_final/overview_table_minimal_randw}

\Cref{fig:convergence-full-randw} repeats the convergence-in-$m$
analysis of \Cref{fig:convergence} on the full feature sets of
all five datasets under random weights.

\begin{figure*}[h]
  \centering
  \begin{tikzpicture}
    \node[font=\footnotesize, inner sep=0pt] {%
      \tikz[baseline=-0.5ex]\draw[exp8color0, line width=1pt] (0,0) -- (0.35,0);~arithmetic mean\quad
      \tikz[baseline=-0.5ex]\draw[exp8color1, line width=1pt] (0,0) -- (0.35,0);~angular mean\quad
      \tikz[baseline=-0.5ex]\draw[exp8color2, line width=1pt] (0,0) -- (0.35,0);~geometric median\quad
      \tikz[baseline=-0.5ex]\draw[exp8color3, line width=1pt] (0,0) -- (0.35,0);~Borda\quad
      \tikz[baseline=-0.5ex]\draw[exp8color4, line width=1pt] (0,0) -- (0.35,0);~PSB
    };
  \end{tikzpicture}

  \vspace{4pt}

  \resizebox{\textwidth}{!}{\input{Figures_final/exp10_full_all_randw}}
  \caption{Random-weights variant of \Cref{fig:convergence}.
    $\pind$ (solid) and $\pcol$ (dashed) on the full feature sets of
    all five datasets, as a function of batch size $m$, with voter
    weights drawn once per dataset from a $\mathrm{Dirichlet}(1)$
    distribution.  The qualitative picture is unchanged from the
    uniform-weights case: the per-batch gap closes rapidly as $m$
    grows and the five rules remain nearly indistinguishable on the
    full-feature electorates.}
  \label{fig:convergence-full-randw}
\end{figure*}

The qualitative conclusions of \Cref{sec:empirical} carry over.  In
\Cref{tab:overview-minimal-randw}, the relative ordering of rules
within each dataset and the relative ordering of datasets are
unchanged from \Cref{tab:overview-minimal-body}: on the full feature
sets, all rules satisfy IP comfortably and are nearly
indistinguishable; on the 2D variants, the arithmetic mean and the
geometric median fall below the entitlement on \textsc{Kidney-2D}
and \textsc{MMachine-2D}, while the angular mean and PSB satisfy it.
Absolute IP values do shift -- because the entitlement denominator
$\alpha^{(i)} \binom{m}{2}$ changes with the weight draw.  In \Cref{fig:convergence-full-randw}, the
per-batch gap $\pind - \pcol$ closes at the same rate as in
\Cref{fig:convergence-full}.

\subsubsection{Non-Uniform Item Distribution} \label{app:acg}
The proportionality guarantee for the angular mean
(\Cref{thm:thang_prop}) holds under spherically symmetric batch
distributions $\mathcal{D}_\circ$.  In this section we probe what
happens when this assumption is dropped, by drawing items from an
\emph{angular central Gaussian} (ACG) distribution
\citep{tyler1987statistical} on $S^{d-1}$ parameterized by a
concentration axis $v \in S^{d-1}$ and an anisotropy parameter
$\lambda \in (0, 1]$:
\[
  z \sim \mathcal{N}\bigl(0,\; \lambda \cdot I + (1 - \lambda)\, v v^\top \bigr),
  \qquad x = z / \|z\|.
\]
The covariance has eigenvalue $1$ along $v$ and $\lambda$ along every
direction orthogonal to $v$; $\lambda = 1$ recovers the isotropic
Gaussian (and hence the uniform distribution on $S^{d-1}$), while
$\lambda \to 0$ collapses items onto $\{+v, -v\}$.  We probe two
settings: a synthetic two-voter setup and the real-world 2D
electorates.

\paragraph{Synthetic two-voter ACG sweep.}
\Cref{fig:exp7-synth} extends \Cref{fig:exp2-synth} to
non-spherical sampling.  We fix $\alpha^{(1)} = 0.7$ and sweep five
angles $\phi \in \{45^\circ, 90^\circ, 135^\circ, 150^\circ,
175^\circ\}$ between the two voters' scoring vectors.  Each panel
column corresponds to one of five concentration axes $v$ placed at
geodesic positions $t \in \{0, 0.25, 0.5, 0.75, 1\}$ between
$\theta^{(1)}$ and $\theta^{(2)}$, and each row to one of the five
aggregation rules.  Within each panel, $\textsf{IP}_i$ for voter type
$1$ (solid) and voter type $2$ (dashed) is plotted as a function of
$\lambda$.

\begin{figure*}[h]
  \centering
  \resizebox{\textwidth}{!}{\input{Figures_final/exp7_synth}}
  \caption{Per-voter $\textsf{IP}_i$ on the two-voter profile with
    items drawn from an angular central Gaussian (ACG) batch
    distribution.  Anisotropy $\lambda \in [0.02, 1]$ on the $x$-axis
    (log scale, decreasing left-to-right); concentration axis $v$
    placed at geodesic positions $t \in \{0, 0.25, 0.5, 0.75, 1\}$
    between $\theta^{(1)}$ and $\theta^{(2)}$ (columns); rows are
    mechanisms.  Solid lines: voter~1 ($\alpha^{(1)} = 0.7$); dashed:
    voter~2.  Colors: $\phi \in \{45^\circ, 90^\circ, 135^\circ,
    150^\circ, 175^\circ\}$.  $\lambda = 1$ recovers spherical
    symmetry; $\lambda \to 0$ collapses items onto $\{+v, -v\}$.}
  \label{fig:exp7-synth}
\end{figure*}

\paragraph{Real-world ACG sweep.}
We replicate the same sweep on the 2D variants of all five real-world
datasets.  Because each electorate now contains many voters in
specific positions on $S^1$ (rather than a stylized two-voter
placement), the natural reference axes for $v$ are no longer the
voters themselves.  We use four fixed compass directions
$v \in \{0^\circ, 45^\circ, 90^\circ, 135^\circ\}$ on $S^1$, which
cover the half-circle without redundancy under the $v \sim -v$
symmetry of the ACG. We show the results in \Cref{fig:exp-acg-real}.

\begin{figure*}[h]
  \centering
  \resizebox{\textwidth}{!}{\input{Figures_final/exp_acg_real}}
  \caption{Real-world counterpart of \Cref{fig:exp7-synth}.  Long-run
    individual proportionality $\pind$ on the 2D variants of all five
    real-world datasets, as a function of the ACG anisotropy
    parameter $\lambda$ (log scale, decreasing left-to-right) and
    four concentration axes $v \in \{0^\circ, 45^\circ, 90^\circ,
    135^\circ\}$ on $S^1$.  Mechanism = color; axis = linestyle.
    Voter weights are uniform $\alpha^{(i)} = 1/n$.  At $\lambda = 1$
    axis direction has no effect and values match the corresponding
    rows of \Cref{tab:overview-minimal-body}.  The horizontal dotted line
    marks the proportional entitlement $\pind = 1$.  Note the log
    $y$-axis: PSB lives one to two orders of magnitude above the
    fixed linear rules on \textsc{Kidney-2D} and
    \textsc{MMachine-2D}.}
  \label{fig:exp-acg-real}
\end{figure*}

 We take
this as additional empirical evidence that the angular mean's
proportionality is robust to substantial departures from spherical
symmetry on the real-world electorates studied in
\Cref{sec:empirical}.

\end{document}

%% file: Figures/angle_agreement.tex
        \begin{tikzpicture}[scale=1.2]
        
        % Define coordinates
        \coordinate (O) at (0,0);
        \coordinate (theta1) at (60:1); % \theta^{(1)}
        \coordinate (theta2) at (20:1); % \theta^{(2)}

        % Calculate angles for boundaries
        \pgfmathanglebetweenpoints{\pgfpointanchor{O}{center}}{\pgfpointanchor{theta1}{center}}
        \let\angleone\pgfmathresult
        \pgfmathanglebetweenpoints{\pgfpointanchor{O}{center}}{\pgfpointanchor{theta2}{center}}
        \let\angletwo\pgfmathresult

        % Fill agreement regions (double wedge)
        \fill[violet!40, opacity=0.5] (O) 
          -- ({90+\angleone}:1) arc ({90+\angleone}:{270+\angletwo}:1) -- cycle;
        \fill[orange!40, opacity=0.5] (O) 
          -- ({270+\angleone}:1) arc ({270+\angleone}:{450+\angletwo}:1) -- cycle;

        % Draw the unit circle
        \draw[black] (O) circle (1);
        \fill (0,0) circle (0.02);

        % Draw the axes (lightly dashed)
        %\draw[->, dashed, black!40] (-1.2, 0) -- (1.2, 0) node[right] {$x$};
        %\draw[->, dashed, black!40] (0, -1.2) -- (0, 1.2) node[above] {$y$};

        % Draw orthogonal boundaries (hyperplanes)
        \draw[red!60!black, dashed, thick] ({90+\angleone}:1.15) -- ({270+\angleone}:1.15);
        % \draw[blue!80!black, dashed, thick] ({90+\angletwo}:1.15) -- ({270+\angletwo}:1.15);
        \draw[teal!80!black, dashed, thick] ({90+\angletwo}:1.15) -- ({270+\angletwo}:1.15);

        % Angle arc between the vectors
        \pic [draw=black, fill=gray!30, angle eccentricity=1.3, angle radius=0.4cm, "\color{black}${\scriptstyle\dang(\theta,\psi)}$" {shift={(0.3cm, -0.35cm)}}, opacity=1] {angle = theta2--O--theta1};
        
        % Draw the vectors
        \draw[->, thick, red!60!black] (O) -- (theta1) node[above=.2, right] {$\theta$};
        % \draw[->, thick, blue!80!black]  (O) -- (theta2) node[right] {$\psi$};
        \draw[->, thick, teal!80!black]  (O) -- (theta2) node[right] {$\psi$};

        % Labels for the regions
        %\node at (-0.6, -0.6) {$+$ agreement};
        %\node at (0.6, 0.6) {$-$ agreement};

    \end{tikzpicture}

%% file: Figures/proportionality_proof.tex
\begin{tikzpicture}[scale=1.0, >=stealth]

        % ==========================================

        % LEFT DIAGRAM: The Global Sphere

        % ==========================================

        \begin{scope}[xshift=-5.5cm, yshift=0cm]

            % Sphere outline (Radius 1.75)

            \draw[thick] (0,0) circle (1.75);

            % ---------------------------------------------------------

            % POINTS ON THE BACK OF THE SPHERE

            % ---------------------------------------------------------

            \filldraw[gray!50] (0.84, 0.77) circle (1.5pt);

            \filldraw[gray!50] (-1.33, -0.07) circle (1.5pt);

            % Equator (Great Circle) 

            \draw[thick] (-1.75,0) arc (180:360:1.75 and 0.525);

            \draw[dashed, gray] (1.75,0) arc (0:180:1.75 and 0.525);

            % G_i label sitting safely under the arc

            \node at (-1.0, -0.65) {$G_i$};

            % ---------------------------------------------------------

            % PRIMARY COORDINATES

            % ---------------------------------------------------------

            \coordinate (Tstar) at (-1.75, 0);            

            \coordinate (MinusTstar) at (1.75, 0);        

            \coordinate (Ti) at (0, -0.525);              

            \coordinate (Tprime) at (0.7, -0.48);        

            \coordinate (Tj) at (-0.91, 0.84);

            % ==========================================

            % 3D TANGENT PLANE AT THETA_ANG

            % ==========================================

            \filldraw[fill=pink!30, draw=pink!80!black, thick, opacity=0.6, join=round] 

                (-2.15, 1.2) -- (-1.35, 0.8) -- (-1.35, -1.2) -- (-2.15, -0.8) -- cycle;

            \draw[pink!80!black, dashed, opacity=0.5] (-2.15, 0.2) -- (-1.35, -0.2);

            \draw[pink!80!black, dashed, opacity=0.5] (-1.75, 1.0) -- (-1.75, -1.0);

            % ---------------------------------------------------------

            % GEODESIC ARCS & HIGHLIGHTS

            % ---------------------------------------------------------

            \draw[line width=8pt, yellow!50, cap=round] (Tj) to[bend left=8] (Tprime);

            \draw[thick] (Tj) to[bend left=8] (Tprime);

            \draw[thick] (Tj) to[bend left=10] (MinusTstar);

            % ---------------------------------------------------------

            % GAMMA_J ANGLE

            % ---------------------------------------------------------

            \draw[thick] ([shift=(168:0.35)]MinusTstar) arc (168:225:0.35);

            \node at ([shift=(196:0.55)]MinusTstar) {$\gamma_j$};

            % ---------------------------------------------------------

            % POINTS ON THE FRONT OF THE SPHERE

            % ---------------------------------------------------------

            \filldraw[gray] (0.21, 0.56) circle (1.5pt);

            \filldraw[gray] (-0.14, 1.26) circle (1.5pt);

            \filldraw[gray] (0.56, -1.05) circle (1.5pt);

            % Label shifted significantly left (left=12pt) to clear the tangent box

            \filldraw[magenta] (Tstar) circle (2pt) node[left=12pt] {$\theta_{ang}$};

            \filldraw[magenta] (MinusTstar) circle (2pt) node[right=2pt] {$-\theta_{ang}$};

            \filldraw[green!60!black] (Ti) circle (2pt) node[below=2pt] {$\theta^{(i)}$};

            \filldraw[cyan] (Tprime) circle (2pt) node[below right=-0.1pt] {$\theta'$};

            \filldraw[violet] (Tj) circle (2pt) node[above=1pt] {$\theta^{(j)}$}; 

        \end{scope}

        % ==========================================

        % MIDDLE DIAGRAM: Euclidean Triangle

        % ==========================================

        % Centered horizontally at x=0 via xshift=-2.0, Centered vertically via yshift=-0.9

        \begin{scope}[xshift=-2.0cm, yshift=-0.9cm]

            \coordinate (E_Tj) at (0, 1.8);         % Shifted fully left

            \coordinate (E_Tprime) at (1.2, 0);     % Shifted right relative to Tj, creating the ~120 deg angle

            \coordinate (E_Tstar) at (4.0, 0);

            % Yellow Highlight 

            \draw[line width=8pt, yellow!50, cap=round] (E_Tj) to (E_Tprime);

            % Triangle Edges

            \draw[thick] (E_Tstar) -- node[below, font=\small] {$\pi - d_{\measuredangle}(\theta_{ang}, \theta')$}(E_Tprime);

            \draw[thick] (E_Tstar) -- node[sloped, above, font=\small] {$\pi - d_{\measuredangle}(\theta_{ang}, \theta^{(j)})$} (E_Tj);

            \draw[thick] (E_Tj) -- node[sloped, below, font=\small] {$> d_{\measuredangle}(\theta', \theta^{(j)})$} (E_Tprime);

            % Acute Interior Angle Gamma_j (Calculates cleanly inside the new geometry)

            \pic [draw, thick, "$\gamma_j$", angle radius=0.55cm, angle eccentricity=1.5] {angle = E_Tj--E_Tstar--E_Tprime};

            % Points and Labels

            \filldraw[magenta] (E_Tstar) circle (2pt) node[below right =2pt] {$-\theta_{ang}$};

            \filldraw[cyan] (E_Tprime) circle (2pt) node[below=2pt] {$\theta'$};

            \filldraw[violet] (E_Tj) circle (2pt) node[above=2pt] {$\theta^{(j)}$}; 

        \end{scope}

        % ==========================================

        % RIGHT DIAGRAM: 1st Order Conditions

        % ==========================================

        % Centered horizontally at x=+5.5 via xshift=5.25, Centered vertically via yshift=-0.4

        \begin{scope}[xshift=5.5cm, yshift=-0.4cm]
            % Modified: replaced \draw with \filldraw and added matching styles
            \filldraw[fill=pink!30, draw=pink!80!black, thick, opacity=0.6] (-1.3, -0.7) rectangle (1.8, 1.5);
            
            \coordinate (O) at (0, 0);
            
            \draw[->, very thick, green!60!black] (O) -- (1.5, 0.05);
            \draw[->, very thick, violet] (O) -- (0.8, 1.2);
            
            % Gray Balancing Vectors 
            \draw[->, thick, gray] (O) -- (-0.8, 0.7);   
            \draw[->, thick, gray] (O) -- (-0.5, 0.65);  
            \draw[->, thick, gray] (O) -- (-0.8, 0.4);  
            \draw[->, thick, gray] (O) -- (-0.25, 1.0);  
            \draw[->, thick, gray] (O) -- (-0.75, -0.4); 
            \draw[->, thick, gray] (O) -- (-1.0, -0.2); 
            \draw[->, thick, gray] (O) -- (-0.4, -0.5); 
            
            % Label moved to the right so it completely clears the down/left gray arrows
            \filldraw[magenta] (O) circle (2pt) node[below right=2pt] {$\theta_{ang}$};
        \end{scope}

\end{tikzpicture}

%% file: Figures_final/double-right.tex
% Auto-generated by exp10_propind_propcol_vs_m.py --pgfplots.
% Required preamble in the host document:
%   \usepackage{tikz}
%   \usepackage{pgfplots}
%   \usepgfplotslibrary{groupplots}
%   \pgfplotsset{compat=1.18}
%
\definecolor{exp10color0}{rgb}{0.1216,0.4667,0.7059}
\definecolor{exp10color1}{rgb}{1.0000,0.4980,0.0549}
\definecolor{exp10color2}{rgb}{0.1725,0.6275,0.1725}
\definecolor{exp10color3}{rgb}{0.8392,0.1529,0.1569}
\definecolor{exp10color4}{rgb}{0.5804,0.4039,0.7412}

\begin{tikzpicture}
  \begin{groupplot}[
    group style={
      group size=3 by 1,
      horizontal sep=22pt,
    },
    width=4.4cm,
    height=4.2cm,
    xmode=log,
    log basis x=10,
    xtick={5,10,20,50,100,200},
    xticklabels={5,10,20,50,100,200},
    xlabel={$m$},
    title style={font=\small,yshift=-2pt},
    label style={font=\small},
    tick label style={font=\footnotesize},
    minor y tick num=4,
    grid=both,
    minor grid style={dotted, gray!20},
    major grid style={dotted, gray!40},
  ]
    \nextgroupplot[title={\textsc{MMachine}-2D},
    ylabel={$\pind$ / $\pcol$},
    ]
      \addplot[color=exp10color0, line width=0.8pt, mark=*, mark size=1.2pt, mark options={solid}]
        coordinates {(5,0.3699) (10,0.3851) (20,0.3594) (50,0.3714) (100,0.3696) (200,0.3680)};
      \addplot[color=exp10color0, line width=0.8pt, dashed, mark=square*, mark size=1.0pt, mark options={solid}]
        coordinates {(5,0.0617) (10,0.2299) (20,0.3425) (50,0.3714) (100,0.3696) (200,0.3680)};
      \addplot[color=exp10color1, line width=0.8pt, mark=*, mark size=1.2pt, mark options={solid}]
        coordinates {(5,1.9386) (10,1.9332) (20,1.9494) (50,1.9527) (100,1.9578) (200,1.9516)};
      \addplot[color=exp10color1, line width=0.8pt, dashed, mark=square*, mark size=1.0pt, mark options={solid}]
        coordinates {(5,0.2466) (10,0.9620) (20,1.5672) (50,1.9153) (100,1.9569) (200,1.9516)};
      \addplot[color=exp10color2, line width=0.8pt, mark=*, mark size=1.2pt, mark options={solid}]
        coordinates {(5,0.9111) (10,0.9072) (20,0.9341) (50,0.9527) (100,0.9530) (200,0.9490)};
      \addplot[color=exp10color2, line width=0.8pt, dashed, mark=square*, mark size=1.0pt, mark options={solid}]
        coordinates {(5,0.1302) (10,0.4947) (20,0.8270) (50,0.9525) (100,0.9530) (200,0.9490)};
      \addplot[color=exp10color3, line width=0.8pt, mark=*, mark size=1.2pt, mark options={solid}]
        coordinates {(5,5.2745) (10,4.5986) (20,3.7354) (50,2.8575) (100,2.4255) (200,2.1808)};
      \addplot[color=exp10color3, line width=0.8pt, dashed, mark=square*, mark size=1.0pt, mark options={solid}]
        coordinates {(5,1.4317) (10,2.6456) (20,3.0028) (50,2.6635) (100,2.3123) (200,2.0958)};
      \addplot[color=exp10color4, line width=0.8pt, mark=*, mark size=1.2pt, mark options={solid}]
        coordinates {(5,31.0510) (10,29.4794) (20,29.8801) (50,30.1690) (100,30.2163) (200,30.1529)};
      \addplot[color=exp10color4, line width=0.8pt, dashed, mark=square*, mark size=1.0pt, mark options={solid}]
        coordinates {(5,21.2898) (10,25.7530) (20,28.3478) (50,29.5156) (100,29.8127) (200,29.9377)};
    \nextgroupplot[title={\textsc{FoodRescue}-2D}]
      \addplot[color=exp10color0, line width=0.8pt, mark=*, mark size=1.2pt, mark options={solid}]
        coordinates {(5,1.9171) (10,1.9234) (20,1.8941) (50,1.9009) (100,1.9012) (200,1.9007)};
      \addplot[color=exp10color0, line width=0.8pt, dashed, mark=square*, mark size=1.0pt, mark options={solid}]
        coordinates {(5,1.6007) (10,1.9021) (20,1.8940) (50,1.9009) (100,1.9012) (200,1.9007)};
      \addplot[color=exp10color1, line width=0.8pt, mark=*, mark size=1.2pt, mark options={solid}]
        coordinates {(5,2.7626) (10,2.7951) (20,2.8175) (50,2.8166) (100,2.8161) (200,2.8119)};
      \addplot[color=exp10color1, line width=0.8pt, dashed, mark=square*, mark size=1.0pt, mark options={solid}]
        coordinates {(5,1.9798) (10,2.6323) (20,2.7925) (50,2.8164) (100,2.8161) (200,2.8119)};
      \addplot[color=exp10color2, line width=0.8pt, mark=*, mark size=1.2pt, mark options={solid}]
        coordinates {(5,0.7258) (10,0.7433) (20,0.7362) (50,0.7301) (100,0.7321) (200,0.7322)};
      \addplot[color=exp10color2, line width=0.8pt, dashed, mark=square*, mark size=1.0pt, mark options={solid}]
        coordinates {(5,0.6697) (10,0.7425) (20,0.7362) (50,0.7301) (100,0.7321) (200,0.7322)};
      \addplot[color=exp10color3, line width=0.8pt, mark=*, mark size=1.2pt, mark options={solid}]
        coordinates {(5,1.5950) (10,1.7187) (20,1.7718) (50,1.7934) (100,1.8038) (200,1.8044)};
      \addplot[color=exp10color3, line width=0.8pt, dashed, mark=square*, mark size=1.0pt, mark options={solid}]
        coordinates {(5,1.4887) (10,1.7155) (20,1.7718) (50,1.7934) (100,1.8038) (200,1.8044)};
      \addplot[color=exp10color4, line width=0.8pt, mark=*, mark size=1.2pt, mark options={solid}]
        coordinates {(5,5.5983) (10,5.7084) (20,5.7547) (50,5.7864) (100,5.7678) (200,5.6956)};
      \addplot[color=exp10color4, line width=0.8pt, dashed, mark=square*, mark size=1.0pt, mark options={solid}]
        coordinates {(5,4.7481) (10,5.4275) (20,5.6279) (50,5.7073) (100,5.6895) (200,5.6610)};
    \nextgroupplot[title={\textsc{Kidney}-2D}]
      \addplot[color=exp10color0, line width=0.8pt, mark=*, mark size=1.2pt, mark options={solid}]
        coordinates {(5,0.3030) (10,0.3681) (20,0.3338) (50,0.3366) (100,0.3330) (200,0.3282)};
      \addplot[color=exp10color0, line width=0.8pt, dashed, mark=square*, mark size=1.0pt, mark options={solid}]
        coordinates {(5,0.0000) (10,0.1167) (20,0.2381) (50,0.3357) (100,0.3330) (200,0.3282)};
      \addplot[color=exp10color1, line width=0.8pt, mark=*, mark size=1.2pt, mark options={solid}]
        coordinates {(5,2.9290) (10,3.2140) (20,3.3447) (50,3.3926) (100,3.3883) (200,3.3811)};
      \addplot[color=exp10color1, line width=0.8pt, dashed, mark=square*, mark size=1.0pt, mark options={solid}]
        coordinates {(5,0.6464) (10,2.0784) (20,3.1831) (50,3.3924) (100,3.3883) (200,3.3811)};
      \addplot[color=exp10color2, line width=0.8pt, mark=*, mark size=1.2pt, mark options={solid}]
        coordinates {(5,4.7066) (10,4.7403) (20,4.4695) (50,4.5560) (100,4.5227) (200,4.5323)};
      \addplot[color=exp10color2, line width=0.8pt, dashed, mark=square*, mark size=1.0pt, mark options={solid}]
        coordinates {(5,1.0706) (10,2.6664) (20,3.9433) (50,4.5494) (100,4.5227) (200,4.5323)};
      \addplot[color=exp10color3, line width=0.8pt, mark=*, mark size=1.2pt, mark options={solid}]
        coordinates {(5,12.7260) (10,11.1728) (20,9.0347) (50,6.3524) (100,4.9091) (200,3.8523)};
      \addplot[color=exp10color3, line width=0.8pt, dashed, mark=square*, mark size=1.0pt, mark options={solid}]
        coordinates {(5,2.6260) (10,6.6660) (20,7.4240) (50,5.9360) (100,4.7534) (200,3.7666)};
      \addplot[color=exp10color4, line width=0.8pt, mark=*, mark size=1.2pt, mark options={solid}]
        coordinates {(5,80.8808) (10,75.3999) (20,75.6160) (50,75.5129) (100,75.3925) (200,75.2350)};
      \addplot[color=exp10color4, line width=0.8pt, dashed, mark=square*, mark size=1.0pt, mark options={solid}]
        coordinates {(5,52.0150) (10,65.5647) (20,71.8578) (50,74.1327) (100,74.6188) (200,74.7697)};
  \end{groupplot}
\end{tikzpicture}

%% file: Figures_final/double-left.tex
% Auto-generated by exp8_multi.py --pgfplots.  Do not edit by hand.
% Required preamble in the host document:
%   \usepackage{tikz}
%   \usepackage{pgfplots}
%   \usepgfplotslibrary{groupplots}
%   \pgfplotsset{compat=1.18}
%
\definecolor{exp8color0}{rgb}{0.1216,0.4667,0.7059}
\definecolor{exp8color1}{rgb}{1.0000,0.4980,0.0549}
\definecolor{exp8color2}{rgb}{0.1725,0.6275,0.1725}
\definecolor{exp8color3}{rgb}{0.8392,0.1529,0.1569}
\definecolor{exp8color4}{rgb}{0.5804,0.4039,0.7412}

\begin{tikzpicture}
  \begin{groupplot}[
    group style={
      group size=3 by 1,
      horizontal sep=22pt,
    },
    width=4.4cm,
    height=4.2cm,
    xlabel={$n$},
    ymin=0,
    title style={font=\small,yshift=-2pt},
    label style={font=\small},
    tick label style={font=\footnotesize},
    minor y tick num=4,
    grid=both,
    minor grid style={dotted, gray!20},
    major grid style={dotted, gray!40},
  ]
    \nextgroupplot[title={\textsc{MMachine-2D}},
    ymax=4.931,
    ylabel={$\pind$},
    ]
      \draw[black, dotted, line width=0.6pt] (axis cs:3,1) -- (axis cs:14,1);
      \addplot[draw=none, fill=exp8color3, fill opacity=0.18, forget plot]
        coordinates {(3,0.6472) (4,0.9941) (5,0.8829) (6,1.2957) (7,1.0436) (8,1.1600) (9,1.3177) (10,1.1021) (11,1.1664) (12,0.9337) (13,1.0688) (14,1.0660) (14,0.5612) (13,0.5253) (12,0.5056) (11,0.4919) (10,0.5732) (9,0.4224) (8,0.5122) (7,0.4687) (6,0.4666) (5,0.4318) (4,0.3897) (3,0.3582)}
        --cycle;
      \addplot[color=exp8color3, line width=0.8pt, mark=*, mark size=1.2pt, mark options={solid}]
        coordinates {(3,0.5015) (4,0.8472) (5,0.7276) (6,0.9897) (7,0.8664) (8,0.9977) (9,0.9993) (10,1.0009) (11,0.9824) (12,0.8121) (13,0.9321) (14,1.0637)};
      \addplot[draw=none, fill=exp8color4, fill opacity=0.18, forget plot]
        coordinates {(3,1.2470) (4,1.5621) (5,1.9822) (6,2.4002) (7,2.7660) (8,2.9961) (9,3.1799) (10,3.4933) (11,3.7783) (12,3.8602) (13,4.4434) (14,4.5660) (14,3.5114) (13,3.2063) (12,2.9858) (11,2.9196) (10,2.5707) (9,2.3383) (8,2.1402) (7,1.8924) (6,1.5618) (5,1.3418) (4,1.2428) (3,1.1327)}
        --cycle;
      \addplot[color=exp8color4, line width=0.8pt, mark=*, mark size=1.2pt, mark options={solid}]
        coordinates {(3,1.1892) (4,1.4586) (5,1.6855) (6,2.0318) (7,2.2826) (8,2.5706) (9,2.8053) (10,3.0532) (11,3.3610) (12,3.4553) (13,3.8439) (14,4.0854)};
      \addplot[draw=none, fill=exp8color1, fill opacity=0.18, forget plot]
        coordinates {(3,1.2465) (4,1.5691) (5,1.9378) (6,2.2940) (7,2.4243) (8,2.5846) (9,2.9100) (10,2.9724) (11,3.2307) (12,3.1212) (13,3.4901) (14,3.7246) (14,2.2696) (13,2.2855) (12,2.1667) (11,2.0788) (10,1.8507) (9,1.9510) (8,1.8013) (7,1.6099) (6,1.4632) (5,1.2831) (4,1.1460) (3,1.0893)}
        --cycle;
      \addplot[color=exp8color1, line width=0.8pt, mark=*, mark size=1.2pt, mark options={solid}]
        coordinates {(3,1.1771) (4,1.4119) (5,1.6065) (6,1.8637) (7,2.0512) (8,2.2061) (9,2.4503) (10,2.4703) (11,2.6768) (12,2.6800) (13,2.9031) (14,3.0704)};
      \addplot[draw=none, fill=exp8color0, fill opacity=0.18, forget plot]
        coordinates {(3,0.6926) (4,1.0705) (5,0.9693) (6,1.2958) (7,0.9959) (8,1.1651) (9,1.3021) (10,1.0274) (11,1.0703) (12,0.7979) (13,0.9653) (14,0.8550) (14,0.1808) (13,0.2009) (12,0.2234) (11,0.1868) (10,0.3634) (9,0.2589) (8,0.1841) (7,0.2241) (6,0.3198) (5,0.3349) (4,0.2897) (3,0.2740)}
        --cycle;
      \addplot[color=exp8color0, line width=0.8pt, mark=*, mark size=1.2pt, mark options={solid}]
        coordinates {(3,0.5036) (4,0.8042) (5,0.6987) (6,0.8754) (7,0.7537) (8,0.8228) (9,0.9104) (10,0.8583) (11,0.8087) (12,0.6434) (13,0.7068) (14,0.7348)};
      \addplot[draw=none, fill=exp8color2, fill opacity=0.18, forget plot]
        coordinates {(3,0.4076) (4,0.5508) (5,0.5183) (6,0.5892) (7,0.5724) (8,0.6180) (9,0.6415) (10,0.6567) (11,0.6465) (12,0.5835) (13,0.6207) (14,0.5874) (14,0.1146) (13,0.1471) (12,0.1168) (11,0.0935) (10,0.1685) (9,0.1372) (8,0.1226) (7,0.1285) (6,0.1392) (5,0.1364) (4,0.1584) (3,0.1679)}
        --cycle;
      \addplot[color=exp8color2, line width=0.8pt, mark=*, mark size=1.2pt, mark options={solid}]
        coordinates {(3,0.2932) (4,0.4134) (5,0.3686) (6,0.3913) (7,0.3951) (8,0.4562) (9,0.4278) (10,0.4272) (11,0.4438) (12,0.3721) (13,0.4534) (14,0.3945)};
    \nextgroupplot[title={\textsc{FoodRescue-2D}},
    ymax=5.162,
    ]
      \draw[black, dotted, line width=0.6pt] (axis cs:3,1) -- (axis cs:15,1);
      \addplot[draw=none, fill=exp8color3, fill opacity=0.18, forget plot]
        coordinates {(3,0.5788) (4,1.6521) (5,0.7411) (6,1.5591) (7,1.3798) (8,1.6006) (9,1.4166) (10,1.9971) (11,1.4946) (12,1.5663) (13,1.4893) (14,1.5380) (15,1.4465) (15,0.6004) (14,0.7023) (13,0.6692) (12,0.6823) (11,0.6039) (10,0.6114) (9,0.6153) (8,0.6282) (7,0.4684) (6,0.4582) (5,0.3754) (4,0.3427) (3,0.2594)}
        --cycle;
      \addplot[color=exp8color3, line width=0.8pt, mark=*, mark size=1.2pt, mark options={solid}]
        coordinates {(3,0.4074) (4,0.8633) (5,0.5749) (6,1.0549) (7,1.1267) (8,1.1843) (9,1.1309) (10,1.3575) (11,1.0981) (12,1.2441) (13,1.1252) (14,1.0670) (15,0.9987)};
      \addplot[draw=none, fill=exp8color4, fill opacity=0.18, forget plot]
        coordinates {(3,1.2127) (4,1.9060) (5,2.0589) (6,2.4854) (7,3.0693) (8,3.2912) (9,3.4732) (10,3.8841) (11,4.1941) (12,4.4876) (13,4.6505) (14,4.6620) (15,4.7799) (15,4.4675) (14,4.4294) (13,4.1885) (12,3.5967) (11,3.3399) (10,3.0782) (9,2.5574) (8,2.4234) (7,2.1808) (6,1.9405) (5,1.2932) (4,1.2213) (3,1.0900)}
        --cycle;
      \addplot[color=exp8color4, line width=0.8pt, mark=*, mark size=1.2pt, mark options={solid}]
        coordinates {(3,1.1502) (4,1.4634) (5,1.6173) (6,2.1366) (7,2.5270) (8,2.8046) (9,3.1142) (10,3.4662) (11,3.6533) (12,4.0474) (13,4.3433) (14,4.5181) (15,4.6124)};
      \addplot[draw=none, fill=exp8color1, fill opacity=0.18, forget plot]
        coordinates {(3,1.2149) (4,1.7997) (5,1.9088) (6,2.4114) (7,2.8240) (8,3.0080) (9,3.2714) (10,3.6344) (11,3.6235) (12,4.0578) (13,4.1781) (14,4.2973) (15,4.4233) (15,2.1930) (14,3.3609) (13,3.1112) (12,2.9858) (11,2.7334) (10,2.5047) (9,2.3644) (8,2.2370) (7,2.0143) (6,1.7757) (5,1.2489) (4,1.1549) (3,1.0576)}
        --cycle;
      \addplot[color=exp8color1, line width=0.8pt, mark=*, mark size=1.2pt, mark options={solid}]
        coordinates {(3,1.1365) (4,1.4116) (5,1.5570) (6,2.0545) (7,2.3749) (8,2.5827) (9,2.8097) (10,3.0252) (11,3.1067) (12,3.3668) (13,3.4813) (14,3.5622) (15,3.4798)};
      \addplot[draw=none, fill=exp8color0, fill opacity=0.18, forget plot]
        coordinates {(3,0.6030) (4,1.5084) (5,0.7283) (6,1.5505) (7,1.5104) (8,1.5191) (9,1.5354) (10,1.5907) (11,1.5722) (12,1.8041) (13,1.6794) (14,1.6866) (15,1.6775) (15,0.1325) (14,0.2036) (13,0.2012) (12,0.4834) (11,0.2115) (10,0.2891) (9,0.3693) (8,0.3723) (7,0.2654) (6,0.3488) (5,0.2378) (4,0.3043) (3,0.1642)}
        --cycle;
      \addplot[color=exp8color0, line width=0.8pt, mark=*, mark size=1.2pt, mark options={solid}]
        coordinates {(3,0.3926) (4,0.7943) (5,0.5076) (6,0.9850) (7,1.0058) (8,1.0566) (9,1.0124) (10,1.1202) (11,0.9955) (12,1.2155) (13,1.0197) (14,0.9507) (15,0.8429)};
      \addplot[draw=none, fill=exp8color2, fill opacity=0.18, forget plot]
        coordinates {(3,0.3672) (4,0.4216) (5,0.4178) (6,0.4228) (7,0.5455) (8,0.4767) (9,0.4192) (10,0.4553) (11,0.4213) (12,0.5076) (13,0.3860) (14,0.3306) (15,0.4063) (15,0.0782) (14,0.1369) (13,0.1249) (12,0.1379) (11,0.0959) (10,0.1209) (9,0.1258) (8,0.1563) (7,0.1387) (6,0.1004) (5,0.0894) (4,0.1363) (3,0.0911)}
        --cycle;
      \addplot[color=exp8color2, line width=0.8pt, mark=*, mark size=1.2pt, mark options={solid}]
        coordinates {(3,0.2271) (4,0.3597) (5,0.2719) (6,0.2865) (7,0.4533) (8,0.3463) (9,0.3032) (10,0.3766) (11,0.2807) (12,0.3530) (13,0.2944) (14,0.2757) (15,0.2479)};
    \nextgroupplot[title={\textsc{Kidney-2D}},
    ymax=5.069,
    ]
      \draw[black, dotted, line width=0.6pt] (axis cs:3,1) -- (axis cs:15,1);
      \addplot[draw=none, fill=exp8color3, fill opacity=0.18, forget plot]
        coordinates {(3,0.6306) (4,1.0021) (5,0.8493) (6,1.0379) (7,1.2107) (8,1.2741) (9,1.4167) (10,1.0508) (11,1.1550) (13,1.1108) (14,1.0875) (15,1.2030) (15,0.5117) (14,0.5793) (13,0.4448) (11,0.5050) (10,0.4336) (9,0.5299) (8,0.4626) (7,0.4344) (6,0.4368) (5,0.3868) (4,0.4355) (3,0.3128)}
        --cycle;
      \addplot[color=exp8color3, line width=0.8pt, mark=*, mark size=1.2pt, mark options={solid}]
        coordinates {(3,0.4852) (4,0.8303) (5,0.6485) (6,0.9018) (7,0.9353) (8,0.9423) (9,1.1208) (10,0.9549) (11,0.9199) (13,0.8791) (14,0.9959) (15,0.9301)};
      \addplot[draw=none, fill=exp8color4, fill opacity=0.18, forget plot]
        coordinates {(3,1.2405) (4,1.5602) (5,1.9780) (6,2.2832) (7,2.6719) (8,2.8965) (9,3.5616) (10,3.3668) (11,3.5935) (13,4.2277) (14,4.5334) (15,4.6937) (15,3.6234) (14,3.3773) (13,3.1493) (11,2.6445) (10,2.4741) (9,2.4206) (8,2.1448) (7,1.8685) (6,1.5234) (5,1.2926) (4,1.2605) (3,1.1109)}
        --cycle;
      \addplot[color=exp8color4, line width=0.8pt, mark=*, mark size=1.2pt, mark options={solid}]
        coordinates {(3,1.1825) (4,1.4534) (5,1.6014) (6,1.9843) (7,2.2661) (8,2.5568) (9,2.9774) (10,2.9783) (11,3.1554) (13,3.7121) (14,4.0618) (15,4.2348)};
      \addplot[draw=none, fill=exp8color1, fill opacity=0.18, forget plot]
        coordinates {(3,1.2360) (4,1.5663) (5,1.8056) (6,2.1609) (7,2.4122) (8,2.7042) (9,2.9754) (10,2.9210) (11,2.9351) (13,3.3919) (14,3.5581) (15,3.6708) (15,2.3305) (14,2.1567) (13,2.2427) (11,1.7552) (10,2.0031) (9,2.1148) (8,1.8815) (7,1.5239) (6,1.4872) (5,1.2005) (4,1.1752) (3,1.0902)}
        --cycle;
      \addplot[color=exp8color1, line width=0.8pt, mark=*, mark size=1.2pt, mark options={solid}]
        coordinates {(3,1.1761) (4,1.4124) (5,1.5208) (6,1.8266) (7,2.0595) (8,2.2810) (9,2.5692) (10,2.4745) (11,2.4163) (13,2.7690) (14,2.8815) (15,2.9529)};
      \addplot[draw=none, fill=exp8color0, fill opacity=0.18, forget plot]
        coordinates {(3,0.6713) (4,1.0836) (5,0.8729) (6,1.0834) (7,1.2289) (8,1.3035) (9,1.3114) (10,1.0634) (11,1.0779) (13,0.8300) (14,0.9919) (15,0.8413) (15,0.1752) (14,0.2896) (13,0.1808) (11,0.3394) (10,0.2070) (9,0.2529) (8,0.2507) (7,0.2166) (6,0.3365) (5,0.2742) (4,0.3604) (3,0.2663)}
        --cycle;
      \addplot[color=exp8color0, line width=0.8pt, mark=*, mark size=1.2pt, mark options={solid}]
        coordinates {(3,0.4945) (4,0.8055) (5,0.6010) (6,0.8036) (7,0.8444) (8,0.8195) (9,0.9318) (10,0.7653) (11,0.8025) (13,0.5958) (14,0.7712) (15,0.5982)};
      \addplot[draw=none, fill=exp8color2, fill opacity=0.18, forget plot]
        coordinates {(3,0.3941) (4,0.6467) (5,0.5236) (6,0.5175) (7,0.5565) (8,0.5969) (9,0.5747) (10,0.6153) (11,0.6560) (13,0.5609) (14,0.6065) (15,0.6865) (15,0.1365) (14,0.1430) (13,0.1553) (11,0.1506) (10,0.1253) (9,0.1511) (8,0.1308) (7,0.1707) (6,0.1227) (5,0.1585) (4,0.1472) (3,0.1436)}
        --cycle;
      \addplot[color=exp8color2, line width=0.8pt, mark=*, mark size=1.2pt, mark options={solid}]
        coordinates {(3,0.2867) (4,0.4177) (5,0.3591) (6,0.4113) (7,0.4584) (8,0.4188) (9,0.4708) (10,0.4283) (11,0.4511) (13,0.4086) (14,0.4298) (15,0.4461)};
  \end{groupplot}
\end{tikzpicture}

%% file: Figures/batch_prop_UB_diagram.tex
\begin{tikzpicture}[
    >=Stealth,
    dot/.style={circle, fill=black, inner sep=1pt},
    small_dot/.style={circle, fill=black, inner sep=0.8pt},
    font=\sffamily
]

    % --- Parameters ---
    \def\R{3.5}              % Radius of the main circle
    \def\angF{55}            % Angle for theta_f
    \def\angV{117.5}         % Angle for v (bisecting 180 and 55)
    \def\angFperp{145}       % Angle orthogonal to theta_f (\angF + 90)
    
    % --- Colors ---
    \colorlet{colT1}{purple!80!black}
    \colorlet{colT2}{green!40!black}
    \colorlet{colTf}{cyan!80!blue}
    \colorlet{colV}{violet!80!magenta}
    
    % --- Main Axes & Circle ---
    \draw[gray!70, thin] (-4.5, 0) -- (4.5, 0);
    \draw[gray!70, thin] (0, -4) -- (0, 4);
    
    % Shaded Region (Intersection of halfspaces)
    \fill[colV!15] (0,0) -- (0,\R) arc (90:\angFperp:\R) -- cycle;
    
    % Main Circle
    \draw[thick] (0,0) circle (\R);
    
    % Orthogonal Dotted Lines
    \draw[colT1, dotted, thick] (0, -4) -- (0, 4); 
    \draw[colTf, dotted, thick] (\angFperp:\R+0.5) -- (\angFperp-180:\R+0.5);

    % --- Main Vectors ---
    \draw[->, very thick, colT1] (0,0) -- (180:\R) node[above left] {\Large $\theta^{(1)}$};
    \draw[->, very thick, colT2] (0,0) -- (0:\R) node[above right] {\Large $\theta^{(2)}$};
    \draw[->, very thick, colTf] (0,0) -- (\angF:\R) node[above] {\Large $\theta_f$};
    \draw[->, thick, colV] (0,0) -- (\angV:\R) node[above left] {\Large $v$};

    % ==========================================
    % --- Left Inset: Main Ball B_R(y) ---
    % ==========================================
    \coordinate (Y1) at (-1.4, -1.2);
    \fill[gray!15, draw=gray!40] (Y1) circle (1.2);
    
    % Shifted label to the upper right quadrant of the ball
    \node[font=\scriptsize, text=gray!80!black] at ($(Y1)+(45:0.75)$) {$\overline{B}_R(y)$};
    
    % Support line for the points
    \draw[colV!80] ($(Y1) - (\angV:1.15)$) -- ($(Y1) + (\angV:1.15)$);
    
    % Points and Labels
    \node[small_dot, label={right:\small $y$}] at (Y1) {};
    
    \coordinate (x1s_1) at ($(Y1) + (\angV:0.5)$);
    \coordinate (x2s_1) at ($(Y1) - (\angV:0.3)$);
    \coordinate (x3s_1) at ($(Y1) - (\angV:0.8)$);
    
    \node[small_dot, label={left:\small $x_1^*$}] at (x1s_1) {};
    \node[small_dot, label={left:\small $x_2^*$}] at (x2s_1) {};
    \node[small_dot, label={left:\small $x_3^*$}] at (x3s_1) {};

    % ==========================================
    % --- Right Inset: Magnified Detail ---
    % ==========================================
    \coordinate (Y2) at (7.5, 0);
    \fill[gray!10, draw=gray!40] (Y2) circle (2);
    \node[right, font=\small, text=gray!80!black] at ($(Y2)+(0.4, 0.8)$) {$\overline{B}_R(y)$};
    
    % Support line
    \draw[colV!80] ($(Y2) - (\angV:1.9)$) -- ($(Y2) + (\angV:1.9)$);
    
    % Center y
    % \node[dot, label={right:\small $y$}] at (Y2) {};
    
    % Asterisk Points
    \coordinate (x1s_2) at ($(Y2) + (\angV:0.8)$);
    \coordinate (x2s_2) at ($(Y2) - (\angV:0.5)$);
    \coordinate (x3s_2) at ($(Y2) - (\angV:1.4)$);
    
    % Epsilon Neighborhoods (Dashed Circles) - SHRUNK to 0.18
    \draw[dashed, fill=gray!30, draw=gray!70, thick] (x1s_2) circle (0.18);
    \draw[dashed, fill=gray!30, draw=gray!70, thick] (x2s_2) circle (0.18);
    \draw[dashed, fill=gray!30, draw=gray!70, thick] (x3s_2) circle (0.18);
    
    % Redraw asterisk points on top of gray fill
    \node[dot] at (x1s_2) {};
    \node[dot] at (x2s_2) {};
    \node[dot] at (x3s_2) {};
    
    % Sampled points: offset shrunk to 0.18 to match smaller epsilon balls
    \coordinate (x2_samp) at ($(x2s_2) + (\angV-90:0.18)$);
    \coordinate (x3_samp) at ($(x3s_2) + (190:0.18)$);
    
    \node[dot, label={right: $x_2$}] at (x2_samp) {};
    \node[dot, label={left: $x_3$}] at (x3_samp) {};
    
    % Orange difference vector connecting samples
    \draw[->, very thick, orange] (x3_samp) -- (x2_samp);

    % ==========================================
    % --- Origin Difference Vector ---
    % ==========================================
    % Calculated using the exact vector difference between x2_samp and x3_samp.
    % Because the epsilon balls are smaller, the difference vector naturally 
    % leans more vertical, causing it to fall cleanly inside the purple halfspace.
    \draw[->, very thick, orange] (0,0) -- ($ (x2_samp) - (x3_samp) $);

\end{tikzpicture}

%% file: Figures/arith_ang_helper.tex
\begin{tikzpicture}[>={Stealth}]
    \begin{scope}[shift={(-2, 0)}]
    % Parameters
    \def\R{2}        % Radius of the circle
    \def\di{55}      % Angle d_i in degrees from the vertical y-axis

    % Coordinates
    \coordinate (O) at (0,0);
    \coordinate (Top) at (0,\R);
    \coordinate (P) at ({\R*sin(\di)}, {\R*cos(\di)});
    \coordinate (Yproj) at (0, {\R*cos(\di)});
    \coordinate (Xproj) at ({\R*sin(\di)}, \R);

    % Horizontal gray dashed line
    \draw[gray, dashed, thick] (-2.5, \R) -- (2.5, \R);

    % Main Circle
    \draw[very thick, color=black!90] (O) circle (\R);

    % Central dot
    \filldraw[black] (O) circle (1.5pt);

    %Top Dot
    \filldraw[black] (Top) circle (1.5pt);

    % Vertical dashed line
    \draw[thick, dashed] (O) -- (Top);

    % Line to point on the circle (theta_i)
    \draw[thick, dashed] (O) -- (P);
    
    % Point dot and theta_i label
    \filldraw[black] (P) circle (1.5pt);
    \node[above right=1pt] at (P) {$\scriptstyle\thetai$};
    
    % Top label (theta_avg)
    \node[above] at (Top) {$\scriptstyle\thang$};

    % Purple Angle d_i (with hatching pattern)
    \fill[fill=purple!30] 
        (O) -- (0, 0.8) arc (90:{90-\di}:0.8) -- cycle;
    \draw[purple!80, thick] (0, 0.8) arc (90:{90-\di}:0.8);
    \node[purple!80, below right] at (0.1, 0.1) {$\scriptstyle d_i$};

    % Red Vector (cos d_i)
    \draw[->, red!80!orange, very thick] (O) -- (Yproj);
    \node[red!80!orange, left=3pt] at (0, {\R*cos(\di)/2}) {$\scriptstyle\cos(d_i)$};

    % Orange dashed line projecting to the point
    \draw[green!60!black, dashed, thick] (Yproj) -- (P);

    % Green Vector (w_i = v_i * sin d_i)
    \draw[->, green!60!black, very thick] (Top) -- (Xproj);
    \node[green!60!black, above=3pt] at ({0.5 + \R*sin(\di)/2}, \R) {$\scriptstyle w_i = v_i \sin(d_i)$};

    % Green dashed line projecting down to the point
    \draw[black, dashed, thick] (Xproj) -- (P);
    \end{scope}

    \begin{scope}[shift={(5, 0)}]
    % Parameters
    \def\R{2}        % Radius of the circle
    \def\ang{45}     % Angle of the red line
    \def\r{1.5}      % Length of the red line

    % Coordinates
    \coordinate (O) at (0,0);
    \coordinate (Top) at (0,\R);
    \coordinate (P) at ({\r*cos(\ang)}, {\r*sin(\ang)});
    \coordinate (Atip) at (0, {\r*sin(\ang)});

    %Top Dot
    \filldraw[black] (Top) circle (1.5pt);

    % Circle
    \draw[very thick, color=black!90] (O) circle (\R);

    % Vertical Line
    \draw[thick, color=black!90] (O) -- (Top);
    \node[above, color=black!90] at (Top) {$\scriptstyle\thang$};

    % Purple Angle Fill
    % Fills the sector from the y-axis (90 degrees) to the red line (\ang degrees)
    \fill[purple!30] (O) -- (0, 0.8) arc (90:\ang:0.8) -- cycle;
    \draw[purple!80, thick] (0, 0.8) arc (90:{90-\ang}:0.8);
    
    % Purple Label for the distance/angle
    \node[purple!80, below right] at (0, 0.1) {$\scriptstyle\scriptstyle\dang(\thang, \thar)$};

    % Center dot
    \filldraw[black] (O) circle (1.5pt);

    % Green Vector A
    \draw[green!60!black, thick] (O) -- (Atip);
    \node[left=2pt, green!60!black] at (0, {\r*sin(\ang)/2}) {$\scriptstyle A$};

    % Red Line
    \draw[red!80!orange, very thick] (O) -- (P);
    \filldraw[red!80!orange] (P) circle (1.5pt);
    \node[below right=.3pt, red!80!orange] at (P) {$\scriptstyle\thar$};

    % Light Blue dashed line ||B||
    \draw[cyan, thick] (Atip) -- (P);
    \node[above=2pt, cyan] at ({\r*cos(\ang)/2}, {\r*sin(\ang)}) {$\scriptstyle\|B\|$};
    \end{scope}

\end{tikzpicture}

%% file: Figures_final/exp8_multi_subset_propcol.tex
% Auto-generated by exp8_multi.py.  Do not edit by hand.
% Required preamble in the host document:
%   \usepackage{tikz}
%   \usepackage{pgfplots}
%   \usepgfplotslibrary{groupplots}
%   \pgfplotsset{compat=1.18}
%
\definecolor{exp8color0}{rgb}{0.1216,0.4667,0.7059}
\definecolor{exp8color1}{rgb}{1.0000,0.4980,0.0549}
\definecolor{exp8color2}{rgb}{0.1725,0.6275,0.1725}
\definecolor{exp8color3}{rgb}{0.8392,0.1529,0.1569}
\definecolor{exp8color4}{rgb}{0.5804,0.4039,0.7412}

\begin{tikzpicture}
  \begin{groupplot}[
    group style={
      group size=3 by 1,
      horizontal sep=22pt,
    },
    width=4.4cm,
    height=4.2cm,
    xlabel={$n$},
    ymin=0,
    title style={font=\small,yshift=-2pt},
    label style={font=\small},
    tick label style={font=\footnotesize},
    minor y tick num=4,
    grid=both,
    minor grid style={dotted, gray!20},
    major grid style={dotted, gray!40},
  ]
    \nextgroupplot[title={\textsc{MMachine}-2D},
    ymax=5.608,
    ylabel={$\pcol$},
    ]
      \draw[black, dotted, line width=0.6pt] (axis cs:3,1) -- (axis cs:14,1);
      \addplot[draw=none, fill=exp8color3, fill opacity=0.18, forget plot]
        coordinates {(3,0.6268) (4,1.2371) (5,0.9970) (6,1.6955) (7,1.5496) (8,1.6233) (9,1.6131) (10,1.9037) (11,1.5459) (12,1.4542) (13,1.3533) (14,1.5964) (14,0.5557) (13,0.4943) (12,0.6517) (11,0.5274) (10,0.5566) (9,0.5685) (8,0.4436) (7,0.5386) (6,0.4285) (5,0.4844) (4,0.3828) (3,0.3111)}
        --cycle;
      \addplot[color=exp8color3, line width=0.8pt, mark=*, mark size=1.2pt, mark options={solid}]
        coordinates {(3,0.4784) (4,0.8534) (5,0.7784) (6,1.0753) (7,1.0816) (8,1.1201) (9,1.1644) (10,1.3241) (11,1.2224) (12,1.3435) (13,1.1741) (14,1.3011)};
      \addplot[draw=none, fill=exp8color4, fill opacity=0.18, forget plot]
        coordinates {(3,1.2365) (4,1.6892) (5,2.0352) (6,2.5748) (7,2.8963) (8,3.1259) (9,3.5816) (10,3.8103) (11,4.1269) (12,4.4929) (13,4.5965) (14,5.1930) (14,3.5495) (13,3.3484) (12,3.2468) (11,3.0049) (10,2.8025) (9,2.4379) (8,2.1332) (7,2.0970) (6,1.4574) (5,1.3852) (4,1.2291) (3,1.1146)}
        --cycle;
      \addplot[color=exp8color4, line width=0.8pt, mark=*, mark size=1.2pt, mark options={solid}]
        coordinates {(3,1.1797) (4,1.4725) (5,1.7227) (6,2.0486) (7,2.4984) (8,2.6211) (9,2.9880) (10,3.3188) (11,3.5791) (12,3.8275) (13,3.9436) (14,4.3796)};
      \addplot[draw=none, fill=exp8color1, fill opacity=0.18, forget plot]
        coordinates {(3,1.2306) (4,1.5873) (5,1.9122) (6,2.2386) (7,2.6089) (8,2.7288) (9,3.0614) (10,3.3420) (11,3.4026) (12,3.6727) (13,3.5416) (14,4.1588) (14,2.5201) (13,2.1284) (12,1.9579) (11,2.1042) (10,2.1653) (9,1.9043) (8,1.8283) (7,1.7728) (6,1.3469) (5,1.3106) (4,1.1621) (3,1.0693)}
        --cycle;
      \addplot[color=exp8color1, line width=0.8pt, mark=*, mark size=1.2pt, mark options={solid}]
        coordinates {(3,1.1611) (4,1.3890) (5,1.6273) (6,1.8469) (7,2.2118) (8,2.2546) (9,2.4893) (10,2.6757) (11,2.7282) (12,2.9208) (13,2.7361) (14,3.2293)};
      \addplot[draw=none, fill=exp8color0, fill opacity=0.18, forget plot]
        coordinates {(3,0.6667) (4,1.1074) (5,1.0855) (6,1.6517) (7,1.5570) (8,1.5093) (9,1.5613) (10,1.5711) (11,1.0773) (12,1.3960) (13,1.0853) (14,1.0677) (14,0.2358) (13,0.2690) (12,0.3295) (11,0.2122) (10,0.3313) (9,0.3097) (8,0.2219) (7,0.3131) (6,0.3029) (5,0.3884) (4,0.2643) (3,0.2170)}
        --cycle;
      \addplot[color=exp8color0, line width=0.8pt, mark=*, mark size=1.2pt, mark options={solid}]
        coordinates {(3,0.4781) (4,0.7758) (5,0.7619) (6,0.9714) (7,0.9888) (8,0.9745) (9,1.0632) (10,1.1197) (11,0.9045) (12,1.1555) (13,0.8762) (14,0.9221)};
      \addplot[draw=none, fill=exp8color2, fill opacity=0.18, forget plot]
        coordinates {(3,0.3832) (4,0.5381) (5,0.6061) (6,0.6911) (7,0.6630) (8,0.6099) (9,0.6229) (10,0.5149) (11,0.5735) (12,0.7805) (13,0.6794) (14,0.6125) (14,0.1461) (13,0.1561) (12,0.1444) (11,0.1464) (10,0.1594) (9,0.1744) (8,0.1688) (7,0.1322) (6,0.1756) (5,0.1932) (4,0.1294) (3,0.1161)}
        --cycle;
      \addplot[color=exp8color2, line width=0.8pt, mark=*, mark size=1.2pt, mark options={solid}]
        coordinates {(3,0.2783) (4,0.3831) (5,0.4134) (6,0.4936) (7,0.4402) (8,0.4590) (9,0.4656) (10,0.4340) (11,0.4318) (12,0.5205) (13,0.4468) (14,0.5037)};
    \nextgroupplot[title={\textsc{FoodRescue}-2D},
    ymax=5.022,
    ]
      \draw[black, dotted, line width=0.6pt] (axis cs:3,1) -- (axis cs:15,1);
      \addplot[draw=none, fill=exp8color3, fill opacity=0.18, forget plot]
        coordinates {(3,0.5788) (4,1.6082) (5,0.7311) (6,1.5591) (7,1.3794) (8,1.5968) (9,1.4155) (10,1.9971) (11,1.4858) (12,1.5661) (13,1.4520) (14,1.5379) (15,1.4465) (15,0.5847) (14,0.6804) (13,0.5934) (12,0.6630) (11,0.5093) (10,0.5457) (9,0.5882) (8,0.6244) (7,0.4521) (6,0.4450) (5,0.3688) (4,0.3427) (3,0.2594)}
        --cycle;
      \addplot[color=exp8color3, line width=0.8pt, mark=*, mark size=1.2pt, mark options={solid}]
        coordinates {(3,0.4074) (4,0.8256) (5,0.5692) (6,1.0370) (7,1.1011) (8,1.1704) (9,1.1063) (10,1.3222) (11,1.0732) (12,1.2210) (13,1.0848) (14,1.0456) (15,0.9675)};
      \addplot[draw=none, fill=exp8color4, fill opacity=0.18, forget plot]
        coordinates {(3,1.2127) (4,1.8965) (5,2.0194) (6,2.4848) (7,3.0427) (8,3.2854) (9,3.4516) (10,3.7462) (11,4.1710) (12,4.4281) (13,4.5497) (14,4.6083) (15,4.6501) (15,4.4580) (14,4.3962) (13,4.1827) (12,3.5531) (11,3.2900) (10,3.0731) (9,2.5410) (8,2.4087) (7,2.1474) (6,1.9387) (5,1.2932) (4,1.2213) (3,1.0900)}
        --cycle;
      \addplot[color=exp8color4, line width=0.8pt, mark=*, mark size=1.2pt, mark options={solid}]
        coordinates {(3,1.1502) (4,1.4556) (5,1.6118) (6,2.1276) (7,2.5051) (8,2.7887) (9,3.0875) (10,3.4253) (11,3.6150) (12,3.9968) (13,4.2780) (14,4.4661) (15,4.5352)};
      \addplot[draw=none, fill=exp8color1, fill opacity=0.18, forget plot]
        coordinates {(3,1.2038) (4,1.6650) (5,1.8562) (6,2.3323) (7,2.6389) (8,2.9364) (9,3.1850) (10,3.4473) (11,3.5811) (12,3.9281) (13,4.0897) (14,4.2200) (15,4.3600) (15,2.0542) (14,3.3231) (13,3.1038) (12,2.9567) (11,2.6514) (10,2.4692) (9,2.3589) (8,2.2122) (7,2.0075) (6,1.7586) (5,1.2477) (4,1.1528) (3,1.0501)}
        --cycle;
      \addplot[color=exp8color1, line width=0.8pt, mark=*, mark size=1.2pt, mark options={solid}]
        coordinates {(3,1.1281) (4,1.3636) (5,1.5348) (6,2.0064) (7,2.2911) (8,2.5155) (9,2.7372) (10,2.9317) (11,3.0189) (12,3.2650) (13,3.3610) (14,3.4470) (15,3.3913)};
      \addplot[draw=none, fill=exp8color0, fill opacity=0.18, forget plot]
        coordinates {(3,0.6029) (4,1.4816) (5,0.7283) (6,1.5502) (7,1.5100) (8,1.5176) (9,1.5354) (10,1.5794) (11,1.5722) (12,1.7789) (13,1.6793) (14,1.6863) (15,1.6775) (15,0.1325) (14,0.1431) (13,0.1115) (12,0.4834) (11,0.2087) (10,0.2666) (9,0.3406) (8,0.3723) (7,0.2478) (6,0.3488) (5,0.2378) (4,0.3043) (3,0.1642)}
        --cycle;
      \addplot[color=exp8color0, line width=0.8pt, mark=*, mark size=1.2pt, mark options={solid}]
        coordinates {(3,0.3925) (4,0.7614) (5,0.5069) (6,0.9718) (7,0.9865) (8,1.0480) (9,1.0057) (10,1.1088) (11,0.9856) (12,1.2023) (13,1.0013) (14,0.9438) (15,0.8416)};
      \addplot[draw=none, fill=exp8color2, fill opacity=0.18, forget plot]
        coordinates {(3,0.3672) (4,0.4216) (5,0.4079) (6,0.4228) (7,0.5455) (8,0.4767) (9,0.4192) (10,0.4553) (11,0.4213) (12,0.5076) (13,0.3860) (14,0.3306) (15,0.4063) (15,0.0782) (14,0.0962) (13,0.0952) (12,0.0959) (11,0.0752) (10,0.0846) (9,0.0739) (8,0.1385) (7,0.1027) (6,0.0950) (5,0.0825) (4,0.1334) (3,0.0911)}
        --cycle;
      \addplot[color=exp8color2, line width=0.8pt, mark=*, mark size=1.2pt, mark options={solid}]
        coordinates {(3,0.2271) (4,0.3555) (5,0.2678) (6,0.2833) (7,0.4438) (8,0.3413) (9,0.2912) (10,0.3674) (11,0.2710) (12,0.3434) (13,0.2742) (14,0.2685) (15,0.2357)};
    \nextgroupplot[title={\textsc{Kidney}-2D},
    ymax=4.899,
    ]
      \draw[black, dotted, line width=0.6pt] (axis cs:3,1) -- (axis cs:15,1);
      \addplot[draw=none, fill=exp8color3, fill opacity=0.18, forget plot]
        coordinates {(3,0.6306) (4,1.0021) (5,0.8476) (6,1.0379) (7,1.1901) (8,1.2399) (9,1.4073) (10,1.0508) (11,1.0786) (13,0.9697) (14,1.0683) (15,1.0849) (15,0.4639) (14,0.5055) (13,0.4118) (11,0.4758) (10,0.4116) (9,0.4686) (8,0.4347) (7,0.4314) (6,0.4211) (5,0.3681) (4,0.4354) (3,0.3128)}
        --cycle;
      \addplot[color=exp8color3, line width=0.8pt, mark=*, mark size=1.2pt, mark options={solid}]
        coordinates {(3,0.4852) (4,0.8156) (5,0.6431) (6,0.8753) (7,0.9159) (8,0.9144) (9,1.0769) (10,0.9163) (11,0.8653) (13,0.8057) (14,0.9142) (15,0.8398)};
      \addplot[draw=none, fill=exp8color4, fill opacity=0.18, forget plot]
        coordinates {(3,1.2405) (4,1.5602) (5,1.9664) (6,2.2794) (7,2.6134) (8,2.8963) (9,3.5168) (10,3.3069) (11,3.5091) (13,3.9993) (14,4.3858) (15,4.5357) (15,3.5041) (14,3.3078) (13,3.0418) (11,2.5532) (10,2.4384) (9,2.4178) (8,2.1259) (7,1.8606) (6,1.5234) (5,1.2926) (4,1.2605) (3,1.1109)}
        --cycle;
      \addplot[color=exp8color4, line width=0.8pt, mark=*, mark size=1.2pt, mark options={solid}]
        coordinates {(3,1.1825) (4,1.4500) (5,1.5979) (6,1.9682) (7,2.2490) (8,2.5345) (9,2.9381) (10,2.9384) (11,3.0864) (13,3.6180) (14,3.9616) (15,4.1155)};
      \addplot[draw=none, fill=exp8color1, fill opacity=0.18, forget plot]
        coordinates {(3,1.2278) (4,1.5583) (5,1.7653) (6,2.0963) (7,2.3797) (8,2.6701) (9,2.8824) (10,2.8938) (11,2.9043) (13,3.3263) (14,3.5198) (15,3.6500) (15,2.2810) (14,2.1550) (13,2.2394) (11,1.7435) (10,2.0009) (9,2.1079) (8,1.8790) (7,1.5223) (6,1.4846) (5,1.1992) (4,1.1737) (3,1.0822)}
        --cycle;
      \addplot[color=exp8color1, line width=0.8pt, mark=*, mark size=1.2pt, mark options={solid}]
        coordinates {(3,1.1678) (4,1.3858) (5,1.5012) (6,1.7916) (7,2.0132) (8,2.2440) (9,2.5062) (10,2.4159) (11,2.3748) (13,2.7182) (14,2.8283) (15,2.8849)};
      \addplot[draw=none, fill=exp8color0, fill opacity=0.18, forget plot]
        coordinates {(3,0.6713) (4,1.0831) (5,0.8729) (6,1.0834) (7,1.2289) (8,1.3033) (9,1.3114) (10,1.0407) (11,1.0360) (13,0.7731) (14,0.8995) (15,0.7373) (15,0.1291) (14,0.2128) (13,0.1455) (11,0.2673) (10,0.1985) (9,0.2509) (8,0.2150) (7,0.2163) (6,0.3364) (5,0.2742) (4,0.3604) (3,0.2663)}
        --cycle;
      \addplot[color=exp8color0, line width=0.8pt, mark=*, mark size=1.2pt, mark options={solid}]
        coordinates {(3,0.4944) (4,0.7916) (5,0.5997) (6,0.7869) (7,0.8323) (8,0.8070) (9,0.9132) (10,0.7399) (11,0.7580) (13,0.5626) (14,0.7117) (15,0.5539)};
      \addplot[draw=none, fill=exp8color2, fill opacity=0.18, forget plot]
        coordinates {(3,0.3940) (4,0.6467) (5,0.5236) (6,0.5175) (7,0.5552) (8,0.5768) (9,0.5635) (10,0.5544) (11,0.6106) (13,0.4970) (14,0.5145) (15,0.5794) (15,0.0948) (14,0.1062) (13,0.1309) (11,0.1413) (10,0.1159) (9,0.1392) (8,0.1176) (7,0.1665) (6,0.1192) (5,0.1453) (4,0.1371) (3,0.1436)}
        --cycle;
      \addplot[color=exp8color2, line width=0.8pt, mark=*, mark size=1.2pt, mark options={solid}]
        coordinates {(3,0.2867) (4,0.4172) (5,0.3560) (6,0.4007) (7,0.4522) (8,0.3992) (9,0.4524) (10,0.3924) (11,0.4168) (13,0.3494) (14,0.3756) (15,0.3835)};
  \end{groupplot}
\end{tikzpicture}

%% file: Figures_final/overview_table_minimal_appendix.tex
\begin{table*}[h]
\centering
\footnotesize
\setlength{\tabcolsep}{4pt}
\setlength{\extrarowheight}{2pt}
\caption{Long-run IP ($\pind$) and per-batch IP ($\pcol$) for each rule on \textsc{KidneyStudy1} and \textsc{KidneyStudy2} and its 2D variant.  Cell shading is proportional to $\log_{10}(\text{value})$: darker means more disproportional.}
\label{tab:overview-minimal-kid}
\begin{tabular}{l@{\hspace{10pt}}>{\centering\arraybackslash}p{0.55cm}@{\hspace{10pt}}>{\centering\arraybackslash}p{0.55cm}@{\hspace{10pt}}>{\centering\arraybackslash}p{0.55cm}@{\hspace{10pt}}>{\centering\arraybackslash}p{0.55cm}@{\hspace{10pt}}>{\centering\arraybackslash}p{0.55cm}@{\hspace{10pt}}>{\centering\arraybackslash}p{0.55cm}@{\hspace{10pt}}>{\centering\arraybackslash}p{0.55cm}@{\hspace{10pt}}>{\centering\arraybackslash}p{0.55cm}@{\hspace{10pt}}>{\centering\arraybackslash}p{0.55cm}@{\hspace{10pt}}>{\centering\arraybackslash}p{0.55cm}}
\toprule
 & \multicolumn{2}{c}{arithmetic} & \multicolumn{2}{c}{angular} & \multicolumn{2}{c}{geometric} & \multicolumn{2}{c}{Borda} & \multicolumn{2}{c}{PSB} \\
\cmidrule(lr){2-3} \cmidrule(lr){4-5} \cmidrule(lr){6-7} \cmidrule(lr){8-9} \cmidrule(lr){10-11}
dataset & \textsf{long} & \textsf{batch} & \textsf{long} & \textsf{batch} & \textsf{long} & \textsf{batch} & \textsf{long} & \textsf{batch} & \textsf{long} & \textsf{batch} \\
\midrule
\textsc{KidneyStudy1} & \cellcolor{black!22}12 & \cellcolor{black!22}11 & \cellcolor{black!22}12 & \cellcolor{black!22}11 & \cellcolor{black!22}12 & \cellcolor{black!22}11 & \cellcolor{black!22}12 & \cellcolor{black!22}11 & \cellcolor{black!22}12 & \cellcolor{black!22}11 \\
\textsc{KidneyStudy2} & \cellcolor{black!26}29 & \cellcolor{black!26}26 & \cellcolor{black!26}29 & \cellcolor{black!26}26 & \cellcolor{black!26}27 & \cellcolor{black!25}25 & \cellcolor{black!26}28 & \cellcolor{black!26}26 & \cellcolor{black!26}29 & \cellcolor{black!26}26 \\
\midrule
\textsc{KidneyStudy1}-2D & \cellcolor{black!16}3.4 & \cellcolor{black!16}3.3 & \cellcolor{black!17}3.7 & \cellcolor{black!16}3.6 & \cellcolor{black!17}3.6 & \cellcolor{black!16}3.5 & \cellcolor{black!16}3.5 & \cellcolor{black!16}3.4 & \cellcolor{black!17}3.8 & \cellcolor{black!17}3.8 \\
\textsc{KidneyStudy2}-2D & \cellcolor{black!16}3.3 & \cellcolor{black!16}3.3 & \cellcolor{black!17}3.9 & \cellcolor{black!17}3.8 & \cellcolor{black!15}2.8 & \cellcolor{black!15}2.8 & \cellcolor{black!16}3.2 & \cellcolor{black!16}3.2 & \cellcolor{black!17}4.3 & \cellcolor{black!17}4.3 \\
\bottomrule
\end{tabular}
\end{table*}

%% file: Figures_final/exp10_subset_appendix.tex
% Auto-generated by exp10_propind_propcol_vs_m.py.
% Required preamble in the host document:
%   \usepackage{tikz}
%   \usepackage{pgfplots}
%   \usepgfplotslibrary{groupplots}
%   \pgfplotsset{compat=1.18}
%
\definecolor{exp10color0}{rgb}{0.1216,0.4667,0.7059}
\definecolor{exp10color1}{rgb}{1.0000,0.4980,0.0549}
\definecolor{exp10color2}{rgb}{0.1725,0.6275,0.1725}
\definecolor{exp10color3}{rgb}{0.8392,0.1529,0.1569}
\definecolor{exp10color4}{rgb}{0.5804,0.4039,0.7412}

\begin{tikzpicture}
  \begin{groupplot}[
    group style={
      group size=2 by 1,
      horizontal sep=22pt,
    },
    width=4.4cm,
    height=4.2cm,
    xmode=log,
    log basis x=10,
    xtick={5,10,20,50,100,200},
    xticklabels={5,10,20,50,100,200},
    xlabel={$m$},
    title style={font=\small,yshift=-2pt},
    label style={font=\small},
    tick label style={font=\footnotesize},
    minor y tick num=4,
    grid=both,
    minor grid style={dotted, gray!20},
    major grid style={dotted, gray!40},
  ]
    \nextgroupplot[title={\textsc{KidneyStudy1}-2D},
    ylabel={$\pind$ / $\pcol$},
    ]
      \addplot[color=exp10color0, line width=0.8pt, mark=*, mark size=1.2pt, mark options={solid}]
        coordinates {(5,3.3533) (10,3.3856) (20,3.3685) (50,3.3434) (100,3.3547) (200,3.3534)};
      \addplot[color=exp10color0, line width=0.8pt, dashed, mark=square*, mark size=1.0pt, mark options={solid}]
        coordinates {(5,2.8611) (10,3.3292) (20,3.3664) (50,3.3434) (100,3.3547) (200,3.3534)};
      \addplot[color=exp10color1, line width=0.8pt, mark=*, mark size=1.2pt, mark options={solid}]
        coordinates {(5,3.6848) (10,3.6937) (20,3.6685) (50,3.6483) (100,3.6600) (200,3.6585)};
      \addplot[color=exp10color1, line width=0.8pt, dashed, mark=square*, mark size=1.0pt, mark options={solid}]
        coordinates {(5,3.0183) (10,3.5968) (20,3.6615) (50,3.6483) (100,3.6600) (200,3.6585)};
      \addplot[color=exp10color2, line width=0.8pt, mark=*, mark size=1.2pt, mark options={solid}]
        coordinates {(5,3.6261) (10,3.6384) (20,3.6158) (50,3.5972) (100,3.6081) (200,3.6067)};
      \addplot[color=exp10color2, line width=0.8pt, dashed, mark=square*, mark size=1.0pt, mark options={solid}]
        coordinates {(5,2.9996) (10,3.5496) (20,3.6100) (50,3.5972) (100,3.6081) (200,3.6067)};
      \addplot[color=exp10color3, line width=0.8pt, mark=*, mark size=1.2pt, mark options={solid}]
        coordinates {(5,3.4450) (10,3.4485) (20,3.4208) (50,3.4021) (100,3.4083) (200,3.4100)};
      \addplot[color=exp10color3, line width=0.8pt, dashed, mark=square*, mark size=1.0pt, mark options={solid}]
        coordinates {(5,3.0566) (10,3.4161) (20,3.4204) (50,3.4021) (100,3.4083) (200,3.4100)};
      \addplot[color=exp10color4, line width=0.8pt, mark=*, mark size=1.2pt, mark options={solid}]
        coordinates {(5,3.7757) (10,3.7462) (20,3.6959) (50,3.6687) (100,3.6541) (200,3.6341)};
      \addplot[color=exp10color4, line width=0.8pt, dashed, mark=square*, mark size=1.0pt, mark options={solid}]
        coordinates {(5,3.4808) (10,3.7285) (20,3.6958) (50,3.6687) (100,3.6541) (200,3.6341)};
    \nextgroupplot[title={\textsc{KidneyStudy2}-2D}]
      \addplot[color=exp10color0, line width=0.8pt, mark=*, mark size=1.2pt, mark options={solid}]
        coordinates {(5,3.4465) (10,3.2546) (20,3.3418) (50,3.3305) (100,3.3301) (200,3.3300)};
      \addplot[color=exp10color0, line width=0.8pt, dashed, mark=square*, mark size=1.0pt, mark options={solid}]
        coordinates {(5,3.1175) (10,3.2484) (20,3.3418) (50,3.3305) (100,3.3301) (200,3.3300)};
      \addplot[color=exp10color1, line width=0.8pt, mark=*, mark size=1.2pt, mark options={solid}]
        coordinates {(5,3.9839) (10,3.8012) (20,3.8992) (50,3.8795) (100,3.8802) (200,3.8793)};
      \addplot[color=exp10color1, line width=0.8pt, dashed, mark=square*, mark size=1.0pt, mark options={solid}]
        coordinates {(5,3.5303) (10,3.7826) (20,3.8992) (50,3.8795) (100,3.8802) (200,3.8793)};
      \addplot[color=exp10color2, line width=0.8pt, mark=*, mark size=1.2pt, mark options={solid}]
        coordinates {(5,2.8767) (10,2.7033) (20,2.7512) (50,2.7513) (100,2.7496) (200,2.7488)};
      \addplot[color=exp10color2, line width=0.8pt, dashed, mark=square*, mark size=1.0pt, mark options={solid}]
        coordinates {(5,2.6466) (10,2.7014) (20,2.7512) (50,2.7513) (100,2.7496) (200,2.7488)};
      \addplot[color=exp10color3, line width=0.8pt, mark=*, mark size=1.2pt, mark options={solid}]
        coordinates {(5,3.2809) (10,3.1294) (20,3.2336) (50,3.2060) (100,3.2061) (200,3.2064)};
      \addplot[color=exp10color3, line width=0.8pt, dashed, mark=square*, mark size=1.0pt, mark options={solid}]
        coordinates {(5,3.2443) (10,3.1294) (20,3.2336) (50,3.2060) (100,3.2061) (200,3.2064)};
      \addplot[color=exp10color4, line width=0.8pt, mark=*, mark size=1.2pt, mark options={solid}]
        coordinates {(5,4.1818) (10,4.1944) (20,4.3257) (50,4.3414) (100,4.3855) (200,4.4125)};
      \addplot[color=exp10color4, line width=0.8pt, dashed, mark=square*, mark size=1.0pt, mark options={solid}]
        coordinates {(5,4.1624) (10,4.1944) (20,4.3257) (50,4.3414) (100,4.3855) (200,4.4125)};
  \end{groupplot}
\end{tikzpicture}

%% file: Figures_final/exp8_multi_subset_appendix.tex
% Auto-generated by exp8_multi.py.  Do not edit by hand.
% Required preamble in the host document:
%   \usepackage{tikz}
%   \usepackage{pgfplots}
%   \usepgfplotslibrary{groupplots}
%   \pgfplotsset{compat=1.18}
%
\definecolor{exp8color0}{rgb}{0.1216,0.4667,0.7059}
\definecolor{exp8color1}{rgb}{1.0000,0.4980,0.0549}
\definecolor{exp8color2}{rgb}{0.1725,0.6275,0.1725}
\definecolor{exp8color3}{rgb}{0.8392,0.1529,0.1569}
\definecolor{exp8color4}{rgb}{0.5804,0.4039,0.7412}

\begin{tikzpicture}
  \begin{groupplot}[
    group style={
      group size=2 by 1,
      horizontal sep=22pt,
    },
    width=4.4cm,
    height=4.2cm,
    xlabel={$n$},
    ymin=0,
    title style={font=\small,yshift=-2pt},
    label style={font=\small},
    tick label style={font=\footnotesize},
    minor y tick num=4,
    grid=both,
    minor grid style={dotted, gray!20},
    major grid style={dotted, gray!40},
  ]
    \nextgroupplot[title={\textsc{KidneyStudy1}-2D},
    ymax=1.782,
    ylabel={$\pind$},
    ]
      \draw[black, dotted, line width=0.6pt] (axis cs:3,1) -- (axis cs:7,1);
      \addplot[draw=none, fill=exp8color3, fill opacity=0.18, forget plot]
        coordinates {(3,0.5996) (4,0.8530) (5,0.8264) (6,0.8023) (7,0.8642) (7,0.6753) (6,0.6299) (5,0.5702) (4,0.4514) (3,0.3486)}
        --cycle;
      \addplot[color=exp8color3, line width=0.8pt, mark=*, mark size=1.2pt, mark options={solid}]
        coordinates {(3,0.4814) (4,0.6447) (5,0.7029) (6,0.7268) (7,0.7820)};
      \addplot[draw=none, fill=exp8color4, fill opacity=0.18, forget plot]
        coordinates {(3,1.2247) (4,1.4626) (5,1.5250) (6,1.5447) (7,1.5987) (7,1.4762) (6,1.4268) (5,1.3680) (4,1.2323) (3,1.1273)}
        --cycle;
      \addplot[color=exp8color4, line width=0.8pt, mark=*, mark size=1.2pt, mark options={solid}]
        coordinates {(3,1.1807) (4,1.3478) (5,1.4421) (6,1.4968) (7,1.5479)};
      \addplot[draw=none, fill=exp8color1, fill opacity=0.18, forget plot]
        coordinates {(3,1.2266) (4,1.4551) (5,1.5554) (6,1.5776) (7,1.6501) (7,1.4832) (6,1.4232) (5,1.3378) (4,1.1984) (3,1.0900)}
        --cycle;
      \addplot[color=exp8color1, line width=0.8pt, mark=*, mark size=1.2pt, mark options={solid}]
        coordinates {(3,1.1693) (4,1.3311) (5,1.4411) (6,1.5083) (7,1.5768)};
      \addplot[draw=none, fill=exp8color0, fill opacity=0.18, forget plot]
        coordinates {(3,0.6410) (4,0.8945) (5,0.9051) (6,0.8629) (7,0.9113) (7,0.6770) (6,0.6351) (5,0.5547) (4,0.4106) (3,0.2737)}
        --cycle;
      \addplot[color=exp8color0, line width=0.8pt, mark=*, mark size=1.2pt, mark options={solid}]
        coordinates {(3,0.4814) (4,0.6497) (5,0.7311) (6,0.7601) (7,0.8060)};
      \addplot[draw=none, fill=exp8color2, fill opacity=0.18, forget plot]
        coordinates {(3,0.3516) (4,0.4853) (5,0.6037) (6,0.7140) (7,0.8075) (7,0.6220) (6,0.5462) (5,0.4419) (4,0.3636) (3,0.1640)}
        --cycle;
      \addplot[color=exp8color2, line width=0.8pt, mark=*, mark size=1.2pt, mark options={solid}]
        coordinates {(3,0.2730) (4,0.4075) (5,0.5244) (6,0.6134) (7,0.7210)};
    \nextgroupplot[title={\textsc{KidneyStudy2}-2D},
    ymax=1.520,
    ]
      \draw[black, dotted, line width=0.6pt] (axis cs:3,1) -- (axis cs:8,1);
      \addplot[draw=none, fill=exp8color3, fill opacity=0.18, forget plot]
        coordinates {(3,0.5137) (4,0.5959) (5,0.6169) (6,0.5585) (7,0.4551) (8,0.4041) (8,0.2568) (7,0.2487) (6,0.3004) (5,0.2663) (4,0.2948) (3,0.2779)}
        --cycle;
      \addplot[color=exp8color3, line width=0.8pt, mark=*, mark size=1.2pt, mark options={solid}]
        coordinates {(3,0.4053) (4,0.4564) (5,0.4617) (6,0.4372) (7,0.3613) (8,0.3446)};
      \addplot[draw=none, fill=exp8color4, fill opacity=0.18, forget plot]
        coordinates {(3,1.1847) (4,1.3173) (5,1.3935) (6,1.4073) (7,1.3784) (8,1.3776) (8,1.2794) (7,1.2710) (6,1.2835) (5,1.2336) (4,1.1878) (3,1.0963)}
        --cycle;
      \addplot[color=exp8color4, line width=0.8pt, mark=*, mark size=1.2pt, mark options={solid}]
        coordinates {(3,1.1502) (4,1.2618) (5,1.3272) (6,1.3572) (7,1.3360) (8,1.3431)};
      \addplot[draw=none, fill=exp8color1, fill opacity=0.18, forget plot]
        coordinates {(3,1.1801) (4,1.3124) (5,1.3873) (6,1.3796) (7,1.2958) (8,1.3506) (8,1.0946) (7,1.0829) (6,1.1159) (5,1.0692) (4,1.0808) (3,1.0631)}
        --cycle;
      \addplot[color=exp8color1, line width=0.8pt, mark=*, mark size=1.2pt, mark options={solid}]
        coordinates {(3,1.1354) (4,1.2022) (5,1.2346) (6,1.2619) (7,1.2052) (8,1.2394)};
      \addplot[draw=none, fill=exp8color0, fill opacity=0.18, forget plot]
        coordinates {(3,0.5362) (4,0.6107) (5,0.6500) (6,0.5543) (7,0.4047) (8,0.4245) (8,0.1246) (7,0.1090) (6,0.1796) (5,0.1219) (4,0.1634) (3,0.1845)}
        --cycle;
      \addplot[color=exp8color0, line width=0.8pt, mark=*, mark size=1.2pt, mark options={solid}]
        coordinates {(3,0.3874) (4,0.4005) (5,0.3891) (6,0.3823) (7,0.2761) (8,0.2905)};
      \addplot[draw=none, fill=exp8color2, fill opacity=0.18, forget plot]
        coordinates {(3,0.3084) (4,0.4213) (5,0.4835) (6,0.4277) (7,0.3800) (8,0.2582) (8,0.0759) (7,0.0905) (6,0.1433) (5,0.1036) (4,0.1252) (3,0.0958)}
        --cycle;
      \addplot[color=exp8color2, line width=0.8pt, mark=*, mark size=1.2pt, mark options={solid}]
        coordinates {(3,0.2251) (4,0.2900) (5,0.3175) (6,0.2864) (7,0.2448) (8,0.2363)};
  \end{groupplot}
\end{tikzpicture}

%% file: Figures_final/exp10_full_all.tex
% Auto-generated by exp10_propind_propcol_vs_m.py.
% Required preamble in the host document:
%   \usepackage{tikz}
%   \usepackage{pgfplots}
%   \usepgfplotslibrary{groupplots}
%   \pgfplotsset{compat=1.18}
%
\definecolor{exp10color0}{rgb}{0.1216,0.4667,0.7059}
\definecolor{exp10color1}{rgb}{1.0000,0.4980,0.0549}
\definecolor{exp10color2}{rgb}{0.1725,0.6275,0.1725}
\definecolor{exp10color3}{rgb}{0.8392,0.1529,0.1569}
\definecolor{exp10color4}{rgb}{0.5804,0.4039,0.7412}

\begin{tikzpicture}
  \begin{groupplot}[
    group style={
      group size=5 by 1,
      horizontal sep=22pt,
    },
    width=4.4cm,
    height=4.2cm,
    xmode=log,
    log basis x=10,
    xtick={5,10,20,50,100,200},
    xticklabels={5,10,20,50,100,200},
    xlabel={$m$},
    title style={font=\small,yshift=-2pt},
    label style={font=\small},
    tick label style={font=\footnotesize},
    minor y tick num=4,
    grid=both,
    minor grid style={dotted, gray!20},
    major grid style={dotted, gray!40},
  ]
    \nextgroupplot[title={\textsc{FoodRescue}},
    ylabel={$\pind$ / $\pcol$},
    ]
      \addplot[color=exp10color0, line width=0.8pt, mark=*, mark size=1.2pt, mark options={solid}]
        coordinates {(5,10.9507) (10,11.0097) (20,11.0590) (50,10.9871) (100,11.0473) (200,11.0323)};
      \addplot[color=exp10color0, line width=0.8pt, dashed, mark=square*, mark size=1.0pt, mark options={solid}]
        coordinates {(5,7.7795) (10,9.7246) (20,10.5970) (50,10.9097) (100,11.0371) (200,11.0319)};
      \addplot[color=exp10color1, line width=0.8pt, mark=*, mark size=1.2pt, mark options={solid}]
        coordinates {(5,11.0590) (10,11.1234) (20,11.1607) (50,11.0899) (100,11.1501) (200,11.1354)};
      \addplot[color=exp10color1, line width=0.8pt, dashed, mark=square*, mark size=1.0pt, mark options={solid}]
        coordinates {(5,7.7938) (10,9.7723) (20,10.6490) (50,10.9949) (100,11.1351) (200,11.1348)};
      \addplot[color=exp10color2, line width=0.8pt, mark=*, mark size=1.2pt, mark options={solid}]
        coordinates {(5,10.7008) (10,10.7339) (20,10.8007) (50,10.7227) (100,10.7837) (200,10.7686)};
      \addplot[color=exp10color2, line width=0.8pt, dashed, mark=square*, mark size=1.0pt, mark options={solid}]
        coordinates {(5,7.7007) (10,9.6045) (20,10.4297) (50,10.6763) (100,10.7797) (200,10.7682)};
      \addplot[color=exp10color3, line width=0.8pt, mark=*, mark size=1.2pt, mark options={solid}]
        coordinates {(5,10.9830) (10,11.0130) (20,11.0477) (50,10.9875) (100,11.0436) (200,11.0311)};
      \addplot[color=exp10color3, line width=0.8pt, dashed, mark=square*, mark size=1.0pt, mark options={solid}]
        coordinates {(5,7.9667) (10,9.8764) (20,10.6725) (50,10.9349) (100,11.0382) (200,11.0308)};
      \addplot[color=exp10color4, line width=0.8pt, mark=*, mark size=1.2pt, mark options={solid}]
        coordinates {(5,11.1179) (10,11.1916) (20,11.2126) (50,11.1651) (100,11.2098) (200,11.1983)};
      \addplot[color=exp10color4, line width=0.8pt, dashed, mark=square*, mark size=1.0pt, mark options={solid}]
        coordinates {(5,8.3581) (10,10.1395) (20,10.8725) (50,11.1163) (100,11.2051) (200,11.1982)};
    \nextgroupplot[title={\textsc{Kidney}}]
      \addplot[color=exp10color0, line width=0.8pt, mark=*, mark size=1.2pt, mark options={solid}]
        coordinates {(5,132.3504) (10,134.5724) (20,133.4678) (50,133.7418) (100,133.5355) (200,133.3083)};
      \addplot[color=exp10color0, line width=0.8pt, dashed, mark=square*, mark size=1.0pt, mark options={solid}]
        coordinates {(5,60.8626) (10,113.4252) (20,128.5953) (50,133.5571) (100,133.5307) (200,133.3083)};
      \addplot[color=exp10color1, line width=0.8pt, mark=*, mark size=1.2pt, mark options={solid}]
        coordinates {(5,133.5220) (10,135.6318) (20,134.4459) (50,134.6735) (100,134.5169) (200,134.2837)};
      \addplot[color=exp10color1, line width=0.8pt, dashed, mark=square*, mark size=1.0pt, mark options={solid}]
        coordinates {(5,61.2868) (10,113.9953) (20,129.3980) (50,134.4660) (100,134.5119) (200,134.2837)};
      \addplot[color=exp10color2, line width=0.8pt, mark=*, mark size=1.2pt, mark options={solid}]
        coordinates {(5,129.9870) (10,132.6108) (20,131.7029) (50,131.8318) (100,131.6983) (200,131.4048)};
      \addplot[color=exp10color2, line width=0.8pt, dashed, mark=square*, mark size=1.0pt, mark options={solid}]
        coordinates {(5,59.9132) (10,111.6925) (20,126.9591) (50,131.6540) (100,131.6937) (200,131.4048)};
      \addplot[color=exp10color3, line width=0.8pt, mark=*, mark size=1.2pt, mark options={solid}]
        coordinates {(5,131.0576) (10,134.2986) (20,133.2902) (50,133.6658) (100,133.5012) (200,133.2742)};
      \addplot[color=exp10color3, line width=0.8pt, dashed, mark=square*, mark size=1.0pt, mark options={solid}]
        coordinates {(5,62.2362) (10,114.1569) (20,128.8303) (50,133.5070) (100,133.4970) (200,133.2742)};
      \addplot[color=exp10color4, line width=0.8pt, mark=*, mark size=1.2pt, mark options={solid}]
        coordinates {(5,132.6938) (10,136.0941) (20,135.1476) (50,135.3275) (100,135.2209) (200,135.0563)};
      \addplot[color=exp10color4, line width=0.8pt, dashed, mark=square*, mark size=1.0pt, mark options={solid}]
        coordinates {(5,65.0440) (10,116.7066) (20,131.0363) (50,135.2043) (100,135.2186) (200,135.0563)};
    \nextgroupplot[title={\textsc{MMachine}}]
      \addplot[color=exp10color0, line width=0.8pt, mark=*, mark size=1.2pt, mark options={solid}]
        coordinates {(5,110.9289) (10,110.7036) (20,110.6992) (50,110.9174) (100,110.8490) (200,110.7772)};
      \addplot[color=exp10color0, line width=0.8pt, dashed, mark=square*, mark size=1.0pt, mark options={solid}]
        coordinates {(5,82.7343) (10,98.0905) (20,104.8749) (50,108.5506) (100,109.5070) (200,110.0835)};
      \addplot[color=exp10color1, line width=0.8pt, mark=*, mark size=1.2pt, mark options={solid}]
        coordinates {(5,110.9289) (10,110.7280) (20,110.7075) (50,110.9268) (100,110.8588) (200,110.7866)};
      \addplot[color=exp10color1, line width=0.8pt, dashed, mark=square*, mark size=1.0pt, mark options={solid}]
        coordinates {(5,82.7206) (10,98.0890) (20,104.8771) (50,108.5559) (100,109.5155) (200,110.0922)};
      \addplot[color=exp10color2, line width=0.8pt, mark=*, mark size=1.2pt, mark options={solid}]
        coordinates {(5,110.9152) (10,110.7599) (20,110.7887) (50,111.0178) (100,110.9509) (200,110.8757)};
      \addplot[color=exp10color2, line width=0.8pt, dashed, mark=square*, mark size=1.0pt, mark options={solid}]
        coordinates {(5,82.7001) (10,98.0250) (20,104.8010) (50,108.4941) (100,109.4375) (200,110.0225)};
      \addplot[color=exp10color3, line width=0.8pt, mark=*, mark size=1.2pt, mark options={solid}]
        coordinates {(5,110.9426) (10,110.7478) (20,110.7440) (50,110.9222) (100,110.8531) (200,110.7795)};
      \addplot[color=exp10color3, line width=0.8pt, dashed, mark=square*, mark size=1.0pt, mark options={solid}]
        coordinates {(5,82.7001) (10,98.1559) (20,104.9323) (50,108.5802) (100,109.5285) (200,110.0951)};
      \addplot[color=exp10color4, line width=0.8pt, mark=*, mark size=1.2pt, mark options={solid}]
        coordinates {(5,110.9974) (10,110.7812) (20,110.7468) (50,110.9413) (100,110.8617) (200,110.7852)};
      \addplot[color=exp10color4, line width=0.8pt, dashed, mark=square*, mark size=1.0pt, mark options={solid}]
        coordinates {(5,82.7891) (10,98.2077) (20,104.9730) (50,108.6191) (100,109.5504) (200,110.1177)};
    \nextgroupplot[title={\textsc{KidneyStudy1}}]
      \addplot[color=exp10color0, line width=0.8pt, mark=*, mark size=1.2pt, mark options={solid}]
        coordinates {(5,12.1712) (10,12.1981) (20,12.1653) (50,12.2073) (100,12.2051) (200,12.1938)};
      \addplot[color=exp10color0, line width=0.8pt, dashed, mark=square*, mark size=1.0pt, mark options={solid}]
        coordinates {(5,9.5226) (10,11.0987) (20,11.7708) (50,12.1080) (100,12.1796) (200,12.1897)};
      \addplot[color=exp10color1, line width=0.8pt, mark=*, mark size=1.2pt, mark options={solid}]
        coordinates {(5,12.2034) (10,12.2304) (20,12.1991) (50,12.2381) (100,12.2364) (200,12.2253)};
      \addplot[color=exp10color1, line width=0.8pt, dashed, mark=square*, mark size=1.0pt, mark options={solid}]
        coordinates {(5,9.5285) (10,11.1067) (20,11.7846) (50,12.1271) (100,12.2049) (200,12.2194)};
      \addplot[color=exp10color2, line width=0.8pt, mark=*, mark size=1.2pt, mark options={solid}]
        coordinates {(5,11.8779) (10,11.9204) (20,11.8855) (50,11.9330) (100,11.9297) (200,11.9180)};
      \addplot[color=exp10color2, line width=0.8pt, dashed, mark=square*, mark size=1.0pt, mark options={solid}]
        coordinates {(5,9.4681) (10,11.0237) (20,11.6347) (50,11.9014) (100,11.9274) (200,11.9180)};
      \addplot[color=exp10color3, line width=0.8pt, mark=*, mark size=1.2pt, mark options={solid}]
        coordinates {(5,12.1235) (10,12.1884) (20,12.1539) (50,12.2025) (100,12.2010) (200,12.1929)};
      \addplot[color=exp10color3, line width=0.8pt, dashed, mark=square*, mark size=1.0pt, mark options={solid}]
        coordinates {(5,9.8133) (10,11.2965) (20,11.8746) (50,12.1469) (100,12.1907) (200,12.1921)};
      \addplot[color=exp10color4, line width=0.8pt, mark=*, mark size=1.2pt, mark options={solid}]
        coordinates {(5,12.1890) (10,12.2479) (20,12.2298) (50,12.2786) (100,12.2767) (200,12.2682)};
      \addplot[color=exp10color4, line width=0.8pt, dashed, mark=square*, mark size=1.0pt, mark options={solid}]
        coordinates {(5,9.9654) (10,11.3985) (20,11.9507) (50,12.2143) (100,12.2618) (200,12.2660)};
    \nextgroupplot[title={\textsc{KidneyStudy2}}]
      \addplot[color=exp10color0, line width=0.8pt, mark=*, mark size=1.2pt, mark options={solid}]
        coordinates {(5,28.5434) (10,28.4407) (20,28.6421) (50,28.5402) (100,28.5950) (200,28.5715)};
      \addplot[color=exp10color0, line width=0.8pt, dashed, mark=square*, mark size=1.0pt, mark options={solid}]
        coordinates {(5,22.3815) (10,26.0007) (20,27.6291) (50,28.2677) (100,28.5163) (200,28.5523)};
      \addplot[color=exp10color1, line width=0.8pt, mark=*, mark size=1.2pt, mark options={solid}]
        coordinates {(5,28.6917) (10,28.5778) (20,28.7891) (50,28.6829) (100,28.7396) (200,28.7149)};
      \addplot[color=exp10color1, line width=0.8pt, dashed, mark=square*, mark size=1.0pt, mark options={solid}]
        coordinates {(5,22.4396) (10,26.0781) (20,27.7462) (50,28.3928) (100,28.6527) (200,28.6927)};
      \addplot[color=exp10color2, line width=0.8pt, mark=*, mark size=1.2pt, mark options={solid}]
        coordinates {(5,27.4877) (10,27.3528) (20,27.5080) (50,27.4106) (100,27.4684) (200,27.4420)};
      \addplot[color=exp10color2, line width=0.8pt, dashed, mark=square*, mark size=1.0pt, mark options={solid}]
        coordinates {(5,21.8806) (10,25.2147) (20,26.6586) (50,27.2196) (100,27.4268) (200,27.4365)};
      \addplot[color=exp10color3, line width=0.8pt, mark=*, mark size=1.2pt, mark options={solid}]
        coordinates {(5,28.2940) (10,28.3298) (20,28.5729) (50,28.5305) (100,28.5869) (200,28.5685)};
      \addplot[color=exp10color3, line width=0.8pt, dashed, mark=square*, mark size=1.0pt, mark options={solid}]
        coordinates {(5,22.5729) (10,26.0728) (20,27.6385) (50,28.2789) (100,28.5158) (200,28.5520)};
      \addplot[color=exp10color4, line width=0.8pt, mark=*, mark size=1.2pt, mark options={solid}]
        coordinates {(5,28.5606) (10,28.6112) (20,28.8780) (50,28.8702) (100,28.9295) (200,28.9205)};
      \addplot[color=exp10color4, line width=0.8pt, dashed, mark=square*, mark size=1.0pt, mark options={solid}]
        coordinates {(5,22.8631) (10,26.3700) (20,27.9302) (50,28.6006) (100,28.8483) (200,28.8989)};
  \end{groupplot}
\end{tikzpicture}

%% file: Figures_final/overview_table_minimal_randw.tex
\begin{table*}[h]
\centering
\footnotesize
\setlength{\tabcolsep}{4pt}
\setlength{\extrarowheight}{2pt}
\caption{Long-run IP ($\pind$) and per-batch IP ($\pcol$) for each rule on each dataset and its 2D variant.  Cell shading is proportional to $\log_{10}(\text{value})$: darker means more disproportional.  Voter weights: Dirichlet(1) weights.}
\label{tab:overview-minimal-randw}
\begin{tabular}{l@{\hspace{10pt}}>{\centering\arraybackslash}p{0.55cm}@{\hspace{10pt}}>{\centering\arraybackslash}p{0.55cm}@{\hspace{10pt}}>{\centering\arraybackslash}p{0.55cm}@{\hspace{10pt}}>{\centering\arraybackslash}p{0.55cm}@{\hspace{10pt}}>{\centering\arraybackslash}p{0.55cm}@{\hspace{10pt}}>{\centering\arraybackslash}p{0.55cm}@{\hspace{10pt}}>{\centering\arraybackslash}p{0.55cm}@{\hspace{10pt}}>{\centering\arraybackslash}p{0.55cm}@{\hspace{10pt}}>{\centering\arraybackslash}p{0.55cm}@{\hspace{10pt}}>{\centering\arraybackslash}p{0.55cm}}
\toprule
 & \multicolumn{2}{c}{arithmetic} & \multicolumn{2}{c}{angular} & \multicolumn{2}{c}{geometric} & \multicolumn{2}{c}{Borda} & \multicolumn{2}{c}{PSB} \\
\cmidrule(lr){2-3} \cmidrule(lr){4-5} \cmidrule(lr){6-7} \cmidrule(lr){8-9} \cmidrule(lr){10-11}
dataset & \textsf{long} & \textsf{batch} & \textsf{long} & \textsf{batch} & \textsf{long} & \textsf{batch} & \textsf{long} & \textsf{batch} & \textsf{long} & \textsf{batch} \\
\midrule
\textsc{FoodRescue} & \cellcolor{black!16}3.0 & \cellcolor{black!16}3.0 & \cellcolor{black!16}3.0 & \cellcolor{black!16}3.0 & \cellcolor{black!15}2.8 & \cellcolor{black!15}2.8 & \cellcolor{black!16}3.0 & \cellcolor{black!16}3.0 & \cellcolor{black!16}3.0 & \cellcolor{black!16}3.0 \\
\textsc{Kidney} & \cellcolor{black!28}40 & \cellcolor{black!28}40 & \cellcolor{black!28}40 & \cellcolor{black!28}40 & \cellcolor{black!28}40 & \cellcolor{black!28}40 & \cellcolor{black!28}40 & \cellcolor{black!28}40 & \cellcolor{black!28}40 & \cellcolor{black!28}40 \\
\textsc{MMachine} & \cellcolor{black!25}22 & \cellcolor{black!25}22 & \cellcolor{black!25}22 & \cellcolor{black!25}22 & \cellcolor{black!25}22 & \cellcolor{black!25}22 & \cellcolor{black!25}22 & \cellcolor{black!25}22 & \cellcolor{black!25}22 & \cellcolor{black!25}22 \\
\textsc{KidneyStudy1} & \cellcolor{black!16}3.5 & \cellcolor{black!16}3.5 & \cellcolor{black!16}3.5 & \cellcolor{black!16}3.5 & \cellcolor{black!16}3.4 & \cellcolor{black!16}3.4 & \cellcolor{black!16}3.5 & \cellcolor{black!16}3.5 & \cellcolor{black!16}3.5 & \cellcolor{black!16}3.5 \\
\textsc{KidneyStudy2} & \cellcolor{black!20}7.1 & \cellcolor{black!20}7.1 & \cellcolor{black!20}7.1 & \cellcolor{black!20}7.1 & \cellcolor{black!20}7.3 & \cellcolor{black!20}7.3 & \cellcolor{black!20}7.2 & \cellcolor{black!20}7.2 & \cellcolor{black!20}7.2 & \cellcolor{black!20}7.2 \\
\midrule
\textsc{FoodRescue}-2D & \cellcolor{black!16}3.3 & \cellcolor{black!15}2.9 & \cellcolor{black!17}3.9 & \cellcolor{black!17}3.7 & \cellcolor{black!15}2.6 & \cellcolor{black!14}2.2 & \cellcolor{black!16}3.3 & \cellcolor{black!15}2.9 & \cellcolor{black!17}3.8 & \cellcolor{black!17}3.8 \\
\textsc{Kidney}-2D & \cellcolor{black!8}0.5 & 0.02 & \cellcolor{black!16}3.4 & \cellcolor{black!9}0.8 & \cellcolor{black!17}4.1 & \cellcolor{black!12}1.4 & \cellcolor{black!18}4.9 & \cellcolor{black!15}2.8 & \cellcolor{black!23}14 & \cellcolor{black!23}14 \\
\textsc{MMachine}-2D & \cellcolor{black!15}2.4 & \cellcolor{black!9}0.7 & \cellcolor{black!12}1.5 & \cellcolor{black!11}1.1 & \cellcolor{black!4}0.3 & \cellcolor{black!2}0.2 & \cellcolor{black!15}2.6 & \cellcolor{black!14}1.9 & \cellcolor{black!21}10 & \cellcolor{black!21}9.6 \\
\textsc{KidneyStudy1}-2D & \cellcolor{black!17}3.8 & \cellcolor{black!17}3.7 & \cellcolor{black!17}3.8 & \cellcolor{black!17}3.8 & \cellcolor{black!17}3.8 & \cellcolor{black!17}3.8 & \cellcolor{black!17}3.8 & \cellcolor{black!17}3.7 & \cellcolor{black!17}3.8 & \cellcolor{black!17}3.8 \\
\textsc{KidneyStudy2}-2D & \cellcolor{black!19}6.8 & \cellcolor{black!19}6.2 & \cellcolor{black!19}6.7 & \cellcolor{black!19}6.3 & \cellcolor{black!19}6.8 & \cellcolor{black!19}6.2 & \cellcolor{black!19}6.8 & \cellcolor{black!19}6.5 & \cellcolor{black!19}6.8 & \cellcolor{black!19}6.6 \\
\bottomrule
\end{tabular}
\end{table*}

%% file: Figures_final/exp10_full_all_randw.tex
% Auto-generated by exp10_propind_propcol_vs_m.py.
% Required preamble in the host document:
%   \usepackage{tikz}
%   \usepackage{pgfplots}
%   \usepgfplotslibrary{groupplots}
%   \pgfplotsset{compat=1.18}
%
\definecolor{exp10color0}{rgb}{0.1216,0.4667,0.7059}
\definecolor{exp10color1}{rgb}{1.0000,0.4980,0.0549}
\definecolor{exp10color2}{rgb}{0.1725,0.6275,0.1725}
\definecolor{exp10color3}{rgb}{0.8392,0.1529,0.1569}
\definecolor{exp10color4}{rgb}{0.5804,0.4039,0.7412}

\begin{tikzpicture}
  \begin{groupplot}[
    group style={
      group size=5 by 1,
      horizontal sep=22pt,
    },
    width=4.4cm,
    height=4.2cm,
    xmode=log,
    log basis x=10,
    xtick={5,10,20,50,100,200},
    xticklabels={5,10,20,50,100,200},
    xlabel={$m$},
    title style={font=\small,yshift=-2pt},
    label style={font=\small},
    tick label style={font=\footnotesize},
    minor y tick num=4,
    grid=both,
    minor grid style={dotted, gray!20},
    major grid style={dotted, gray!40},
  ]
    \nextgroupplot[title={\textsc{FoodRescue}},
    ylabel={$\pind$ / $\pcol$},
    ]
      \addplot[color=exp10color0, line width=0.8pt, mark=*, mark size=1.2pt, mark options={solid}]
        coordinates {(5,2.9751) (10,2.9612) (20,2.9877) (50,2.9854) (100,2.9802) (200,2.9841)};
      \addplot[color=exp10color0, line width=0.8pt, dashed, mark=square*, mark size=1.0pt, mark options={solid}]
        coordinates {(5,2.8582) (10,2.9547) (20,2.9877) (50,2.9854) (100,2.9802) (200,2.9841)};
      \addplot[color=exp10color1, line width=0.8pt, mark=*, mark size=1.2pt, mark options={solid}]
        coordinates {(5,2.9956) (10,2.9810) (20,3.0075) (50,3.0048) (100,2.9995) (200,3.0033)};
      \addplot[color=exp10color1, line width=0.8pt, dashed, mark=square*, mark size=1.0pt, mark options={solid}]
        coordinates {(5,2.8731) (10,2.9748) (20,3.0075) (50,3.0048) (100,2.9995) (200,3.0033)};
      \addplot[color=exp10color2, line width=0.8pt, mark=*, mark size=1.2pt, mark options={solid}]
        coordinates {(5,2.8247) (10,2.8054) (20,2.8345) (50,2.8329) (100,2.8273) (200,2.8314)};
      \addplot[color=exp10color2, line width=0.8pt, dashed, mark=square*, mark size=1.0pt, mark options={solid}]
        coordinates {(5,2.7292) (10,2.8004) (20,2.8345) (50,2.8329) (100,2.8273) (200,2.8314)};
      \addplot[color=exp10color3, line width=0.8pt, mark=*, mark size=1.2pt, mark options={solid}]
        coordinates {(5,2.9888) (10,2.9726) (20,2.9902) (50,2.9870) (100,2.9807) (200,2.9831)};
      \addplot[color=exp10color3, line width=0.8pt, dashed, mark=square*, mark size=1.0pt, mark options={solid}]
        coordinates {(5,2.8919) (10,2.9682) (20,2.9902) (50,2.9870) (100,2.9807) (200,2.9831)};
      \addplot[color=exp10color4, line width=0.8pt, mark=*, mark size=1.2pt, mark options={solid}]
        coordinates {(5,3.0251) (10,3.0138) (20,3.0309) (50,3.0278) (100,3.0227) (200,3.0226)};
      \addplot[color=exp10color4, line width=0.8pt, dashed, mark=square*, mark size=1.0pt, mark options={solid}]
        coordinates {(5,2.9913) (10,3.0134) (20,3.0309) (50,3.0278) (100,3.0227) (200,3.0226)};
    \nextgroupplot[title={\textsc{Kidney}}]
      \addplot[color=exp10color0, line width=0.8pt, mark=*, mark size=1.2pt, mark options={solid}]
        coordinates {(5,33.1245) (10,32.6573) (20,33.0211) (50,32.9360) (100,32.9574) (200,32.9557)};
      \addplot[color=exp10color0, line width=0.8pt, dashed, mark=square*, mark size=1.0pt, mark options={solid}]
        coordinates {(5,23.9067) (10,32.2887) (20,33.0093) (50,32.9360) (100,32.9574) (200,32.9557)};
      \addplot[color=exp10color1, line width=0.8pt, mark=*, mark size=1.2pt, mark options={solid}]
        coordinates {(5,33.1978) (10,32.7656) (20,33.1210) (50,33.0229) (100,33.0498) (200,33.0488)};
      \addplot[color=exp10color1, line width=0.8pt, dashed, mark=square*, mark size=1.0pt, mark options={solid}]
        coordinates {(5,24.0373) (10,32.4036) (20,33.1100) (50,33.0229) (100,33.0498) (200,33.0488)};
      \addplot[color=exp10color2, line width=0.8pt, mark=*, mark size=1.2pt, mark options={solid}]
        coordinates {(5,32.6775) (10,32.1727) (20,32.5282) (50,32.4337) (100,32.4532) (200,32.4534)};
      \addplot[color=exp10color2, line width=0.8pt, dashed, mark=square*, mark size=1.0pt, mark options={solid}]
        coordinates {(5,23.4413) (10,31.7759) (20,32.5133) (50,32.4337) (100,32.4532) (200,32.4534)};
      \addplot[color=exp10color3, line width=0.8pt, mark=*, mark size=1.2pt, mark options={solid}]
        coordinates {(5,33.0914) (10,32.6489) (20,33.0062) (50,32.9268) (100,32.9536) (200,32.9512)};
      \addplot[color=exp10color3, line width=0.8pt, dashed, mark=square*, mark size=1.0pt, mark options={solid}]
        coordinates {(5,24.3825) (10,32.3377) (20,32.9942) (50,32.9268) (100,32.9536) (200,32.9512)};
      \addplot[color=exp10color4, line width=0.8pt, mark=*, mark size=1.2pt, mark options={solid}]
        coordinates {(5,33.2924) (10,32.8633) (20,33.1901) (50,33.1132) (100,33.1471) (200,33.1405)};
      \addplot[color=exp10color4, line width=0.8pt, dashed, mark=square*, mark size=1.0pt, mark options={solid}]
        coordinates {(5,25.5023) (10,32.6543) (20,33.1831) (50,33.1132) (100,33.1471) (200,33.1405)};
    \nextgroupplot[title={\textsc{MMachine}}]
      \addplot[color=exp10color0, line width=0.8pt, mark=*, mark size=1.2pt, mark options={solid}]
        coordinates {(5,23.5462) (10,23.6223) (20,23.6029) (50,23.5751) (100,23.5855) (200,23.5889)};
      \addplot[color=exp10color0, line width=0.8pt, dashed, mark=square*, mark size=1.0pt, mark options={solid}]
        coordinates {(5,22.7408) (10,23.4825) (20,23.5791) (50,23.5749) (100,23.5855) (200,23.5889)};
      \addplot[color=exp10color1, line width=0.8pt, mark=*, mark size=1.2pt, mark options={solid}]
        coordinates {(5,23.5437) (10,23.6198) (20,23.6018) (50,23.5753) (100,23.5855) (200,23.5890)};
      \addplot[color=exp10color1, line width=0.8pt, dashed, mark=square*, mark size=1.0pt, mark options={solid}]
        coordinates {(5,22.7409) (10,23.4794) (20,23.5778) (50,23.5751) (100,23.5855) (200,23.5890)};
      \addplot[color=exp10color2, line width=0.8pt, mark=*, mark size=1.2pt, mark options={solid}]
        coordinates {(5,23.4529) (10,23.5417) (20,23.5192) (50,23.4942) (100,23.5038) (200,23.5081)};
      \addplot[color=exp10color2, line width=0.8pt, dashed, mark=square*, mark size=1.0pt, mark options={solid}]
        coordinates {(5,22.6961) (10,23.4063) (20,23.4985) (50,23.4941) (100,23.5038) (200,23.5081)};
      \addplot[color=exp10color3, line width=0.8pt, mark=*, mark size=1.2pt, mark options={solid}]
        coordinates {(5,23.5079) (10,23.5985) (20,23.5828) (50,23.5662) (100,23.5814) (200,23.5864)};
      \addplot[color=exp10color3, line width=0.8pt, dashed, mark=square*, mark size=1.0pt, mark options={solid}]
        coordinates {(5,22.7609) (10,23.4651) (20,23.5630) (50,23.5661) (100,23.5814) (200,23.5864)};
      \addplot[color=exp10color4, line width=0.8pt, mark=*, mark size=1.2pt, mark options={solid}]
        coordinates {(5,23.4926) (10,23.5922) (20,23.5797) (50,23.5647) (100,23.5831) (200,23.5875)};
      \addplot[color=exp10color4, line width=0.8pt, dashed, mark=square*, mark size=1.0pt, mark options={solid}]
        coordinates {(5,22.7795) (10,23.4575) (20,23.5591) (50,23.5645) (100,23.5831) (200,23.5875)};
    \nextgroupplot[title={\textsc{KidneyStudy1}}]
      \addplot[color=exp10color0, line width=0.8pt, mark=*, mark size=1.2pt, mark options={solid}]
        coordinates {(5,3.4664) (10,3.4539) (20,3.4472) (50,3.4592) (100,3.4581) (200,3.4543)};
      \addplot[color=exp10color0, line width=0.8pt, dashed, mark=square*, mark size=1.0pt, mark options={solid}]
        coordinates {(5,3.3730) (10,3.4504) (20,3.4470) (50,3.4592) (100,3.4581) (200,3.4543)};
      \addplot[color=exp10color1, line width=0.8pt, mark=*, mark size=1.2pt, mark options={solid}]
        coordinates {(5,3.4597) (10,3.4461) (20,3.4386) (50,3.4512) (100,3.4498) (200,3.4460)};
      \addplot[color=exp10color1, line width=0.8pt, dashed, mark=square*, mark size=1.0pt, mark options={solid}]
        coordinates {(5,3.3684) (10,3.4430) (20,3.4384) (50,3.4512) (100,3.4498) (200,3.4460)};
      \addplot[color=exp10color2, line width=0.8pt, mark=*, mark size=1.2pt, mark options={solid}]
        coordinates {(5,3.6414) (10,3.6297) (20,3.6224) (50,3.6322) (100,3.6308) (200,3.6272)};
      \addplot[color=exp10color2, line width=0.8pt, dashed, mark=square*, mark size=1.0pt, mark options={solid}]
        coordinates {(5,3.4872) (10,3.6148) (20,3.6215) (50,3.6322) (100,3.6308) (200,3.6272)};
      \addplot[color=exp10color3, line width=0.8pt, mark=*, mark size=1.2pt, mark options={solid}]
        coordinates {(5,3.5294) (10,3.4922) (20,3.4732) (50,3.4668) (100,3.4625) (200,3.4569)};
      \addplot[color=exp10color3, line width=0.8pt, dashed, mark=square*, mark size=1.0pt, mark options={solid}]
        coordinates {(5,3.4700) (10,3.4915) (20,3.4731) (50,3.4668) (100,3.4625) (200,3.4569)};
      \addplot[color=exp10color4, line width=0.8pt, mark=*, mark size=1.2pt, mark options={solid}]
        coordinates {(5,3.5071) (10,3.4736) (20,3.4548) (50,3.4445) (100,3.4377) (200,3.4312)};
      \addplot[color=exp10color4, line width=0.8pt, dashed, mark=square*, mark size=1.0pt, mark options={solid}]
        coordinates {(5,3.4816) (10,3.4736) (20,3.4548) (50,3.4445) (100,3.4377) (200,3.4312)};
    \nextgroupplot[title={\textsc{KidneyStudy2}}]
      \addplot[color=exp10color0, line width=0.8pt, mark=*, mark size=1.2pt, mark options={solid}]
        coordinates {(5,7.3665) (10,7.3545) (20,7.3608) (50,7.3558) (100,7.3503) (200,7.3493)};
      \addplot[color=exp10color0, line width=0.8pt, dashed, mark=square*, mark size=1.0pt, mark options={solid}]
        coordinates {(5,7.0716) (10,7.3318) (20,7.3604) (50,7.3558) (100,7.3503) (200,7.3493)};
      \addplot[color=exp10color1, line width=0.8pt, mark=*, mark size=1.2pt, mark options={solid}]
        coordinates {(5,7.3686) (10,7.3616) (20,7.3675) (50,7.3613) (100,7.3557) (200,7.3545)};
      \addplot[color=exp10color1, line width=0.8pt, dashed, mark=square*, mark size=1.0pt, mark options={solid}]
        coordinates {(5,7.0779) (10,7.3397) (20,7.3672) (50,7.3613) (100,7.3557) (200,7.3545)};
      \addplot[color=exp10color2, line width=0.8pt, mark=*, mark size=1.2pt, mark options={solid}]
        coordinates {(5,7.2258) (10,7.2075) (20,7.2102) (50,7.2075) (100,7.2004) (200,7.2009)};
      \addplot[color=exp10color2, line width=0.8pt, dashed, mark=square*, mark size=1.0pt, mark options={solid}]
        coordinates {(5,6.9132) (10,7.1819) (20,7.2093) (50,7.2075) (100,7.2004) (200,7.2009)};
      \addplot[color=exp10color3, line width=0.8pt, mark=*, mark size=1.2pt, mark options={solid}]
        coordinates {(5,7.3349) (10,7.3334) (20,7.3536) (50,7.3534) (100,7.3477) (200,7.3493)};
      \addplot[color=exp10color3, line width=0.8pt, dashed, mark=square*, mark size=1.0pt, mark options={solid}]
        coordinates {(5,7.0700) (10,7.3148) (20,7.3534) (50,7.3534) (100,7.3477) (200,7.3493)};
      \addplot[color=exp10color4, line width=0.8pt, mark=*, mark size=1.2pt, mark options={solid}]
        coordinates {(5,7.3328) (10,7.3354) (20,7.3628) (50,7.3649) (100,7.3605) (200,7.3625)};
      \addplot[color=exp10color4, line width=0.8pt, dashed, mark=square*, mark size=1.0pt, mark options={solid}]
        coordinates {(5,7.1221) (10,7.3224) (20,7.3627) (50,7.3649) (100,7.3605) (200,7.3625)};
  \end{groupplot}
\end{tikzpicture}

%% file: Figures_final/exp_acg_real.tex
% Auto-generated by exp_acg_real.py.  Do not edit by hand.
% Required preamble in the host document:
%   \usepackage{tikz}
%   \usepackage{pgfplots}
%   \usepgfplotslibrary{groupplots}
%   \pgfplotsset{compat=1.18}
%
\definecolor{acgcolor0}{rgb}{0.1216,0.4667,0.7059}
\definecolor{acgcolor1}{rgb}{1.0000,0.4980,0.0549}
\definecolor{acgcolor2}{rgb}{0.1725,0.6275,0.1725}
\definecolor{acgcolor3}{rgb}{0.8392,0.1529,0.1569}
\definecolor{acgcolor4}{rgb}{0.5804,0.4039,0.7412}

\begin{tikzpicture}
  \begin{groupplot}[
    group style={
      group size=5 by 1,
      horizontal sep=22pt,
    },
    width=4.4cm,
    height=4.2cm,
    xmode=log,
    x dir=reverse,
    log basis x=10,
    xtick={1,0.7,0.5,0.3,0.2,0.1,0.05,0.02},
    xticklabels={1,0.7,0.5,0.3,0.2,0.1,0.05,0.02},
    xlabel={$\lambda$},
    ymode=log,
    log basis y=10,
    title style={font=\small,yshift=-2pt},
    label style={font=\small},
    tick label style={font=\footnotesize, rotate=45},
    grid=both,
    minor grid style={dotted, gray!20},
    major grid style={dotted, gray!40},
  ]
    \nextgroupplot[title={\textsc{FoodRescue}-2D}, ymin=0.1509, ymax=9.045, ylabel={$\pind$}]
      \draw[black, dotted, line width=0.6pt] (axis cs:0.02,1) -- (axis cs:1,1);
      \addplot[color=acgcolor0, solid, line width=0.8pt, mark=*, mark size=1.0pt, mark options={solid}]
        coordinates {(0.02,0.6912) (0.05,0.9914) (0.1,1.2620) (0.2,1.5626) (0.3,1.6739) (0.5,1.8322) (0.7,1.8669) (1,1.9139)};
      \addplot[color=acgcolor0, dashed, line width=0.8pt, mark=*, mark size=1.0pt, mark options={solid}]
        coordinates {(0.02,1.8740) (0.05,2.6725) (0.1,2.4622) (0.2,2.2114) (0.3,2.0949) (0.5,1.9678) (0.7,1.9342) (1,1.8975)};
      \addplot[color=acgcolor0, dotted, line width=0.8pt, mark=*, mark size=1.0pt, mark options={solid}]
        coordinates {(0.02,0.6555) (0.05,0.9755) (0.1,1.2401) (0.2,1.5487) (0.3,1.6935) (0.5,1.8324) (0.7,1.8472) (1,1.8850)};
      \addplot[color=acgcolor0, dashdotted, line width=0.8pt, mark=*, mark size=1.0pt, mark options={solid}]
        coordinates {(0.02,1.9133) (0.05,2.6144) (0.1,2.5067) (0.2,2.2454) (0.3,2.1054) (0.5,1.9695) (0.7,1.9315) (1,1.9002)};
      \addplot[color=acgcolor1, solid, line width=0.8pt, mark=*, mark size=1.0pt, mark options={solid}]
        coordinates {(0.02,3.9647) (0.05,4.1067) (0.1,3.6987) (0.2,3.2815) (0.3,3.1251) (0.5,2.9311) (0.7,2.8166) (1,2.7995)};
      \addplot[color=acgcolor1, dashed, line width=0.8pt, mark=*, mark size=1.0pt, mark options={solid}]
        coordinates {(0.02,0.9992) (0.05,1.4955) (0.1,1.9103) (0.2,2.3383) (0.3,2.5357) (0.5,2.7514) (0.7,2.7704) (1,2.7892)};
      \addplot[color=acgcolor1, dotted, line width=0.8pt, mark=*, mark size=1.0pt, mark options={solid}]
        coordinates {(0.02,3.9056) (0.05,4.1308) (0.1,3.7411) (0.2,3.3121) (0.3,3.0446) (0.5,2.8924) (0.7,2.8329) (1,2.7850)};
      \addplot[color=acgcolor1, dashdotted, line width=0.8pt, mark=*, mark size=1.0pt, mark options={solid}]
        coordinates {(0.02,1.0779) (0.05,1.5151) (0.1,1.8947) (0.2,2.3416) (0.3,2.5038) (0.5,2.6682) (0.7,2.8023) (1,2.8103)};
      \addplot[color=acgcolor2, solid, line width=0.8pt, mark=*, mark size=1.0pt, mark options={solid}]
        coordinates {(0.02,0.2155) (0.05,0.3403) (0.1,0.4467) (0.2,0.5599) (0.3,0.6156) (0.5,0.6874) (0.7,0.7212) (1,0.7224)};
      \addplot[color=acgcolor2, dashed, line width=0.8pt, mark=*, mark size=1.0pt, mark options={solid}]
        coordinates {(0.02,1.6101) (0.05,1.3984) (0.1,1.1514) (0.2,0.9468) (0.3,0.8539) (0.5,0.7830) (0.7,0.7440) (1,0.7165)};
      \addplot[color=acgcolor2, dotted, line width=0.8pt, mark=*, mark size=1.0pt, mark options={solid}]
        coordinates {(0.02,0.2179) (0.05,0.3162) (0.1,0.4226) (0.2,0.5605) (0.3,0.6238) (0.5,0.6905) (0.7,0.7313) (1,0.7167)};
      \addplot[color=acgcolor2, dashdotted, line width=0.8pt, mark=*, mark size=1.0pt, mark options={solid}]
        coordinates {(0.02,1.6036) (0.05,1.3602) (0.1,1.1413) (0.2,0.9373) (0.3,0.8561) (0.5,0.7807) (0.7,0.7494) (1,0.7372)};
      \addplot[color=acgcolor3, solid, line width=0.8pt, mark=*, mark size=1.0pt, mark options={solid}]
        coordinates {(0.02,1.6414) (0.05,1.7216) (0.1,1.7797) (0.2,1.8396) (0.3,1.8373) (0.5,1.8272) (0.7,1.7659) (1,1.7146)};
      \addplot[color=acgcolor3, dashed, line width=0.8pt, mark=*, mark size=1.0pt, mark options={solid}]
        coordinates {(0.02,2.0146) (0.05,1.9897) (0.1,1.9099) (0.2,1.7885) (0.3,1.7613) (0.5,1.7047) (0.7,1.7212) (1,1.7073)};
      \addplot[color=acgcolor3, dotted, line width=0.8pt, mark=*, mark size=1.0pt, mark options={solid}]
        coordinates {(0.02,0.3694) (0.05,0.5907) (0.1,0.8159) (0.2,1.1351) (0.3,1.2990) (0.5,1.4809) (0.7,1.5968) (1,1.7022)};
      \addplot[color=acgcolor3, dashdotted, line width=0.8pt, mark=*, mark size=1.0pt, mark options={solid}]
        coordinates {(0.02,2.1493) (0.05,2.3353) (0.1,2.1977) (0.2,1.9853) (0.3,1.8681) (0.5,1.7782) (0.7,1.7474) (1,1.7087)};
      \addplot[color=acgcolor4, solid, line width=0.8pt, mark=*, mark size=1.0pt, mark options={solid}]
        coordinates {(0.02,6.0300) (0.05,5.9434) (0.1,5.8693) (0.2,5.8186) (0.3,5.7762) (0.5,5.7315) (0.7,5.7027) (1,5.6947)};
      \addplot[color=acgcolor4, dashed, line width=0.8pt, mark=*, mark size=1.0pt, mark options={solid}]
        coordinates {(0.02,4.8169) (0.05,5.1729) (0.1,5.4547) (0.2,5.6208) (0.3,5.6567) (0.5,5.6464) (0.7,5.6783) (1,5.6907)};
      \addplot[color=acgcolor4, dotted, line width=0.8pt, mark=*, mark size=1.0pt, mark options={solid}]
        coordinates {(0.02,5.8087) (0.05,5.7927) (0.1,5.7517) (0.2,5.7266) (0.3,5.7106) (0.5,5.7146) (0.7,5.6690) (1,5.7021)};
      \addplot[color=acgcolor4, dashdotted, line width=0.8pt, mark=*, mark size=1.0pt, mark options={solid}]
        coordinates {(0.02,4.8484) (0.05,5.2195) (0.1,5.5890) (0.2,5.6823) (0.3,5.6973) (0.5,5.6905) (0.7,5.6920) (1,5.6985)};
    \nextgroupplot[title={\textsc{Kidney}-2D}, ymin=0.0880, ymax=129.455]
      \draw[black, dotted, line width=0.6pt] (axis cs:0.02,1) -- (axis cs:1,1);
      \addplot[color=acgcolor0, solid, line width=0.8pt, mark=*, mark size=1.0pt, mark options={solid}]
        coordinates {(0.02,0.1257) (0.05,0.1885) (0.1,0.2873) (0.2,0.2963) (0.3,0.3142) (0.5,0.2828) (0.7,0.3905) (1,0.3456)};
      \addplot[color=acgcolor0, dashed, line width=0.8pt, mark=*, mark size=1.0pt, mark options={solid}]
        coordinates {(0.02,0.1975) (0.05,0.2244) (0.1,0.3501) (0.2,0.2828) (0.3,0.2873) (0.5,0.3456) (0.7,0.3322) (1,0.3501)};
      \addplot[color=acgcolor0, dotted, line width=0.8pt, mark=*, mark size=1.0pt, mark options={solid}]
        coordinates {(0.02,0.1571) (0.05,0.1706) (0.1,0.2738) (0.2,0.2559) (0.3,0.3008) (0.5,0.3097) (0.7,0.3097) (1,0.2963)};
      \addplot[color=acgcolor0, dashdotted, line width=0.8pt, mark=*, mark size=1.0pt, mark options={solid}]
        coordinates {(0.02,0.1975) (0.05,0.2334) (0.1,0.3546) (0.2,0.3052) (0.3,0.3187) (0.5,0.3367) (0.7,0.3456) (1,0.3412)};
      \addplot[color=acgcolor1, solid, line width=0.8pt, mark=*, mark size=1.0pt, mark options={solid}]
        coordinates {(0.02,1.4185) (0.05,2.1681) (0.1,2.6125) (0.2,3.0973) (0.3,3.2320) (0.5,3.1736) (0.7,3.3128) (1,3.3981)};
      \addplot[color=acgcolor1, dashed, line width=0.8pt, mark=*, mark size=1.0pt, mark options={solid}]
        coordinates {(0.02,2.2938) (0.05,3.1826) (0.1,3.3263) (0.2,3.3083) (0.3,3.5732) (0.5,3.3083) (0.7,3.7033) (1,3.4968)};
      \addplot[color=acgcolor1, dotted, line width=0.8pt, mark=*, mark size=1.0pt, mark options={solid}]
        coordinates {(0.02,1.4140) (0.05,2.2040) (0.1,2.6484) (0.2,2.9941) (0.3,3.3263) (0.5,3.3487) (0.7,3.6764) (1,3.4026)};
      \addplot[color=acgcolor1, dashdotted, line width=0.8pt, mark=*, mark size=1.0pt, mark options={solid}]
        coordinates {(0.02,2.2310) (0.05,2.8325) (0.1,3.3308) (0.2,3.3038) (0.3,3.5597) (0.5,3.6001) (0.7,3.4430) (1,3.5148)};
      \addplot[color=acgcolor2, solid, line width=0.8pt, mark=*, mark size=1.0pt, mark options={solid}]
        coordinates {(0.02,2.5587) (0.05,3.3846) (0.1,4.0939) (0.2,4.3901) (0.3,4.4530) (0.5,4.5697) (0.7,4.4036) (1,4.4350)};
      \addplot[color=acgcolor2, dashed, line width=0.8pt, mark=*, mark size=1.0pt, mark options={solid}]
        coordinates {(0.02,2.3073) (0.05,3.2230) (0.1,3.9592) (0.2,4.2375) (0.3,4.3452) (0.5,4.4171) (0.7,4.5607) (1,4.7492)};
      \addplot[color=acgcolor2, dotted, line width=0.8pt, mark=*, mark size=1.0pt, mark options={solid}]
        coordinates {(0.02,2.4420) (0.05,3.2275) (0.1,3.9547) (0.2,4.3363) (0.3,4.5562) (0.5,4.4799) (0.7,4.3632) (1,4.7492)};
      \addplot[color=acgcolor2, dashdotted, line width=0.8pt, mark=*, mark size=1.0pt, mark options={solid}]
        coordinates {(0.02,2.3297) (0.05,3.2948) (0.1,3.9457) (0.2,4.5652) (0.3,4.4440) (0.5,4.5876) (0.7,4.5024) (1,4.5876)};
      \addplot[color=acgcolor3, solid, line width=0.8pt, mark=*, mark size=1.0pt, mark options={solid}]
        coordinates {(0.02,16.9456) (0.05,16.1645) (0.1,14.8088) (0.2,14.7550) (0.3,13.1524) (0.5,11.5948) (0.7,11.6576) (1,11.3120)};
      \addplot[color=acgcolor3, dashed, line width=0.8pt, mark=*, mark size=1.0pt, mark options={solid}]
        coordinates {(0.02,5.5797) (0.05,6.9219) (0.1,8.2012) (0.2,9.0945) (0.3,9.2740) (0.5,10.3379) (0.7,11.0247) (1,11.4287)};
      \addplot[color=acgcolor3, dotted, line width=0.8pt, mark=*, mark size=1.0pt, mark options={solid}]
        coordinates {(0.02,9.2516) (0.05,11.0606) (0.1,12.3130) (0.2,12.2636) (0.3,11.8731) (0.5,11.4467) (0.7,11.4556) (1,11.3210)};
      \addplot[color=acgcolor3, dashdotted, line width=0.8pt, mark=*, mark size=1.0pt, mark options={solid}]
        coordinates {(0.02,19.7960) (0.05,18.1216) (0.1,17.2418) (0.2,14.7236) (0.3,13.9964) (0.5,12.5016) (0.7,11.9898) (1,11.0516)};
      \addplot[color=acgcolor4, solid, line width=0.8pt, mark=*, mark size=1.0pt, mark options={solid}]
        coordinates {(0.02,81.0155) (0.05,80.3242) (0.1,79.1795) (0.2,78.0124) (0.3,77.6264) (0.5,76.6433) (0.7,75.5660) (1,75.2472)};
      \addplot[color=acgcolor4, dashed, line width=0.8pt, mark=*, mark size=1.0pt, mark options={solid}]
        coordinates {(0.02,81.5362) (0.05,78.2817) (0.1,76.4413) (0.2,74.9465) (0.3,75.0677) (0.5,74.9061) (0.7,75.1305) (1,75.6916)};
      \addplot[color=acgcolor4, dotted, line width=0.8pt, mark=*, mark size=1.0pt, mark options={solid}]
        coordinates {(0.02,75.2203) (0.05,74.6772) (0.1,74.4213) (0.2,74.8028) (0.3,75.1575) (0.5,75.3056) (0.7,75.4223) (1,75.4672)};
      \addplot[color=acgcolor4, dashdotted, line width=0.8pt, mark=*, mark size=1.0pt, mark options={solid}]
        coordinates {(0.02,86.3034) (0.05,82.6180) (0.1,80.5307) (0.2,78.3042) (0.3,77.5949) (0.5,75.8757) (0.7,75.9296) (1,75.6423)};
    \nextgroupplot[title={\textsc{MMachine}-2D}, ymin=0.1119, ymax=46.863]
      \draw[black, dotted, line width=0.6pt] (axis cs:0.02,1) -- (axis cs:1,1);
      \addplot[color=acgcolor0, solid, line width=0.8pt, mark=*, mark size=1.0pt, mark options={solid}]
        coordinates {(0.02,0.2496) (0.05,0.3440) (0.1,0.3729) (0.2,0.3821) (0.3,0.3318) (0.5,0.4171) (0.7,0.3638) (1,0.3425)};
      \addplot[color=acgcolor0, dashed, line width=0.8pt, mark=*, mark size=1.0pt, mark options={solid}]
        coordinates {(0.02,0.1979) (0.05,0.2344) (0.1,0.3090) (0.2,0.3577) (0.3,0.3638) (0.5,0.2953) (0.7,0.3334) (1,0.3714)};
      \addplot[color=acgcolor0, dotted, line width=0.8pt, mark=*, mark size=1.0pt, mark options={solid}]
        coordinates {(0.02,0.2451) (0.05,0.3288) (0.1,0.3547) (0.2,0.3912) (0.3,0.3958) (0.5,0.3927) (0.7,0.3516) (1,0.3714)};
      \addplot[color=acgcolor0, dashdotted, line width=0.8pt, mark=*, mark size=1.0pt, mark options={solid}]
        coordinates {(0.02,0.1598) (0.05,0.2253) (0.1,0.3090) (0.2,0.3440) (0.3,0.3318) (0.5,0.3806) (0.7,0.3988) (1,0.4064)};
      \addplot[color=acgcolor1, solid, line width=0.8pt, mark=*, mark size=1.0pt, mark options={solid}]
        coordinates {(0.02,0.6789) (0.05,1.0199) (0.1,1.3198) (0.2,1.6486) (0.3,1.7840) (0.5,1.9195) (0.7,1.9287) (1,1.9180)};
      \addplot[color=acgcolor1, dashed, line width=0.8pt, mark=*, mark size=1.0pt, mark options={solid}]
        coordinates {(0.02,1.8952) (0.05,2.1920) (0.1,2.2544) (0.2,2.2316) (0.3,2.1539) (0.5,2.1083) (0.7,2.0124) (1,1.9865)};
      \addplot[color=acgcolor1, dotted, line width=0.8pt, mark=*, mark size=1.0pt, mark options={solid}]
        coordinates {(0.02,0.6500) (0.05,1.0290) (0.1,1.3639) (0.2,1.7034) (0.3,1.8038) (0.5,1.9500) (0.7,2.0215) (1,1.9545)};
      \addplot[color=acgcolor1, dashdotted, line width=0.8pt, mark=*, mark size=1.0pt, mark options={solid}]
        coordinates {(0.02,1.7932) (0.05,2.2179) (0.1,2.2544) (0.2,2.1920) (0.3,2.0580) (0.5,1.9058) (0.7,1.9804) (1,1.9926)};
      \addplot[color=acgcolor2, solid, line width=0.8pt, mark=*, mark size=1.0pt, mark options={solid}]
        coordinates {(0.02,1.0001) (0.05,1.2056) (0.1,1.1523) (0.2,1.0473) (0.3,1.0382) (0.5,0.9362) (0.7,0.9255) (1,1.0168)};
      \addplot[color=acgcolor2, dashed, line width=0.8pt, mark=*, mark size=1.0pt, mark options={solid}]
        coordinates {(0.02,0.3547) (0.05,0.4765) (0.1,0.6622) (0.2,0.7505) (0.3,0.8798) (0.5,0.9377) (0.7,1.0062) (1,0.9697)};
      \addplot[color=acgcolor2, dotted, line width=0.8pt, mark=*, mark size=1.0pt, mark options={solid}]
        coordinates {(0.02,1.1188) (0.05,1.2558) (0.1,1.1782) (0.2,1.0610) (0.3,1.0960) (0.5,0.9727) (0.7,0.9986) (1,0.9194)};
      \addplot[color=acgcolor2, dashdotted, line width=0.8pt, mark=*, mark size=1.0pt, mark options={solid}]
        coordinates {(0.02,0.3471) (0.05,0.4612) (0.1,0.6348) (0.2,0.7139) (0.3,0.8129) (0.5,1.0001) (0.7,1.0016) (1,0.9088)};
      \addplot[color=acgcolor3, solid, line width=0.8pt, mark=*, mark size=1.0pt, mark options={solid}]
        coordinates {(0.02,7.0890) (0.05,6.3446) (0.1,5.2441) (0.2,4.8117) (0.3,4.7417) (0.5,4.4023) (0.7,4.5180) (1,4.4525)};
      \addplot[color=acgcolor3, dashed, line width=0.8pt, mark=*, mark size=1.0pt, mark options={solid}]
        coordinates {(0.02,2.1783) (0.05,2.7263) (0.1,2.9531) (0.2,3.2682) (0.3,3.5833) (0.5,3.9806) (0.7,4.2440) (1,4.5621)};
      \addplot[color=acgcolor3, dotted, line width=0.8pt, mark=*, mark size=1.0pt, mark options={solid}]
        coordinates {(0.02,4.4510) (0.05,4.8178) (0.1,4.9990) (0.2,5.0583) (0.3,5.0736) (0.5,4.7539) (0.7,4.6245) (1,4.4495)};
      \addplot[color=acgcolor3, dashdotted, line width=0.8pt, mark=*, mark size=1.0pt, mark options={solid}]
        coordinates {(0.02,7.3067) (0.05,6.8546) (0.1,6.4283) (0.2,6.0463) (0.3,5.6916) (0.5,5.2456) (0.7,4.9366) (1,4.5149)};
      \addplot[color=acgcolor4, solid, line width=0.8pt, mark=*, mark size=1.0pt, mark options={solid}]
        coordinates {(0.02,30.7017) (0.05,30.4125) (0.1,29.6498) (0.2,29.5007) (0.3,29.4900) (0.5,29.5402) (0.7,29.3667) (1,29.3256)};
      \addplot[color=acgcolor4, dashed, line width=0.8pt, mark=*, mark size=1.0pt, mark options={solid}]
        coordinates {(0.02,28.7791) (0.05,28.9588) (0.1,29.0105) (0.2,29.2038) (0.3,29.0273) (0.5,29.1338) (0.7,29.4291) (1,29.3819)};
      \addplot[color=acgcolor4, dotted, line width=0.8pt, mark=*, mark size=1.0pt, mark options={solid}]
        coordinates {(0.02,29.0455) (0.05,28.6771) (0.1,29.0866) (0.2,29.2297) (0.3,29.1475) (0.5,29.2723) (0.7,29.5265) (1,29.4520)};
      \addplot[color=acgcolor4, dashdotted, line width=0.8pt, mark=*, mark size=1.0pt, mark options={solid}]
        coordinates {(0.02,31.2421) (0.05,30.5327) (0.1,30.4140) (0.2,30.7535) (0.3,30.5114) (0.5,29.9391) (0.7,29.9056) (1,29.3972)};
    \nextgroupplot[title={\textsc{KidneyStudy1}-2D}, ymin=1.0505, ymax=7.411]
      \draw[black, dotted, line width=0.6pt] (axis cs:0.02,1) -- (axis cs:1,1);
      \addplot[color=acgcolor0, solid, line width=0.8pt, mark=*, mark size=1.0pt, mark options={solid}]
        coordinates {(0.02,4.8391) (0.05,4.3237) (0.1,4.0092) (0.2,3.7678) (0.3,3.5471) (0.5,3.4467) (0.7,3.3839) (1,3.3764)};
      \addplot[color=acgcolor0, dashed, line width=0.8pt, mark=*, mark size=1.0pt, mark options={solid}]
        coordinates {(0.02,1.5304) (0.05,2.1569) (0.1,2.5814) (0.2,2.9550) (0.3,3.1677) (0.5,3.3040) (0.7,3.3409) (1,3.3526)};
      \addplot[color=acgcolor0, dotted, line width=0.8pt, mark=*, mark size=1.0pt, mark options={solid}]
        coordinates {(0.02,4.8378) (0.05,4.3390) (0.1,3.9661) (0.2,3.6945) (0.3,3.6219) (0.5,3.4087) (0.7,3.3579) (1,3.3594)};
      \addplot[color=acgcolor0, dashdotted, line width=0.8pt, mark=*, mark size=1.0pt, mark options={solid}]
        coordinates {(0.02,1.5007) (0.05,2.1412) (0.1,2.5251) (0.2,2.9580) (0.3,3.1059) (0.5,3.3154) (0.7,3.3535) (1,3.4053)};
      \addplot[color=acgcolor1, solid, line width=0.8pt, mark=*, mark size=1.0pt, mark options={solid}]
        coordinates {(0.02,4.9410) (0.05,4.4808) (0.1,4.2088) (0.2,4.0022) (0.3,3.8055) (0.5,3.7307) (0.7,3.6926) (1,3.6745)};
      \addplot[color=acgcolor1, dashed, line width=0.8pt, mark=*, mark size=1.0pt, mark options={solid}]
        coordinates {(0.02,1.9629) (0.05,2.6035) (0.1,2.9867) (0.2,3.3250) (0.3,3.5211) (0.5,3.6378) (0.7,3.6495) (1,3.6559)};
      \addplot[color=acgcolor1, dotted, line width=0.8pt, mark=*, mark size=1.0pt, mark options={solid}]
        coordinates {(0.02,4.9234) (0.05,4.4823) (0.1,4.1593) (0.2,3.9383) (0.3,3.8913) (0.5,3.6952) (0.7,3.6677) (1,3.6741)};
      \addplot[color=acgcolor1, dashdotted, line width=0.8pt, mark=*, mark size=1.0pt, mark options={solid}]
        coordinates {(0.02,1.9108) (0.05,2.5882) (0.1,2.9395) (0.2,3.3314) (0.3,3.4536) (0.5,3.6371) (0.7,3.6673) (1,3.6998)};
      \addplot[color=acgcolor2, solid, line width=0.8pt, mark=*, mark size=1.0pt, mark options={solid}]
        coordinates {(0.02,4.9262) (0.05,4.4542) (0.1,4.1786) (0.2,3.9612) (0.3,3.7630) (0.5,3.6831) (0.7,3.6393) (1,3.6221)};
      \addplot[color=acgcolor2, dashed, line width=0.8pt, mark=*, mark size=1.0pt, mark options={solid}]
        coordinates {(0.02,1.8777) (0.05,2.5260) (0.1,2.9123) (0.2,3.2572) (0.3,3.4601) (0.5,3.5838) (0.7,3.5989) (1,3.6042)};
      \addplot[color=acgcolor2, dotted, line width=0.8pt, mark=*, mark size=1.0pt, mark options={solid}]
        coordinates {(0.02,4.9098) (0.05,4.4595) (0.1,4.1272) (0.2,3.8964) (0.3,3.8535) (0.5,3.6491) (0.7,3.6199) (1,3.6265)};
      \addplot[color=acgcolor2, dashdotted, line width=0.8pt, mark=*, mark size=1.0pt, mark options={solid}]
        coordinates {(0.02,1.8254) (0.05,2.5062) (0.1,2.8654) (0.2,3.2651) (0.3,3.3934) (0.5,3.5838) (0.7,3.6123) (1,3.6476)};
      \addplot[color=acgcolor3, solid, line width=0.8pt, mark=*, mark size=1.0pt, mark options={solid}]
        coordinates {(0.02,4.6646) (0.05,4.1335) (0.1,3.8584) (0.2,3.6712) (0.3,3.5328) (0.5,3.4788) (0.7,3.4425) (1,3.4431)};
      \addplot[color=acgcolor3, dashed, line width=0.8pt, mark=*, mark size=1.0pt, mark options={solid}]
        coordinates {(0.02,2.5081) (0.05,2.8424) (0.1,3.0579) (0.2,3.2211) (0.3,3.3702) (0.5,3.4002) (0.7,3.4334) (1,3.4159)};
      \addplot[color=acgcolor3, dotted, line width=0.8pt, mark=*, mark size=1.0pt, mark options={solid}]
        coordinates {(0.02,4.5144) (0.05,4.0666) (0.1,3.7619) (0.2,3.5660) (0.3,3.5587) (0.5,3.4210) (0.7,3.4125) (1,3.4314)};
      \addplot[color=acgcolor3, dashdotted, line width=0.8pt, mark=*, mark size=1.0pt, mark options={solid}]
        coordinates {(0.02,2.8435) (0.05,3.0228) (0.1,3.1150) (0.2,3.2585) (0.3,3.3256) (0.5,3.4255) (0.7,3.4268) (1,3.4538)};
      \addplot[color=acgcolor4, solid, line width=0.8pt, mark=*, mark size=1.0pt, mark options={solid}]
        coordinates {(0.02,4.5186) (0.05,4.0851) (0.1,3.8890) (0.2,3.7825) (0.3,3.7032) (0.5,3.7098) (0.7,3.7177) (1,3.7604)};
      \addplot[color=acgcolor4, dashed, line width=0.8pt, mark=*, mark size=1.0pt, mark options={solid}]
        coordinates {(0.02,3.4778) (0.05,3.5570) (0.1,3.5906) (0.2,3.6161) (0.3,3.7077) (0.5,3.7345) (0.7,3.7428) (1,3.7292)};
      \addplot[color=acgcolor4, dotted, line width=0.8pt, mark=*, mark size=1.0pt, mark options={solid}]
        coordinates {(0.02,4.7558) (0.05,4.3856) (0.1,4.1359) (0.2,3.9514) (0.3,3.9304) (0.5,3.7800) (0.7,3.7589) (1,3.7553)};
      \addplot[color=acgcolor4, dashdotted, line width=0.8pt, mark=*, mark size=1.0pt, mark options={solid}]
        coordinates {(0.02,3.5649) (0.05,3.6048) (0.1,3.6306) (0.2,3.7056) (0.3,3.7062) (0.5,3.7617) (0.7,3.7343) (1,3.7562)};
    \nextgroupplot[title={\textsc{KidneyStudy2}-2D}, ymin=0.5833, ymax=12.818]
      \draw[black, dotted, line width=0.6pt] (axis cs:0.02,1) -- (axis cs:1,1);
      \addplot[color=acgcolor0, solid, line width=0.8pt, mark=*, mark size=1.0pt, mark options={solid}]
        coordinates {(0.02,1.0009) (0.05,1.5165) (0.1,2.0076) (0.2,2.6345) (0.3,2.8643) (0.5,3.1849) (0.7,3.2761) (1,3.3263)};
      \addplot[color=acgcolor0, dashed, line width=0.8pt, mark=*, mark size=1.0pt, mark options={solid}]
        coordinates {(0.02,6.5585) (0.05,5.6378) (0.1,4.8743) (0.2,4.2508) (0.3,3.6770) (0.5,3.4658) (0.7,3.4094) (1,3.3024)};
      \addplot[color=acgcolor0, dotted, line width=0.8pt, mark=*, mark size=1.0pt, mark options={solid}]
        coordinates {(0.02,1.0678) (0.05,1.6622) (0.1,2.1634) (0.2,2.6435) (0.3,2.8901) (0.5,3.1385) (0.7,3.2828) (1,3.3019)};
      \addplot[color=acgcolor0, dashdotted, line width=0.8pt, mark=*, mark size=1.0pt, mark options={solid}]
        coordinates {(0.02,6.5866) (0.05,5.6889) (0.1,4.9627) (0.2,4.0434) (0.3,3.8375) (0.5,3.4983) (0.7,3.3387) (1,3.3100)};
      \addplot[color=acgcolor1, solid, line width=0.8pt, mark=*, mark size=1.0pt, mark options={solid}]
        coordinates {(0.02,1.1462) (0.05,1.7210) (0.1,2.3521) (0.2,3.0434) (0.3,3.3354) (0.5,3.7119) (0.7,3.8117) (1,3.8447)};
      \addplot[color=acgcolor1, dashed, line width=0.8pt, mark=*, mark size=1.0pt, mark options={solid}]
        coordinates {(0.02,8.5221) (0.05,6.9278) (0.1,5.8332) (0.2,5.0243) (0.3,4.3048) (0.5,4.0702) (0.7,3.9627) (1,3.8437)};
      \addplot[color=acgcolor1, dotted, line width=0.8pt, mark=*, mark size=1.0pt, mark options={solid}]
        coordinates {(0.02,1.2250) (0.05,1.9035) (0.1,2.4720) (0.2,3.0597) (0.3,3.3268) (0.5,3.6459) (0.7,3.7998) (1,3.8729)};
      \addplot[color=acgcolor1, dashdotted, line width=0.8pt, mark=*, mark size=1.0pt, mark options={solid}]
        coordinates {(0.02,8.5455) (0.05,6.9708) (0.1,5.9254) (0.2,4.7577) (0.3,4.5198) (0.5,4.0797) (0.7,3.9240) (1,3.8533)};
      \addplot[color=acgcolor2, solid, line width=0.8pt, mark=*, mark size=1.0pt, mark options={solid}]
        coordinates {(0.02,0.8332) (0.05,1.2699) (0.1,1.6937) (0.2,2.2131) (0.3,2.3765) (0.5,2.6450) (0.7,2.7023) (1,2.7539)};
      \addplot[color=acgcolor2, dashed, line width=0.8pt, mark=*, mark size=1.0pt, mark options={solid}]
        coordinates {(0.02,4.6631) (0.05,4.3368) (0.1,3.8733) (0.2,3.4668) (0.3,2.9961) (0.5,2.8624) (0.7,2.7998) (1,2.7501)};
      \addplot[color=acgcolor2, dotted, line width=0.8pt, mark=*, mark size=1.0pt, mark options={solid}]
        coordinates {(0.02,0.9035) (0.05,1.4028) (0.1,1.8237) (0.2,2.2178) (0.3,2.4056) (0.5,2.6182) (0.7,2.6799) (1,2.7109)};
      \addplot[color=acgcolor2, dashdotted, line width=0.8pt, mark=*, mark size=1.0pt, mark options={solid}]
        coordinates {(0.02,4.6765) (0.05,4.3463) (0.1,3.9689) (0.2,3.2470) (0.3,3.1677) (0.5,2.9068) (0.7,2.7678) (1,2.7229)};
      \addplot[color=acgcolor3, solid, line width=0.8pt, mark=*, mark size=1.0pt, mark options={solid}]
        coordinates {(0.02,1.5580) (0.05,2.0387) (0.1,2.5552) (0.2,2.9656) (0.3,3.0764) (0.5,3.2948) (0.7,3.2551) (1,3.1834)};
      \addplot[color=acgcolor3, dashed, line width=0.8pt, mark=*, mark size=1.0pt, mark options={solid}]
        coordinates {(0.02,6.9784) (0.05,5.7682) (0.1,4.9039) (0.2,4.2016) (0.3,3.7109) (0.5,3.4395) (0.7,3.3378) (1,3.1720)};
      \addplot[color=acgcolor3, dotted, line width=0.8pt, mark=*, mark size=1.0pt, mark options={solid}]
        coordinates {(0.02,1.0024) (0.05,1.4114) (0.1,1.7855) (0.2,2.2604) (0.3,2.5222) (0.5,2.8385) (0.7,3.0344) (1,3.1767)};
      \addplot[color=acgcolor3, dashdotted, line width=0.8pt, mark=*, mark size=1.0pt, mark options={solid}]
        coordinates {(0.02,6.4447) (0.05,5.3330) (0.1,4.5260) (0.2,3.7367) (0.3,3.5704) (0.5,3.3029) (0.7,3.1944) (1,3.1892)};
      \addplot[color=acgcolor4, solid, line width=0.8pt, mark=*, mark size=1.0pt, mark options={solid}]
        coordinates {(0.02,3.2627) (0.05,3.5088) (0.1,3.7988) (0.2,4.0052) (0.3,4.0888) (0.5,4.2317) (0.7,4.2111) (1,4.2245)};
      \addplot[color=acgcolor4, dashed, line width=0.8pt, mark=*, mark size=1.0pt, mark options={solid}]
        coordinates {(0.02,7.4552) (0.05,6.4080) (0.1,5.5738) (0.2,5.0047) (0.3,4.5781) (0.5,4.4376) (0.7,4.3062) (1,4.2097)};
      \addplot[color=acgcolor4, dotted, line width=0.8pt, mark=*, mark size=1.0pt, mark options={solid}]
        coordinates {(0.02,3.1968) (0.05,3.4309) (0.1,3.6249) (0.2,3.8504) (0.3,3.9541) (0.5,4.1199) (0.7,4.1548) (1,4.2465)};
      \addplot[color=acgcolor4, dashdotted, line width=0.8pt, mark=*, mark size=1.0pt, mark options={solid}]
        coordinates {(0.02,7.3821) (0.05,6.3148) (0.1,5.5427) (0.2,4.8380) (0.3,4.7047) (0.5,4.3621) (0.7,4.2513) (1,4.2154)};
  \end{groupplot}
  % --- legend ---
  \path (group c1r1.south west) -- (group c5r1.south east) coordinate[midway] (legmid);
  \path let \p1=(group c1r1.south west),
           \p2=(group c5r1.south east) in
       node[anchor=north, font=\footnotesize, text width={\x2-\x1}, align=center, inner sep=0pt, yshift=-12mm] at (legmid) {%
    \mbox{\tikz[baseline=-0.5ex] \draw[acgcolor0, line width=1pt] (0,0) -- (0.3,0);~arithmetic mean}\hspace{6pt}\mbox{\tikz[baseline=-0.5ex] \draw[acgcolor1, line width=1pt] (0,0) -- (0.3,0);~angular mean}\hspace{6pt}\mbox{\tikz[baseline=-0.5ex] \draw[acgcolor2, line width=1pt] (0,0) -- (0.3,0);~geometric median}\hspace{6pt}\mbox{\tikz[baseline=-0.5ex] \draw[acgcolor3, line width=1pt] (0,0) -- (0.3,0);~Borda}\hspace{6pt}\mbox{\tikz[baseline=-0.5ex] \draw[acgcolor4, line width=1pt] (0,0) -- (0.3,0);~PSB} \\
    \mbox{\tikz[baseline=-0.5ex] \draw[gray, solid, line width=1pt] (0,0) -- (0.3,0);~$v$ at $0^\circ$}\hspace{6pt}\mbox{\tikz[baseline=-0.5ex] \draw[gray, dashed, line width=1pt] (0,0) -- (0.3,0);~$v$ at $45^\circ$}\hspace{6pt}\mbox{\tikz[baseline=-0.5ex] \draw[gray, dotted, line width=1pt] (0,0) -- (0.3,0);~$v$ at $90^\circ$}\hspace{6pt}\mbox{\tikz[baseline=-0.5ex] \draw[gray, dashdotted, line width=1pt] (0,0) -- (0.3,0);~$v$ at $135^\circ$}
  };
\end{tikzpicture}